\documentclass[11pt, a4paper, reqno]{amsart}


\setlength{\textwidth}{16cm} \setlength{\textheight}{21.2cm}
\setlength{\oddsidemargin}{0.0cm}
\setlength{\evensidemargin}{0.0cm}

\usepackage{amsxtra}
\usepackage{graphicx}
\usepackage{newlfont}
\usepackage{amscd}
\usepackage{hyperref}
\usepackage[german,english]{babel}
\usepackage{cancel}
\usepackage{amsmath,amsthm}
\usepackage{amssymb}
\usepackage{enumerate}
\usepackage{color}
\usepackage[all,cmtip]{xy}
\usepackage{kotex}
\usepackage{tikz,ytableau}
\usepackage{mathrsfs}
\usepackage{setspace}
\usepackage{MnSymbol}
\usepackage{comment}

\DeclareFontFamily{U}{mathb}{\hyphenchar\font45}
\DeclareFontShape{U}{mathb}{m}{n}{
<-6> mathb5 <6-7> mathb6 <7-8> mathb7
<8-9> mathb8 <9-10> mathb9
<10-12> mathb10 <12-> mathb12
}{}
\DeclareSymbolFont{mathb}{U}{mathb}{m}{n}
\DeclareMathSymbol{\llcurly}{\mathrel}{mathb}{"CE}
\DeclareMathSymbol{\ggcurly}{\mathrel}{mathb}{"CF}

\newtheorem{thm}{Theorem}[section]


 \newtheorem{lem}[thm]{Lemma}
 \newtheorem{prop}[thm]{Proposition}

 \theoremstyle{definition}
 \newtheorem{defn}[thm]{Definition}
  
  \newtheorem{defn-thm}[thm]{Definition-Theorem}
   \newtheorem{ex}[thm]{Example}
   \newtheorem{remark}[thm]{Remark} 

 \theoremstyle{remark}

\numberwithin{equation}{section}

\numberwithin{thm}{section}

\numberwithin{table}{section}

\numberwithin{figure}{section}

\ifx\pdfoutput\undefined
  \DeclareGraphicsExtensions{.pstex, .eps}
\else
  \ifx\pdfoutput\relax
    \DeclareGraphicsExtensions{.pstex, .eps}
  \else
    \ifnum\pdfoutput>0
      \DeclareGraphicsExtensions{.pdf}
    \else
      \DeclareGraphicsExtensions{.pstex, .eps}
    \fi
  \fi
\fi



\newcommand{\ZZ}{\mathbb{Z}}

\newcommand{\CC}{\mathbb{C}}

\newcommand{\V}{\mathcal{V}}

\newcommand{\noin}{\noindent}


\newcommand{\g}{\mathfrak{g}}

\newcommand{\m}{\mathfrak{m}}
\newcommand{\n}{\mathfrak{n}}
\newcommand{\p}{\mathfrak{p}}

\newcommand{\ad}{\text{ad}}


\begin{document}

\title{Dirac reductions and classical W-algebras}
\author{GAHNG SAHN LEE, ARIM SONG, UHI RINN SUH }
\address{}
\email{}
\thanks{This work was supported by NRF Grant \# 2019R1F1A10593631, \#2022R1C1C1008698 and Creative-Pioneering Researchers Program through Seoul National University}

\begin{abstract}
In the first part of this paper, we generalize Dirac reduction to the extent of non-local Poisson vertex superalgebra and non-local SUSY Poisson vertex algebra cases. Next, we modify this reduction so that we explain the structures of classical W-superalgebras and SUSY classical W-algebras in terms of the modified Dirac reduction.
\end{abstract}

\maketitle


\setcounter{tocdepth}{-1}

\pagestyle{plain}

\section{Introduction}\label{Sec:Introduction}

Dirac reductions of a Poisson algebra were first introduced in the paper \cite{Dirac50} of Dirac. For a Poisson algebra  $(P, \{  \ , \ \} )$ and a finite subset $\{\theta_i|\,i\in I\}\subset P$, if the $|I| \times |I|$ matrix $C$ with the $ij$-entry $C_{ij}= \{\theta_i, \theta_j\}$ is invertible, then 
\begin{equation} \label{eq:Dirac-original}
\{a,b\}^D:=\{a,b\} - \sum_{i\in I } \{a, \theta_i\} (C^{-1})_{ij} \{\theta_j, b\}, \quad a,b\in P
\end{equation}
defines a new Poisson bracket called Dirac reduced bracket. Since this new bracket satisfies $\{\theta_i, P\}^D=0$ for any $i\in I$, it induces a Poisson bracket on $P/I_P$, where $I_P$ is the associative algebra ideal generated by $\{\theta_i|i\in I\}$.
Moreover, in   \cite{DSKV14},  De Sole-Kac-Valeri  generalized the notion to Poisson vertex algebras (PVAs) cases. The Dirac reduced bracket of a Poisson vertex algebra $(\mathcal{P}, \{\ {}_\lambda \ \})$ associated with the finite subset $\{\theta_i | i\in I\}\subset \mathcal{P}$ is defined by
\begin{equation} \label{eq:Dirac-PVA original}
  \left\{ a \,_{\lambda}\, b \right\}^{D}=\left\{ a \,_{\lambda}\, b \right\}-\displaystyle\sum_{i,j \in I}\left\{ \theta_j \,_{\lambda+\partial}\, b \right\}_{\rightarrow}(C^{-1})_{ji}(\lambda+\partial)\left\{ a \,_{\lambda}\, \theta_i \right\}, \quad a,b\in \mathcal{P},
\end{equation}
provided that the $|I| \times |I|$-matrix $C(\lambda)=(\{\theta_j {}_\lambda \theta_i\})_{i,j\in I}$ is invertible with respect to the product $A(\lambda) \circ B(\lambda)= A(\lambda+\partial)B(\lambda)$ and $(C^{-1})(\lambda)$ is the inverse of $C(\lambda)$. Similarly to the Poisson algebra case, the bracket $\{ \ {}_\lambda \ \}^D$ induces a Poisson vertex algebra bracket on $\mathcal{P}/I_{\mathcal{P}}$, where $I_{\mathcal{P}}$ is the differential algebra ideal generated by $\{\theta_i | i\in I\}$.

Integrable Hamiltonian systems have been studied in the framework of PVAs by the authors in \cite{BDSK09, DSKV18, DSKV13}  and 
Dirac reductions are useful to obtain new systems  \cite{DSK13}. For example, the bi-Poisson structure of the Gelfand-Dickey hierarchy associated with the Lax operator $L= \partial^n + u_2 \partial^{n-2} +\cdots + u_n$ can be understood as a Dirac reduction of the Poisson structure induced from $\tilde{L}= \partial^n + u_1 \partial^{n-1}+u_2 \partial^{n-2} +\cdots + u_n$ (see \cite{Adl79, GelDic76,GelDic78}). Moreover, if we admit non-local Poisson structures for Hamiltonian equations, more systems can be understood by Dirac reductions. See \cite{DSK13} for detailed explanations and examples.

The third author described in \cite{Suh18} integrable Hamiltonian systems associated with a certain family of Poisson vertex superalgebras (PVSAs). Hence a natural question is whether we can get reduced systems via super-analogue of non-local PVAs and Dirac reductions. All the notions and properties for non-local PVAs in \cite{DSK13} can be generalized to the super-analogue by Koszul-Quillen sign rule. On the other hand, we introduce in Definition \ref{Def:Dirac modified super} {\it the Dirac reduction for PVSA $\mathcal{P}$.} The main idea is that we consider  $C(\lambda) =(\{\theta_j {}_\lambda \theta_i\})_{i,j\in I}$ as an element in $\mathfrak{gl}_{(m|n)} \otimes \mathcal{P}(\!(\lambda^{-1})\!)$ where $m$ and $n$ are the number of even and odd indices in $I$. If $C(\lambda)$ is invertible
via the multiplication in \eqref{eq:matrix product PVA}, then the  Dirac reduction of the Poisson $\lambda$-bracket $\left\{ \ \,_\lambda\, \ \right\}$ on $\mathcal{P}$ associated with $\theta_{I}=\{\theta_i| i\in I\}$ is given by
  \begin{equation} \label{super Dirac-intro}
  \left\{ a \,_{\lambda}\, b \right\}^{D}=\left\{ a \,_{\lambda}\, b \right\}-\displaystyle\sum_{i,j \in I} (-1)^{(p(a)+p(j))(p(b)+p(i))+p(i)+p(j)}\left\{ \theta_{j} \,_{\lambda+\partial}\, b \right\}_{\rightarrow}(C^{-1})_{ji}(\lambda+\partial)\left\{ a \,_{\lambda}\, \theta_{i} \right\},
\end{equation}
where $a,b$ are homogeneous elements in $\mathcal{P}$ and $p(i)=p(\theta_i)$.  In Theorem \ref{Theorem:super Dirac, part1} and Theorem \ref{Theorem:super Dirac, part2}, we show the following statements.

\begin{thm} \label{Theorem:super Dirac, intro}
 Let $\left< \theta_{I}\right>$ be the differential superalgebra ideal of $\mathcal{P}$ generated by $\theta_{I}$.
\begin{enumerate}
\item The bracket $  \left\{ \ \,_{\lambda}\, \  \right\}^{D}$ in  \eqref{super Dirac-intro} gives another PVSA structure on $\mathcal{P}$. In other words, it satisfies the sesquilinearity, skewsymmetry, Jacobi identity and Leibniz rule for PVSAs.
\item The Dirac reduced bracket induces a PVSA $\lambda$-bracket on $\mathcal{P}/ \left< \theta_{I}\right>$.
\end{enumerate}
\end{thm}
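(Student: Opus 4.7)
The plan is to verify the PVSA axioms for $\{\ {}_\lambda\ \}^D$ in part (1) by adapting the De Sole--Kac--Valeri argument for non-super PVAs to the super setting via the Koszul--Quillen sign rule, and to deduce part (2) from a key identity asserting that the Dirac bracket annihilates the constraints.

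The cornerstone of the argument is the computation
\begin{equation*}
\{\theta_\ell\,{}_\lambda\, b\}^D = 0 \quad\text{for every } \ell\in I \text{ and } b\in\mathcal{P}.
\end{equation*}
Substituting $a=\theta_\ell$ into \eqref{super Dirac-intro} and recognizing $\{\theta_\ell\,{}_\lambda\,\theta_i\} = C_{i\ell}(\lambda)$, the internal sum over $i$ collapses via $\sum_i (C^{-1})_{ji}(\lambda+\partial) C_{i\ell}(\lambda)=\delta_{j\ell}$. Since nonzero contributions force $p(j)=p(i)=p(\ell)$, the Koszul--Quillen prefactor $(-1)^{(p(\ell)+p(j))(p(b)+p(i))+p(i)+p(j)}$ evaluates to $+1$ at $j=\ell$, and the correction cancels the leading term exactly. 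The super-skewsymmetry of $\{\ {}_\lambda\ \}^D$ (once proved) then gives $\{b\,{}_\lambda\,\theta_\ell\}^D=0$, and the Leibniz rule for $\{\ {}_\lambda\ \}^D$ implies that $\left<\theta_I\right>$ is a two-sided PVSA ideal, whence part (2) follows.

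For part (1), sesquilinearity and the Leibniz rule are verified term by term: both hold for $\{\ {}_\lambda\ \}$, and the correction is built from $\{\cdot\,{}_{\lambda+\partial}\,\cdot\}_\to$ and $\{\cdot\,{}_\lambda\,\cdot\}$ each of which transports the required axiom, with $\partial$ distributing across the composition $A(\lambda+\partial)\circ B(\lambda)$. Super-skewsymmetry is obtained by combining the super-skewsymmetry of $\{\ {}_\lambda\ \}$ with that of $C^{-1}(\lambda)$, the latter being deduced from the super-skewsymmetry of $C(\lambda)$ by transposing the relation $C^{-1}\circ C=\mathrm{Id}$ with appropriate sign tracking.

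The main obstacle is the Jacobi identity. I would expand both sides of
\begin{equation*}
\{a\,{}_\lambda\,\{b\,{}_\mu\, c\}^D\}^D - (-1)^{p(a)p(b)}\{b\,{}_\mu\,\{a\,{}_\lambda\, c\}^D\}^D = \{\{a\,{}_\lambda\, b\}^D\,{}_{\lambda+\mu}\, c\}^D
\end{equation*}
and sort the resulting terms by how many factors of $C^{-1}$ they carry. Terms with no correction cancel via the Jacobi identity of $\{\ {}_\lambda\ \}$; terms with one correction are regrouped so that one of the inner brackets has a $\theta_i$ in its first slot, after which the vanishing $\{\theta_i\,{}_\lambda\,\cdot\}^D=0$ (combined with sesquilinearity on $\lambda+\partial$) effects the cancellation; terms with two corrections are reorganized via $C\circ C^{-1}=C^{-1}\circ C=\mathrm{Id}$ to collapse internal index sums. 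The chief difficulty is that every reshuffling multiplies by a parity-dependent sign, and one must confirm that the non-super cancellation pattern survives intact when all prefactors $(-1)^{(p(a)+p(j))(p(b)+p(i))+p(i)+p(j)}$ are propagated through the double expansion; this is the bookkeeping that both motivates the precise sign choice in \eqref{super Dirac-intro} and forms the bulk of the work.
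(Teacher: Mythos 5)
Your overall architecture (verify the axioms one by one; get skewsymmetry from an adjoint identity for $C^{-1}$ derived from $C^{-1}\circ C=\mathrm{Id}$; prove part (2) by showing the Dirac bracket annihilates the constraints and invoking skewsymmetry plus Leibniz) coincides with the paper's, which proves the statement as Theorems \ref{Theorem:super Dirac, part1} and \ref{Theorem:super Dirac, part2}. Your constraint computation collapses the sum in the slot $\{a\,{}_\lambda\,\theta_i\}$ with $a=\theta_\ell$ via $(C^{-1}\circ C)_{j\ell}=\delta_{j\ell}$, while the paper sets $b=\theta_i$ and collapses the other slot via $\{\theta_\beta\,{}_{\lambda+\partial}\,\theta_i\}_{\rightarrow}(C^{-1})_{\beta\alpha}$; these are mirror images and both work. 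One caveat: your justification that ``nonzero contributions force $p(j)=p(i)=p(\ell)$'' is not correct --- the entries $C_{i\ell}=\{\theta_\ell\,{}_\lambda\,\theta_i\}$ and $(C^{-1})_{ji}$ are merely parity-homogeneous of parity $i+\ell$ and $j+i$, not zero when that parity is odd. The collapse works for a different reason: the $i$-dependent part of the prefactor $(-1)^{(p(\ell)+p(j))(p(b)+p(i))+p(i)+p(j)}$ matches the sign built into the supermatrix product \eqref{eq:matrix product PVA}, so the sum over $i$ genuinely reproduces $(C^{-1}\circ C)_{j\ell}$. The conclusion stands, but the reason you give would not survive scrutiny.

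The genuine gap is in the Jacobi identity, which is the bulk of the proof. First, your expansion cannot even be written down without a formula for $\{a\,{}_\lambda\,(C^{-1})_{ij}(\mu)\}$ and $\{(C^{-1})_{ij}(\lambda)\,{}_{\lambda+\mu}\,a\}$: when the outer Dirac bracket is applied to $\{b\,{}_\mu\,c\}^D$, the Leibniz rule forces it to act on the operators $(C^{-1})_{ji}(\mu+\partial)$ sitting inside the inner correction, and one needs both an admissibility statement (so the result lives in $\mathcal{P}_{\lambda,\mu}$) and the sandwich identity expressing $\{a\,{}_\lambda\,C^{-1}\}$ as $-C^{-1}\{a\,{}_\lambda\,C\}C^{-1}$ with the correct super signs. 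This is the paper's Lemma \ref{Lemma:Inverse operator} and is entirely absent from your plan. Second, those differentiated-$C^{-1}$ terms carry \emph{three} factors of $C^{-1}$ (see \eqref{2.17} and \eqref{2.21}), so your taxonomy by ``zero, one, or two corrections'' does not cover all the terms that arise. Third, the cancellation mechanisms you propose for the corrected terms are not the ones that work: the paper does not cancel one-correction terms by invoking $\{\theta_i\,{}_\lambda\,\cdot\}^D=0$ (which would trade a $k$-correction term for a $(k{+}1)$-correction term and has not been shown to close up), nor does it collapse two-correction terms via $C\circ C^{-1}=\mathrm{Id}$. Instead, after the full expansion, \emph{every} group of terms --- including the three-$C^{-1}$ ones --- is assembled into a triple of the form $\{x\,{}_\lambda\{y\,{}_\mu\,z\}\}\mp\{y\,{}_\mu\{x\,{}_\lambda\,z\}\}-\{\{x\,{}_\lambda\,y\}\,{}_{\lambda+\mu}\,z\}$ with $x,y,z$ drawn from $a,b,c$ and the $\theta$'s, sandwiched between $C^{-1}$'s, and each triple vanishes by the Jacobi identity of the \emph{original} bracket. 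Without the inverse-operator lemma, the complete term inventory, and a verified grouping into such triples with consistent Koszul--Quillen signs, the Jacobi identity has not been established.
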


Furthermore, we also consider supersymmetric (SUSY) PVAs, which are PVSAs with an odd derivation $D$. SUSY vertex algebras and SUSY PVAs introduced in  \cite{K, HK07} to provide algebraic frameworks for SUSY conformal field theory.  Moreover, some superintegrable Hamiltonian systems can be understood in terms of SUSY PVAs. See \cite{CS}, for instance. The simplest example of superintegrable systems described by a SUSY PVA is the super-KdV equation: 
\[ \frac{du}{dt}= D^6 u + 3 D^{2}u Du + 3 u D^3u = \{ uDu\,  {}_\Lambda\,  u\} |_{\chi=\lambda=0}, \]
where $\{u \, {}_\Lambda \, u\} = (D^2+ \frac{3}{2}\lambda + \frac{1}{2} \chi D) u-\lambda^2 \chi.$ In spite of the fact that the second Poisson structure of the super-KdV has a non-local property, non-local super Hamiltonian operators have not been studied in the theory of SUSY PVAs. See \cite{MR85, MA92, OP91} for the SUSY bi-Poisson structure of the super KdV.

As the first step to understand non-local super Hamiltonians, we introduce the notion of {\it non-local SUSY PVAs} in Definition \ref{Def:non-local SUSY}. Simply speaking, a non-local SUSY PVA $\mathscr{P}$
is  a $\CC[D]$-module  endowed with a non-local $\Lambda$-bracket
\begin{equation} \label{eq:intro-non-local SUSY PVA}
\{ \ {}_\Lambda \ \}: \mathscr{P}\otimes \mathscr{P} \to \mathscr{P}[\chi](\!( \lambda^{-1})\!),
\end{equation} 
which induces an admissible non-local Poisson $\Lambda$-bracket and satisfies the skewsymmetry, Jacobi identity and Leibniz rule. The {\it SUSY Dirac reduced $\Lambda$-bracket} on $\mathscr{P}$ associated with the finite subset $\theta_{I}:= \{ \theta_i | i \in I\}\subset \mathscr{P}$ is defined as follow. Assuming the invertibility of $C(\Lambda):=(\{ \theta_{j} \,{}_{\Lambda}\, \theta_{i} \} )_{i,j \in I}$,
the \textit{Dirac reduced bracket} of  $\mathscr{P}$  associated with $\theta_{I}$ is given by 
  \begin{equation}\label{SUSY Dirac-intro}
    \{a{}_{\Lambda} b\}^{D}=\{a{}_{\Lambda}b\}-\sum_{i,j \in I}(-1)^{(p(a)+j)(p(b)+i)}\{\theta_{j}\,{}_{\Lambda+\nabla}\,b\}_{\rightarrow}(C^{-1})_{ji}(\Lambda+\nabla)\{a\,{}_{\Lambda}\,\theta_{i}\}
  \end{equation}
  for homogeneous $a,b\in \mathscr{P}$. Then the SUSY analogue of Theorem \ref{Theorem:super Dirac, intro} holds. See Theorem \ref{Theorem:SUSY Dirac, part1} and Theorem \ref{Theorem:SUSY Dirac, part2}.
  
  \begin{thm} Let $\left< \theta_{I}\right>$ be the differential superalgebra ideal of $\mathscr{P}$ generated by $\theta_{I}$. 
\begin{enumerate}
\item The bracket $  \left\{ \ \,_{\Lambda}\, \  \right\}^{D}$ in  \eqref{SUSY Dirac-intro} gives another SUSY PVA structure on $\mathscr{P}$. In other words, it satisfies the sesquilinearity, skewsymmetry, Jacobi identity and Leibniz rule for SUSY PVAs.
\item The Dirac reduced bracket on $\mathscr{P}$ induces a SUSY Poisson  $\Lambda$-bracket on $\mathscr{P}/\left< \theta_{I}\right>$.
\end{enumerate}
\end{thm}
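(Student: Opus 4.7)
The plan is to prove both parts by paralleling the non-SUSY super case (Theorems \ref{Theorem:super Dirac, part1}--\ref{Theorem:super Dirac, part2}), translating every ingredient into the SUSY framework: the odd superformal variable $\chi$ accompanies $\lambda$ in $\Lambda$, the operator $\nabla$ encodes both $\partial$ and the odd derivation $D$, and the modified bracket takes values in $\mathscr{P}[\chi](\!(\lambda^{-1})\!)$ of \eqref{eq:intro-non-local SUSY PVA}. Throughout, two separate sources of signs must be tracked carefully: the $\ZZ/2$-grading of $\mathscr{P}$, and the odd character of $\chi$, $D$, and the odd indices $i \in I$.

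Sesquilinearity, the Leibniz rule, and skewsymmetry are handled first. Since the correction term in \eqref{SUSY Dirac-intro} is bilinear, in the appropriate SUSY sense, in $\{a\,_\Lambda\,\theta_i\}$ and $\{\theta_j\,_{\Lambda+\nabla}\,b\}$, sesquilinearity and Leibniz both follow from the corresponding axioms for the original $\Lambda$-bracket, once the prefactor $(-1)^{(p(a)+j)(p(b)+i)}$ is verified to transform consistently under $a\mapsto Da$, $b\mapsto Db$, or multiplication in either slot. For skewsymmetry, the essential input is that $C(\Lambda)$ is skew-adjoint with respect to the involution $\Lambda\mapsto -\Lambda-\nabla$ combined with supertransposition on matrix indices, which is immediate from SUSY skewsymmetry of $\{\theta_j\,_\Lambda\,\theta_i\}$. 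Applying $\{a\,_\Lambda\,\cdot\}$ to $C\circ C^{-1}=I$ yields a derivation-type formula for $C^{-1}$, from which $C^{-1}(\Lambda)$ inherits the same skew-adjointness; substituting back into \eqref{SUSY Dirac-intro} and reorganizing then converts $\{b\,_\Lambda\,a\}^D$ into the required skewsymmetric image of $\{a\,_\Lambda\,b\}^D$.

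The Jacobi identity is the main step and the principal obstacle. Expanding each of $\{a\,_\Lambda\,\{b\,_\Gamma\,c\}^D\}^D$, its $(a,b)$-swapped companion, and $\{\{a\,_\Lambda\,b\}^D\,_{\Lambda+\Gamma}\,c\}^D$ produces several classes of terms: the undeformed triples cancel by SUSY Jacobi for $\{\ _\Lambda\ \}$, while the mixed and doubled corrections must be reorganized by repeated application of the derivation identity
\begin{equation*}
\{a\,_\Lambda\,(C^{-1})_{ji}\cdot X\}=-\sum_{k,\ell\in I}(\pm 1)(C^{-1})_{jk}\{a\,_\Lambda\,C_{k\ell}\}(C^{-1})_{\ell i}\cdot X+(C^{-1})_{ji}\{a\,_\Lambda\,X\},
\end{equation*}
which follows from $C\circ C^{-1}=I$. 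After this expansion, groups of terms pair up via SUSY Jacobi applied internally to $\{a\,_\Lambda\,C_{k\ell}\}$, together with skewsymmetric rearrangements of $(i,j,k,\ell)$. The combinatorics are identical to the PVSA case, but every movement of $\chi$, $D$, or an odd $\theta_i$ past another odd quantity incurs a careful sign; to keep this manageable I would cast everything in a SUSY master-formula style in which each such movement produces exactly one explicit sign factor.

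Finally, for part (2), a direct computation using the inverse relation $\sum_i (C^{-1})_{ji}(\Lambda+\nabla)\circ C_{ik}(\Lambda)=\delta_{jk}$ shows that $\{\theta_k\,_\Lambda\,b\}^D=0$ for every $k\in I$ and $b\in\mathscr{P}$. Combined with the Leibniz rule and skewsymmetry established above, this forces both $\{\langle\theta_I\rangle\,_\Lambda\,\mathscr{P}\}^D$ and $\{\mathscr{P}\,_\Lambda\,\langle\theta_I\rangle\}^D$ to lie in $\langle\theta_I\rangle$-multiples with coefficients in $\mathscr{P}[\chi](\!(\lambda^{-1})\!)$, so $\{\ _\Lambda\ \}^D$ descends to a well-defined SUSY Poisson $\Lambda$-bracket on $\mathscr{P}/\langle\theta_I\rangle$.
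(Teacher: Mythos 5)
Your proposal follows essentially the same route as the paper: skewsymmetry via the skew-adjointness identity $(C^{-1})_{ij}(\Lambda)=(-1)^{ij+i+j+1}(C^{-1})^{*}_{ji}(\Lambda)$, the Jacobi identity via full expansion into correction terms that are regrouped using the derivation formula for $\{a\,{}_{\Lambda}\,(C^{-1})_{ji}\}$ (the paper's Lemma \ref{Lemma:Inverse operator-SUSY}) and cancelled in triples by the SUSY Jacobi identity of the original bracket, and part (2) by the direct computation that $\{\theta_k\,{}_{\Lambda}\,b\}^{D}=0$ from the inverse relation. This matches the paper's proofs of Theorems \ref{Theorem:SUSY Dirac, part1} and \ref{Theorem:SUSY Dirac, part2}, so the approach is sound and essentially identical.
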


\vskip 3mm

In the second part of this paper, we deal with classical W-algebras.  Let $\g$ be a simple Lie algebra with a Dynkin grading $\bigoplus_{i\in \frac{\ZZ}{2}} \g(i)$ associated with an $\mathfrak{sl}_2$-triple $\{E,H,F\}$ and a nondegenerate symmetric invariant bilinear form $(\, |\, )$. Let $\m= \mathcal{L} \oplus \bigoplus_{i \geq 1} \g{(i)}$, where $\mathcal{L}$ is a Lagrangian subspace with respect to the skew symmetric bilinear form $\left< \, \cdot\,  |\, \cdot \,  \right>:= (F|[\cdot, \cdot])$ on $\g_{\frac{1}{2}}$. The classical finite W-algebras $\mathcal{W}^{\text{fin}}(\g,F)$  is a Hamiltonian reduction of $\mathbb{C}[\g^*]\simeq S(\g)$ via the the moment map $\mu: \g^* \to \m^* $ at the regular value $\chi|_\m$ for   $\chi= (F|\,\cdot)\in \g^*$.  It is known that, for any  isotropic subspace $\ell$, if we take  $\n := \ell^{\perp} \oplus \bigoplus_{i \geq 1} \g{(i)}$  and replace $\m$ by $\m := \ell \oplus \bigoplus_{i \geq 1} \g{(i)}$ then 
\[\mathcal{W}^{\text{fin}}(\g,F)= \{ w \in S(\p)\,|\, \rho(\{n, w\})=0 \text{ for } n\in \n\}.\]
Here $\g=\m\oplus \p$ and $\rho(a)= a_{\p} + (F|a)$ for $a\in \g$, $a_{\p} \in \p$ and $a_{\m} \in \m$ such that $a= a_{\m}+a_{\p}$. As an affine analogue,  Drinfeld-Sokolov \cite{DS85} introduced the classical affine W-algebra $\mathcal{W}^k(\g, F)$ as a Hamiltonian reduction of the affine PVA $\mathcal{V}^k(\g)$.
Note that, in this paper, we fix $\ell=0$. Hence, $\n= \bigoplus_{i>  0} \g(i)$ and $\m= \bigoplus_{i \geq 1} \g(i)$.

Kac-Roan-Wakimoto \cite{KRW04} introduced W-algebras associated with Lie superalgebras and the corresponding classical finite W-algebras were established in \cite{DK06}, which we call W-superalgebras in this paper.
For a finite simple Lie superalgebra $\g$ with a nondegenerate supersymmetric invariant even bilinear form $(\ | \ )$, let us take an $\mathfrak{sl}_2$-triple $(E,H,F)$, which gives rise to the Dynkin grading $\g  = \bigoplus_{i\in \frac{\mathbb{Z}}{2}} \g(i)$.
 The classical W-superalgebra 
 \[\mathcal{W}^k(\g, F)= \{\, w  \in \mathcal{P}(\p)\,  |\, \rho\{ n \, {}_\lambda \, w\} =0 \text{ for } n\in \n \, \}\]
 is a PVSA endowed with the $\lambda$-bracket induced from $\mathcal{V}^k(\g)$. Here $\mathcal{P}(\p)$ is the differential superalgebra generated by $\p$. In order to describe free generators of $\mathcal{W}^k(\g,F)$, consider a homogeneous basis $\{\, v_i \, |\,  i\in J^F\}$ of $\ker (\ad F)=: \g^F$.  
 Then there is a unique element $\omega_i \in \mathcal{W}^k(\g,F)$ such that $\omega_i-v_i \in   \bigoplus_{n\geq 1} \mathcal{P}(\g^F) \otimes (\mathbb{C}[\partial]\otimes [E, \g_{<0}])^{\otimes n}$ for the differential superalgebra $\mathcal{P}(\g^F)$ generated by $\g^F$. Then the subset $\{\omega_i \,|\,i\in J_F\}$ freely generates $\mathcal{W}^k (\g,F)$ as a differential superalgebra.
 Moreover, in \cite{Suh20, Suh16}, the third author computed the summand $\gamma_i^1$ of $\omega_i-v_i$ in  $\mathcal{P}(\g^F) \otimes (\mathbb{C}[\partial]\otimes [E, \g_{<0}])$. Using the explicit formula of $\gamma_i^1$,  the $\lambda$-bracket between the generators can be expressed with elements in $\mathbb{C}[\lambda]\otimes \mathbb{C}[\partial^n \omega_i\,|\,n \in\mathbb{Z}_+, \, i\in I]$.

On the other hand, in various physical articles \cite{Dinar14, MarsRat86,RagSo99, Rag01,dBT93, BFOFW, BTvD, FRRT}, classical W-algebras were introduced as Dirac reductions of affine PVAs.  When $\g$ is a Lie algebra, De Sole-Kac-Valeri described a W-algebra $\mathcal{W}^k(\g, F)$ as a limit of Dirac reduction of $\mathcal{V}^k(\g)$ (see \cite{DSKV16}).
In this paper, we aimed to explain the $\lambda$-bracket of  a W-superalgebra $\mathcal{W}^k(\g,F)$, in terms of a Dirac reduced bracket of $\mathcal{V}^k(\g)$.

Unfortunately, it was not possible to find   $\theta_{I}\subset \mathcal{V}^k(\g, F)$ making $C(\lambda)=\left(\{\theta_j{}_{\lambda}\theta_i\}\right)$ invertible and the quotient algebra $\mathcal{V}^k(\g)/\langle \theta_{I}\rangle$ with the Dirac reduced bracket 
isomorphic to $\mathcal{W}^k(\g, F)$. In case $\g$ is a Lie algebra, De Sole-Kac-Valeri (\cite{DSKV16}) resolved the problem by including an extra parameter $t$ in the constraints $\theta_{I}(t)$. 
In this paper, instead of considering an additional parameter, we define {\it the modified Dirac reduced bracket on the quotient differential superalgebra $\mathcal{P}/\left< \theta_{I} \right>$} associated with a finite subset $\theta_{I}$ of PVSA $\mathcal{P}$. 
To be precise, let 
 $\pi:\mathcal{P}\rightarrow \mathcal{P}/\langle \theta_{I} \rangle$ be the canonical projection map and $\widetilde{C}(\lambda)$ be the matrix whose $ij$-entry is $\widetilde{C}_{ij}= \pi \left({C}_{ij}\right)\in\mathcal{P}/\left< \theta_{I} \right>[\lambda]$ for the $ij$-entry $C_{ij}$ of $C(\lambda)$ in \eqref{super Dirac-intro}. 
 Provided that $\widetilde{C}(\lambda)$ is invertible, we replace $(C^{-1})_{ji}(\lambda)$ in \eqref{super Dirac-intro} by $\widetilde{C}^{-1}_{ji}(\lambda)$ and obtain the modified Dirac reduced bracket.
Consequently, we prove  Theorem \ref{Thm: nonSUSY main} which implies the following statement.
\begin{thm} 
Let  $\theta_{I}= \{q_m^i - (F|q_m^i) | (i,m)\in I\}$, where $ \{q_m^i  | (i,m)\in  I\}$ is a basis of $[E,\g]$ and 
 $\pi: \mathcal{V}^k(\g) \to \mathcal{V}^k(\g)/\left< \theta_{I} \right>$ is the canonical projection map. Then the differential superalgebra $\mathcal{V}^k(\g)/\left< \theta_{I} \right>$ endowed with the {\it modified Dirac reduced bracket} is isomorphic to $\mathcal{W}^k(\g, F)$.
\end{thm}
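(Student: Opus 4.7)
My plan is to construct the isomorphism explicitly as the restriction $\Phi := \pi|_{\mathcal{W}^k(\g, F)}$ of the canonical projection $\pi: \mathcal{V}^k(\g) \to \mathcal{V}^k(\g)/\left< \theta_I \right>$, to show that $\Phi$ is a differential superalgebra iso, and then to verify that it intertwines the inherited $\lambda$-bracket on $\mathcal{W}^k(\g, F)$ with the modified Dirac reduced bracket on the quotient.

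To identify the quotient, I use $\mathfrak{sl}_2$-representation theory: the decomposition $\g = \g^F \oplus [E, \g]$ extends to an identification $\mathcal{V}^k(\g)/\left< \theta_I \right> \cong \mathcal{P}(\g^F)$, since modulo $\left< \theta_I \right>$ each $q_m^i \in [E, \g]$ becomes the scalar $(F | q_m^i)$, its derivatives vanish, and the basis of $\g^F$ persists as free differential generators. To see that $\Phi$ is a differential superalgebra isomorphism, recall from \cite{Suh20, Suh16} that $\omega_i - v_i \in \bigoplus_{n \geq 1} \mathcal{P}(\g^F) \otimes (\CC[\partial] \otimes [E, \g_{<0}])^{\otimes n}$. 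The key observation is that every $q \in [E, \g_{<0}]$ satisfies $(F | q) = 0$: indeed $F \in \g(-1)$ pairs nontrivially under $(\,|\,)$ only with $\g(1)$, whereas $[E, \g_{<0}] \subseteq \g_{\leq 1/2}$ has no $\g(1)$-component. Hence $\theta_q = q$ for such $q$, and every factor $\partial^k q$ appearing in $\omega_i - v_i$ already lies in $\left< \theta_I \right>$, so $\omega_i - v_i \in \left< \theta_I \right>$ and $\Phi(\omega_i) = v_i + \left< \theta_I \right>$. Since $\{\omega_i\}$ and $\{v_i\}$ freely generate the respective differential superalgebras, $\Phi$ is a differential superalgebra iso.

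For bracket preservation, by Leibniz and sesquilinearity it suffices to verify, for each pair $a, b \in \g^F$, the identity
$$
\pi\bigl(\{\omega_a \,{}_\lambda\, \omega_b\}\bigr) = \pi(\{v_a \,{}_\lambda\, v_b\}) - \sum_{i,j \in I} (\pm)\, \pi\bigl(\{\theta_j \,{}_{\lambda+\partial}\, v_b\}_{\to}\bigr)\, \widetilde{C}^{-1}_{ji}(\lambda+\partial)\, \pi\bigl(\{v_a \,{}_\lambda\, \theta_i\}\bigr).
$$
Writing $\omega_a = v_a + \widehat{\omega}_a$ with $\widehat{\omega}_a \in \left< \theta_I \right>$, the difference on the left can be analyzed using the defining property $\rho\{n \,{}_\lambda\, \omega_a\} = 0$ for $n \in \n$, combined with the explicit formula for the leading correction $\gamma_a^1$ from \cite{Suh20}.

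\textbf{The main obstacle} is twofold: establishing the invertibility of $\widetilde{C}(\lambda)$ in $\mathfrak{gl}_{(m|n)} \otimes \mathcal{P}(\g^F)(\!(\lambda^{-1})\!)$, and then carrying out the bracket identification above. The non-local nature of $\widetilde{C}^{-1}(\lambda)$ forces a formal Laurent series expansion in $\lambda^{-1}$; to match the W-algebra side, I would proceed by induction on the $\on{ad} H$-grading (equivalently on the degree $n$ of the correction $\gamma_a^n$), verifying first that the identity holds at the leading $\lambda$-order where $\widetilde{C}$ is controlled by the invariant bilinear form on $[E, \g]$, and then showing that each successive coefficient of $\widetilde{C}^{-1}$ reproduces one layer of the $\omega_a$-expansion.
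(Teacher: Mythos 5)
Your first half is sound and matches the paper: identifying $\mathcal{V}^k(\g)/\left<\theta_I\right>$ with $\mathcal{P}(\g^F)$ via $\pi(a)=a^{\sharp}+(F|a)$, and observing that $(F|q)=0$ for $q\in[E,\g_{<0}]\subset\g_{<1}$ so that every correction term $\gamma_i^n$ ($n\geq 1$) lies in $\left<\theta_I\right>$ and hence $\pi(\omega_i)=q_i$, is exactly how the differential-superalgebra isomorphism is obtained (your $\Phi$ is the inverse of the paper's map $\omega$). The gap is in the second half, and it is precisely at the point you flag as ``the main obstacle.'' You treat $\widetilde{C}^{-1}(\lambda)$ as a genuinely non-local object requiring a Laurent expansion in $\lambda^{-1}$, and propose an unexecuted induction on the $\ad H$-grading to match layers of $\widetilde{C}^{-1}$ against layers of $\omega_a$. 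The missing idea that makes the proof go through is structural: after using the basis-change lemma (Lemma \ref{lem:super modified dirac, basis change}) to put the second basis $\{q_j^{n+1}\}$ of $[E,\g]$ in one slot, the matrix $\widetilde{C}(\lambda)=\big(\pi\{q^i_m{}_{\lambda}q_j^{n+1}\}\big)$ becomes $\textup{Id}_I+T(\lambda)$ with $T(\lambda)$ strictly triangular for the partial order $\prec$ determined by the $\ad\frac{H}{2}$-eigenvalues (Lemma \ref{Lem: C(lambda) entry}). Hence $T$ is nilpotent, the Neumann series $\sum_p(-1)^p T(\lambda+\partial)^pT(\lambda)$ terminates, and $\widetilde{C}^{-1}(\lambda)$ is an explicit finite sum of products of the local expressions $[q^{j_t}_{n_t},q_{j_{t-1}}^{n_{t-1}+1}]^{\sharp}+(q^{j_t}_{n_t}|q_{j_{t-1}}^{n_{t-1}+1})k(\lambda+\partial)$ --- no Laurent expansion is needed at all. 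Invertibility and the closed form come for free from this triangularity; without it your argument has no way to produce or control $\widetilde{C}^{-1}$.

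The second missing ingredient is on the W-algebra side. You invoke only the leading correction $\gamma_a^1$ from \cite{Suh20}, but the bracket identification requires the full closed formula for $\{\omega(a){}_{\lambda}\omega(b)\}$ (Theorem \ref{Thm: suh nonSUSY}), which is itself a sum over chains $(j_0,n_0)\prec\cdots\prec(j_p,n_p)$ of exactly the same products. The paper's proof is then a term-by-term comparison of two already-explicit formulas (Theorem \ref{Thm: nonSUSY DR calculated} versus Theorem \ref{Thm: suh nonSUSY}); there is no induction left to do. As written, your proposal defers both the invertibility and the matching to a sketched induction whose base and inductive step are not specified, so the central claim remains unproved.
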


\vskip 2mm
As a counterpart of W-algebras in the theory of SUSY vertex algebras, one can consider SUSY W-algebras. A SUSY W-algebra was introduced to understand the superfield formalism in super-Toda theory in \cite{DRS92,EH91,KMN91}.  The SUSY version of BRST formalism
 was introduced by Madsen-Ragoucy in \cite{MR94} and it was interpreted via SUSY vertex algebras in \cite{MRS21}.  On the other hand,  the third author showed in \cite{Suh20} that SUSY classical W-algebras can be obtained from the quasi-classical limit of SUSY BRST complexes or the SUSY analogue of Drinfeld-Sokolov reductions.

A SUSY classical W-algebra $\mathcal{W}^k(\bar{\g}, f)$ is governed by a Lie superalgebra $\g$ and an odd nilpotent element $f$ in  a subalgebra $\mathfrak{s} \simeq \mathfrak{osp}(1|2)$. Let  us consider a subspace $\p= \bigoplus_{i\leq 0} \g(i)$ and the differential superalgebra   $\mathscr{P}(\bar{\p})$ generated by $\bar{\p}$.  For $a\in \g$, we denote by $a_-\in\p$ and $a_+ \in \n$ the elements satisfying $a= a_+ + a_-$ and define $\rho_-(a):= a_- +(f|a)$.
Then the SUSY W-algebra 
\[\mathcal{W}^k(\bar{\g}, f)= \{ \bar{\omega}  \in \mathscr{P}(\bar{\p})\  |\  \rho_-(\{ \bar{n} \, {}_\Lambda \,   \bar{\omega}\}) =0 \text{ for any } n\in \mathfrak{n} \}\] 
is a SUSY PVA, whose $\Lambda$-bracket is induced from the affine SUSY PVA $\mathscr{V}^k(\bar{\g})$.
When a homogeneous basis $\{u_i \,|\, i\in J^f\}$ of $\ker(\ad f)=:\g^f$ is given, one can find a unique element  $\bar{\upsilon}_i \in \mathcal{W}^k(\bar{\g}, f)$ for $i\in J_f$ such that $\bar{\upsilon}_i -\bar{u}_i \in  \bigoplus_{n\geq 1} \mathscr{P}(\bar{\g}^f) \otimes (\mathbb{C}[D]\otimes \overline{[e, \g_{\leq -\frac{1}{2}}]})^{\otimes n}$. The $\Lambda$-brackets between the generators $\bar{\upsilon}_i$ can be found in \cite{Suh20}.

Similarly to the classical W-superalgebra case, we substitute $C(\Lambda)$  in \eqref{SUSY Dirac-intro} by $\widetilde{C}(\Lambda)$ whose $ij$-entry is $\pi\left(C_{ij}(\Lambda)\right)\in \mathscr{P}/\left< \theta_{I}\right>[\Lambda]$ to obtain a bracket called {\it modified SUSY Dirac reduction on the quotient space $\mathscr{P}/\left< \theta_{I}\right>$}. Then the following theorem holds. See Theorem \ref{Thm: SUSY main} for the precise statement.

\begin{thm} 
Let  $\theta_{I}= \{q_m^i - (f|q_m^i) | (i,m)\in I\}$, where $ \{q_m^i  | (i,m)\in  I\}$ is a basis of $[e,\g]$ and 
 $\pi: \mathscr{V}^k(\bar{\g}) \to \mathscr{V}^k(\bar{\g})/\left< \theta_{I} \right>$ be the canonical projection map. Then the differential superalgebra $\mathscr{V}^k(\g)/\left< \theta_{I} \right>$ endowed with the {\it modified Dirac reduced bracket} is isomorphic to the SUSY classical W-algebra $\mathcal{W}^k(\bar{\g}, f)$.
\end{thm}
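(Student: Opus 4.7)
The plan is to establish the isomorphism in two stages: first as differential superalgebras, and then as SUSY PVAs.

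\textbf{First stage (differential superalgebra isomorphism).} Using the representation theory of the $\mathfrak{osp}(1|2)$-triple $\{e,h,f\}$, one obtains the vector space decomposition $\g=\g^f \oplus [e,\g]$, hence $\bar{\g}=\bar{\g}^f \oplus \overline{[e,\g]}$. The constraints $\theta_I=\{q_m^i-(f|q_m^i)\}$ eliminate precisely the $\overline{[e,\g]}$-generators of $\mathscr{V}^k(\bar{\g})$ by setting them equal to constants, so $\mathscr{V}^k(\bar{\g})/\langle\theta_I\rangle \cong \mathscr{P}(\bar{\g}^f)$ as a differential superalgebra. The composition of the inclusion $\mathcal{W}^k(\bar{\g},f)\hookrightarrow \mathscr{P}(\bar{\p})$ with the canonical projection $\mathscr{P}(\bar{\p})\to \mathscr{P}(\bar{\g}^f)$ (which sends everything in $\overline{[e,\g_{\leq -1/2}]}$ to the corresponding constant $\rho_-$-value) sends each generator $\bar{\upsilon}_i$ to $\pi(\bar{u}_i)$ plus a sum of strictly lower-weight constants. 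Since $\{\bar{\upsilon}_i\}$ and $\{\pi(\bar{u}_i)\}$ freely generate their respective differential superalgebras and the map is triangular with respect to a natural grading, this yields a differential superalgebra isomorphism $\Phi:\mathcal{W}^k(\bar{\g},f)\xrightarrow{\sim} \mathscr{V}^k(\bar{\g})/\langle\theta_I\rangle$.

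\textbf{Second stage (bracket matching).} Before comparing brackets, I would verify that the modified matrix $\widetilde{C}(\Lambda)$ is invertible: the leading term in $\lambda$ of $\{\theta_b\,{}_\Lambda\,\theta_a\}=\{q_m^i\,{}_\Lambda\,q_{m'}^{i'}\}$ is the central-extension contribution $k\lambda(q_m^i|q_{m'}^{i'})$, and the bilinear form $\langle\cdot|\cdot\rangle=(f|[\cdot,\cdot])$ is nondegenerate on $[e,\g]$ by the $\mathfrak{osp}(1|2)$-theory, so after projection $\widetilde{C}(\Lambda)$ is invertible by a geometric (Neumann) series in $\mathfrak{gl}_{(m|n)}\otimes (\mathscr{V}^k(\bar{\g})/\langle\theta_I\rangle)[\chi](\!(\lambda^{-1})\!)$. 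Given this, the task is to show that for all $i,j\in J^f$,
\[
\Phi(\{\bar{\upsilon}_i\,{}_\Lambda\,\bar{\upsilon}_j\}) \;=\; \{\pi(\bar{u}_i)\,{}_\Lambda\,\pi(\bar{u}_j)\}^D.
\]
The key observation is that both sides are characterized by the same universal property: they are the unique elements in $\mathscr{P}(\bar{\g}^f)[\chi](\!(\lambda^{-1})\!)$ obtained from $\{\bar{u}_i\,{}_\Lambda\,\bar{u}_j\}$ by adding terms that (a) lie in $\mathscr{P}(\bar{\g}^f)[\chi](\!(\lambda^{-1})\!)$ after projection by $\pi$, and (b) are annihilated by $\rho_-\{\bar{n}\,{}_{\Lambda+\nabla}\,\cdot\}_\to$ for every $n\in\mathfrak{n}$. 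For the W-algebra side this is the defining recursion for $\bar{\upsilon}_i$; for the Dirac side, applying $\rho_-\{\bar{n}\,{}_{\Lambda+\nabla}\,\cdot\}_\to$ to the correction term and using the identity $\widetilde{C}\cdot\widetilde{C}^{-1}=\mathrm{id}$ shows that the Dirac correction precisely cancels the obstruction $\rho_-\{\bar{n}\,{}_{\Lambda+\nabla}\,\bar{u}_j\}_\to$, yielding the same recursive cancellation.

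\textbf{Main obstacle.} The difficult step is the rigorous identification of the two recursive procedures in Stage 2. The SUSY sign factor $(-1)^{(p(a)+j)(p(b)+i)}$ and the interplay between the odd derivation $D$ and the superbracket make the bookkeeping delicate; moreover, one must match two different filtrations, namely the differential-word-length filtration used to construct $\bar{\upsilon}_i$ iteratively in \cite{Suh20}, against the $\lambda^{-1}$-adic filtration arising from the Neumann expansion of $\widetilde{C}^{-1}(\Lambda)$. The strategy I would follow mirrors the argument of De Sole--Kac--Valeri in \cite{DSKV16} for the non-SUSY Lie-algebra case, but replaces their $t$-deformation trick by the modified Dirac reduction introduced earlier in the paper (so that $\pi$ is applied before inverting), and systematically invokes Theorems \ref{Theorem:super Dirac, part1}, \ref{Theorem:super Dirac, part2}, \ref{Theorem:SUSY Dirac, part1}, \ref{Theorem:SUSY Dirac, part2} to ensure that the resulting bracket is a genuine SUSY PVA structure so that it is enough to check the identity on generators.
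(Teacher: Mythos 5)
Your overall two-stage plan (differential superalgebra isomorphism first, then bracket matching) is the right shape, but there are two genuine gaps in Stage 2. First, your invertibility argument for $\widetilde{C}(\Lambda)$ is wrong: invertibility does \emph{not} come from the central-extension term $k\lambda(q^i_m|q^{i'}_{m'})$ together with nondegeneracy of $(f|[\cdot,\cdot])$ on $[e,\g]$ — that leading-coefficient matrix is typically singular (the remark following Theorem \ref{Thm: nonSUSY main} exhibits exactly this failure already for $\mathfrak{sl}_2$, where the coefficient of $k\lambda$ is a nonzero nilpotent matrix). What actually makes $\widetilde{C}(\Lambda)$ invertible is the effect of $\pi$ itself: after projecting, $\pi\{\bar{r}^i_m\,{}_\Lambda\,\bar{r}_j^{n+1}\}$ acquires the constant $(f|[r^i_m,r_j^{n+1}])=\pm\delta_{ij}\delta_{mn}$, and all remaining entries vanish unless $(j,n)\llcurly(i,m)$ (Lemma \ref{Lem: C(Lambda) entry}). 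This requires pairing the basis $\{r^i_m\}$ against the \emph{second} basis $\{r_j^{n+1}\}$ via Lemma \ref{lem:SUSY modified dirac, basis change}; only then is $\widetilde{C}(\Lambda)$ a constant invertible matrix plus a nilpotent part ordered by $\llcurly$, so that the Neumann series terminates after finitely many steps (Proposition \ref{Prop: SUSY inverse}) and the inverse is local. Your proposed $\lambda^{-1}$-adic geometric series would, even if it converged, produce a genuinely non-local inverse and could not be matched against the polynomial W-algebra bracket.

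Second, the step you flag as the ``main obstacle'' — identifying the Dirac correction recursion with the defining recursion for the generators $\bar{\upsilon}_i$ via a universal property — is precisely the content of the proof, and you have not supplied it. The paper sidesteps any such recursion matching: it quotes the closed-form $\Lambda$-bracket of $\mathcal{W}^k(\bar{\g},f)$ between the generators $\overline{\omega}(a)$ from \cite{Suh20} (Theorem \ref{thm:SUSY W-algebra bracket}), computes the modified Dirac reduced bracket in the same closed form from the explicit inverse of $\widetilde{C}(\Lambda)$ (Theorem \ref{thm: SUSY Dirac formula calculated}), and then observes that $\pi\{\bar{r}^{j_t}_{n_t}\,{}_{\Lambda+\nabla}\,\bar{r}_{j_{t-1}}^{n_{t-1}+1}\}=[r^{j_t}_{n_t},r_{j_{t-1}}^{n_{t-1}+1}]^{\sharp}+(r^{j_t}_{n_t}|r_{j_{t-1}}^{n_{t-1}+1})k(\chi+D)$, so that $\overline{\omega}$ carries one sum onto the other term by term. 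Without either this explicit computation or a rigorous version of your universal-property argument (including a proof that conditions (a) and (b) determine the bracket uniquely), the proof is incomplete.
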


This paper is organized as follows. In Section \ref{Sec:DR for PVA}, we review non-local PVAs and introduce Dirac reduction for Poisson vertex superalgebras and Poisson superalgebras.  Analogously, in Section  \ref{Sec: Dirac-SUSY}, we introduce non-local SUSY PVA and Dirac reduction for SUSY PVAs. 
In Section \ref{Sec:Structure of W-superalgebras}, we introduce modified Dirac reduced brackets for Poisson vertex superalgebras and describe the Poisson structures of W-superalgebras via modified Dirac reduced brackets of affine PVSAs.
In Section \ref{Sec:Structure of SUSY W-algebra}, we explain SUSY analogue of Section \ref{Sec:Structure of W-superalgebras} and SUSY W-algebras via modified Dirac reduced brackets of SUSY affine PVAs.

Throughout this paper, the base field is $\mathbb{C}$. The set of integers (resp. nonnegative integers) is denoted by $\ZZ$ (resp. $\ZZ_{+}$).

\section{Dirac reduction of Poisson (vertex) superalgebras} \label{Sec:DR for PVA}

 A $\ZZ/2\ZZ$-graded vector space $V=V_{\bar 0}\oplus V_{\bar 1}$ is called a \textit{vector superspace} and its homogeneous element $a \in V_{\bar \imath}$ is called \textit{even} (resp. \textit{odd}) if $i =0$ (resp. $i =1$). Denote the {\it parity} of $a\in V_{\bar \imath}$ by $\tilde{a}:=i$. A linear operator $\phi : V \rightarrow V$ is called \textit{even} if $\phi(V_{\bar \imath})\subset V_{\bar{\imath}}$ and \textit{odd} if $\phi(V_{\bar \imath})\subset V_{\bar{\imath}+\bar{1}}$. A $\CC$-algebra $A$ is called a \textit{superalgebra} if $A$ is a vector superspace satisfying $A_{\bar \imath}A_{\bar \jmath} \subset A_{\bar{\imath}+\bar{\jmath}}$ for $i,j \in \{0, 1\}$.  If there is a linear operator $d$ on a superalgebra $A$ with satisfying 
 \[d(ab)=d(a) b+(-1)^{\tilde{d}\tilde{a}}a d(b)\]
 for $a,b \in A$, then $A$ is called a \textit{differential superalgebra} with a \textit{derivation} $d$. Here and further, for simplicity, we just call the tuple $(A, d)$ a \textit{differential algebra}. In the rest of the paper, whenever the parity $\tilde{a}$  of an element $a$ in a vector superspace is considered, we assume that $a$ is homogeneous even though it is not mentioned.

 
 \subsection{Non-local Poisson vertex superalgebras}\label{Subsec:Non-local PVSA}\hfill

  A vector superspace $\mathfrak{g}$ endowed with a bilinear bracket $[\cdot , \cdot ]: \g \times \g \to \g$ is a \textit{Lie superalgebra} if it satisfies 
  \begin{itemize}
    \item (skewsymmetry) $[a,b]=-(-1)^{\tilde{a}\tilde{b}}[b,a]$,
    \item (Jacobi identity) $(-1)^{\tilde{c}\tilde{a}}[a,[b,c]]+(-1)^{\tilde{a}\tilde{b}}[b,[c,a]]+(-1)^{\tilde{b}\tilde{c}}[c,[a,b]]=0$
  \end{itemize}
  for $a, b,c \in \mathfrak{g}$. If a Lie superalgebra $(P, \{ \cdot , \cdot \})$ is a unital supercommutative associative algebra with the Leibniz rule:
\begin{equation}
  \left\{ a , b c \right\}=\left\{ a , b \right\} c+(-1)^{\tilde{a}\tilde{b}}b \left\{ a , c \right\}
\end{equation}
for $a,b,c \in P$, then $P$ is called a {\it Poisson superalgebra.}

To define a non-local Poisson vertex superalgebra in a similar manner, let us introduce an admissible non-local $\lambda$-bracket. See \cite{DSK13} for detailed properties of non-local Poisson vertex algebras.
On a $\CC[\partial]$-module $\mathcal{R}$, a \textit{non-local $\lambda$-bracket} is a parity preserving bilinear map $[ \cdot\,  {}_\lambda \,\cdot ]: \mathcal{R} \times \mathcal{R} \to \mathcal{R}(\!(\lambda^{-1})\!)$ satisfying the \textit{sesquilinearity:}
\begin{equation} \label{eq:sesqui}
 [\partial a _\lambda b]=-\lambda[ a _\lambda b], \ [a _\lambda \partial b]=(\lambda+\partial)[ a _\lambda b]
\end{equation}
for $a, b \in \mathcal{R}$. We denote by $[a_{\lambda} b] =\sum_{n\in \ZZ }\lambda^{n} a_{(n)}b  $ for $a_{(n)}b\in \mathcal{R}$.

Consider a superspace 
\begin{equation}\label{eq:admissibility, space}
    \mathcal{R}_{\lambda,\mu}:=\mathcal{R}[\![\lambda^{-1}, \mu^{-1}, (\lambda+\mu)^{-1}]\!][\lambda,\mu]
\end{equation}
and let $\iota_{\mu,\lambda}: \mathcal{R}_{\lambda, \mu} \hookrightarrow \mathcal{R}(\!(\lambda^{-1})\!)(\!(\mu^{-1})\!)$ be defined by the geometric expansion of $(\lambda+\mu)^m$ for an integer $m$ in the domain  $|\mu|>|\lambda|$. 
Then we can identify $\mathcal{R}_{\lambda,\mu}$ with the image $\iota_{\mu,\lambda}(\mathcal{R}_{\lambda,\mu}) \subset \mathcal{R}(\!(\lambda^{-1})\!)(\!(\mu^{-1})\!)$.
 If a non-local $\lambda$-bracket on $\mathcal{R}$ satisfies 
 \begin{equation*}
   [ a {}_{\lambda} [ b {}_{\mu} c ] ] \in \mathcal{R}_{\lambda,\mu}
 \end{equation*}
 for all $a, b, c \in \mathcal{R}$, then it is said to be \textit{admissible}.
If a $\CC[\partial]$-module $\mathcal{R}$ and an admissible $\lambda$-bracket $[ \cdot\, {}_{\lambda} \,\cdot ]$ are given to satisfy the following axioms:
  \begin{itemize}
    \item (skewsymmetry) $[b _\lambda a]=-(-1)^{\tilde{a}\tilde{b}}{}_{\leftarrow}[a \,_{-\lambda-\partial}\, b]$ in $\mathcal{R}(\!(\lambda^{-1})\!),$  
    \item (Jacobi identity) $[ a _\lambda [ b _\mu c]]=[[ a _\lambda b] \,_{\lambda+\mu}\, c]+(-1)^{\tilde{a}\tilde{b}}[b _\mu [ a _\lambda c]]$ in $\mathcal{R}_{\lambda,\mu}$
  \end{itemize}
for  $a, b, c \in \mathcal{R}$. We call $\mathcal{R}$ a \textit{non-local Lie conformal superalgebra} (\textit{LCA}). The skewsymmetry can be rewritten as
\[ \sum_{n\in \ZZ} \lambda^{n}b_{(n)}a =-(-1)^{\tilde{a}\tilde{b}}\sum_{n\in \ZZ}(-\lambda-\partial)^{n}a_{(n)}b =-(-1)^{\tilde{a}\tilde{b}}\sum_{n\in \ZZ}\sum_{r\in \ZZ_+}\binom{n}{r}(-1)^{n}(\partial^{r}a_{(n)}b)\lambda^{n-r}\]
and the three terms in the Jacobi identity are in $\mathcal{R}(\!(\lambda^{-1})\!)(\!(\mu^{-1})\!)$,   $\mathcal{R}(\!((\lambda+\mu)^{-1})\!)(\!(\lambda^{-1})\!)$ and $\mathcal{R}(\!(\mu^{-1})\!)(\!(\lambda^{-1})\!)$, respectively. By the admissibility, the LHS and RHS of Jacobi identity can be compared in $\mathcal{R}_{\lambda,\mu}$ via the identification $\iota_{\mu,\lambda}$, $\iota_{\lambda,\lambda+\mu}$ and  $\iota_{\mu,\lambda}$.

\begin{defn}\label{Defn:non-local PVA}
  A \textit{non-local Poisson vertex superalgebra} (\textit{non-local PVSA} or just \textit{PVSA}) is a quadruple $(\mathcal{P},\partial,\left\{ \cdot \,_\lambda\, \cdot \right\},\cdot)$ which satisfies the following axioms:
  \begin{itemize}
    \item $(\mathcal{P}, \partial, \left\{ \cdot \,_\lambda\, \cdot \right\})$ is a non-local LCA,
    \item $(\mathcal{P}, \partial, \cdot)$ is a unital supercommutative associative differential algebra,
    \item (left Leibniz rule) $\left\{ a _\lambda bc \right\}=\left\{ a _\lambda b \right\}c+(-1)^{\tilde{a}\tilde{b}}b\left\{ a _\lambda c \right\}$ for $a, b, c \in \mathcal{P}.$
  \end{itemize}
 \end{defn}

The supersymmetric algebra generated by a superspace $V$ is 
$$S(V):=S(V_{\bar 0})\otimes \bigwedge(V_{\bar 1}),$$
where $S(V_{\bar 0})$ is the symmetric algebra generated by $V_{\bar 0}$ and $\bigwedge(V_{\bar 1})$ is the exterior algebra generated by $V_{\bar{1}}$. 
Suppose $V$ is a superspace with a homogeneous basis $\mathcal{B}=\mathcal{B}_{0} \sqcup\mathcal{B}_{1}$, where $\mathcal{B}_{0}=\{u_{i} | i \in I_0\}$ and $\mathcal{B}_{1}=\{u_{i} | i \in I_1\}$ consist of even and odd elements,  respectively. Then the supersymmetric algebra
\begin{equation}\label{2.3}
  \mathcal{P}=S(\CC[\partial]\otimes V)=\CC[u_{i}^{(m)}|i \in I_{0} \sqcup I_{1}, \; m \in \ZZ_{+}]
\end{equation}
is a differential algebra with even derivation $\partial$ on $\mathcal{P}$ defined by $\partial(u_i^{(m)})= u_i^{(m+1)}$.



\begin{prop}(Master formula of non-local PVSA).\label{Master formula}
   For the differential algebra $\mathcal{P}$ in \eqref{2.3}, let $ \frac{\partial}{\partial u_{i}^{(m)}}$ be the derivation of parity $\tilde{u}_i$ on  $\mathcal{P}$  such that  $\frac{\partial}{\partial u_{i}^{(m)}}u_{j}^{(n)}=\delta_{m,n}\delta_{i,j}.$
If  $\mathcal{P}$ is a PVSA with non-local $\lambda$-bracket $\left\{ \cdot \,_\lambda\, \cdot \right\}$, then for $f,g\in\mathcal{P}$,
  \begin{equation}\label{2.4}
    \left\{ f {}_{\lambda} g \right\}=\displaystyle\sum_{\substack{i,j \in I_{0}\sqcup I_{1} \\ m,n \in \mathbb{Z}_{+}}}X_{i,j}^{f,g}\frac{\partial g}{\partial u_{j}^{(n)}}(\lambda+\partial)^{n}\left\{ u_{i} {}_{\lambda+\partial} u_{j} \right\}_{\rightarrow}(-\lambda-\partial)^{m}\frac{\partial f}{\partial u_{i}^{(m)}},
  \end{equation}
where $X_{i,j}^{f,g}=(-1)^{\tilde{f}\tilde{g}}(-1)^{\tilde{u}_{i}\tilde{u}_{j}}(-1)^{\tilde{g}\tilde{u}_{j}}(-1)^{\tilde{u}_{j}}$ and $\left\{ u_{i} {}_{\lambda+\partial} u_{j} \right\}_{\to}=\sum_{l\in \ZZ} (u_{i\, (l)} u_j) (\lambda+\partial)^l$. 
\end{prop}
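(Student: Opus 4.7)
The plan is a double induction on the monomial degree of the arguments, reducing the general formula to the case of single derivatives of generators, which in turn reduces to sesquilinearity and the definition of $\left\{ u_i \,_\lambda\, u_j \right\}$. The three axioms doing the work are sesquilinearity, the left Leibniz rule, and skewsymmetry (used in the form of the derived right Leibniz rule).

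For the base case I would take $f = u_{i_0}^{(m_0)}$ and $g = u_{j_0}^{(n_0)}$. Here the partial derivatives on the right-hand side of \eqref{2.4} collapse to Kronecker deltas, and the parity sign simplifies to
\[ X_{i_0,j_0}^{f,g} \;=\; (-1)^{2\tilde{u}_{i_0}\tilde{u}_{j_0}+\tilde{u}_{j_0}^{2}+\tilde{u}_{j_0}} \;=\; (-1)^{\tilde{u}_{j_0}(\tilde{u}_{j_0}+1)} \;=\; 1. \]
Two applications of sesquilinearity then give
\[ \left\{ u_{i_0}^{(m_0)} \,_\lambda\, u_{j_0}^{(n_0)} \right\} \;=\; (-\lambda)^{m_0}(\lambda+\partial)^{n_0}\left\{ u_{i_0} \,_\lambda\, u_{j_0} \right\}, \]
and since $\partial(1)=0$, the proposed RHS evaluates to the same expression after $(-\lambda-\partial)^{m_0}$ is read as acting on the implicit argument $1$. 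This settles the base case.

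For the inductive step on $g$ I would apply the left Leibniz rule $\left\{ f\,_\lambda\, g_1 g_2 \right\} = \left\{ f \,_\lambda\, g_1 \right\} g_2 + (-1)^{\tilde{f}\tilde{g}_1} g_1 \left\{ f \,_\lambda\, g_2 \right\}$, invoke the inductive hypothesis on the two summands, and then recombine the two resulting sums using the super Leibniz rule for the odd derivation $\partial/\partial u_j^{(n)}$, namely
\[ \frac{\partial(g_1 g_2)}{\partial u_j^{(n)}} \;=\; \frac{\partial g_1}{\partial u_j^{(n)}}\, g_2 + (-1)^{\tilde{u}_j \tilde{g}_1}\, g_1\, \frac{\partial g_2}{\partial u_j^{(n)}}. \]
The supercommutativity of $\mathcal{P}$ is then used to shuffle $g_2$ into the correct slot, producing Koszul signs that are supposed to assemble into $X_{i,j}^{f,g_1 g_2}$. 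For the induction on $f$ I would first derive the right Leibniz rule from skewsymmetry combined with the left Leibniz rule, and then repeat the same strategy, combining the two resulting terms via the super Leibniz rule for $\partial/\partial u_i^{(m)}$.

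The principal obstacle is purely sign bookkeeping: the four factors making up $X_{i,j}^{f,g}=(-1)^{\tilde{f}\tilde{g}}(-1)^{\tilde{u}_{i}\tilde{u}_{j}}(-1)^{\tilde{g}\tilde{u}_{j}}(-1)^{\tilde{u}_{j}}$ must each be matched to a specific Koszul--Quillen commutation occurring in the induction. Roughly, $(-1)^{\tilde{f}\tilde{g}}$ arises from the skewsymmetry step used when deriving the right Leibniz rule; $(-1)^{\tilde{u}_i\tilde{u}_j}$ is produced when the two odd derivations $\partial/\partial u_i^{(m)}$ and $\partial/\partial u_j^{(n)}$ are commuted past the operator-valued kernel $\left\{ u_i \,_{\lambda+\partial}\, u_j \right\}_{\to}$; $(-1)^{\tilde{g}\tilde{u}_j}$ is the sign picked up when $\partial/\partial u_j^{(n)}$ is moved through $g$ using its super-derivation property; and $(-1)^{\tilde{u}_j}$ is the residual sign arising when $\partial/\partial u_j^{(n)}$ finally acts on a $u_j^{(n)}$-factor inside $g$. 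Once each contribution is tracked consistently through the induction, the two sides of \eqref{2.4} agree.
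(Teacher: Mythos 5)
Your proposal is correct and is essentially the argument the paper itself gives (and defers to \cite{DSK13}, Theorem 4.8 and \cite{Suh18}, Proposition 4.4 for): derive the right Leibniz rule from skewsymmetry plus the left Leibniz rule, verify the formula on generators via sesquilinearity, and extend to all of $\mathcal{P}$ by induction on monomial degree in each argument. The base-case sign check and the identification of where each factor of $X_{i,j}^{f,g}$ originates are consistent with that standard route, so there is nothing substantively different to compare.
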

\begin{proof}

  By the skewsymmetry and the left Leibniz rules of PVSAs, one can deduce the \textit{right Leibniz rule}:
 \begin{equation*}
   \left\{ ab {}_{\lambda} c \right\}=(-1)^{\tilde{b}\tilde{c}}\left\{ a {}_{\lambda+\partial} c \right\}_{\rightarrow}b+(-1)^{\tilde{a}(\tilde{b}+\tilde{c})}\left\{ b {}_{\lambda+\partial} c \right\}_{\rightarrow}a,
 \end{equation*}
where 
 $ \left\{ a {}_{\lambda+\partial} c \right\}_{\rightarrow}b=\sum_{n\in \ZZ}a_{(n)}c\,(\lambda+\partial)^{n}b.$ The formula \eqref{2.4} follows from the sesquilinearity and left and right Leibniz rule of $\lambda$-brackets. We refer to Theorem 4.8 in \cite{DSK13} and Proposition 4.4 in \cite{Suh18} for detailed proof.
\end{proof}

\begin{prop} \label{Prop:PVA from generators}
Let $V$ be a vector superspace and $\mathcal{P}:=S(\CC[\partial]\otimes V)$. If a bracket $[ \cdot {}_{\lambda} \cdot ] : V \otimes V \rightarrow\mathcal{P} (\!(\lambda^{-1})\!)$ satisfies the admissibility, skewsymmetry and Jacobi identity, then it can be uniquely extended to a non-local PVSA $\lambda$-bracket on $\mathcal{P}$ using the sesquilinearity and Leibniz rule.
\end{prop}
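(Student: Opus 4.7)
The plan is to use the Master formula of Proposition \ref{Master formula} as the blueprint for both existence and uniqueness of the extension.

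For uniqueness, observe that if $\{\cdot\,{}_\lambda\,\cdot\}$ is any non-local PVSA $\lambda$-bracket on $\mathcal{P}$ extending the given bracket on $V$, then the sesquilinearity axioms force its values on $\CC[\partial]\otimes V$ to be determined by its values on $V$, and the left and right Leibniz rules (the latter being a consequence of skewsymmetry plus the left Leibniz rule) force its values on all monomials in the $u_i^{(m)}$ to be determined by its values on $\CC[\partial]\otimes V$. Concretely, the Master formula \eqref{2.4} expresses $\{f\,{}_\lambda\,g\}$ entirely in terms of the generator brackets $\{u_i\,{}_{\lambda+\partial}\,u_j\}$, so uniqueness is automatic.

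For existence, I would \emph{define} $\{f\,{}_\lambda\,g\}$ by the Master formula with the given input bracket on $V$. Several axioms are essentially built in: sesquilinearity and the left Leibniz rule follow from the structure of the formula after routine bookkeeping of partial derivatives and signs, and skewsymmetry on generators lifts to skewsymmetry on $\mathcal{P}$ by a direct induction on the number of generators in $f$ and $g$, using the right Leibniz rule (derived from sesquilinearity and the left Leibniz rule) to reduce to the generator case. Admissibility also lifts: given $[a\,{}_\lambda\,[b\,{}_\mu\,c]] \in \mathcal{P}_{\lambda,\mu}$ for $a,b,c\in V$, one checks that applying sesquilinearity or the Leibniz rule in any slot preserves the space $\mathcal{P}_{\lambda,\mu}$ (since multiplication by $\lambda$, $\mu$, or $\lambda+\partial$, $\mu+\partial$ preserves it, as does multiplication by elements of $\mathcal{P}$).

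The main obstacle is the Jacobi identity. Here I would proceed by induction on the total number of generators appearing in the three arguments. The base case $a,b,c\in V$ is exactly the hypothesis. For the inductive step, I would fix two of the arguments and factor the third as a product, say $c = c_1 c_2$, and compute each of the three terms of Jacobi using the left Leibniz rule (on the outer bracket) and the right Leibniz rule (wherever $c$ appears on the left of a bracket via skewsymmetry). After carefully tracking the Koszul signs dictated by the parities $\tilde a, \tilde b, \tilde c_1, \tilde c_2$, the terms regroup into pairs of Jacobi expressions with strictly fewer generators, which vanish by the inductive hypothesis; the reduction for factoring $a$ or $b$ instead of $c$ is analogous. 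The sesquilinearity reduction from $\CC[\partial]\otimes V$ down to $V$ is handled by the observation that replacing any entry $x$ by $\partial x$ simply multiplies the corresponding $\lambda$, $\mu$, or $-\lambda-\mu-\partial$ factor, and admissibility guarantees that all three terms land in $\mathcal{P}_{\lambda,\mu}$ so that the identity is a bona fide equation there. The only delicate point is checking that all the sign factors $X^{f,g}_{i,j}$ from the Master formula combine correctly under these reductions; this is where I expect the bulk of the bookkeeping to lie, but no conceptual difficulty beyond the Koszul sign rule.
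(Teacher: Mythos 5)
Your proposal is correct in outline, but it is worth saying up front that the paper does not actually write out a proof of Proposition \ref{Prop:PVA from generators}: it simply cites Theorem 4.8 of \cite{DSK13} and Theorem 1.15 of \cite{BDSK09}, and what you have reconstructed is essentially the argument carried out in those references (extend by sesquilinearity and the Leibniz rules, observe that the Master formula \eqref{2.4} is the unique closed-form extension, then reduce each axiom to the generator case by induction on the number of generators, with admissibility tracked through the reductions). So your route is the ``real'' proof behind the citation rather than a genuinely different one.

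One point needs repair, though it is easily fixed. In the existence step you propose to prove skewsymmetry of the extended bracket by induction, ``using the right Leibniz rule (derived from sesquilinearity and the left Leibniz rule).'' That derivation is not available to you: as the paper itself notes in the proof of Proposition \ref{Master formula}, the right Leibniz rule is a consequence of skewsymmetry \emph{together with} the left Leibniz rule, so invoking it before skewsymmetry is established is circular. The standard fix is to verify the right Leibniz rule for the bracket \emph{defined by the Master formula} directly from the shape of the formula (the roles of $f$ and $g$ in \eqref{2.4} are symmetric up to the adjoint substitution $\lambda\mapsto-\lambda-\partial$ and the Koszul signs $X^{f,g}_{i,j}$), after which your induction for skewsymmetry goes through; alternatively one checks skewsymmetry of \eqref{2.4} in one direct computation by swapping $f\leftrightarrow g$ and using skewsymmetry of the generator brackets. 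With that adjustment, and granting the sign bookkeeping you defer, the argument is complete and matches the cited sources.
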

\begin{proof}
 It can be proved by combining the results of Theorem 4.8 of \cite{DSK13} and Theorem 1.15 of \cite{BDSK09}.
 \end{proof}

\begin{ex} \label{ex: affine PVSA}
  Let $\mathfrak{g}$ be a Lie superalgebra with a nondegenerate supersymmetric invariant even bilinear form $(\cdot|\cdot)$. For $k\in \CC$, the differential algebra $\mathcal{V}^k(\g):=S(\CC[\partial]\otimes\mathfrak{g})$ with the $\lambda$-bracket
  \begin{equation}
    \left\{ a {}_{\lambda} b \right\}=[a,b]+k\lambda(a|b) \;\text{for}\; a,b \in \mathfrak{g}
  \end{equation}
is called an affine PVSA. This $\lambda$-bracket satisfies all the conditions of Proposition \ref{Prop:PVA from generators}. Hence the PVSA structure of $\mathcal{V}^k(\g)$ is fully determined by the master formula \eqref{2.4}.
\end{ex}

 \begin{ex}\label{ex:non-local PVA}
  The differential algebra $\mathcal{P}=\CC[u^{(n)}|n \in \ZZ_{+}]$ endowed with the $\lambda$-bracket
  \begin{equation*}
    \left\{ u {}_{\lambda} u \right\}=\lambda^{-1}
    \end{equation*}  
   is a non-local PVSA. By Proposition \ref{Prop:PVA from generators}, the $\lambda$-bracket can be extended to the whole space $\mathcal{P}$. See \cite{DSK13} for more examples.
\end{ex}

\subsection{Dirac reduction of Poisson vertex superalgebras} \label{Subsec:Dirac PVA}
 \hfill

Let $\mathcal{P}$ be a PVSA and denote by $\text{Mat}_{(r|s)}=\text{End}(\CC_{(r|s)})$ for nonnegative integers $r$ and $s$.
Consider the superspace  $\mathcal{M}_{(r|s)}(\lambda):= \text{Mat}_{(r|s)} \otimes \mathcal{P}(\!(\lambda^{-1})\!)$, where the parity of $a\otimes F(\lambda)\in \mathcal{M}_{(r|s)}(\lambda)$ is given by $\tilde{a}+\tilde{F}$. It is an associative algebra with respect to the product 
\begin{equation}\label{eq:matrix product PVA}
  \begin{aligned}
   \left( a \otimes F(\lambda)\right)  \circ \left( b \otimes G(\lambda)\right)&=\left(a\otimes F(\lambda+\partial)\right)\left(b\otimes G(\lambda)\right)\\
   &=(-1)^{\tilde{b}\tilde{F}} ab \otimes F(\lambda+\partial) G(\lambda).
  \end{aligned}
\end{equation}
Then $\text{Id}_{(r|s)} \otimes 1$ is the unity in $\mathcal{M}_{(r|s)}(\lambda)$ for the identity matrix $\text{Id}_{(r|s)}$ in $ \text{Mat}_{(r|s)}$. 
 In addition, we say $A(\lambda) \in \mathcal{M}_{(r|s)}(\lambda)$ is  \textit{invertible} if there is an element $A^{-1}(\lambda)\in \mathcal{M}_{(r|s)}(\lambda)$ such that 
\[A(\lambda) \circ A^{-1}(\lambda)= A^{-1}(\lambda) \circ A(\lambda)= \text{Id}_{(r|s)}\otimes 1.\]
For the simplicity of notation, we sometimes denote $\mathcal{M}_{(r|s)}(\lambda)$ by $\mathcal{M}(\lambda)$. 

Let $I=I_{0} \sqcup I_{1} \subset \ZZ$ be a finite index set, where $I_{i}=I \cap (2\ZZ+ i)$ and $i \in \{0,1\}$. Let $\theta_{I}:=\{\, \theta_i \, |\, i\in I\}$ be a subset of $\mathcal{P}$ consisting of homogeneous elements such that $(-1)^{\tilde{\theta}_i} =(-1)^i$. For $r= |I_0|$ and  $s= |I_1|$, consider the element  
\begin{equation} \label{eq: DR matrix for PVA} 
C(\lambda):= \sum_{i,j\in I} e_{ij} \otimes  \{ \theta_{j}\, {}_{\lambda}\, \theta_{i} \}\in \mathcal{M}_{(r|s)}(\lambda).
\end{equation}

\begin{defn} \label{Def:Dirac modified super}
Assume that the element $C(\lambda)$ in \eqref{eq: DR matrix for PVA} is invertible and write its inverse as $(C^{-1})(\lambda)=\sum_{i,j}e_{ij}\otimes (C^{-1})_{ij}(\lambda)$. The {\it Dirac reduced bracket} of the Poisson $\lambda$-bracket $\left\{ \cdot \,_\lambda\, \cdot \right\}$ on $\mathcal{P}$ associated with $\theta_{I}$ is the bilinear map
\[ \left\{ \cdot \,_\lambda\, \cdot \right\}^{D} : \mathcal{P} \times \mathcal{P} \rightarrow \mathcal{P}(\!(\lambda^{-1})\!) \] given by
  \begin{equation}\label{2.8}
  \left\{ a \,_{\lambda}\, b \right\}^{D}=\left\{ a \,_{\lambda}\, b \right\}-\displaystyle\sum_{i,j \in I} (-1)^{(\tilde{a}+j)(\tilde{b}+i)+i+j}\left\{ \theta_{j} \,_{\lambda+\partial}\, b \right\}_{\rightarrow}(C^{-1})_{ji}(\lambda+\partial)\left\{ a \,_{\lambda}\, \theta_{i} \right\},
\end{equation}
where $a,b$ are  in $ \mathcal{P}$. 
\end{defn}

For the next proposition, we generalize the multiplication \eqref{eq:matrix product PVA}  as follows:
\begin{align*}
\left(\text{Mat}_{(p_1|q_1)(r_1|s_1)} \otimes  \mathcal{P}(\!(\lambda^{-1})\!)\right) \times \left(\text{Mat}_{(p_2|q_2)(r_2|s_2)}\otimes   \mathcal{P}(\!(\lambda^{-1})\!)\right) &\rightarrow \left(\text{Mat}_{(p_1|q_1)(r_2|s_2)} \otimes   \mathcal{P}(\!(\lambda^{-1})\!)\right)\\
\Big(\left( a \otimes F(\lambda)\right) , \left( b \otimes G(\lambda)\right)\Big)&\mapsto (-1)^{\tilde{b}\tilde{F}} ab \otimes F(\lambda+\partial) G(\lambda).
\end{align*}
Here $p_1,q_1,r_1=p_2,s_1=q_2, r_2, s_2$ are nonnegative integers and $\text{Mat}_{(p_1|q_1)(r_1|s_1)}$ is the set of linear maps from $\CC_{(p_1|q_1)}$ to $\CC_{(r_1|s_1)}$.
\begin{prop} \label{prop:Trivial example, PVA}
Let $\mathcal{P}= \CC[\theta^{(n)}_k|k\in I, n \in \ZZ_{+}]$ be a PVSA with the derivation $\partial : \theta_{k}^{(n)} \mapsto \theta_{k}^{(n+1)}$. If $C(\lambda)$ in \eqref{eq: DR matrix for PVA} is invertible, the corresponding Dirac reduced bracket is trivial.
\end{prop}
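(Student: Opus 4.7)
The plan is to reduce the claim to vanishing on generators and then identify the subtracted sum in the Dirac formula with a matrix triple product that collapses via the inverse identity.

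First, by Theorem \ref{Theorem:super Dirac, intro}(1), $(\mathcal{P},\{\cdot\,{}_\lambda\,\cdot\}^D)$ is again a PVSA, so $\{\cdot\,{}_\lambda\,\cdot\}^D$ satisfies sesquilinearity, skewsymmetry, and both Leibniz rules. Since $\mathcal{P}=\CC[\theta_k^{(n)}\mid k\in I,\,n\in\ZZ_+]$ is freely generated as a differential superalgebra by $\{\theta_k\mid k\in I\}$, the master formula (Proposition \ref{Master formula}) applied to $\{\cdot\,{}_\lambda\,\cdot\}^D$ expresses every $\{f\,{}_\lambda\,g\}^D$ as a $\CC[\lambda,\partial]$-linear combination of the generator brackets $\{\theta_i\,{}_{\lambda+\partial}\,\theta_j\}^D$. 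It is therefore enough to prove $\{\theta_a\,{}_\lambda\,\theta_b\}^D=0$ for every $a,b\in I$.

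Specializing \eqref{2.8} to $a=\theta_a$, $b=\theta_b$ and using the indexing $C_{pq}(\lambda)=\{\theta_q\,{}_\lambda\,\theta_p\}$ from \eqref{eq: DR matrix for PVA}, one obtains
\[
\{\theta_a\,{}_\lambda\,\theta_b\}^D=C_{ba}(\lambda)-\sum_{i,j\in I}(-1)^{(a+j)(b+i)+i+j}\,C_{bj}(\lambda+\partial)\,(C^{-1})_{ji}(\lambda+\partial)\,C_{ia}(\lambda).
\]
The key observation is that, under the $\circ$-product \eqref{eq:matrix product PVA} on $\mathcal{M}_{(r|s)}(\lambda)$, the subtracted sum is exactly the $(b,a)$-entry of the triple product $C\circ C^{-1}\circ C$. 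Indeed, the cumulative Koszul exponent produced by the two successive applications of $\circ$ is $(j+i)(b+j)+(i+a)(b+i)$, which equals $(a+j)(b+i)+i+j$ modulo $2$ (a direct expansion, using $i^2\equiv i$ and $j^2\equiv j$, makes the two pairs of $ib$-terms cancel and matches the Dirac exponent). By associativity of $\circ$ together with the hypothesis $C\circ C^{-1}=\text{Id}_{(r|s)}\otimes 1$, the triple product equals $C$, so the subtracted sum reduces to $C_{ba}(\lambda)$ and cancels the first term, giving $\{\theta_a\,{}_\lambda\,\theta_b\}^D=0$.

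The main obstacle is this sign reconciliation between the Dirac reduction formula \eqref{2.8} and the $\circ$-product \eqref{eq:matrix product PVA}: one must verify that the Dirac exponent agrees modulo $2$ with the cumulative Koszul exponent from the two matrix multiplications. Once this $\ZZ/2\ZZ$ check is carried out, the whole proof collapses to the associativity identity $C\circ C^{-1}\circ C=C$, and no further step (no induction on polynomial degree, no separate invocation of skewsymmetry) is needed, since the master formula already propagates the vanishing from generators to all of $\mathcal{P}$.
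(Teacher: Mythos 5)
Your proposal is correct, and the sign verification at its core checks out: with $C_{pq}(\lambda)=\{\theta_q\,{}_\lambda\,\theta_p\}$ the two Koszul signs accumulated in $(C\circ C^{-1}\circ C)_{ba}$ are $(j+i)(b+j)+(i+a)(b+i)\equiv(a+j)(b+i)+i+j\pmod 2$, which is exactly the exponent in \eqref{2.8}, so the subtracted sum collapses to $C_{ba}(\lambda)$ by $C\circ C^{-1}\circ C=C$. The route differs from the paper's in how it handles general elements. The paper works with arbitrary $f,g\in\mathcal{P}$ in one pass: it expands $\{f\,{}_\lambda\,g\}$ by the master formula for the \emph{original} bracket, inserts $C=C\circ C^{-1}\circ C$ in the middle, and contracts the outer factors back into $\{\theta_k\,{}_{\lambda+\partial}\,g\}$ and $\{f\,{}_\lambda\,\theta_l\}$ via \eqref{eq:trivialDR-2}, recovering precisely the subtracted term of \eqref{2.8}; this is self-contained and needs nothing beyond Proposition \ref{Master formula}. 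You instead reduce to generators by invoking Theorem \ref{Theorem:super Dirac, part1} (so that the master formula may be applied to the \emph{reduced} bracket) and then do the triple-product collapse only for $\{\theta_a\,{}_\lambda\,\theta_b\}^D$. This is a clean factorization of the argument, but it leans on a result stated and proved later in the section; since the proof of Theorem \ref{Theorem:super Dirac, part1} does not use Proposition \ref{prop:Trivial example, PVA}, there is no circularity, though the forward dependence should be flagged. (Note also that once you allow yourself Theorems \ref{Theorem:super Dirac, part1} and \ref{Theorem:super Dirac, part2}, part (a) of the latter already gives $\{\theta_a\,{}_\lambda\,\theta_b\}^D=0$ for free, so your explicit generator computation is the only genuinely independent content, and it agrees with the computation embedded in the paper's proof.)
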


\begin{proof}

Let us use the identification 
$ \mathcal{P}(\!(\lambda^{-1})\!)\simeq \text{Mat}_{(1|0)} \otimes  \mathcal{P}(\!(\lambda^{-1})\!)$. 
For $a,b\in \mathcal{P}$, recall the master formula \eqref{2.4}:
\begin{equation}\label{eq:master}
\begin{aligned}
&     \left\{ a {}_{\lambda} b \right\}
 =\displaystyle\sum_{i,j \in I, \, n\in \ZZ_+}(-1)^{\tilde{a}\tilde{b}+\tilde{b}j+ij+j}\frac{\partial b}{\partial\theta_{j}^{(n)}}(\lambda+\partial)^n\left\{ \theta_i {}_{\lambda+\partial} \theta_j \right\}_{\to}(-\lambda-\partial)^m\frac{\partial a}{\partial \theta_{i}^{(m)}}.
\end{aligned}
\end{equation}
For given $i,j\in I$, the RHS of \eqref{eq:master} can be written as 
\begin{equation} \label{eq:master-ij}
\begin{aligned}
  \displaystyle\sum_{n\in \ZZ_+}(-1)^{\tilde{a}\tilde{b}+i}\left(e_{j}^{T} \otimes\frac{\partial b}{\partial\theta_{j}^{(n)}}(\lambda+\partial)^{n}\right) \big(e_{ji} \otimes \left(C_{ji}(\lambda+\partial)\right)_{\to} \big)\left(e_{i} \otimes (-\lambda-\partial)^{m}\frac{\partial a}{\partial \theta_i^{(m)}}\right),
    \end{aligned}
    \end{equation}
where $e_i\in \text{Mat}_{(r|s)(1|0)}$ is the $i$-th standard column vector, $e_i^T\in \text{Mat}_{(1|0)(r|s)}$ is its transpose and $e_{ij}$ is the $ij$-th elementary matrix in $\text{Mat}_{(r|s)}.$ Note that 
\begin{equation}\label{eq:trivialDR-1}
 e_{ji}\otimes C_{ji}(\lambda)= \sum_{k,l\in I} (e_{jk}\otimes C_{jk}(\lambda+\partial)) (e_{kl}\otimes (C^{-1})_{kl}(\lambda+\partial)) (e_{li} \otimes C_{li}(\lambda))
\end{equation}
and 
\begin{equation}\label{eq:trivialDR-2}
\begin{aligned}
& \sum_{i\in I, \, n\in \ZZ_+} (-1)^i( e_{li}\otimes C_{li}(\lambda+\partial)) \left( e_i\otimes  (-\lambda-\partial)^{m}\frac{\partial a}{\partial \theta_i^{(m)}} \right)=(-1)^{\tilde{a}l}e_l \otimes \{a{}_\lambda \theta_l\},\\
& \sum_{j\in I\, n\in \ZZ_+}  \left( e_j^T \otimes \frac{\partial b}{\partial \theta_j^{(n)}}(\lambda+\partial)^{n}\right) (e_{jk}\otimes C_{jk}(\lambda+\partial))=e_k^T  \otimes \{\theta_k{}_\lambda b\}.
\end{aligned}
\end{equation}
If we substitute $\big(e_{ji} \otimes C_{ji}(\lambda+\partial) \big)$ in \eqref{eq:master-ij} with the RHS of \eqref{eq:trivialDR-1}, then by \eqref{eq:trivialDR-2} we get 
\begin{equation}\label{eq:trivialDR-3}
\begin{aligned}
\{a{}_\lambda b\}& = \sum_{k,l\in I}(-1)^{\tilde{a}(\tilde{b}+l)}(e_k^T \otimes \{\theta_k{}_{\lambda+\partial} b\}) (e_{kl}\otimes (C^{-1})_{kl}(\lambda+\partial))(e_l \otimes \{a{}_\lambda \theta_l\})
\\
&  =\sum_{k,l\in I} (-1)^{(\tilde{a}+k)(\tilde{b}+l)+k+l}\{\theta_k{}_{\lambda+\partial} b\} (C^{-1})_{kl}(\lambda+\partial) \{a{}_\lambda \theta_l\}.
\end{aligned}
\end{equation}
Hence, $\{a{}_\lambda b\}^D= \{a{}_\lambda b\}- [\text{RHS of }\eqref{eq:trivialDR-3}]=0$.
 \end{proof}

Let us introduce the \textit{adjoint} of  $F(\lambda)= \sum_{n\in \ZZ} F_n \lambda^n \in \mathcal{P}(\!(\lambda^{-1})\!)$ for $F_n\in \mathcal{P}$ which is defined by $F^*(\lambda)=  \sum_{n\in \ZZ}(-\lambda-\partial)^n F_n$. In addition, the {\it adjoint} of an even element $A(\lambda)=\sum_{i,j}e_{ij} \otimes A_{ij}(\lambda)=\sum_{i,j; n \in \ZZ}e_{ij} \otimes A_{ij;n}\lambda^{n}\in\mathcal{M}_{(r|s)}(\lambda)$  is given by
\begin{equation}\label{adjoint}
A^{*}(\lambda):=\displaystyle\sum_{i,j \in I ; \, n \in \ZZ}(-1)^{ij+j} e_{ji}\otimes A_{ij}^{*}(\lambda).
\end{equation}
Then $(A(\lambda) \circ B(\lambda))^*=  B^*(\lambda)\circ A^*(\lambda)$ for any even elements $A(\lambda),  B(\lambda) \in  \mathcal{M}(\lambda)$.

Now we show that Dirac reduced bracket \eqref{2.8} is indeed a PVSA bracket. 
The following lemma is needed to prove Theorem \ref{Theorem:super Dirac, part1}.

\begin{lem}\label{Lemma:Inverse operator}
     Let $\left\{ \cdot \,_\lambda\, \cdot \right\} :  \mathcal{P} \times  \mathcal{P} \rightarrow  \mathcal{P}(\!(\lambda^{-1})\!)$ be a non-local PVSA $\lambda$-bracket on $\mathcal{P}$. Suppose $C(\lambda)=(C_{ij}(\lambda))_{i,j\in I} \in \mathcal{M}(\lambda)$ in \eqref{eq: DR matrix for PVA} is invertible  and let $(C^{-1})(\lambda)=\left((C^{-1})_{ij}(\lambda)\right)_{i,j\in I}\in  \mathcal{M}(\lambda)$ be its inverse. For $a \in  \mathcal{P}$,  assume that 
  \begin{center}
    $\left\{ a \,{}_{\lambda}\, C_{ij}(\mu) \right\} \in  \mathcal{P}_{\lambda,\mu}$ \,for all $i, j\in I$.
  \end{center}
  Then we have $\left\{ a \,{}_{\lambda}\, (C^{-1})_{ij}(\mu) \right\}$, $\left\{ (C^{-1})_{ij}(\lambda) \,{}_{\lambda+\mu}\, a \right\} \in  \mathcal{P}_{\lambda,\mu}$. If $a\in \mathcal{P}$ is homogenous, then
   \begin{equation} \label{eq:super_lemma_first equality} 
  \begin{aligned}
  &\{a \,{}_{\lambda}\, (C^{-1})_{ij}(\mu)\}\\
  &=-\displaystyle\sum_{r,t\in I}\displaystyle^{\tilde{a}(i+r)+(i+t)(j+r)+r+t}(C^{-1})_{ir}(\lambda+\mu+\partial)\left\{ a \,{}_{\lambda}\, C_{rt;n} \right\}(\mu+\partial)^n(C^{-1})_{tj}(\mu)
  \end{aligned}
  \end{equation}
  and
  \begin{equation}\label{eq:super_lemma_second equality} 
  \begin{aligned}
    &\{(C_{ij}^{-1})(\lambda) \,{}_{\lambda+\mu}\, a\}\\
    &=-\displaystyle\sum_{r,t\in I}\displaystyle(-1)^{\tilde{a}(i+j+r+t)+(r+t)(j+t)}\left\{ {C_{rt}} \,{}_{\lambda+\mu+\partial}\, a \right\}_{\rightarrow}(\lambda+\partial)^n(C^{-1})_{tj}(\lambda)
   (C^{-1})^{*}_{ir}(\mu),
    \end{aligned}
    \end{equation}
    where $(C^{-1})_{ir}^{*}(\mu):=\left((C^{-1})_{ir}\right)^{*}(\mu).$ \\
   \end{lem}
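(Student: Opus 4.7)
The plan is to differentiate the defining matrix identity $C(\mu)\circ C^{-1}(\mu)=\text{Id}_{(r|s)}\otimes 1$ along the $\lambda$-bracket $\{a\,{}_\lambda\,\cdot\}$, and then undo one more factor of $C$ to isolate $\{a\,{}_\lambda\,(C^{-1})_{ij}(\mu)\}$. Conceptually, this is the super, non-local analogue of the classical trick $d(A^{-1})=-A^{-1}\,(dA)\,A^{-1}$; the real work is bookkeeping of Koszul signs.

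First I would read off $C\circ C^{-1}=\text{Id}\otimes 1$ componentwise. Using \eqref{eq:matrix product PVA}, this gives
\[
\sum_{t\in I}(-1)^{(t+j)(i+t)}\,C_{it}(\mu+\partial)(C^{-1})_{tj}(\mu)=\delta_{ij},
\]
where $i+t$ is the parity of $C_{it}$ (since $(-1)^{\tilde{\theta}_k}=(-1)^k$). Applying $\{a\,{}_\lambda\,\cdot\}$ to both sides, the left Leibniz rule together with the sesquilinearity $\{a\,{}_\lambda\,\partial b\}=(\lambda+\partial)\{a\,{}_\lambda\,b\}$ (which converts $(\mu+\partial)^n$ inside $C_{it}(\mu+\partial)$ into $(\lambda+\mu+\partial)^n$ once it is pushed past the bracket) produces, after rearrangement,
\[
\sum_{t}(-1)^{(t+j)(i+t)+\tilde{a}(i+t)}C_{it}(\lambda+\mu+\partial)\{a\,{}_\lambda\,(C^{-1})_{tj}(\mu)\}=-\sum_{t,n}(-1)^{(t+j)(i+t)}\{a\,{}_\lambda\,C_{it;n}\}(\mu+\partial)^n(C^{-1})_{tj}(\mu).
\]
I would then multiply both sides on the left by $(C^{-1})_{ri}(\lambda+\mu+\partial)$ and sum over $i$; the left-hand side collapses via $C^{-1}\circ C=\text{Id}$ to $\{a\,{}_\lambda\,(C^{-1})_{rj}(\mu)\}$ up to a Koszul sign, and a relabeling of indices gives \eqref{eq:super_lemma_first equality}. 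Admissibility follows for free: the single inner bracket $\{a\,{}_\lambda\,C_{rt;n}\}$ sits in $\mathcal{P}(\!(\lambda^{-1})\!)$ by hypothesis, while the outer factors belong to $\mathcal{P}(\!(\lambda^{-1})\!)$ and $\mathcal{P}(\!(\mu^{-1})\!)$ respectively, so their composition is confined to the localization $\mathcal{P}_{\lambda,\mu}$ of \eqref{eq:admissibility, space}.

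For the second identity I would invoke skewsymmetry (with spectral parameter $\lambda+\mu$),
\[
\{(C^{-1})_{ij}(\lambda)\,{}_{\lambda+\mu}\,a\}=-(-1)^{\tilde{a}(i+j)}\,{}_{\leftarrow}\{a\,{}_{-\lambda-\mu-\partial}\,(C^{-1})_{ij}(\lambda)\},
\]
and substitute the just-proved formula \eqref{eq:super_lemma_first equality} into the right-hand side (with the bracket variable renamed to $-\lambda-\mu-\partial$ and the inner formal variable to $\lambda$). The leftward-arrow convention then transports the factor $(C^{-1})_{ir}(-\mu-\partial)$ past $a$; by the definition \eqref{adjoint} of the adjoint, this factor becomes $(C^{-1})^{*}_{ir}(\mu)$. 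A further application of skewsymmetry to the inner bracket $\{a\,{}_\bullet\,C_{rt}\}$ converts it into $\{C_{rt}\,{}_\bullet\,a\}_{\rightarrow}$, and a final sign computation produces exactly \eqref{eq:super_lemma_second equality}.

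The main obstacle is the cascade of super-signs. Three independent sources contribute: the Koszul rule when commuting matrix units $e_{ij}\in\text{Mat}_{(r|s)}$ past $\mathcal{P}$-valued entries under $\circ$; the left Leibniz rule when $\{a\,{}_\lambda\,\cdot\}$ differentiates a product; and the two invocations of skewsymmetry needed for the second identity. These must combine precisely to $(-1)^{\tilde{a}(i+r)+(i+t)(j+r)+r+t}$ in \eqref{eq:super_lemma_first equality} and to $(-1)^{\tilde{a}(i+j+r+t)+(r+t)(j+t)}$ in \eqref{eq:super_lemma_second equality}; a miscount by a single parity would invalidate the downstream proof of Theorem \ref{Theorem:super Dirac, part1}. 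The non-local nature compounds this because every intermediate Laurent expansion in $\lambda$ or $\mu$ must be checked to respect the admissibility class $\mathcal{P}_{\lambda,\mu}$, so the sign-tracking, although conceptually routine, is where the vigilance is required.
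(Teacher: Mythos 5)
Your plan is correct and, for the first identity, is essentially the paper's argument in a slightly different dress: both rest on the matrix-calculus identity $d(A^{-1})=-A^{-1}(dA)A^{-1}$ adapted to the product $\circ$. The paper differentiates the componentwise form of $C^{-1}=C^{-1}\circ C\circ C^{-1}$ with the left Leibniz rule, so that the two unwanted terms each contain a bracket of $a$ against $\sum_r(\pm)(C^{\!-1})_{ir}(\mu+\partial)C_{rt}(\mu+\partial)=\delta_{it}$ and vanish on the spot; you instead differentiate $C\circ C^{-1}=\mathrm{Id}$ and then re-multiply by $C^{-1}(\lambda+\mu+\partial)$ to isolate the unknown. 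The two are algebraically equivalent, but the paper's triple-product form saves you the final inversion step. The real divergence is in the second identity: the paper proves \eqref{eq:super_lemma_second equality} by an independent right-Leibniz computation on the same triple product, whereas you propose to deduce it from \eqref{eq:super_lemma_first equality} via skewsymmetry and the adjoint. That route works and is arguably more economical, but it requires one extra convention to be pinned down that the paper's direct computation avoids: skewsymmetry is stated only for elements of $\mathcal{P}$, so you must first make precise (coefficientwise, with the ${}_{\leftarrow}$ and $\big|_{x=-\lambda-\mu-\partial}$ substitutions) what $\{(C^{-1})_{ij}(\lambda)\,{}_{\lambda+\mu}\,a\}$ means and verify that the adjoint identity $(C^{-1})^{*}_{ir}$ emerges exactly as in \eqref{adjoint} — this is the same manipulation the paper performs later in the skewsymmetry part of Theorem \ref{Theorem:super Dirac, part1}. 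Finally, note that your "admissibility follows for free" remark undersells that step: closure of $\mathcal{P}_{\lambda,\mu}$ under left multiplication by $(C^{-1})_{ir}(\lambda+\mu+\partial)$ and right multiplication by $(C^{-1})_{tj}(\mu)$ is a genuine (if standard) fact, which the paper disposes of by citing Lemmas 2.1, 2.3 and 2.4 of \cite{DSK13}; you should do the same rather than treat it as immediate. The sign bookkeeping you defer is indeed the bulk of the remaining work, but the skeleton is sound.
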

\begin{proof}
For the first assertion, see Lemma 2.1, Lemma 2.3 and Lemma 2.4 in \cite{DSK13}. 

In order to show \eqref{eq:super_lemma_first equality} and \eqref{eq:super_lemma_second equality}, observe that
\begin{align*}
  e_{ij} \otimes (C^{-1})_{ij}(\lambda)&=\displaystyle\sum_{r,t\in I}(e_{ir} \otimes (C^{-1})_{ir}(\lambda))\circ(e_{rt} \otimes C_{rt}(\lambda))\circ(e_{tj} \otimes (C^{-1})_{tj}(\lambda))\\
  &=\displaystyle\sum_{r,t\in I}(-1)^{(i+r)(r+t)+(t+j)(i+t)} \left(e_{ij} \otimes (C^{-1})_{ir}(\lambda+\partial)C_{rt}(\lambda+\partial)(C^{-1})_{tj}(\lambda)\right).
\end{align*} 
Hence we have
\[(C^{-1})_{ij}(\lambda)=\displaystyle\sum_{r,t\in I}(-1)^{(i+t)(j+r)+r+t}(C^{-1})_{ir}(\lambda+\partial)C_{rt}(\lambda+\partial)(C^{-1})_{tj}(\lambda).\]
Using the left Leibniz rule, we obtain
\begin{align}
&\nonumber \left\{ a {}_{\lambda} (C^{-1})_{ij}(\mu) \right\}=\displaystyle\sum_{r,t\in I}(-1)^{tj+ir+rt+ij+r+t}\left\{ a {}_{\lambda}(C^{\!-1})_{ir}(\mu+\partial)C_{rt}(\mu+\partial)(C^{\!-1})_{tj}(\mu) \right\}\\
\label{eq:super_lemma_proof_1}
&\quad=\displaystyle\sum_{t\in I}(-1)^{(i+t)(j+t)}\{ a {}_{\lambda}\displaystyle\sum_{r\in I}(-1)^{(i+r)(r+t)}(C^{\!-1})_{ir}(\mu+\partial)C_{rt}(\mu+\partial)\}(C^{\!-1})_{tj}(\mu)\\
\label{eq:super_lemma_proof_2}
&\quad +\displaystyle\sum_{r\in I}(-1)^{(i+r)(j+r)+\tilde{a}(i+r)}(C^{\!-1})_{ir}(\lambda+\mu+\partial)\{ a {}_{\lambda}\displaystyle\sum_{t\in I}(-1)^{(r+t)(t+j)}C_{rt}(\mu+\partial)(C^{\!-1})_{tj}(\mu) \}\\
&\quad-\displaystyle\sum_{r,t\in I}(-1)^{t(j+1)+r(i+1)+rt+ij+(i+r)\tilde{a}}(C^{\!-1})_{ir}(\lambda+\mu+\partial)\left\{ a {}_{\lambda} C_{rt}(\mu+\partial) \right\}(C^{\!-1})_{tj}(\mu).
\end{align}
Since 
$\displaystyle \sum_{r\in I}(-1)^{(i+r)(r+t)}(C^{\!-1})_{ir}(\mu+\partial)C_{rt}(\mu+\partial)=\delta_{it}$, we have $\eqref{eq:super_lemma_proof_1}=0$. Similarly, $\eqref{eq:super_lemma_proof_2}=0$ and hence we showed \eqref{eq:super_lemma_first equality}.

For \eqref{eq:super_lemma_second equality}, using the right Leibniz rule, one can observe that 
\begin{align}
  &\nonumber \left\{ (C^{-1})_{ij}(\lambda) {}_{\lambda+\mu} a \right\}=\displaystyle\sum_{r,t\in I}(-1)^{tj+ir+rt+ij+r+t}\left\{ (C^{\!-1})_{ir}(\lambda+\partial)C_{rt}(\lambda+\partial)(C^{\!-1})_{tj}(\lambda) {}_{\lambda+\mu} a \right\}\\
\label{eq:super_lemma_proof_3}  &=\displaystyle\sum_{t\in I}(-1)^{(t+j)(t+i+a)}\displaystyle\{ \sum_{r\in I}(-1)^{(i+r)(r+t)}(C^{\!-1})_{ir}(\lambda+\partial)C_{rt}(\lambda+\partial) {}_{\lambda+\mu} a \}_{\rightarrow}(C^{\!-1})_{tj}(\lambda)\\
\label{eq:super_lemma_proof_4}  &+\displaystyle\sum_{r\in I}(-1)^{(i+r)\tilde{a}}\displaystyle \{ \sum_{t\in I} (-1)^{(r+t)(t+j)}C_{rt}(\lambda+\partial)(C^{\!-1})_{tj}(\partial) {}_{\lambda+\mu} a\}_{\rightarrow}(C^{\!-1})^{*}_{ir}(\mu)\\
  &-\displaystyle\sum_{r,t\in I}(-1)^{(r+t)(j+t)+\tilde{a}(t+j+i+r)}\left\{ C_{rt}(\lambda+\partial) {}_{\lambda+\mu+\partial} a \right\}_{\rightarrow}(C^{\!-1})_{tj}(\lambda)(C^{\!-1})^{*}_{ir}(\mu).
\end{align}
Since $\eqref{eq:super_lemma_proof_3}=\eqref{eq:super_lemma_proof_4}=0$,  \eqref{eq:super_lemma_second equality} holds.
\end{proof}

\begin{thm} \label{Theorem:super Dirac, part1}
  Let $ \mathcal{P}$ be a PVSA and let $\theta_I \subset  \mathcal{P}$ be a finite set of homogeneous elements. If $C(\lambda)\in \mathcal{M}(\lambda)$ in \eqref{eq: DR matrix for PVA} is invertible, then the Dirac reduced bracket $\left\{ \cdot \,_{\lambda}\, \cdot \right\}^{D}$ given by \eqref{2.8} is a Poisson $\lambda$-bracket on $\mathcal{P}$.
\end{thm}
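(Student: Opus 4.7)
The plan is to verify each of the PVSA axioms for $\{\cdot_\lambda\cdot\}^D$ in turn: sesquilinearity, admissibility, the left Leibniz rule, skewsymmetry, and finally the Jacobi identity. Sesquilinearity and the Leibniz rule follow essentially for free: the correction term in \eqref{2.8} is built from the two brackets $\{a\,_\lambda\,\theta_i\}$ and $\{\theta_j\,_{\lambda+\partial}\,b\}_{\rightarrow}$, each of which already satisfies these axioms in its relevant slot. Sesquilinearity in the $\partial a$-argument is absorbed by $\{a\,_\lambda\,\theta_i\}$, sesquilinearity in the $\partial b$-argument by $\{\theta_j\,_{\lambda+\partial}\,b\}_{\rightarrow}$, and Leibniz in either slot passes through the $(C^{-1})_{ji}$ factor unchanged. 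Admissibility of the reduced bracket, needed for the Jacobi identity to make sense, follows by combining admissibility of the original bracket with the first assertion of Lemma \ref{Lemma:Inverse operator}, which guarantees that $\{a\,_\lambda\,(C^{-1})_{ij}(\mu)\}$ lies in $\mathcal{P}_{\lambda,\mu}$.

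For skewsymmetry I would first record the symmetry inherited by $C(\lambda)$ itself from the skewsymmetry of $\{\cdot_\lambda\cdot\}$: since $C_{ij}(\lambda)=\{\theta_j\,_\lambda\,\theta_i\}$, a direct computation using the sign conventions in \eqref{adjoint} yields $C_{ij}^{*}(\lambda)=-(-1)^{ij+i+j}C_{ji}(\lambda)$, and this relation propagates to $(C^{-1})(\lambda)$ by the standard fact that taking the adjoint inverts matrix products in reverse order. Applying this together with the original skewsymmetry to the two factors $\{a\,_\lambda\,\theta_i\}$ and $\{\theta_j\,_{\lambda+\partial}\,b\}_{\rightarrow}$ in the correction term, and then interchanging the dummy summation indices $i\leftrightarrow j$, produces exactly the skewsymmetry relation $\{b\,_\lambda\,a\}^D = -(-1)^{\tilde{a}\tilde{b}}{}_{\leftarrow}\{a\,_{-\lambda-\partial}\,b\}^D$.

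The main obstacle is the Jacobi identity. I would expand
\[
\{a\,_\lambda\,\{b\,_\mu\,c\}^D\}^D \; - \; (-1)^{\tilde{a}\tilde{b}}\{b\,_\mu\,\{a\,_\lambda\,c\}^D\}^D \; - \; \{\{a\,_\lambda\,b\}^D\,_{\lambda+\mu}\,c\}^D
\]
by substituting \eqref{2.8} in every position. Each of the three summands breaks into four pieces according to which of the outer and inner brackets contribute the correction; this produces twelve terms to reconcile. The strategy is to move all $(C^{-1})_{ji}$-factors past the remaining brackets using \eqref{eq:super_lemma_first equality} and \eqref{eq:super_lemma_second equality} of Lemma \ref{Lemma:Inverse operator}, until only brackets among elements of $\mathcal{P}$ and of $\theta_I$ remain. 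The "original--original" piece then cancels via the Jacobi identity of $\{\cdot_\lambda\cdot\}$, while the remaining terms organize themselves into groups containing a factor $\sum_{k\in I}(C^{-1})(C)=\text{Id}$ that collapses two indices and cancels them pairwise. The heart of the argument is combinatorial: carefully tracking the Koszul signs generated by permuting $a,b,c$ past the $\theta_i$'s and $(C^{-1})_{ji}$'s, and confirming compatibility with the right-arrow convention $\{\theta_j\,_{\lambda+\partial}\,b\}_{\rightarrow}$. The consistency check is provided by Proposition \ref{prop:Trivial example, PVA}, where the entire reduced bracket vanishes identically --- an identity whose derivation uses exactly the same manipulations of $(C^{-1})$ and of the master formula that will be needed here.
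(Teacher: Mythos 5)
Your overall strategy coincides with the paper's: sesquilinearity, admissibility and the Leibniz rule by direct computation, skewsymmetry via the adjoint relation between $C(\lambda)$ and $(C^{-1})(\lambda)$ coming from $C^{-1}(\lambda)\circ C(\lambda)=\left(C(\lambda)\circ (C^{-1})(\lambda)\right)^*$, and the Jacobi identity by brute-force expansion using Lemma \ref{Lemma:Inverse operator} to handle brackets involving $(C^{-1})_{ij}$. Those parts are fine and match the paper.

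The gap is in your description of how the Jacobi identity closes. First, the bookkeeping: the correction term in \eqref{2.8} is a product of \emph{three} factors, $\{\theta_j\,_{\lambda+\partial}\,b\}_{\rightarrow}$, $(C^{-1})_{ji}(\lambda+\partial)$ and $\{a\,_\lambda\,\theta_i\}$, so when an outer bracket (original or corrected) hits an inner correction term, the Leibniz rule splits it into three pieces, not one; each of the three summands of the Jacobi expression therefore expands into eight terms (as in \eqref{2.15}--\eqref{2.22} and \eqref{2.23}--\eqref{2.30}), giving $24$ terms in total rather than $12$. More importantly, the non-basic terms do \emph{not} cancel pairwise by collapsing a factor $\sum_{k}(C^{-1})_{\cdot k}\circ C_{k\cdot}=\mathrm{Id}$: that identity is used only \emph{inside} the proof of Lemma \ref{Lemma:Inverse operator} to derive \eqref{eq:super_lemma_first equality} and \eqref{eq:super_lemma_second equality}. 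The actual cancellation groups the $24$ terms into \emph{eight triples}, each triple drawing one term from each of the three outer summands, and each triple factors as a common prefix and suffix sandwiching an instance of the \emph{original} Jacobi identity applied to a triple such as $(a,\theta_\delta,c)$, $(a,\theta_\eta,\theta_\zeta)$, $(\theta_\beta,\theta_\eta,\theta_\zeta)$ or $(\theta_\beta,b,\theta_\gamma)$ --- hence vanishes. If you search for pairwise cancellations driven by $(C^{-1})C=\mathrm{Id}$ you will not find them; you must identify the correct triples and check that the Koszul signs of the three members agree with the relative signs in the super Jacobi identity. This is the combinatorial heart of the proof, and your plan as stated would stall there.
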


\begin{proof}
The sesquilinearity, admissibility and 
Leibniz rule can be proved by direct computations  (see Theorem 2.2(a) of \cite{DSKV14}).

  To show the skewsymmetry, we substitute $\lambda$ by $-\lambda-\partial$ in \eqref{2.8}
 and let $\partial$ act on the left. Then we have 
        \begin{align*}
      &\left\{ a \,_{-\lambda-\partial}\, b \right\}^{D}\\
      &=\left\{ a \,_{-\lambda-\partial}\, b \right\}-\sum_{\alpha,\beta\in I} ({}_{\leftarrow}\left\{ a\,_{-\lambda-\partial}\, \theta_\beta \right\})\big(|_{x=-\lambda-\partial}(C^{-1})_{\alpha\beta}(x)\big)({}_{\leftarrow}\left\{ \theta_\alpha \,_{-\lambda-\partial}\, b \right\})\\
      &=\left\{ a \,_{-\lambda-\partial}\, b \right\}-\displaystyle\sum_{\alpha,\beta\in I}(-1)^{\tilde{a}\beta+\tilde{b}\alpha+\alpha+\beta+\alpha\beta+1}\left\{ \theta_\beta\,_{\lambda+\partial}\, a \right\}_{\rightarrow}(C^{-1})_{\beta\alpha}(\lambda+\partial)\left\{ b\,_{\lambda}\, \theta_\alpha \right\}.
     \end{align*}
Here, for the last equality, we used the fact that $(C^{-1})_{\beta\alpha}(\lambda)=(-1)^{\alpha\beta+\alpha+\beta+1}\big(|_{x=-\lambda-\partial}(C^{-1})_{\alpha\beta}(x)\big)$, which is induced from the identity $C^{-1}(\lambda) \circ C(\lambda)=\left(C(\lambda)\circ (C^{-1})(\lambda)\right)^*$. 
By multiplying $-(-1)^{\tilde{a}\tilde{b}}$ on the both sides, we get the skewsymmetry.

To prove the Jacobi identity, one can expand the following equation by Definition \ref{Def:Dirac modified super} of Dirac reduced bracket.  
\allowdisplaybreaks
 \begin{align}
    \label{2.15}
    &\{ a \,_{\lambda}\, \left\{ b \,_\mu\, c \right\}^{D}\}^{D}=\left\{ a \,_\lambda\, \left\{ b \,_\mu\, c \right\} \right\}\\
    \label{2.16}
    &-  \sum_{\gamma,\delta\in I}(-1)^{(\tilde{b}+\delta)(\tilde{c}+\gamma)+\gamma+\delta}\{ a \,_\lambda\, \left\{ \theta_{\delta} \,_{y}\, c \right\}\}  \left(\vert_{y=\mu+\partial} (C^{-1})_{\delta\gamma}(\mu+\partial)\left\{ b \,_\mu\, \theta_{\gamma} \right\}\right)\\
    \label{2.17}
    &+  \nonumber\sum_{\gamma,\delta,\zeta,\eta \in I}(-1)^{(\tilde{b}+\delta)(\tilde{c}+\gamma)+\tilde{a}(\tilde{c}+\zeta)+(\gamma+\zeta)(\delta+\eta)+\gamma+\delta+\zeta+\eta}\\
        & \hspace{0.7cm} \{ \theta_{\delta} \,_{\lambda+\mu+\partial}\;\, c\}_{\rightarrow} (C^{-1})_{\delta\zeta}(\lambda+\mu+\partial)\left\{ a _{\lambda} \left\{ \theta_{\eta} \;_{y}\, \theta_{\zeta} \right\} \right\}\left( \big\vert_{y=\mu+\partial} (C^{-1})_{\eta\gamma}(\mu+\partial)\left\{ b \,_\mu\, \theta_{\gamma} \right\}\right)\\
    \label{2.18}
    &-  \sum_{\gamma,\delta \in I} (-1)^{  (\tilde{b}+\delta)(\tilde{c}+\gamma)+\tilde{a}(\tilde{c}+\gamma)+\gamma+\delta}\{ \theta_{\delta} \,_{\lambda+\mu+\partial}\;\, c\}_{\rightarrow}(C^{-1})_{\delta\gamma}(\lambda+\mu+\partial)\left\{ a _\lambda \left\{ b \;_{\mu}\, \theta_{\gamma} \right\} \right\}\\
    \label{2.19}
    &-  \sum_{\alpha,\beta \in I} (-1)^{(\tilde{a}+\beta)(\tilde{b}+\tilde{c}+\alpha)+\alpha+\beta}\{ \theta_{\beta} \,_{\lambda+\partial}\;\, \left\{ b \,_\mu\, c \right\}\}_{\rightarrow}(C^{-1})_{\beta\alpha}(\lambda+\partial)\left\{ a _\lambda \theta_{\alpha} \right\}\\
    \label{2.20}
    &\nonumber+  \sum_{\alpha,\beta,\gamma,\delta \in I} (-1)^{(\tilde{b}+\delta)(\tilde{c}+\gamma)+(\tilde{a}+\beta)(\tilde{c}+\alpha+\delta)+\alpha+\beta+\gamma+\delta}\\
        &\hspace{0.7cm} \{ \theta_{\beta} \,_{x} \left\{ \theta_{\delta} \,_{y}\, c \right\}\} \left( \big\vert_{x=\lambda+\partial} (C^{-1})_{\beta\alpha}(\lambda+\partial)\left\{ a \,_\lambda\, \theta_{\alpha} \right\}\right)\left( \big\vert_{y=\mu+\partial} (C^{-1})_{\delta\gamma}(\mu+\partial)\left\{ b \,_\mu\, \theta_{\gamma} \right\}\right)\\
    \label{2.21}
    &-  \nonumber\sum_{\alpha,\beta,\gamma,\delta,\zeta,\eta \in I} (-1)^{(\tilde{b}+\delta)(\tilde{c}+\gamma)+\tilde{a}\tilde{c}+(\tilde{a}+\beta)(\alpha+\gamma)+(\gamma+\zeta)(\delta+\eta)+\beta\zeta+(\gamma+\eta)(\tilde{a}+\beta)+\alpha+\beta+\gamma+\delta+\zeta+\eta}\\
        &\hspace{0.7cm} \nonumber \{ \theta_{\delta} \,_{\lambda+\mu+\partial}\;\, c\}_{\rightarrow}(C^{-1})_{\delta\zeta}(\lambda+\mu+\partial)\left\{ \theta_{\beta} \,_{x} \left\{ \theta_{\eta} \;_{y}\, \theta_{\zeta} \right\} \right\}\\
            &\hspace{0.7cm} \left( \big\vert_{x=\lambda+\partial} (C^{-1})_{\beta\alpha}(\lambda+\partial)\left\{ a \,_{\lambda}\, \theta_{\alpha} \right\}\right)\left( \big\vert_{y=\mu+\partial} (C^{-1})_{\eta\gamma}(\mu+\partial)\left\{ b \,_{\mu}\, \theta_{\gamma} \right\}\right)\\
    \label{2.22}
    &+  \nonumber\sum_{\alpha,\beta,\gamma,\delta \in I} (-1)^{(\tilde{b}+\delta)(\tilde{c}+\gamma)+\tilde{a}(\tilde{c}+\gamma)+(\tilde{a}+\beta)(\tilde{b}+\alpha+\gamma)+\alpha+\beta+\gamma+\delta}\\
        & \hspace{0.7cm}  \{ \theta_{\delta} \,_{\lambda+\mu+\partial} c\}_{\rightarrow}(C^{-1})_{\delta\gamma}(\lambda+\mu+\partial) \left\{ \theta_{\beta} \,_{\lambda+\partial}\, \left\{ b \,_\mu\, \theta_{\gamma} \right\}\right\}_{\rightarrow}(C^{-1})_{\beta\alpha}(\lambda+\partial)\left\{ a \,_{\lambda}\,\theta_{\alpha} \right\}.
    \end{align}
Note that we used Lemma \ref{Lemma:Inverse operator} to obtain \eqref{2.17} and \eqref{2.21}. 

We also get the expansion of $\{b{}_\mu\{a{}_\lambda c\}^D\}^D$ by exchanging the roles of $a$ and $b$ and roles of $\lambda$ and $\mu$ in the above equation. We shall denote the corresponding terms to \eqref{2.15}$\sim$\eqref{2.22} by \eqref{2.15}$'$$\sim$\eqref{2.22}$'$.

\vskip 2mm
    
Write the remaining term in the Jacobi identity with Dirac reduced bracket $\left\{ \ {}_{\lambda} \ \right\}^{D}$ as follows:
\allowdisplaybreaks
    \begin{align}
    \label{2.23}
    &\{ \{a _{\lambda} b\}^{D} \;_{\lambda+\mu} c\}^{D}=\{ \{a _{\lambda} b\} \;_{\lambda+\mu} c\}\\
    \label{2.24}
    &-   \sum_{\alpha,\beta\in I}   (-1)^{(\tilde{a}+\beta)(\alpha+\tilde{b}+\tilde{c})+\alpha+\beta}\left\{ \left\{ \theta_{\beta} \,_{x}\, b \right\} \,_{\lambda+\mu+\partial}\, c \right\}_{\rightarrow}\left( \big\vert_{x=\lambda+\partial} (C^{-1})_{\beta\alpha}(\lambda+\partial)\left\{ a \,_{\lambda}\, \theta_{\alpha} \right\}\right)\\
    \label{2.25}
    &\nonumber-\sum_{\alpha,\beta,\zeta,\eta\in I}     (-1)^{(\tilde{a}+\beta+1)(\alpha+\beta)+\tilde{c}(\tilde{a}+\tilde{b}+\zeta+\eta)+(\zeta+\eta)(\alpha+\eta)+(\tilde{a}+\alpha)(\beta+\zeta)+\beta\zeta+\beta+\zeta}\\
        &\hspace{0.5cm}\left\{ \left\{ \theta_{\eta} \,_{x}\, \theta_{\zeta} \right\} \,_{\lambda+\mu+\partial}\, c \right\}_{\rightarrow}\left( \big\vert_{x=\lambda+\partial}(C^{-1})_{\eta\alpha}(\lambda+\partial)\left\{ a \,_{\lambda}\, \theta_{\alpha} \right\}\right)(C^{-1})_{\zeta\beta}(\mu+\partial)\left\{ \theta_{\beta} \,_{-\mu-\partial}\, b \right\}\\    
    &\nonumber-\sum_{\alpha,\beta\in I}     (-1)^{(\tilde{a}+\beta)(\tilde{b}+\alpha)+(\alpha+\beta)(\tilde{a}+\tilde{c}+\alpha+1)+(\tilde{b}+\beta)(\tilde{a}+\tilde{c}+\beta)}\\
    &\label{2.26}\hspace{0.7cm}\left\{ \left\{ a \,_{\lambda}\, \theta_{\alpha} \right\} \,_{\lambda+\mu+\partial}\, c \right\}_{\rightarrow}(C^{-1})^{*}_{\alpha\beta}(\mu+\partial)\left\{ \theta_{\beta} \,_{-\mu-\partial}\, b \right\}\\
    \label{2.27}
    &-   \sum_{\gamma,\delta\in I}     (-1)^{(\tilde{a}+\tilde{b}+\delta)(\tilde{c}+\gamma)+\gamma+\delta}\left\{ \theta_{\delta} \,_{\lambda+\mu+\partial}\, c \right\}_{\rightarrow}(C^{-1})_{\delta\gamma}(\lambda+\mu+\partial)\{ \left\{ a \,_\lambda\, b \right\} _{\lambda+\mu}\, \theta_{\gamma} \}\\
    &\nonumber+   \sum_{\alpha,\beta,\gamma,\delta\in I}     (-1)^{(\tilde{a}+\tilde{b}+\delta)(\tilde{c}+\gamma)+(\tilde{a}+\beta)(\tilde{b}+\alpha+\gamma)+\alpha+\beta+\gamma+\delta}\\
    &\label{2.28}\left\{ \theta_{\delta} \,_{\lambda+\mu+\partial}\, c \right\}(C^{-1})_{\delta\gamma}(\lambda+\mu+\partial)\left\{ \left\{ \theta_{\beta} \,_{x} b \right\} \,_{\lambda+\mu+\partial}\, \theta_{\gamma} \right\}_{\rightarrow}\left( \big\vert_{x=\lambda+\partial}(C^{-1})_{\beta\alpha}(\lambda+\partial)\left\{ a \,_{\lambda}\, \theta_{\alpha} \right\}\right)\\    
    &+\nonumber\sum_{\alpha,\beta,\gamma,\delta,\zeta,\eta\in I} (-1)^{(\tilde{a}+\tilde{b}+\delta)(\tilde{c}+\gamma)+(\tilde{a}+\beta)(\tilde{b}+\alpha)+\gamma(\tilde{a}+\alpha)+(\tilde{b}+\beta)(\tilde{a}+\beta+\gamma)}\\
    &\hspace{0.7cm}\nonumber
    (-1)^{\gamma(\alpha+\beta+\zeta+\eta)+(\alpha+\eta)(\zeta+\eta)+(\tilde{a}+\alpha)(\beta+\zeta)+\beta\zeta+\alpha+\gamma+\delta+\zeta} \left\{ \theta_{\delta} \,{}_{\lambda+\mu+\partial}\, c \right\} (C^{-1})_{\delta\gamma}(\lambda+\mu+\partial)\\
        &\label{2.29}\hspace{0.7cm}\left\{ \left\{ \theta_{\eta} \,_{x} \theta_{\zeta} \right\} \,_{\lambda+\mu+\partial}\, \theta_{\gamma} \right\}_{\rightarrow}\left( \big\vert_{x=\lambda+\partial}(C^{-1})_{\eta\alpha}(\lambda+\partial)\left\{ a \,_{\lambda}\, \theta_{\alpha} \right\}\right)(C^{-1})_{\zeta\beta}(\mu+\partial)\left\{ \theta_{\beta} \,_{-\mu-\partial}\, b \right\}\\  
    &+\nonumber\sum_{\alpha,\beta,\gamma,\delta \in I}     (-1)^{(\tilde{a}+\tilde{b}+\delta)(\tilde{c}+\gamma)+(\tilde{a}+\beta)(\tilde{b}+\alpha)+(\alpha+\beta)(\tilde{a}+\alpha+\gamma)+(\tilde{b}+\beta)(\tilde{a}+\beta+\gamma)+\alpha+\beta+\gamma+\delta}\\
        &\label{2.30}\hspace{0.7cm}\left\{ \theta_{\delta} \,_{\lambda+\mu+\partial}\, c \right\}\!(C^{-1})_{\delta\gamma}(\lambda+\mu+\partial)\left\{ \left\{ a \,_{\lambda} \theta_{\alpha} \right\} \,_{\lambda+\mu+\partial}\, \theta_{\gamma} \right\}_{\rightarrow}(C^{-1})^{*}_{\beta\alpha}(\mu+\partial)\left\{ \theta_{\beta} \,_{-\mu-\partial}\, b \right\}.
    \end{align}
 We claim that the sum of three terms in the following triples are all trivial.
    \begin{equation*}
    \begin{aligned}
	&(\eqref{2.16},\eqref{2.19}',\eqref{2.26}),  (\eqref{2.17},\eqref{2.22}',\eqref{2.30}), (\eqref{2.18},\eqref{2.18}',\eqref{2.27}),(\eqref{2.19},\eqref{2.16}',\eqref{2.24}), \\
        &(\eqref{2.20},\eqref{2.20}',\eqref{2.25}), (\eqref{2.21},\eqref{2.21}',\eqref{2.29}), (\eqref{2.22},\eqref{2.17}',\eqref{2.28}).
    \end{aligned}
    \end{equation*}
    
Consider the first triple (\eqref{2.16},\eqref{2.19}$'$,\eqref{2.26}) . We have 
    \begin{align*}
        \eqref{2.26}&=-\displaystyle\sum_{\alpha,\beta  \in I}  (-1)^{\tilde{c}\alpha+\tilde{b}\tilde{c}}(-1)^{\alpha\beta+\alpha+\beta+1}(-1)^{\tilde{b}\beta+1}\left\{ \left\{ a \,_{\lambda}\, \theta_{\alpha} \right\} \,_{\lambda+\mu+\partial}\, c \right\}_{\rightarrow}(C^{-1})_{\alpha\beta}(\mu+\partial)\left\{ b \,_{\mu}\, \theta_{\beta} \right\}.
    \end{align*}
 Here and further, we use the identity $(C^{-1})^{*}_{\beta\alpha}(\mu)=-(-1)^{\alpha\beta+\alpha+\beta}(C^{-1})_{\alpha\beta}(\mu)$. We get following summation by substituting $\alpha$ and $\beta$ in  \eqref{2.19}$'$ with $\gamma$ and $\delta$ respectively and also by substituting $\alpha$ and $\beta$ in \eqref{2.26} with $\delta$ and $\gamma$ respectively.
        \begin{align*}
            &\eqref{2.16}+ \eqref{2.19}'+\eqref{2.26}=-\sum_{\gamma,\delta  \in I}  (-1)^{\tilde{b}\tilde{c}+\tilde{b}\gamma+\tilde{c}\delta+\gamma+\delta+\gamma\delta}\\
            & (\{\{ a {}_{\lambda} \{ \theta_{\delta} {}_{\mu+\partial} c \}\}-(-1)^{\tilde{a}\delta}\{ \theta_{\delta} {}_{\mu+\partial}\{ a{}_{\lambda} c\} \}-\{\{ a {}_{\lambda} \theta_{\delta}\} {}_{\lambda+\mu+\partial} c \})(C^{-1})_{\delta\gamma}(\mu+\partial)\{ b \,_{\mu} {}\theta_{\gamma} \}=0.
        \end{align*}
        
To show the similar result for the second triple  (\eqref{2.17}, \eqref{2.22}$'$, \eqref{2.30}), consider the definition of the adjoint and the skewsymmetry. Then we have
        \begin{align*}
            \eqref{2.30}&=\displaystyle\sum_{\alpha,\beta,\gamma,\delta  \in I}  (-1)^{\tilde{a}\tilde{c}+\tilde{b}\tilde{c}+\tilde{a}\gamma+\tilde{c}\delta+\gamma+\delta+\gamma\delta+\alpha\gamma}(-1)^{\alpha\beta+\alpha+\beta+\tilde{b}\beta}\left\{ \theta_{\delta} \,_{\lambda+\mu+\partial}\, c \right\}_{\rightarrow}(C^{-1})_{\delta\gamma}(\lambda+\mu+\partial)\\
            &\hspace{0.5cm}\nonumber\left\{ \left\{ a \,_{\lambda} \theta_{\alpha} \right\} \,_{\lambda+\mu+\partial}\, \theta_{\gamma} \right\}_{\rightarrow}(C^{-1})_{\alpha\beta}(\mu+\partial)\left\{ b \,_{\mu}\, \theta_{\beta} \right\}.
        \end{align*}
By changing the indices, we can show that the sum $\eqref{2.17}+ \eqref{2.22}'+\eqref{2.30}$ is 
            \begin{align*}
               & \sum_{\gamma,\delta,\zeta,\eta  \in I}  (-1)^{\tilde{a}\tilde{c}+\tilde{b}\tilde{c}+\tilde{a}\zeta+\tilde{c}\delta+\zeta+\delta+\delta\zeta+\eta\zeta+\gamma\eta+\eta+\gamma+\tilde{b}\gamma}\left\{ \theta_{\delta} \,_{\lambda+\mu+\partial}\, c \right\}(C^{-1})_{\delta\zeta}(\mu+\partial)\\
              &  \left(\left\{ a {}_{\lambda} \left\{ \theta_{\eta} {}_{\mu+\partial}\, \theta_{\zeta} \right\} \right\}-(-1)^{\tilde{a}\eta}\left\{ \theta_{\eta} {}_{\mu+\partial} \left\{ a{}_{\lambda} \theta_{\zeta}\right\} \right\}-\left\{ \left\{ a {}_{\lambda} \theta_{\eta} \right\} {}_{\lambda+\mu+\partial}\, \theta_{\zeta} \right\} \right)(C^{-1})_{\eta\gamma}(\mu+\partial)\left\{ b \,_{\mu}\, \theta_{\gamma} \right\}=0.
            \end{align*}

\vskip 2mm            
           
Next the sum $\eqref{2.18}+\eqref{2.18}'+ \eqref{2.27}$ of the third triple is
       \begin{align*}
            &-\displaystyle\sum_{\gamma,\delta  \in I}  (-1)^{\tilde{a}\tilde{c}+\tilde{b}\tilde{c}+\tilde{a}\gamma+\tilde{b}\gamma+\tilde{c}\delta+\gamma+\delta+\gamma\delta}\left\{ \theta_{\delta} \,_{\lambda+\mu+\partial}\, c \right\}_{\rightarrow}(C^{-1})_{\delta\gamma}(\lambda+\mu+\partial)\\
            &\hspace{0.5cm}(\left\{ a \,_{\lambda} \left\{ b \;_{\mu}\, \theta_\gamma \right\} \right\}-(-1)^{\tilde{a}\tilde{b}}\left\{ b \,_{\mu} \left\{ a\,_{\lambda}\, \theta_\gamma \right\} \right\}-\left\{ \left\{ a \,_{\lambda}\, b \right\} \,_{\lambda+\mu}\, \theta_\gamma \right\})=0.
        \end{align*}
\vskip 2mm    
The sum of fourth triple (\eqref{2.19}, \eqref{2.16}$'$, \eqref{2.24}) equals to $0$ by the similar computations to the case of the first triple.
\vskip 2mm    
Consider the fifth triple (\eqref{2.20}, \eqref{2.20}$'$, \eqref{2.25}). Then
\begin{align*}
    \eqref{2.20}'&=\displaystyle\sum_{\alpha,\beta,\gamma,\delta  \in I}  (-1)^{\tilde{a}\tilde{c}+\tilde{a}\gamma+\tilde{c}\delta+\gamma+\delta+\gamma\delta+\tilde{b}\tilde{c}+\tilde{b}\delta+\tilde{b}\alpha+\tilde{c}\beta+\beta\delta+\alpha+\beta+\alpha\beta}(-1)^{\tilde{b}\delta+\beta\delta+\tilde{a}\tilde{b}+\tilde{a}\beta}\\
    &\hspace{0.5cm}\{ \theta_{\beta} \,_{y} \left\{ \theta_{\delta} \,_{x}\, c \right\}\}\left( \vert_{x=\lambda+\partial} (C^{-1})_{\delta\gamma}(\lambda+\partial)\left\{ a \,_\lambda\, \theta_{\gamma} \right\}\right)\left( \vert_{y=\mu+\partial} (C^{-1})_{\beta\alpha}(\mu+\partial)\left\{ b \,_\mu\, \theta_{\alpha} \right\}\right)\\
\end{align*}
and
\begin{align*}
    \eqref{2.25}&=\displaystyle\sum_{\alpha,\beta,\zeta,\eta  \in I}  (-1)^{\tilde{a}\alpha+\alpha+\tilde{a}\tilde{c}+\tilde{b}\tilde{c}+\alpha\eta+\eta+\zeta\eta+\tilde{c}\eta+\tilde{c}\zeta+\tilde{a}\zeta+\beta\zeta+\beta+\zeta}(-1)^{\tilde{b}\beta}\\
    &\hspace{0.5cm}\left\{ \left\{ \theta_{\eta} \,_{x}\, \theta_{\zeta} \right\} \,_{\lambda+\mu+\partial}\, c \right\}_{\rightarrow}\left( \vert_{x=\lambda+\partial}(C^{-1})_{\eta\alpha}(\lambda+\partial)\left\{ a \,_{\lambda}\, \theta_{\alpha} \right\}\right)(C^{-1})_{\zeta\beta}(\mu+\partial)\left\{ b \,_{\mu}\, \theta_{\beta} \right\}.
\end{align*}
Hence the sum of the three terms of equals to
\begin{align*}
& \displaystyle\sum_{\alpha,\beta,\gamma,\delta  \in I}  (-1)^{\tilde{a}\alpha+\alpha+\tilde{a}\tilde{c}+\tilde{b}\tilde{c}+\alpha\beta+\beta+\beta\delta+\tilde{c}\beta+\tilde{c}\delta+\tilde{a}\delta+\gamma\delta+\gamma+\delta+\tilde{b}\gamma}\\
&\hspace{1cm}( \left\{ \theta_{\beta} \,_{\lambda+\partial} \left\{ \theta_{\delta} \;_{\mu+\partial}\, c \right\} \right\}-(-1)^{\beta\delta}\left\{ \theta_{\delta} \,_{\mu+\partial} \left\{ \theta_{\beta} \;_{\lambda+\partial}\, c \right\} \right\}-
\left\{ \left\{ \theta_{\beta} \,_{\lambda+\partial}\, \theta_{\delta} \right\} \,_{\lambda+\mu+\partial}\, c \right\})\\            
&\hspace{1cm}\nonumber (C^{-1})_{\beta\alpha}(\lambda+\partial)\left\{ a \,_{\lambda}\, \theta_{\alpha} \right\} (C^{-1})_{\delta\gamma}(\mu+\partial)\left\{ b \,_{\mu}\, \theta_{\gamma} \right\}=0.
\end{align*}

\vskip 2mm

Consider the sixth triple (\eqref{2.21}, \eqref{2.21}$'$,\eqref{2.29}). We have
    \begin{align*}
        \eqref{2.21}'&=-\displaystyle\sum_{\alpha,\beta,\gamma,\delta,\zeta,\eta  \in I}  (-1)^{\tilde{a}\tilde{b}+\tilde{a}\beta+\tilde{a}\tilde{c}+\tilde{a}\gamma+\tilde{c}\delta+\gamma+\delta+\tilde{b}\tilde{c}+\tilde{b}\alpha+\alpha+\beta+\alpha\beta+\eta\gamma+\eta+\delta\zeta+\zeta+\eta\zeta+\beta\zeta}\\
          &\hspace{3cm}\nonumber\{ \theta_{\delta} \,_{\lambda+\mu+\partial}\;\, c\}_{\rightarrow}(C^{-1})_{\delta\zeta}(\lambda+\mu+\partial)\left\{ \theta_{\beta} \,_{y} \left\{ \theta_{\eta} \;_{x}\, \theta_{\zeta} \right\} \right\}\\
                &\hspace{3cm}\nonumber\left( \vert_{x=\lambda+\partial} (C^{-1})_{\eta\gamma}(\lambda+\partial)\left\{ a \,_{\lambda}\, \theta_{\gamma} \right\}\right)\left( \vert_{y=\mu+\partial} (C^{-1})_{\beta\alpha}(\mu+\partial)\left\{ b \,_{\mu}\, \theta_{\alpha} \right\}\right)
    \end{align*}
    and
    \begin{align*}
        \eqref{2.29}&=-\displaystyle\sum_{\alpha,\beta,\gamma,\delta,\zeta,\eta  \in I}  (-1)^{\tilde{a}\tilde{c}+\tilde{b}\tilde{c}+\tilde{c}\delta+\gamma+\delta+\gamma\delta+\tilde{a}\alpha+\alpha+\alpha\eta+\eta+\zeta\eta+\gamma\eta+\gamma\zeta+\tilde{a}\zeta+\beta\zeta+\beta+\zeta+\tilde{b}+\beta}\\
        &\hspace{0.5cm}\nonumber\left\{ \theta_{\delta} \,_{\lambda+\mu+\partial}\, c \right\}_{\rightarrow}(C^{-1})_{\delta\gamma}(\lambda+\mu+\partial)\\
        &\hspace{0.5cm}\nonumber\left\{ \left\{ \theta_{\eta} \,_{x} \theta_{\zeta} \right\} \,_{\lambda+\mu+\partial}\, \theta_{\gamma} \right\}_{\rightarrow}\left( \vert_{x=\lambda+\partial}(C^{-1})_{\eta\alpha}(\lambda+\partial)\left\{ a \,_{\lambda}\, \theta_{\alpha} \right\}\right)(C^{-1})_{\zeta\beta}(\mu+\partial)\left\{ b \,_{\mu}\, \theta_{\beta} \right\}.
    \end{align*}
By changing the indices properly, we can show that the sum of \eqref{2.21}, \eqref{2.21}$'$ and \eqref{2.29} is 
    \begin{align*}
&-\sum_{\alpha,\beta,\gamma,\delta,\zeta,\eta  \in I}    (-1)^{\tilde{a}\tilde{c}+\tilde{b}\tilde{c}+\tilde{b}\gamma+\tilde{c}\delta+\zeta+\delta+\delta\zeta+\tilde{a}\alpha+\alpha+\alpha\beta+\beta\eta+\beta\zeta+\zeta\eta+\tilde{a}\eta+\gamma\eta+\gamma+\eta+\tilde{b}\gamma} \\
        &\hspace{2cm}\left\{ \theta_{\delta} \,_{\lambda+\mu+\partial}\, c \right\}_{\rightarrow}(C^{-1})_{\delta\gamma}(\lambda+\mu+\partial)\\
        &\hspace{2cm} ( \left\{ \theta_{\beta} \,_{\lambda+\partial} \left\{ \theta_{\eta} \;_{\mu+\partial}\, \theta_{\zeta} \right\} \right\}-(-1)^{\beta\eta}\left\{ \theta_{\eta} \,_{\mu+\partial} \left\{ \theta_{\beta} \;_{\lambda+\partial}\, \theta_{\zeta} \right\} \right\}-
        \left\{ \left\{ \theta_{\beta} \,_{\lambda+\partial}\, \theta_{\eta} \right\} \,_{\lambda+\mu+\partial}\, \theta_{\zeta} \right\})\\            
        &\hspace{2cm}\nonumber (C^{-1})_{\eta\alpha}(\lambda+\partial)\left\{ a \,_{\lambda}\, \theta_{\alpha} \right\} (C^{-1})_{\zeta\beta}(\mu+\partial)\left\{ b \,_{\mu}\, \theta_{\gamma} \right\}=0.
    \end{align*}

\vskip 2mm
Finally, the sum of the last triple (\eqref{2.22}, \eqref{2.17}$'$, \eqref{2.28}) equals to $0$ by the similar arguments as in the case of the second triple (\eqref{2.17}, \eqref{2.22}$'$, \eqref{2.30}), we proved the Jacobi identity.
\end{proof}
  
  \begin{thm}\label{Theorem:super Dirac, part2}
Consider the PVSA $ \mathcal{P}$ and the invertible element $C(\lambda)\in \mathcal{M}(\lambda)$  in Theorem \ref{Theorem:super Dirac, part1}. Then the following statements hold.
  \begin{itemize}
      \item[\textrm{(a)}] For any $i\in I$ and $a \in  \mathcal{P}$, we have  $\left\{ a \,_{\lambda}\, \theta_i \right\}^{D}=\left\{ \theta_i \,_{\lambda}\, a \right\}^{D}=0$.
      \item[\textrm{(b)}]  Consider the differential algebra ideal $\left< \theta_I \right>$ of $\mathcal{P}$ generated by $\theta_{I}$. Then the corresponding Dirac reduction induces a PVSA structure on  $\mathcal{P}/\left< \theta_I \right>$. In other words, the bracket $\left\{ \cdot {}_{\lambda} \cdot \right\}^{D} : \mathcal{P}/\left< \theta_I \right>\times \mathcal{P}/\left< \theta_I \right> \rightarrow \mathcal{P}/\left< \theta_I \right>(\!(\lambda^{-1})\!)$ is a well-defined Poisson $\lambda$-bracket.
  \end{itemize}
\end{thm}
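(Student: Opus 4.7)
My plan is to prove part (a) by a direct computation collapsing the correction term via the inverse relation for $C(\lambda)$, and then deduce part (b) as a formal consequence of part (a) together with the sesquilinearity and Leibniz rules guaranteed by Theorem \ref{Theorem:super Dirac, part1}.

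For part (a), I would first show $\{\theta_k \,{}_\lambda\, a\}^D = 0$. Substituting $\theta_k$ for $a$ and $a$ for $b$ in \eqref{2.8}, the correction term becomes
\[
\sum_{i,j\in I}(-1)^{(k+j)(\tilde{a}+i)+i+j}\,\{\theta_j\,{}_{\lambda+\partial}\,a\}_{\rightarrow}\,(C^{-1})_{ji}(\lambda+\partial)\,C_{ik}(\lambda),
\]
since $\{\theta_k\,{}_\lambda\,\theta_i\} = C_{ik}(\lambda)$. The inner sum over $i$ should collapse via the identity $C^{-1}\circ C = \mathrm{Id}_{(r|s)}\otimes 1$ in $\mathcal{M}_{(r|s)}(\lambda)$, whose $(j,k)$-entry reads, by the product rule \eqref{eq:matrix product PVA},
\[
\sum_{i\in I}(-1)^{(i+k)(j+i)}(C^{-1})_{ji}(\lambda+\partial)\,C_{ik}(\lambda)=\delta_{jk}.
\]
A sign-bookkeeping check shows that $(-1)^{(k+j)(\tilde{a}+i)+i+j}$ and $(-1)^{(i+k)(j+i)}$ agree when $j=k$ (and the $\delta_{jk}$ kills the other terms). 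Hence the correction reduces to $\{\theta_k\,{}_{\lambda+\partial}\,a\}_{\rightarrow}\cdot 1 = \{\theta_k\,{}_\lambda\,a\}$, which cancels the first summand of \eqref{2.8} and yields $\{\theta_k\,{}_\lambda\,a\}^D = 0$. The symmetric statement $\{a\,{}_\lambda\,\theta_k\}^D = 0$ is then immediate from the skewsymmetry of $\{\cdot\,{}_\lambda\,\cdot\}^D$ already established in Theorem \ref{Theorem:super Dirac, part1}.

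For part (b), I need $\{a\,{}_\lambda\,b\}^D \in \langle\theta_I\rangle(\!(\lambda^{-1})\!)$ whenever $a$ or $b$ lies in $\langle\theta_I\rangle$. Sesquilinearity and part (a) give
\[
\{\theta_i^{(n)}\,{}_\lambda\,b\}^D = (-\lambda)^n\{\theta_i\,{}_\lambda\,b\}^D = 0,\qquad \{a\,{}_\lambda\,\theta_i^{(n)}\}^D = (\lambda+\partial)^n\{a\,{}_\lambda\,\theta_i\}^D = 0
\]
for every $n\in\ZZ_+$. Since the differential superalgebra ideal $\langle\theta_I\rangle$ equals the associative ideal generated by $\{\theta_i^{(n)}\mid i\in I,\,n\in\ZZ_+\}$, the left and right Leibniz rules for $\{\cdot\,{}_\lambda\,\cdot\}^D$ (provided by Theorem \ref{Theorem:super Dirac, part1}) force the Dirac bracket of any element in $\langle\theta_I\rangle$ with anything in $\mathcal{P}$ to land in $\langle\theta_I\rangle(\!(\lambda^{-1})\!)$. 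The bracket therefore descends to $\mathcal{P}/\langle\theta_I\rangle$, and the remaining PVSA axioms on the quotient are inherited directly from $\mathcal{P}$ by Theorem \ref{Theorem:super Dirac, part1}.

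The main obstacle will be the sign-bookkeeping in part (a): one must verify that the prefactor $(-1)^{(k+j)(\tilde{a}+i)+i+j}$ appearing in \eqref{2.8} differs from the ambient product sign $(-1)^{(i+k)(j+i)}$ in $\mathcal{M}_{(r|s)}(\lambda)$ only by a factor that becomes trivial on the diagonal contribution $j=k$. Everything else in the argument is either automatic from earlier results or a routine Leibniz-rule manipulation.
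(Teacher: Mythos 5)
Your proposal is correct and follows essentially the same route as the paper: part (a) is a direct computation in which the correction term collapses to $\{\theta_k\,{}_\lambda\,a\}$ via the inverse relation for $C(\lambda)$ (the paper kills $\{a\,{}_\lambda\,\theta_i\}^D$ first using $C\circ C^{-1}=\mathrm{Id}$ and gets the other side by skewsymmetry, whereas you kill $\{\theta_k\,{}_\lambda\,a\}^D$ first using $C^{-1}\circ C=\mathrm{Id}$ — a harmless mirror image), and part (b) follows from (a) together with sesquilinearity and the Leibniz rules. Your sign check, that $(-1)^{(k+j)(\tilde a+i)+i+j}$ and $(-1)^{(i+k)(j+i)}$ agree on the diagonal $j=k$, is accurate, and your spelled-out argument for (b) is just the expansion of the paper's remark that (b) ``directly follows from (a).''
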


\begin{proof}
  For $i\in I$ and $a\in \mathcal{P}$, we have
  \allowdisplaybreaks
  \begin{align*}
    \left\{ a \,{}_{\lambda}\, \theta_i \right\}^{D}
    &=\left\{ a \,{}_{\lambda}\, \theta_i \right\}-\displaystyle\sum_{\alpha,\beta\in I}  (-1)^{\tilde{a}i+\tilde{a}\alpha+i\beta+\alpha\beta+\alpha+\beta}\left\{ \theta_\beta \,{}_{\lambda+\partial}\, \theta_i \right\}_{\rightarrow}(C^{-1})_{\beta\alpha}(\lambda+\partial)\left\{ a \,{}_{\lambda}\, \theta_\alpha \right\}=0
  \end{align*}
since
  \begin{equation*}
    \displaystyle\sum_{\beta\in I}  (-1)^{i\beta+\alpha\beta+\beta+i\alpha}\left\{ \theta_\beta \,{}_{\lambda+\partial}\, \theta_i \right\}_{\rightarrow}(C^{-1})_{\beta\alpha}(\lambda)=\delta_{i\alpha}.
  \end{equation*}
 In addition, the equality $\{\theta_i{}_\lambda a\}^{D}=0$ holds by the skewsymmetry. The statement (b) directly follows from (a).
\end{proof}

\subsection{Dirac reduction of Poisson superalgebras} \label{Subsec:Dirac PA} \hfill

As a finite analogue of Section \ref{Subsec:Dirac PVA}, we briefly introduce the Dirac reduction for Poisson superalgebras.  On the other hand, the result in this section can be understood as a super-analogue of \cite{Dirac50}.

Let $\left(P, \{\cdot,\cdot\}\right)$ be a Poisson superalgebra and  $I=I_{0} \sqcup I_{1}$ be the same index set as in Section \ref{Subsec:Dirac PVA}. Consider a subset $\theta_{I} = \{\theta_{i} |i\in I\}\subset P$ such that $(-1)^{p(\theta_i)}= (-1)^i$ and the matrix
\begin{equation} \label{eq: DR matrix for PA} 
C:= \sum_{i,j\in I} e_{ij} \otimes  \{ \theta_{i}, \theta_{j} \}\in \text{Mat}_{(r|s)}\otimes P
\end{equation}
where $r= |I_{0}|$ and $s=|I_1|$.

\begin{defn}  Assuming the element $C$ in \eqref{eq: DR matrix for PA} is invertible, the \textit{Dirac reduced bracket} on $P$  associated with $\theta_{I}$ is the bilinear map $\{ \cdot , \cdot\}^{D} : P\times P \rightarrow P$ given by
  \begin{equation}\label{eq: DR bracket for PA}
    \left\{ a \,, b \right\}^{D}=\left\{ a \,, b \right\}-\displaystyle\sum_{i,j \in I} (-1)^{ij+j}\left\{ a \,, \theta_i \right\}(C^{-1})_{ij}\left\{ \theta_j \,, b \right\}.
  \end{equation}
\end{defn}
By the similar proof to Theorem \ref{Theorem:super Dirac, part1} and \ref{Theorem:super Dirac, part2}, we get the following theorem.

\begin{thm}
Suppose the matrix $C$ in \eqref{eq: DR matrix for PA} is invertible.
  \begin{itemize}
   \item[\textrm{(a)}] The Dirac reduced bracket $\{ \cdot , \cdot\}^{D}$ given in \eqref{eq: DR bracket for PA}  is a Poisson bracket on $P$.
   \item[\textrm{(b)}]  All the elements in $\theta_{I}$ are central with respect to $\{\cdot, \cdot\}^{D}$.
   \item[\textrm{(c)}]  Let $\left< \theta_I \right>$ be the associative algebra ideal of $P$ generated by $\theta_{I}$. Then the bracket $\left\{ \cdot , \cdot \right\}^{D}$ induces a well-defined Poisson bracket on $P/\left< \theta_I \right>$.
   \end{itemize}\qed
\end{thm}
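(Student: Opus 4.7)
The plan is to specialize the proofs of Theorem~\ref{Theorem:super Dirac, part1} and Theorem~\ref{Theorem:super Dirac, part2} to the finite setting, where all instances of $\partial$, $\lambda$, and $\mu$ disappear. Formally, a Poisson superalgebra can be regarded as a PVSA with trivial translation operator, so the bracket \eqref{eq: DR bracket for PA} is the specialization of \eqref{2.8} in which every formal variable is set to zero; hence one could either quote the PVSA theorems directly or re-run their proofs. I would sketch the latter, since in the finite setting the computation is substantially lighter.

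For part (a), the Leibniz rule follows from a short expansion using the Leibniz rule for $\{\cdot,\cdot\}$. The super-skewsymmetry $\{b,a\}^{D} = -(-1)^{\tilde{a}\tilde{b}}\{a,b\}^{D}$ reduces to verifying the identity $(C^{-1})_{ij} = -(-1)^{ij+i+j}(C^{-1})_{ji}$, which is obtained by inverting the super-skewsymmetry $C_{ij} = -(-1)^{ij}C_{ji}$ of $C$ in $\text{Mat}_{(r|s)} \otimes P$. For the Jacobi identity I would expand each of the three double brackets $\{a,\{b,c\}^{D}\}^{D}$, $\{b,\{a,c\}^{D}\}^{D}$, and $\{\{a,b\}^{D},c\}^{D}$ into a sum of seven terms, in parallel with the seven-triple pattern in the proof of Theorem~\ref{Theorem:super Dirac, part1}; the correction terms coming from Lemma~\ref{Lemma:Inverse operator} drop out because no derivation acts on $(C^{-1})_{ij}$. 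Each of the resulting triples is then recognized as an instance of the Jacobi identity for $\{\cdot,\cdot\}$ on a triple drawn from $\{a,b,c,\theta_{\alpha}\}$, and therefore vanishes.

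Part (b) is a direct check: substituting $b = \theta_i$ in \eqref{eq: DR bracket for PA} and invoking the matrix identity $C^{-1} \circ C = \text{Id}_{(r|s)} \otimes 1$ cancels the two summands, yielding $\{a,\theta_i\}^{D} = 0$; the centrality $\{\theta_i, a\}^{D} = 0$ then follows from the super-skewsymmetry established in (a). Part (c) is immediate: if $x = \sum_r a_r \theta_{i_r} \in \langle \theta_I \rangle$, the Leibniz rule for $\{\cdot,\cdot\}^{D}$ combined with (b) forces $\{b, x\}^{D} \in \langle \theta_I \rangle$ for every $b \in P$, so $\langle \theta_I \rangle$ is a Poisson ideal and the bracket descends to a well-defined Poisson bracket on $P/\langle \theta_I \rangle$. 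The main obstacle, as in the PVSA case, is the bookkeeping of Koszul--Quillen signs in the Jacobi-identity expansion; but since the sign patterns are precisely those already tabulated in the seven-triple analysis of Theorem~\ref{Theorem:super Dirac, part1} and are insensitive to setting $\partial = \lambda = \mu = 0$, this step is mechanical rather than conceptually delicate.
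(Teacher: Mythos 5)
Your overall strategy --- specializing the proofs of Theorems \ref{Theorem:super Dirac, part1} and \ref{Theorem:super Dirac, part2} by discarding $\partial$, $\lambda$ and $\mu$ --- is exactly what the paper intends (it offers no written proof beyond the remark that the argument is ``similar''), and your treatment of the Leibniz rule, the skewsymmetry via the relation between $(C^{-1})_{ij}$ and $(C^{-1})_{ji}$, part (b), and part (c) is sound.

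There is, however, one concrete misstep in your sketch of the Jacobi identity: the claim that ``the correction terms coming from Lemma \ref{Lemma:Inverse operator} drop out because no derivation acts on $(C^{-1})_{ij}$.'' What disappears in the finite setting is only the formal-variable bookkeeping (the substitutions $\vert_{x=\lambda+\partial}$, the arrows, the adjoints). The content of Lemma \ref{Lemma:Inverse operator} survives and is indispensable: $(C^{-1})_{ij}$ is a genuine element of $P$ on which the odd/even derivation $\{a,\cdot\}$ acts nontrivially, so the Leibniz rule applied to $\{a,\{b,\theta_i\}(C^{-1})_{ij}\{\theta_j,c\}\}$ produces the middle term $\pm\,\{b,\theta_i\}\{a,(C^{-1})_{ij}\}\{\theta_j,c\}$, which must be expanded via the finite analogue
\[
\{a,(C^{-1})_{ij}\}=-\sum_{r,t\in I}\pm\,(C^{-1})_{ir}\{a,C_{rt}\}(C^{-1})_{tj}.
\]
These are precisely the contributions that assemble into the triples on which the Jacobi identity of $\{\cdot,\cdot\}$ is applied to $(a,\theta_r,\theta_t)$, $(b,\theta_r,\theta_t)$ and $(\theta_\beta,\theta_r,\theta_t)$ --- the finite analogues of \eqref{2.17}, \eqref{2.21} and their primed partners. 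If you literally drop them, three of your seven triples are incomplete and the cancellation fails; note also the internal tension with your own next sentence, which correctly lists all-$\theta$ triples among the instances of the Jacobi identity, since those triples arise only from the terms you propose to discard. With this one correction the argument goes through as you describe.
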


\section{Dirac reduction of supersymmetric Poisson vertex algebras} \label{Sec: Dirac-SUSY}

In this section, we introduce non-local $N_k=1$ supersymmetric (SUSY) Poisson vertex algebras (PVAs) and their Dirac reduction. Since we only deal with the $N_k=1$ case, we use the terms SUSY LCAs and SUSY PVAs instead of  $N_k=1$ SUSY LCAs and  $N_k=1$ SUSY PVAs, respectively. For detailed properties of local SUSY PVAs, we refer to \cite{HK07}. 

\subsection{Supersymmetric Poisson vertex algebras}\label{Subcec: SUSY PVA}\hfill 

For an odd indeterminate $\chi$ and an even indeterminate $\lambda$, let us consider the noncommutative associative algebra $\CC[\Lambda]$ generated by $\Lambda=(\lambda, \chi)$   subject to the relations: 
\begin{equation} \label{eq:susy relation}
 \chi^2=-\lambda, \quad \chi \lambda= \lambda \chi.
\end{equation}
 Then $\CC[\Lambda]\simeq \CC[\lambda]\oplus \chi \CC[\lambda]$ as vector superspaces. Similarly, we define another associative algebra $\CC(\!(\Lambda^{-1})\!):= \CC(\!(\lambda^{-1})\!)\oplus \chi \CC(\!(\lambda^{-1})\!)$ generated by $\chi, \lambda$ and $\lambda^{-1}$ satisfying $\lambda \lambda^{-1}=\lambda^{-1} \lambda=1$, $\chi \lambda^{-1}=\lambda^{-1} \chi$ along with the relation \eqref{eq:susy relation}.

Let $\mathscr{R}$ be a superspace endowed with an odd derivation $D:\mathscr{R}\to \mathscr{R}$. In other words, we consider a $\CC[D]$-module $\mathscr{R}$. We extend the $\CC[D]$-module structure to $\mathscr{R} (\!( \Lambda ^{-1})\!):= \CC(\!(\Lambda^{-1})\!)\otimes \mathscr{R}$ via the relations
\begin{equation}\label{eq: chi and D}
D\chi= -\chi D+ 2\lambda, \quad D\lambda^{\pm}= \lambda^{\pm} D.
\end{equation}
 A {\it ($N_k=1$) SUSY non-local $\Lambda$-bracket} is a {\it parity reversing} bilinear map 
\[[ \cdot {}_\Lambda \cdot ]: \mathscr{R} \times \mathscr{R} \to \mathscr{R} (\!( \Lambda ^{-1})\!)\]
satisfying the {\it sesquilinearity} :  
\begin{equation} \label{sesqui}
 [Da _\Lambda b]=\chi[ a _\Lambda b], \quad [a _\Lambda Db]=-(-1)^{\tilde{a}}(D+\chi)[ a _\Lambda b]
 \end{equation}
 for $a,b \in \mathscr{R}$. We denote the coefficient of $\lambda^n \chi^i$ of $[a{}_\Lambda b]$ for $n\in \ZZ$, $i\in\{0,1\}$ by $a_{(n|i)}b\in \mathscr{R}$ so that $[a{}_\Lambda b] = \sum_{n\in \ZZ, \, i=\{0,1\}} \lambda^n \chi^{i} a_{(n|i)}b.$ In addition, we let $\partial:= D^2$.
 
Consider a pair $\Gamma=(\mu,\gamma)$ of an even indeterminate $\mu$ and an odd indeterminate $\gamma$ and the associative algebra $\CC(\!(\Gamma^{-1})\!)$ which is  isomorphic to $\CC(\!(\Lambda^{-1})\!)$ via the map defined by $\mu \mapsto \lambda$ and $\gamma \mapsto \chi$. Then 
\begin{equation} \label{eq:Jacobi_first}
[a{}_\Lambda [b{}_\Gamma c]] =\sum_{m,n\in \ZZ, \, i,j\in \{0,1\}} (-1)^{j(\tilde{a}+1)} \mu^n\gamma^j \lambda^m\chi^i a_{(m|i)}(b_{(n|j)}c )\in \mathscr{R}(\!(\Lambda^{-1})\!) (\!(\Gamma^{-1})\!).
\end{equation}
Assuming $\Lambda$ and $\Gamma$ supercommute, one can check that  $\mathscr{R}(\!(\Lambda^{-1})\!) (\!(\Gamma^{-1})\!)\simeq\CC_{\chi\gamma} \otimes \mathscr{R}(\!(\lambda^{-1})\!) (\!(\mu^{-1})\!)$ where $\CC_{\chi\gamma}:= \CC \oplus \CC \chi \oplus  \CC\gamma  \oplus \CC\chi\gamma$. Recall the notation \eqref{eq:admissibility, space} and denote $\mathscr{R}_{\Lambda, \Gamma}:= \CC_{\chi\gamma} \otimes \mathscr{R}_{\lambda, \mu}$. The superspace  $\mathscr{R}_{\Lambda, \Gamma}$ can be embedded in $\mathscr{R}(\!(\Lambda^{-1})\!) (\!(\Gamma^{-1})\!)$
 and if
  \begin{equation}
[a{}_\Lambda [b{}_\Gamma c]] \in \mathscr{R}_{\Lambda, \Gamma} 
\end{equation}
 for any $a,b,c\in \mathscr{R}$, then the $\Lambda$-bracket is called {\it admissible}.

\begin{defn} \label{Def:non-local SUSY}
  A \textit{non-local   SUSY LCA} is a $\CC[D]$-module $\mathscr{R}$ endowed with an admissible non-local $\Lambda$-bracket which satisfies the following axioms for  $a,b,c\in \mathscr{R}$:
  \begin{itemize}
    \item (skewsymmetry) $[a _\Lambda b]=(-1)^{\tilde{a}\tilde{b}}{}_{\leftarrow}[b \,_{-\Lambda-\nabla}\, a]$,
     for  $-\Lambda-\nabla=(-\lambda-\partial, -\chi-D)$,
       \item (Jacobi identity) $[ a _\Lambda [ b \,_\Gamma\, c]]+(-1)^{\tilde{a}}[[ a _\Lambda b] \,_{\Lambda+\Gamma}\, c]=(-1)^{(\tilde{a}+1)(\tilde{b}+1)}[b \,_\Gamma [ a _\Lambda c]]$ in $\mathscr{R}_{\Lambda, \Gamma}$.
      \end{itemize}
   
\end{defn}

The RHS of the skewsymmetry can be written as 
    \begin{equation*}
      {}_{\leftarrow}\left[ b {}_{-\Lambda-\nabla} a \right]=\displaystyle\sum_{n\in\ZZ, \, i=\{0,1\}}(-\lambda-\partial)^{n}(-\chi-D)^i b_{(n|i)}a
    \end{equation*}
and the Jacobi identity can be understood via \eqref{eq:Jacobi_first} and 
    \begin{align*}
      \left[ \chi^{i}a_{(n|i)}b \;{}_{\Lambda+\Gamma}\; c \right]=(-\chi)^{i}[ a_{(n|i)}b \,{}_{\Lambda+\Gamma}\, c ].
    \end{align*}

\begin{defn} \label{Def:non-local SUSYPVA}
  A \textit{non-local  SUSY PVA} is a tuple ($\mathscr{P}, D, \{\cdot _\Lambda \cdot\}, \cdot)$ which satisfies the following axioms:
  \begin{itemize}
    \item ($\mathscr{P}, D, \{\cdot _\Lambda \cdot\})$ is a non-local SUSY LCA.
    \item ($\mathscr{P}, D, \cdot)$ is an unital supercommutative associative differential algebra.
    \item$\{a _\Lambda bc\}=\{a _\Lambda b\}c+(-1)^{\tilde{a}\tilde{b}+\tilde{b}}b\{a _\Lambda c\}$   for  $a, b, c \in \mathscr{P}.$
  \end{itemize}
Note that the last axiom is called the \textit{left Leibniz rule}.
\end{defn}

\begin{prop} \label{prop:non-SUSY and SUSY}
  A non-local SUSY PVA is a non-local PVSA. More precisely, if $\mathscr{P}$  is a non-local SUSY PVA, then the $\lambda$-bracket defined by 
 \begin{equation}\label{5.2}
   \{a{}_\lambda b\}:= \sum_{n\in \ZZ} \lambda^n a_{(n|1)}b, \quad a,b\in \mathscr{P}
 \end{equation}
 induces a non-local PVSA structure on $\mathscr{P}$.
\end{prop}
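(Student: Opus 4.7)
The plan is to view the proposed $\lambda$-bracket as the $\chi$-coefficient of the SUSY $\Lambda$-bracket. Writing $[a{}_\Lambda b] = \{a{}_\Lambda b\}_0 + \chi\{a{}_\Lambda b\}_1$ with $\{a{}_\Lambda b\}_\varepsilon := \sum_n \lambda^n a_{(n|\varepsilon)}b$, the definition \eqref{5.2} amounts to $\{a{}_\lambda b\} = \{a{}_\Lambda b\}_1$. Because $\chi$ is odd and the SUSY $\Lambda$-bracket is parity reversing of parity $\tilde a + \tilde b + 1$, the $\chi$-component $\{a{}_\Lambda b\}_1$ has the correct PVSA parity $\tilde a + \tilde b$. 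Each PVSA axiom will then be derived by extracting a coefficient of $\chi$ (or of $\chi\gamma$ for the Jacobi identity) from the corresponding SUSY axiom, with sign bookkeeping governed by the Koszul--Quillen rule $\chi f = (-1)^{\tilde f} f \chi$.

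For the easier axioms I would proceed as follows. The PVSA sesquilinearity \eqref{eq:sesqui} follows by iterating the SUSY sesquilinearity \eqref{sesqui} using $\partial = D^2$: $[\partial a{}_\Lambda b] = \chi^2[a{}_\Lambda b] = -\lambda[a{}_\Lambda b]$ and $[a{}_\Lambda \partial b] = (D+\chi)^2[a{}_\Lambda b] = (\partial+\lambda)[a{}_\Lambda b]$, the final identity using $D\chi + \chi D = 2\lambda$ from \eqref{eq: chi and D}. For the left Leibniz rule, decompose both sides of the SUSY rule into $\chi^0, \chi^1$ components; the prefactor $(-1)^{\tilde a\tilde b + \tilde b}$ combined with the extra $(-1)^{\tilde b}$ from commuting $\chi$ past $b$ in $b\{a{}_\Lambda c\}$ collapses to the expected $(-1)^{\tilde a\tilde b}$. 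Skewsymmetry follows by expanding ${}_\leftarrow[b{}_{-\Lambda-\nabla}a] = \sum_{n,i}(-\lambda-\partial)^n(-\chi-D)^i b_{(n|i)}a$ and observing that only the $i=1$ summand contributes to the $\chi$-coefficient, yielding $-{}_\leftarrow\{b{}_{-\lambda-\partial}a\}$. Admissibility is immediate from the decomposition $\mathscr{P}_{\Lambda,\Gamma} \simeq \CC_{\chi\gamma}\otimes\mathscr{P}_{\lambda,\mu}$ recorded just before Definition \ref{Def:non-local SUSY}, since the $\chi\gamma$-component of $\{a{}_\Lambda\{b{}_\Gamma c\}\}$ automatically lies in $\mathscr{P}_{\lambda,\mu}$.

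The main technical step is the Jacobi identity. I would extract the $\chi\gamma$-coefficient (with a fixed ordering, say $\lambda^m\mu^n\chi\gamma X$) from each of the three terms of the SUSY Jacobi identity. A short calculation using \eqref{eq:Jacobi_first} shows the $\chi\gamma$-coefficient of $[a{}_\Lambda[b{}_\Gamma c]]$ equals $(-1)^{\tilde a}\{a{}_\lambda\{b{}_\mu c\}\}$ (the sign coming from $\gamma\chi = -\chi\gamma$ combined with the $(-1)^{j(\tilde a+1)}$ in \eqref{eq:Jacobi_first}); by the symmetric computation the $\chi\gamma$-coefficient of $(-1)^{(\tilde a+1)(\tilde b+1)}[b{}_\Gamma[a{}_\Lambda c]]$ equals $(-1)^{\tilde a\tilde b+\tilde a}\{b{}_\mu\{a{}_\lambda c\}\}$, using $(\tilde a+1)(\tilde b+1)+\tilde b+1 \equiv \tilde a\tilde b+\tilde a \pmod 2$. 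For the middle term $(-1)^{\tilde a}[[a{}_\Lambda b]{}_{\Lambda+\Gamma}c]$, decompose $[a{}_\Lambda b] = \{a{}_\Lambda b\}_0 + \chi\{a{}_\lambda b\}$ and note that the $\chi$-free summand contributes nothing to the $\chi\gamma$-coefficient, since $[\{a{}_\Lambda b\}_0{}_{\Lambda+\Gamma}c]$ lies in $\CC(\!((\lambda+\mu)^{-1})\!)\oplus(\chi+\gamma)\CC(\!((\lambda+\mu)^{-1})\!)$ tensored with $\mathscr{P}$, which carries no $\chi\gamma$-monomial. Hence only the $\chi\{a{}_\lambda b\}$ piece survives, and the sign analysis yields a contribution of $-\{\{a{}_\lambda b\}{}_{\lambda+\mu}c\}$. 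Collecting the three coefficients, dividing through by the common $(-1)^{\tilde a}$, and rearranging produces exactly the PVSA Jacobi identity $\{a{}_\lambda\{b{}_\mu c\}\} = \{\{a{}_\lambda b\}{}_{\lambda+\mu}c\} + (-1)^{\tilde a\tilde b}\{b{}_\mu\{a{}_\lambda c\}\}$. The main obstacle is not conceptual but combinatorial: ensuring that all $(-1)^{\tilde a}$, $(-1)^{\tilde b}$, and $\chi\gamma = -\gamma\chi$ signs conspire consistently so that the three contributions reassemble into the expected identity.
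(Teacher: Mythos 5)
Your proposal is correct and follows essentially the same route as the paper: the admissibility argument via the $\chi\gamma$-component of \eqref{eq:Jacobi_first} is exactly the paper's, and the remaining axioms are obtained by the same $\chi$-coefficient extraction that the paper delegates to the citation of Proposition 4.3 in \cite{Suh20}. Your sign bookkeeping (e.g.\ $(\tilde a+1)(\tilde b+1)+\tilde b+1\equiv \tilde a\tilde b+\tilde a$, and the vanishing of the $\chi$-free summand of $[a{}_\Lambda b]$ in the middle Jacobi term) checks out.
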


\begin{proof}
The sesquilinearity, skewsymmetry, Jacobi identity and Leibniz rule of PVAs can be checked similarly to Proposition 4.3 in \cite{Suh20}. The only thing to check is the admissibility.  By \eqref{eq:Jacobi_first}, we have 
\begin{equation} \label{eq:admissibility_SUSY-nonSUSY}
\{a{}_\lambda \{ b{}_\mu c\}\} = \sum_{m,n\in \ZZ}  \mu^n\lambda^m a_{(m|1)} (b_{(n|1)}c)  
\end{equation}
and the admissibility of the $\Lambda$-bracket implies $\eqref{eq:admissibility_SUSY-nonSUSY} \in \mathscr{R}_{\lambda, \mu}$. 
\end{proof}

We note that non-local SUSY $\Lambda$-bracket on a $\CC[D]$-module $\mathscr{R}$ is admissible if and only if the induced $\lambda$-bracket on $\mathscr{R}$ is admissible. The only if part can be shown as in the proof of Proposition \ref{prop:non-SUSY and SUSY}. In order to see the converse, observe that if the $\lambda$-bracket is admissible then $[a{}_\lambda [b{}_\mu c]],$ $ [Da{}_\lambda [b{}_\mu c]],$  $ [a{}_\lambda [Db{}_\mu c]],$ $[Da{}_\lambda [Db{}_\mu c]]\in \mathscr{R}_{\lambda, \mu}$ for $a,b,c\in \mathscr{R}$. In other words,    $\sum_{n,m\in \ZZ}\lambda^m \mu^n a_{(m|i)}(b_{(n|j)} c) \in \mathscr{R}_{\lambda, \mu}$ for $i,j=0,1$. Hence, by \eqref{eq:Jacobi_first}, we have $[a{}_\Lambda [b {}_\Gamma c]] \in \mathscr{R}_{\Lambda, \Gamma}$.

\begin{prop}\hfill
  \begin{itemize}
    \item[\textrm{(1)}] Let $V$ be a vector superspace and $\mathscr{R}(V) := \CC[D]\otimes V$. Suppose
    \begin{equation} \label{eq:SUSY bracket on space}
      \left[ \cdot {}_{\Lambda} \cdot \right] : V \times V \rightarrow  \mathscr{R}(V)(\!(\Lambda^{-1})\!)
    \end{equation}
    is an odd bilinear map satisfying the admissibility, skewsymmetry and Jacobi identity of SUSY LCAs. Then \eqref{eq:SUSY bracket on space} can be extended to $\mathscr{R}(V)$ by the sesquilinearity, which gives a non-local SUSY LCA structure on the space $\mathscr{R}(V)$.
      \item[\textrm{(2)}] Let $\mathscr{R}$ be a SUSY LCA. Then the supersymmetric algebra $\mathscr{P}:=S(\mathscr{R})$ endowed with the $\Lambda$-bracket induced by that of $\mathscr{R}$ by Leibniz rule is a SUSY PVA.
  \end{itemize}
\end{prop}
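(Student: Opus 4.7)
The plan for part (1) is to define the extended $\Lambda$-bracket on $\mathscr{R}(V)=\CC[D]\otimes V$ as the unique bilinear map that agrees with the given bracket on $V$ and satisfies the sesquilinearity \eqref{sesqui}. Concretely, for $u,v\in V$ and $m,n\geq 0$, iterated application of sesquilinearity expresses $[D^m u\,{}_\Lambda\, D^n v]$ as a polynomial in $\chi, D, \lambda$ acting on $[u\,{}_\Lambda\, v]$, where the relation $D\chi=-\chi D+2\lambda$ is used to move $D$ past $\chi$. I would verify admissibility first: for $a=D^m u$, $b=D^n v$, $c=D^p w$, the iterated bracket $[a\,{}_\Lambda[b\,{}_\Gamma c]]$ is obtained from $[u\,{}_\Lambda[v\,{}_\Gamma w]]$ by applying polynomial operators in $\chi, D, \gamma, \lambda, \mu, \partial$, all of which preserve the space $\mathscr{R}(V)_{\Lambda,\Gamma}=\CC_{\chi\gamma}\otimes\mathscr{R}(V)_{\lambda,\mu}$.

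Next I would propagate skewsymmetry and Jacobi identity from $V$ to all of $\mathscr{R}(V)$ by induction on the total $D$-degree of the slots. For skewsymmetry, replacing $a$ by $Da$ on the LHS multiplies it by $\chi$ via sesquilinearity, while on the RHS ${}_\leftarrow[b\,{}_{-\Lambda-\nabla}\,Da]$ becomes, after sesquilinearity in the second slot, the same $\chi$-type operator acting from the left on ${}_\leftarrow[b\,{}_{-\Lambda-\nabla}\,a]$, after using $D\chi=-\chi D+2\lambda$; the inductive skewsymmetry then yields the identity, and an analogous argument handles $Db$. The Jacobi identity is treated identically: replacing any of $a,b,c$ by its $D$-image applies matching $\chi, D+\chi, \gamma$ or $D+\gamma$ operators uniformly to each of the three terms, reducing to the inductive hypothesis. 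The parity bookkeeping, together with the fact that the bracket is parity reversing, is routine.

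For part (2), the plan is to extend the $\Lambda$-bracket from $\mathscr{R}$ to $\mathscr{P}=S(\mathscr{R})$ by induction on the word length in each slot using the left Leibniz rule of Definition \ref{Def:non-local SUSYPVA}; the corresponding right Leibniz rule is then forced by skewsymmetry. I would first check that the extension is well-defined (independent of the order of product decomposition), respects sesquilinearity on $\mathscr{P}$ because $D$ is a derivation of $\mathscr{P}$, and lands in $\mathscr{P}(\!(\Lambda^{-1})\!)$. Skewsymmetry on $\mathscr{P}$ follows from skewsymmetry on $\mathscr{R}$ together with the mutual consistency of the left and right Leibniz rules. Admissibility passes from $\mathscr{R}$ to $\mathscr{P}$ since iterated Leibniz produces sums of terms of the form (iterated $\mathscr{R}$-bracket)$\cdot$(element of $\mathscr{P}$), and $\mathscr{P}_{\Lambda,\Gamma}$ is closed under multiplication by $\mathscr{P}$.

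The main obstacle, and the bulk of the real computation, is the Jacobi identity on $\mathscr{P}$. I would prove it by induction on the word lengths of the three slots; the base case is the Jacobi identity on $\mathscr{R}$ assumed by hypothesis. The inductive step, e.g.\ replacing $c$ by $c_1 c_2$, expands each of the three terms of the Jacobi identity using the left Leibniz rule and reduces to the Jacobi identity at lower length, modulo a substantial sign reconciliation. The signs to track come from three sources: the parity-reversing nature of the bracket, the odd derivation $D$, and Koszul signs from permuting $\Lambda, \Gamma$-type variables past elements of $\mathscr{P}$. This is the SUSY analogue of the Leibniz-extension of Jacobi carried out for ordinary PVAs in \cite{BDSK09, DSK13}; organizing the $\chi$ and $D$ signs via the Koszul-Quillen sign rule reduces the verification to that classical computation.
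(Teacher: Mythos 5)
Your proposal is correct and follows essentially the same route as the paper: the paper's own proof consists precisely of your admissibility argument for (1) (the operators $\chi$, $\gamma$ and $D+\chi+\gamma$ preserve $\mathscr{R}_{\Lambda,\Gamma}$, then induct on the $D$-degree), while the skewsymmetry and Jacobi inductions you sketch are exactly what the paper delegates to the reference \cite{CS} (Theorem 2.15 there for part (2)). The only point where your route genuinely diverges is the admissibility of the extended bracket on $S(\mathscr{R})$ in part (2): you argue directly that iterated Leibniz expansion keeps everything inside $\mathscr{P}_{\Lambda,\Gamma}$ (which works, though one should note that the right Leibniz rule introduces shifts $\Lambda\mapsto\Lambda+\nabla$, so closure of $\mathscr{P}_{\Lambda,\Gamma}$ under these substitutions, not just under multiplication by $\mathscr{P}$, is what is really needed), whereas the paper instead invokes the equivalence between admissibility of the $\Lambda$-bracket and of the induced $\lambda$-bracket of Proposition \ref{prop:non-SUSY and SUSY} and quotes the already-established non-SUSY case (Proposition \ref{Prop:PVA from generators}); the latter is slicker, your version is more self-contained.
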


\begin{proof}
 Since $\chi \mathscr{R}_{\Lambda, \Gamma}$,  $\gamma  \mathscr{R}_{\Lambda, \Gamma}$, and $(D+\chi+\gamma)  \mathscr{R}_{\Lambda, \Gamma}$ are subsets of  $\mathscr{R}_{\Lambda, \Gamma}$, we have 
\[ [ DV{}_\Lambda [V{}_\Gamma V]], \  [ V{}_\Lambda [DV{}_\Gamma V]], \ [ DV{}_\Lambda [V{}_\Gamma DV]] \subset \mathscr{R}_{\Lambda, \Gamma}.\]
One can inductively show that the $\Lambda$-bracket on $\mathscr{R}(V)$ is admissible. For the skewsymmetry and Jacobi identity, we refer to \cite{CS}. Hence we showed (1). Similarly, the skewsymmetry and Jacobi identity for (2) can be checked as Theorem 2.15 in \cite{CS}. The admissibility of the $\Lambda$-bracket on $S(\mathscr{R})$ is equivalent to the admissibility of the corresponding $\lambda$-bracket \eqref{5.2} and it already has been shown in Proposition \ref{Prop:PVA from generators}.
\end{proof}

Let us consider the differential algebra 
\begin{equation} \label{eq:SUSY diff poly}
\mathscr{P}=\CC[u_i^{[n]}|i\in I, \, n\in \ZZ_{+}]
\end{equation}
 of polynomials generated by homogeneous variables $u_i$ endowed with the odd derivation $D$ such that  $D: u_i^{[n]} \mapsto u_i^{[n+1]}$. If we denote $\tilde{i}:= \tilde{u}_i$, then $\widetilde{u}_i^{[n]}\equiv \tilde{i}+n$ (mod 2). Denote by $\frac{\partial}{\partial u_{i}^{[m]}}:\mathscr{P}\to \mathscr{P}$ the derivation of parity $\tilde{u}_i^{[m]}$, which satisfies the property 
 $ \frac{\partial}{\partial u_{i}^{[m]}}(u_{j}^{[n]})=\delta_{i,j}\delta_{m,n}.$ If $\mathscr{P}$ is a non-local SUSY PVA, then the $\Lambda$-brackets between two elements $a$, $b \in \mathscr{P}$ can be obtained by the following theorem.

\begin{thm}[\cite{CS},  Master formula] \label{SUSY Master} Suppose $\mathscr{P}$ in \eqref{eq:SUSY diff poly} is a non-local SUSY PVA. For $a, b \in \mathscr{P}$, we have
  \begin{align*}
    \left\{ a {}_{\Lambda} b \right\}&=\displaystyle\sum_{i,j \in I,\; m,n\in \ZZ_{\geq 0}}S(a_{(i,m)},b_{(j,n)})b_{(j,n)}\left\{ a {}_{\Lambda+\nabla} b \right\}_{\rightarrow}a_{(i,m)}\\
    &=\displaystyle\sum_{i,j \in I,\; m,n\in \ZZ_{\geq 0}}S(a_{(i,m)},b_{(j,n)})(-1)^{n(i+m+1)+m(i+j+1)+\frac{m(m-1)}{2} }\\
    &\hspace{3cm}b_{(j,m)}(\chi+D)^{n}\left\{ u_{i} {}_{\Lambda+\nabla} u_{j} \right\}_{\rightarrow}(\chi+D)^{m}a_{(i,m)},
  \end{align*}
  where 
  \begin{itemize}
  \item $\Lambda+\nabla= (\lambda+ \partial, \chi+D),$
    \item $a_{(i,m)}:=\frac{\partial}{\partial u_{i}^{[m]}}a$ and $b_{(j,n)}:=\frac{\partial}{\partial u_{j}^{[n]}}b$,
    \item $S(a_{(i,m)},b_{(j,n)}):=(-1)^{\tilde{b}_{(j,n)}}(-1)^{\tilde{b}_{(j,n)}( \tilde{u}_{j}^{[n]}+\tilde{a})}(-1)^{\tilde{a}_{(i,m)}\tilde{u}_{j}^{[n]}}$.
  \end{itemize}
  \end{thm}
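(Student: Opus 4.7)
The plan is to reduce the bracket $\{a\,{}_\Lambda\, b\}$ to the brackets between generators $\{u_i\,{}_\Lambda\, u_j\}$ in two independent passes: first strip off $b$ using the left Leibniz rule, then strip off $a$ using a right Leibniz rule that we derive from skewsymmetry. The sesquilinearity turns the derivation $D$ acting on generators into $\chi$ (or $-\chi - D$) acting on the bracket, and what remains is bookkeeping of Koszul signs.

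First I would establish the right Leibniz rule for a non-local SUSY PVA. By applying skewsymmetry to $\{c\,{}_\Lambda\, ab\}$, using the left Leibniz rule (Definition \ref{Def:non-local SUSYPVA}) on the resulting bracket $\{ab\,{}_{-\Lambda-\nabla}\, c\}$, and applying skewsymmetry again to each factor, one obtains an identity of the form
\begin{equation*}
\{ab\,{}_\Lambda\, c\} = (-1)^{\tilde b(\tilde c+1)}\{a\,{}_{\Lambda+\nabla}\, c\}_\to b + (-1)^{\tilde a(\tilde a + \tilde b)+\tilde a(\tilde c+1)}\{b\,{}_{\Lambda+\nabla}\, c\}_\to a,
\end{equation*}
where $\{x\,{}_{\Lambda+\nabla}\, c\}_\to y := \sum_{n,i}(x_{(n|i)}c)(\lambda+\partial)^n(\chi+D)^i y$. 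The signs here are exactly what is required to match the factor $S(\cdot,\cdot)$ once the iteration is complete.

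Next I would argue by induction that the formula holds for monomials in the $u_i^{[m]}$, which generate $\mathscr{P}$ as a supercommutative algebra. Fix the second slot $b$ and iterate the left Leibniz rule on the first slot: each application peels off a generator $u_i^{[m]}$ and replaces $a$ by $a_{(i,m)}=\partial a/\partial u_i^{[m]}$, at the cost of a sign determined by moving $a_{(i,m)}$ past $u_i^{[m]}$ in all intermediate positions — this accounts for the factor $(-1)^{\tilde a_{(i,m)}\tilde u_j^{[n]}}$ in $S$. Symmetrically, iterate the right Leibniz rule on the second slot to peel off $u_j^{[n]}$ and replace $b$ by $b_{(j,n)}$; these moves produce the remaining sign factors $(-1)^{\tilde b_{(j,n)}}(-1)^{\tilde b_{(j,n)}(\tilde u_j^{[n]}+\tilde a)}$. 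After both reductions, the bracket is expressed in terms of $\{u_i^{[m]}\,{}_\Lambda\, u_j^{[n]}\}$.

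The last step is to reduce $\{u_i^{[m]}\,{}_\Lambda\, u_j^{[n]}\}=\{D^m u_i\,{}_\Lambda\, D^n u_j\}$ to $\{u_i\,{}_\Lambda\, u_j\}$ via sesquilinearity. Applying $[Da\,{}_\Lambda\, b]=\chi[a\,{}_\Lambda\, b]$ repeatedly on the left yields a factor $\chi^m$, while applying $[a\,{}_\Lambda\, Db]=-(-1)^{\tilde a}(\chi+D)[a\,{}_\Lambda\, b]$ repeatedly on the right yields $(\chi+D)^n$ together with a sign $(-1)^{n(\tilde a+1)}$ — where $\tilde a$ is the parity of $D^m u_i$, namely $\tilde i + m \pmod 2$, giving the exponent $n(i+m+1)$ in the statement. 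Moving the prefactor $\chi^m$ past $(\chi+D)^n$ and the derivative on the right of the bracket into the form demanded by the formula produces the reordering sign $(-1)^{\frac{m(m-1)}{2}}$ from commuting $\chi$'s with odd quantities, and the sign $(-1)^{m(i+j+1)}$ from transposing $\chi^m$ with $\{u_i\,{}_\Lambda\, u_j\}_\to$ and $a_{(i,m)}$. Collecting all of these yields exactly the claimed expression.

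The main obstacle will not be any single computation but the careful tracking of Koszul signs throughout: there are contributions from parities of $\partial/\partial u_i^{[m]}$ (which is of parity $\tilde u_i^{[m]}=\tilde i+m$), from the odd bilinear bracket shifting parities, from moving $\chi$ and $D$ (both odd) past odd elements and past each other modulo the identity $D\chi = -\chi D + 2\lambda$, and from the implicit reversal of orders in the subscript-$\to$ convention. Verifying that all these signs consolidate into the single factor $S(a_{(i,m)},b_{(j,n)})(-1)^{n(i+m+1)+m(i+j+1)+\frac{m(m-1)}{2}}$ is the delicate part; once the induction base case (brackets between two generators, handled by sesquilinearity alone) is verified, the inductive step is formal and parallels the proof in \cite{CS}, to which the statement refers.
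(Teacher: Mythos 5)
Your proposal takes essentially the same route as the paper: the paper itself gives no proof of this theorem (it defers to \cite{CS}), but its proof of the non-SUSY master formula (Proposition \ref{Master formula}) is exactly your strategy — derive the right Leibniz rule from skewsymmetry plus the left Leibniz rule, then iterate both Leibniz rules together with sesquilinearity and consolidate the Koszul signs. The ingredients you identify are the correct ones, including the origin of the exponent $n(i+m+1)$ from $[a\,{}_\Lambda\, Db]=-(-1)^{\tilde a}(\chi+D)[a\,{}_\Lambda\, b]$ applied with $\tilde a = \tilde{\imath}+m$, and of $(-1)^{\frac{m(m-1)}{2}}$ from $\chi^2=-\lambda$ (equivalently $\chi^m=(-1)^{\lfloor m/2\rfloor}\lambda^{\lfloor m/2\rfloor}\chi^{m\,\mathrm{mod}\,2}$, with $\lfloor m/2\rfloor\equiv\frac{m(m-1)}{2}$ mod $2$); the one piece you should still nail down before calling this complete is the exact sign in your intermediate right Leibniz rule, since that is where the factor $S(a_{(i,m)},b_{(j,n)})$ is actually generated.
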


  \begin{ex} [affine SUSY PVA] \label{Ex:affine SUSY PVA}
  Let $\g$ be a simple finite Lie superalgebra with a nondegenerate supersymmetric invariant even bilinear form $(\cdot|\cdot)$. For the parity reversed space $\bar{\g}$ of $\g$, the affine SUSY PVA is $\mathscr{V}^k(\bar{\g})= S(\CC[D]\otimes \bar{\g})$ endowed with the $\Lambda$-bracket 
  \[ \{\bar{a}{}_\Lambda \bar{b}\} = (-1)^{\tilde{a}} (\overline{[a,b]} +k\chi (a|b))\]
  for $a,b\in \g$. On the other hand, since
  
  \begin{align*}
    \{D\bar{a} {}_{\Lambda}D\bar{b}\}=\chi \left\{ \bar{a} \,{}_{\Lambda}\, D\bar{b} \right\}=-\lambda \overline{[a,b]}+\chi\big(D\overline{[a,b]}+k\lambda(a|b)\big),
  \end{align*}
  the affine PVSA of level $k$ can be embedded in  $\mathscr{V}^k(\bar{\g})$ via the injective PVSA  homomorphism  
 $a \mapsto D\bar{a}$ for $a\in \g$.
  \end{ex}

\begin{ex}
Let $\mathscr{P}=\CC[u^{[n]}| n\in \ZZ_+]$ be the differential algebra generated by an even variable $u$. The bracket 
\[ \{u{}_\Lambda u\}= \lambda^{-1} \chi\]
defines a non-local SUSY PVA structure on $\mathscr{P}$. The map from the non-local PVSA in Example \ref{ex:non-local PVA} to $\mathscr{P}$ defined by $u\mapsto u$ is an injective PVSA homomorphism.
\end{ex}

\subsection{Dirac reduction of supersymmetric PVAs}\label{Subsec: Dirac-SUSY}
~\\
Let $\mathscr{P}$ be a SUSY PVA and let $\text{Mat}_{(r|s)}=\text{End}(\CC_{(r|s)})$ for nonnegative integers $r$ and $s$. 
Consider  $F(\Lambda)= F_0(\lambda)+ F_1(\lambda)\chi$, $G(\Lambda)= G_0(\lambda)+ G_1(\lambda)\chi \in \mathscr{P}(\!(\Lambda^{-1})\!)$ for $F_i(\lambda), G_i(\lambda) \in \mathscr{P}(\!(\lambda^{-1})\!)$ where $i \in \{0,1\}$. Then the product $\circ$ on $\mathscr{P}$ defined by
\begin{equation*}
 F(\Lambda)\circ G(\Lambda):= F(\Lambda+\nabla)G(\Lambda)=\left(F_0(\lambda+\partial)+ F_1(\lambda+\partial) (\chi+D) \right)\left( G_0(\lambda)+ G_1(\lambda)\chi\right)
\end{equation*}
gives an associative algebra structure on $\mathscr{P}(\!(\Lambda^{-1})\!)$. The {\it adjoint}  $F^*(\Lambda)\in \mathscr{P}(\!(\Lambda^{-1})\!)$ of $F(\Lambda) \in  \mathscr{P}(\!(\Lambda^{-1})\!)$ is defined by 
\begin{equation*}
F^*(\Lambda)= \sum_{ n \in \ZZ_+} \big((-\lambda-\partial)^n F_{(n|0)} +(-1)^{\tilde{F}+1} (-\chi-D) (-\lambda -\partial)^n F_{(n|1)}\big),
\end{equation*}
where $F(\Lambda)=  \sum_{ n \in \ZZ} \big( F_{(n|0)} \lambda^n +  F_{(n|1)} \chi \lambda^n\big).$

Similarly, the superspace $\mathcal{M}_{(r|s)}(\Lambda):=\text{Mat}_{(r|s)} \otimes \mathscr{P}(\!(\Lambda^{-1})\!)$ is an associative algebra with respect to the product
\begin{equation}\label{3.1}
  (a \otimes F(\Lambda)) \circ (b \otimes G(\Lambda))=(-1)^{\tilde{b}\tilde{F}}  ab \otimes F(\Lambda)\circ G(\Lambda).
\end{equation}
The multiplication \eqref{3.1} can be generalized to the map:
\begin{align*}
 \circ\ : \left(\text{Mat}_{(p_1|q_1)(r_1|s_1)} \otimes \mathscr{P}(\!(\Lambda^{-1})\!)\right) &\times \left(\text{Mat}_{(p_2|q_2)(r_2|s_2)}\otimes  \mathscr{P}(\!(\Lambda^{-1})\!)\right) \rightarrow \text{Mat}_{(p_1|q_1)(r_2|s_2)} \otimes  \mathscr{P}(\!(\Lambda^{-1})\!) \\
  \Big(\left( a \otimes F(\Lambda)\right) &, \left( b \otimes G(\Lambda)\right)\Big)\longmapsto (-1)^{\tilde{b}\tilde{F}}  ab \otimes F(\Lambda+\nabla) G(\Lambda)
  \end{align*}
  for  $p_1,q_1,r_1=p_2,s_1=q_2, r_2, s_2\in \ZZ_+$. Likewise Section \ref{Sec:DR for PVA}, we also denote $\mathcal{M}_{(r|s)}(\Lambda)$ by $\mathcal{M}(\Lambda)$. Since the identity element in $\mathcal{M}(\Lambda)$ is $\text{Id}_{(r|s)}\otimes 1$, an element $A(\Lambda) \in \mathcal{M}(\Lambda)$ is called \textit{invertible} if there is an element $A^{-1}(\Lambda) \in \mathcal{M}(\Lambda)$ such that
\begin{equation}\label{eq: SUSY invertible element}
  A(\Lambda)\circ A^{-1}(\Lambda)=A^{-1}(\Lambda)\circ A(\Lambda)=\text{Id}_{(r|s)} \otimes 1.
\end{equation}

Let $I=I_{0} \sqcup I_{1} \subset \ZZ$ be a finite index set, where $I_{i}=I \cap (2\ZZ+ i)$ and $i \in \{0,1\}$. Let $\theta_{I}:=\{\, \theta_i \, |\, i\in I\}$ be a subset of $\mathscr{P}$ consisting of homogeneous elements such that $(-1)^{\tilde{\theta}_i} =(-1)^i$. For $r= |I_0|$ and  $s= |I_1|$ consider the element
\begin{equation}\label{3.3}
  C(\Lambda):=\displaystyle\sum_{i\in I}e_{ij} \otimes \{ \theta_{j} \,{}_{\Lambda}\, \theta_{i} \} \in \mathcal{M}(\Lambda).
\end{equation}
\begin{defn}\label{Defn:SUSY Dirac}
  Suppose $C(\Lambda)$ in \eqref{3.3} is invertible. The \textit{Dirac reduced bracket} of the SUSY $\Lambda$-bracket $\{ \cdot \,{}_{\Lambda}\, \cdot \}$ on $\mathscr{P}$  associated with $\theta_{I}$ is the bilinear map
  \begin{equation}
   \left\{\cdot {}_{\Lambda} \cdot \right\}^{D} : \mathscr{P} \times \mathscr{P} \rightarrow \mathscr{P}(\!(\Lambda^{-1})\!)
  \end{equation}
  given by
  \begin{equation}\label{3.5}
    \{a{}_{\Lambda} b\}^{D}=\{a{}_{\Lambda}b\}-\sum_{i,j \in I}(-1)^{(\tilde{a}+j)(\tilde{b}+i)}\{\theta_{j}\,{}_{\Lambda+\nabla}\,b\}_{\rightarrow}\left(C^{-1}\right)_{ji}(\Lambda+\nabla)\{a\,{}_{\Lambda}\,\theta_{i}\}
  \end{equation}
for $a,b\in \mathscr{P}$.
\end{defn}

  \begin{prop} \label{prop:Trivial example, SUSY PVA}
  Let $\mathscr{P}= \CC[\theta^{(n)}_k|\,k\in  I_{0} \sqcup  I_{1}, n \in \ZZ_{+}]$ be a SUSY PVA. If $C(\Lambda)$ in \eqref{3.3} is invertible, the corresponding Dirac reduced bracket is the trivial bracket.
  \end{prop}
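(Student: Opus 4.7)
The plan is to mimic the argument of Proposition \ref{prop:Trivial example, PVA} for the non-SUSY case, transported through the SUSY master formula (Theorem \ref{SUSY Master}). The underlying mechanism is the same: when the generators of the differential algebra are precisely the constraints $\theta_k$, the master formula already writes the full $\Lambda$-bracket as a ``sandwich'' of derivatives around the matrix $C(\Lambda)$, and inserting $C \circ C^{-1} \circ C = C$ lets us rewrite this sandwich so that it cancels against the Dirac correction term in \eqref{3.5}.

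More precisely, I would first use Theorem \ref{SUSY Master} to expand, for homogeneous $a,b \in \mathscr{P}$,
\begin{equation*}
  \{a {}_\Lambda b\} = \sum_{i,j\in I,\, m,n\in \ZZ_+} S(a_{(i,m)}, b_{(j,n)})\, b_{(j,n)}(\chi+D)^n \{\theta_i{}_{\Lambda+\nabla} \theta_j\}_{\to}(\chi+D)^m a_{(i,m)},
\end{equation*}
up to the sign factor written in Theorem \ref{SUSY Master}. Using the identification $\mathscr{P}(\!(\Lambda^{-1})\!) \simeq \mathrm{Mat}_{(1|0)} \otimes \mathscr{P}(\!(\Lambda^{-1})\!)$ and the generalized product $\circ$, for each fixed pair $i,j$ this summand can be displayed as the composition
\begin{equation*}
  \bigl(e_j^T \otimes b_{(j,n)}(\chi+D)^n\bigr) \circ \bigl(e_{ji}\otimes C_{ji}(\Lambda+\nabla)\bigr)_{\to} \circ \bigl(e_i \otimes (\chi+D)^m a_{(i,m)}\bigr),
\end{equation*}
where $e_i,e_j^T$ are the standard column and row in the relevant $\mathrm{Mat}$-spaces and the super-signs of Theorem \ref{SUSY Master} are absorbed into the definition of $\circ$.

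The next step is to use invertibility: since $C(\Lambda) \circ C^{-1}(\Lambda) \circ C(\Lambda) = C(\Lambda)$ inside $\mathcal{M}(\Lambda)$, we have
\begin{equation*}
  e_{ji} \otimes C_{ji}(\Lambda) = \sum_{k,l\in I} \bigl(e_{jk}\otimes C_{jk}(\Lambda)\bigr) \circ \bigl(e_{kl}\otimes (C^{-1})_{kl}(\Lambda)\bigr) \circ \bigl(e_{li}\otimes C_{li}(\Lambda)\bigr).
\end{equation*}
Substituting this into the displayed expression and re-summing over $i$ and over $j$ separately, the master formula (applied this time with only the variables $\theta_k,\theta_l$ playing the role of $a,b$) collapses
\begin{align*}
  \sum_{i,m} \bigl(e_{li}\otimes C_{li}(\Lambda+\nabla)\bigr) \circ \bigl(e_i\otimes (\chi+D)^m a_{(i,m)}\bigr) &= e_l \otimes \{a{}_\Lambda \theta_l\},\\
  \sum_{j,n} \bigl(e_j^T \otimes b_{(j,n)}(\chi+D)^n\bigr)\circ \bigl(e_{jk}\otimes C_{jk}(\Lambda+\nabla)\bigr) &= e_k^T \otimes \{\theta_k{}_{\Lambda+\nabla} b\}_{\to},
\end{align*}
again up to super-sign bookkeeping. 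Combining these two identities, we obtain exactly
\begin{equation*}
  \{a{}_\Lambda b\} = \sum_{k,l\in I} (-1)^{(\tilde a+l)(\tilde b+k)} \{\theta_k{}_{\Lambda+\nabla} b\}_{\to} (C^{-1})_{kl}(\Lambda+\nabla)\{a{}_\Lambda \theta_l\},
\end{equation*}
which is precisely the Dirac correction term in \eqref{3.5}; hence $\{a{}_\Lambda b\}^D = 0$.

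The routine verification is straightforward, but the genuine obstacle is the sign accounting: the sign $S(a_{(i,m)}, b_{(j,n)})$ in the SUSY master formula, the $(-1)^{\tilde b \tilde F}$ appearing in the product $\circ$ on $\mathcal{M}(\Lambda)$, and the sign $(-1)^{(\tilde a+j)(\tilde b+i)}$ in Definition \ref{Defn:SUSY Dirac} all have to combine coherently. The cleanest way is to check separately that the two identities displayed above hold with the correct signs (this is really just the master formula specialized to $a = \theta_l$ or $b = \theta_k$), and then the cancellation of $\{a{}_\Lambda b\}$ against the correction term is automatic. Everything else parallels the proof of Proposition \ref{prop:Trivial example, PVA} verbatim, with the even indeterminate $\lambda$ and product \eqref{eq:matrix product PVA} replaced by the SUSY pair $\Lambda = (\lambda,\chi)$ and the super-product \eqref{3.1}.
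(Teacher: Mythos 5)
Your proposal is correct and follows essentially the same route as the paper's proof: expand $\{a{}_\Lambda b\}$ via the SUSY master formula, display each $(i,j)$-summand as a matrix sandwich, insert $C\circ C^{-1}\circ C=C$, and collapse the outer sums using the master formula specialized to $\{a{}_\Lambda\theta_l\}$ and $\{\theta_k{}_\Lambda b\}$ so that the whole bracket coincides with the Dirac correction term. The paper does exactly this (its equations for the two collapsing identities are the specializations you describe), so the only remaining work is the sign bookkeeping you already flag.
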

  
  \begin{proof}
  
We use the SUSY analogue of the argument proving Proposition 2.7.

  
  By the master formula in Theorem \ref{SUSY Master}, we have
  \begin{equation}\label{eq:SUSY master}
  \begin{aligned}
  \left\{ a {}_{\Lambda} b \right\}
  =\displaystyle\sum_{i,j \in  I, \, m,n\in \ZZ_+}(-1)^{\tilde{a}\tilde{b}+\tilde{b}+(\tilde{b}+i+m)(j+n)+n(i+m+1)+m(i+j+1)+\lfloor\frac{m}{2} \rfloor }\\
  \frac{\partial b}{\partial\theta_{j}^{[n]}}(\chi+D)^n\left\{ \theta_{i} {}_{\Lambda+\nabla} \theta_{j} \right\}_{\to}(-\chi-D)^m\frac{\partial a}{\partial \theta_{i}^{[m]}}
  \end{aligned}
  \end{equation}
for   $a,b\in \mathscr{P}$.
 If we identify
  $\mathscr{P}(\!(\Lambda^{-1})\!)= \text{Mat}_{(1|0)} \otimes \mathscr{P}(\!(\Lambda^{-1})\!)$, then
 the RHS of \eqref{eq:SUSY master} can be written as 
  \begin{equation} \label{eq:SUSY master-ij}
  \begin{aligned}
    \displaystyle\sum_{ m,n\in \ZZ_+}&(-1)^{\tilde{a}\tilde{b}+\tilde{b}+bn+im+j+n+m+\lfloor\frac{m}{2} \rfloor}\\
    &\left(e_{j}^{T} \otimes\frac{\partial b}{\partial\theta_{j}^{[n]}}(\chi+D)^{n}\right) \big(e_{ji} \otimes C_{ji}(\Lambda+\nabla) \big)_{\to}\left(e_{i} \otimes (-\chi-D)^{m}\frac{\partial a}{\partial \theta_i^{[m]}}\right)
      \end{aligned}
      \end{equation}
      for given $i,j\in  I$. Note that 
  \begin{equation}\label{eq:SUSY trivialDR-1}
   e_{ji}\otimes C_{ji}(\Lambda)= \sum_{k,l\in  I}(e_{jk}\otimes C_{jk}(\Lambda+\nabla)) (e_{kl}\otimes (C^{-1})_{kl}(\Lambda+\nabla)) (e_{li} \otimes C_{li}(\Lambda))
  \end{equation}
  and 

  \begin{align}
    \label{3.10}&\{ a \,{}_{\Lambda}\, \theta_j \}=\displaystyle\sum_{i \in  I, m \in \ZZ_{+}}(-1)^{j(\tilde{a}+i+m)+m(i+j+1)+\frac{m(m-1)}{2}}\{ \theta_{i} \,{}_{\Lambda+\nabla}\, \theta_j \}(-\chi-D)^{m}\frac{\partial a}{\partial \theta_{i}^{[m]}},\\
    \label{3.11}&\{ \theta_{i} \,{}_{\Lambda}\, b \}=\displaystyle\sum_{j \in  I, n \in \ZZ_{+}}(-1)^{(\tilde{b}+j+n)(j+n+i+1)+n(i+1)}\frac{\partial b}{\partial \theta_{j}^{[n]}}(\chi+D)^{n}\{ \theta_{i} \,{}_{\Lambda}\, \theta_{j} \}
  \end{align}
  if we use the fact that $\left\lfloor \frac{m}{2}\right\rfloor=\frac{m(m-1)}{2} \;(\text{mod}\; 2)$. Now substitute $\big(e_{ji} \otimes C_{ji}(\Lambda+\nabla) \big)$ in \eqref{eq:SUSY master-ij} with the RHS of \eqref{eq:SUSY trivialDR-1}. Then by \eqref{3.10} and \eqref{3.11}, we get 
  \begin{equation}\label{eq:SUSY trivialDR-3}
  \begin{aligned}
  \{a{}_\Lambda b\}& = \sum_{k,l\in  I}(-1)^{\tilde{a}\tilde{b}+\tilde{a}l}(e_k^T \otimes \{\theta_k{}_{\Lambda+\nabla} b\}) (e_{kl}\otimes (C^{-1})_{kl}(\Lambda+\nabla))(e_l \otimes \{a{}_\Lambda \theta_l\})
  \\
  &  =\sum_{k,l\in  I} (-1)^{(\tilde{a}+k)(\tilde{b}+l)}\{\theta_k{}_{\Lambda+\nabla} b\}\, (C^{-1})_{kl}(\Lambda+\nabla)\, \{a{}_\Lambda \theta_l\}.
  \end{aligned}
  \end{equation}
  Hence, $\{a{}_\Lambda b\}^D= \{a{}_\Lambda b\}- [\text{RHS of }\eqref{eq:SUSY trivialDR-3}]=0$.
   \end{proof}

Let us introduce the following lemma which is needed to prove Theorem \ref{Theorem:SUSY Dirac, part1}.
\begin{lem}\label{Lemma:Inverse operator-SUSY}
  Let $\left\{ \cdot \,_\Lambda\, \cdot \right\} : \mathscr{P} \times \mathscr{P} \rightarrow \mathscr{P}(\!(\Lambda^{-1})\!)$ be a non-local SUSY $\Lambda$-bracket on $\mathscr{P}$. Suppose $C(\Lambda)=(C_{ij}(\Lambda))_{i,j\in  I} \in \mathcal{M}(\Lambda)$ in \eqref{3.3} is invertible  and let $(C^{-1})(\Lambda)=((C^{-1})_{ij}(\Lambda))_{i,j\in  I}\in  \mathcal{M}(\Lambda)$ be its inverse. For $a \in \mathscr{P}$,  assume that 
  \begin{center}
    $\left\{ a \,{}_{\Lambda}\, C_{ij}(\Gamma) \right\} \in \mathscr{P}_{\Lambda,\Gamma}$ \,for all $i, j\in  I$.
  \end{center}
  Then we have $\left\{ a \,{}_{\Lambda}\, (C^{-1})_{ij} (\Gamma) \right\}$, $\left\{ (C^{-1})_{ij}(\Gamma) \,{}_{\Lambda+\Gamma}\, a \right\} \in \mathscr{P}_{\Lambda,\Gamma}$.  If $a\in \mathscr{P}$ is homogenous, then
\begin{equation}\label{Lemma:Inverse operator-SUSY1}
\begin{aligned}
      &\{a \,{}_{\Lambda}\, (C^{-1})_{ij}(\Gamma)\}\\
      &=-\displaystyle\sum_{r,t\in  I, n \in \ZZ}(-1)^{(i+t)(j+r)+(\tilde{a}+1)(i+r+1)}(C^{-1})_{ir}(\Lambda+\Gamma+\nabla)\left\{ a \,{}_{\Lambda}\, C_{rt} (\Gamma+\nabla)\right\} (C^{-1})_{tj}(\gamma)
\end{aligned}
\end{equation}
      and
\begin{equation}\label{Lemma:Inverse operator-SUSY2}
\begin{aligned}
        &\{(C^{-1})_{ij}(\Lambda) \,{}_{\Lambda+\Gamma}\, a\}\\
        &=-\displaystyle\sum_{r,t\in I, n \in \ZZ}(-1)^{\tilde{a}(i+j+r+t)+(r+j)(t+j)}\left\{ C_{rt}(\Lambda+\nabla) \,{}_{\Lambda+\Gamma+\nabla}\, a \right\}_{\rightarrow}(C^{-1})_{tj}(\Lambda)
        (C^{-1})^*_{ir}(\Gamma).
\end{aligned}
\end{equation}
  \end{lem}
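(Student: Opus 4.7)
The plan is to follow the non-local PVSA proof of Lemma \ref{Lemma:Inverse operator} (which mirrors Lemma 2.1--2.4 of \cite{DSK13}), but systematically track the extra $\chi, \gamma$-parities forced by the SUSY setting. The starting point is the identity in $\mathcal{M}(\Lambda)$:
\[
(C^{-1})(\Lambda) = (C^{-1})(\Lambda) \circ C(\Lambda) \circ (C^{-1})(\Lambda),
\]
which, after unwinding the definition \eqref{3.1} of the product $\circ$ and reading off the $(i,j)$-entry, yields an expression of $(C^{-1})_{ij}(\Lambda)$ as a signed sum of products $(C^{-1})_{ir}(\Lambda+\nabla)\,C_{rt}(\Lambda+\nabla)\,(C^{-1})_{tj}(\Lambda)$. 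The signs follow mechanically from the rule $(-1)^{\tilde{b}\tilde{F}}$ in \eqref{3.1}, combined with the parities $\tilde{\theta}_i \equiv i \pmod 2$.

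Next, apply $\{a\,{}_{\Lambda}\,\cdot\,\}$ to this identity and expand via the left Leibniz rule of Definition \ref{Def:non-local SUSYPVA}. Three kinds of terms appear: two in which $\{a\,{}_{\Lambda}\,\cdot\,\}$ falls on a factor of the form $\sum_r (\pm)(C^{-1})_{ir}\,C_{rt}$ or $\sum_t (\pm)C_{rt}\,(C^{-1})_{tj}$, and one in which it falls on $C_{rt}$ itself. The first two vanish because $C\circ C^{-1}=C^{-1}\circ C = \mathrm{Id}_{(r|s)}\otimes 1$ implies $\sum_r(-1)^{(i+r)(r+t)}(C^{-1})_{ir}(\Lambda+\nabla)C_{rt}(\Lambda+\nabla)=\delta_{it}$, which is killed by the bracket. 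The surviving term is exactly the RHS of \eqref{Lemma:Inverse operator-SUSY1}, once one collects the sign $(-1)^{(i+t)(j+r)+(\tilde a+1)(i+r+1)}$ from combining the three pieces: the sign in the expansion of $(C^{-1})_{ij}(\Lambda)$, the SUSY Leibniz sign $(-1)^{\tilde a\tilde b + \tilde b}$, and the parity of $(C^{-1})_{ir}$ that the bracket must jump over.

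Formula \eqref{Lemma:Inverse operator-SUSY2} is obtained in an entirely parallel way using the right Leibniz rule
\[
\{fg\,{}_{\Lambda}\,a\} = (-1)^{(\tilde f+\tilde g)\tilde a + \tilde g}\{f\,{}_{\Lambda+\nabla}\,a\}_{\to}g + (-1)^{\tilde f\tilde g}\{g\,{}_{\Lambda+\nabla}\,a\}_{\to}f,
\]
which is derived from skewsymmetry and left Leibniz exactly as in the proof of Proposition \ref{Master formula}. The two annihilating sums are the same as before, and the adjoint $(C^{-1})^*_{ir}(\Gamma)$ appears because the sesquilinearity in \eqref{sesqui} pushes powers of $\chi+D$ through $(C^{-1})_{ir}$; the sign $(-1)^{\tilde a(i+j+r+t)+(r+j)(t+j)}$ is then a bookkeeping consequence.

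For the admissibility assertion $\{a\,{}_{\Lambda}\,(C^{-1})_{ij}(\Gamma)\}, \{(C^{-1})_{ij}(\Lambda)\,{}_{\Lambda+\Gamma}\,a\} \in \mathscr{P}_{\Lambda,\Gamma}$, the strategy is to reduce to the non-SUSY case already handled in Lemma \ref{Lemma:Inverse operator}. By Proposition \ref{prop:non-SUSY and SUSY} and the remark following it, admissibility of a SUSY $\Lambda$-bracket is equivalent to admissibility of its associated $\lambda$-bracket \eqref{5.2}, and moreover the $C_{ij}(\Gamma)$'s split into $\gamma$-free and $\gamma$-linear parts whose coefficients lie in the underlying PVSA. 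Applying the non-SUSY Lemma \ref{Lemma:Inverse operator} to each component and using $\mathscr{P}_{\Lambda,\Gamma} = \CC_{\chi\gamma}\otimes \mathscr{P}_{\lambda,\mu}$ gives the claim. The only genuine obstacle is sign-tracking — the $\chi+D$ terms inserted by sesquilinearity interact with the $(-1)^{(\tilde a+1)(i+r+1)}$-style SUSY Leibniz factors, and one must verify that the signs in \eqref{Lemma:Inverse operator-SUSY1}--\eqref{Lemma:Inverse operator-SUSY2} are consistent with the conventions adopted in \eqref{eq:matrix product PVA}, \eqref{3.1}, and the adjoint formula; everything else is a direct transcription of the PVSA argument.
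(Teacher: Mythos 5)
Your plan coincides with the paper's proof: both start from $(C^{-1})_{ij}(\Lambda)=\sum_{r,t}(\pm)(C^{-1})_{ir}(\Lambda)\circ C_{rt}(\Lambda)\circ (C^{-1})_{tj}(\Lambda)$, expand with the left (resp.\ right) Leibniz rule, and kill two of the three resulting terms using $\sum_r(\pm)(C^{-1})_{ir}\circ C_{rt}=\delta_{it}$, leaving exactly \eqref{Lemma:Inverse operator-SUSY1} and \eqref{Lemma:Inverse operator-SUSY2}. Your admissibility argument via the associated $\lambda$-bracket of Proposition \ref{prop:non-SUSY and SUSY} is a slightly more explicit packaging of the paper's appeal to the argument of Lemma 2.4 in \cite{DSK13}, but it is the same underlying reduction.
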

\begin{proof}
It is same argument as in the proof of Lemma \ref{Lemma:Inverse operator}. For the admissibility, we can apply the similar proof to Lemma 2.4 in \cite{DSK13}. 

Let us show \eqref{Lemma:Inverse operator-SUSY1} and \eqref{Lemma:Inverse operator-SUSY2}.
Since $(C^{-1})(\Lambda) \circ C(\Lambda)= C(\Lambda)\circ (C^{-1})(\Lambda)= \text{Id}_{(r|s)}\otimes 1$, we have
    \begin{equation} \label{eq:Inverse operator-SUSY proof1}
    \begin{aligned}
      & (C^{-1})_{ij}(\Lambda)=\displaystyle\sum_{r,t\in  I}(-1)^{(i+t)(j+r)}(C^{-1})_{ir}(\Lambda)\circ C_{rt}(\Lambda)\circ (C^{-1})_{tj}(\Lambda)
    \end{aligned}
        \end{equation}
and 
\begin{equation}\label{eq:Inverse operator-SUSY proof2}
\begin{aligned}
&  \displaystyle\sum_{r\in  I}(-1)^{(i+r+1)(r+t)}C^{\!-1}_{ir}(\Gamma)\circ C_{rt}(\Gamma)  = \displaystyle\sum_{r\in  I}(-1)^{(r+t+1)(r+i)}C_{tr}(\Gamma)\circ C^{\!-1}_{ri}(\Gamma) =\delta_{it}.
\end{aligned}
\end{equation}    
\vskip2mm
\noindent  Now, using the left and right Leibniz rules, we obtain
    \begin{align}
    \nonumber &\left\{ a {}_{\Lambda} (C^{-1})_{ij}(\Gamma) \right\}
    =\displaystyle\sum_{r,t\in  I}(-1)^{(i+t)(j+r)}\left\{ a {}_{\Lambda} C^{-1}_{ir}(\Gamma)\circ C_{rt}(\Gamma)\circ C^{-1}_{tj}(\Gamma) \right\}\\
   \label{eq:Inverse operator-SUSY proof3} 
   &=\displaystyle\sum_{t\in  I}(-1)^{(i+t)(j+t)}\{ a {}_{\Lambda} \displaystyle\sum_{r\in  I}(-1)^{(i+r+1)(r+t)}C^{-1}_{ir}(\Gamma)\circ C_{rt}(\Gamma) \}C^{-1}_{tj}(\Gamma)\\ 
     \label{eq:Inverse operator-SUSY proof4} 
   &+\displaystyle\sum_{r\in  I}(-1)^{(i+j)(r+j)+(\tilde{a}+1)(i+r+1)}C^{-1}_{ir}( \Lambda+\Gamma+\nabla)\{ a {}_{\Lambda}  \displaystyle\sum_{t\in  I}(-1)^{(r+t+1)(t+j)}C_{rt}(\Gamma)\circ C^{-1}_{tj}(\Gamma) \} \\
  \nonumber &-\displaystyle\sum_{r,t\in  I}(-1)^{(i+t)(j+r)+(\tilde{a}+1)(i+r+1)}C^{-1}_{ir}(\Lambda+\Gamma+\nabla)\left\{ a {}_{\Lambda} C_{rt}(\Gamma+\nabla) \right\} C^{-1}_{tj}(\Gamma).    
\end{align}
By  \eqref{eq:Inverse operator-SUSY proof2}, we have $\eqref{eq:Inverse operator-SUSY proof3}=\eqref{eq:Inverse operator-SUSY proof4}=0$ and hence \eqref{Lemma:Inverse operator-SUSY1} holds.
 The last assertion \eqref{Lemma:Inverse operator-SUSY2} can be proved similarly. 
  \end{proof}

\begin{thm} \label{Theorem:SUSY Dirac, part1}
  Let $\mathscr{P}$ be a non-local SUSY Poisson vertex algebra with the SUSY $\Lambda$-bracket $\left\{ \cdot \,_{\Lambda}\, \cdot \right\}$. Let $\theta_i \in \mathscr{P}$ for each $i\in  I$ be homogeneous element. If the $C(\Lambda)\in \mathcal{M}(\Lambda)$ defined in \eqref{3.3} is invertible, then the Dirac reduced bracket $\left\{ \cdot \,_{\Lambda}\, \cdot \right\}^{D}$ given by \eqref{3.5} is a SUSY $\Lambda$-bracket on $\mathscr{P}$.
\end{thm}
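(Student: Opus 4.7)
The plan is to mimic the argument of Theorem \ref{Theorem:super Dirac, part1} throughout, replacing the even formal variables and signs by their SUSY analogues, and invoking Lemma \ref{Lemma:Inverse operator-SUSY} wherever the non-SUSY argument used Lemma \ref{Lemma:Inverse operator}. The sesquilinearity, admissibility and left Leibniz rule follow directly from those of $\{\cdot{}_\Lambda\cdot\}$ together with the corresponding properties of the entries of $C^{-1}(\Lambda)$ (for admissibility of $(C^{-1})_{ji}$, Lemma \ref{Lemma:Inverse operator-SUSY} is exactly what is needed). These are routine bookkeeping checks.

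For the skewsymmetry, the plan is to replace $\Lambda$ by $-\Lambda-\nabla$ in \eqref{3.5}, move the arrows across using the formula ${}_\leftarrow[b{}_{-\Lambda-\nabla}a]=\sum(-\lambda-\partial)^n(-\chi-D)^i b_{(n|i)}a$, and apply the skewsymmetry of $\{\cdot{}_\Lambda\cdot\}$ to each of the three factors appearing in the Dirac correction. The key algebraic identity to deploy is the SUSY analogue of the non-SUSY step, namely that $C(\Lambda)\circ C^{-1}(\Lambda)=(C^{-1}(\Lambda)\circ C(\Lambda))^*$ forces
\[
(C^{-1})_{ji}(\Lambda)=(-1)^{ij+i+j+1}\bigl.(C^{-1})^{*}_{ij}(x)\bigr|_{x=-\Lambda-\nabla}
\]
(up to the correct SUSY sign), and matching this sign with the signs produced by each skewsymmetry yields the desired identity.

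The Jacobi identity is the main obstacle and will require the same seven-triple cancellation scheme used for the non-SUSY case. I would expand $\{a{}_\Lambda\{b{}_\Gamma c\}^D\}^D$, $(-1)^{(\tilde{a}+1)(\tilde{b}+1)}\{b{}_\Gamma\{a{}_\Lambda c\}^D\}^D$, and $(-1)^{\tilde{a}}\{\{a{}_\Lambda b\}^D{}_{\Lambda+\Gamma}c\}^D$ using Definition \ref{Defn:SUSY Dirac} and Lemma \ref{Lemma:Inverse operator-SUSY}, obtaining in each case eight summands (a bare term, three single-correction terms, three double-correction terms, and a quadruple-correction term in which Lemma \ref{Lemma:Inverse operator-SUSY} has been used to move $\{a{}_\Lambda(C^{-1})_{\ast\ast}\}$ or $\{(C^{-1})_{\ast\ast}{}_\bullet b\}$ through). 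After renaming dummy indices, the $24$ correction terms should group into seven triples, each of which equals a product of $(C^{-1})$'s and basic $\Lambda$-brackets multiplying a SUSY Jacobi expression of the form
\begin{align*}
&\{x{}_\Lambda\{y{}_\Gamma z\}\}+(-1)^{\tilde{x}}\{\{x{}_\Lambda y\}{}_{\Lambda+\Gamma}z\}-(-1)^{(\tilde{x}+1)(\tilde{y}+1)}\{y{}_\Gamma\{x{}_\Lambda z\}\},
\end{align*}
which vanishes by the SUSY Jacobi identity for $\{\cdot{}_\Lambda\cdot\}$. The bare terms on the three sides recombine to the Jacobi identity for $\{\cdot{}_\Lambda\cdot\}$ itself and so cancel.

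The genuine difficulty is purely the sign accounting: every SUSY step introduces extra factors coming from (i) the parity-reversing nature of $[\cdot{}_\Lambda\cdot]$, (ii) the commutation rule $D\chi=-\chi D+2\lambda$ used when moving $\chi+D$ past a superalgebra element, (iii) the adjoint sign $(C^{-1})^{*}_{\alpha\beta}(\Lambda)=-(-1)^{\alpha\beta+\alpha+\beta}(C^{-1})_{\beta\alpha}(\Lambda)$, and (iv) the sign $(-1)^{j(\tilde{a}+1)}$ present in the SUSY iterated bracket formula \eqref{eq:Jacobi_first}. I would carry out the non-SUSY proof line by line, upgrading each exponent $\tilde{a}\tilde{b}$ to the appropriate SUSY parity and tracking the contribution of $\chi,\gamma$; the verification that the seven triples still cancel is then mechanical. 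Leibniz rule on the SUSY Dirac bracket follows immediately by applying the Leibniz rule of $\{\cdot{}_\Lambda\cdot\}$ term by term in \eqref{3.5}.
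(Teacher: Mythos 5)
Your proposal follows essentially the same route as the paper's proof: routine verification of sesquilinearity, admissibility and the Leibniz rule; skewsymmetry via the adjoint identity $(C^{-1})_{ij}(\Lambda)=(-1)^{ij+i+j+1}(C^{-1})^{*}_{ji}(\Lambda)$ derived from $C\circ C^{-1}=(C^{-1}\circ C)^{*}$; and the Jacobi identity by expanding the three Dirac-bracket terms into $24$ summands with Lemma \ref{Lemma:Inverse operator-SUSY}, grouping the corrections into seven triples each of which vanishes by the SUSY Jacobi identity of the original bracket. The only caveat is that you leave the sign bookkeeping unexecuted (and the coefficient of the middle Jacobi term should be $-(-1)^{(\tilde{a}+1)(\tilde{b}+1)}=(-1)^{\tilde{a}\tilde{b}+\tilde{a}+\tilde{b}}$), but the strategy and all key ingredients coincide with the paper's argument.
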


\begin{proof}
The sesquilinearity, admissibility and Leibniz rule can be proved by direct computations.

 To show the skewsymmetry, we substitute $\Lambda$ by $-\Lambda-\nabla$ in \eqref{3.5} and let $\nabla$ act on the left. Then we have
    \begin{align*}
            &\left\{ a \,_{-\Lambda-\nabla}\, b \right\}^{D}\\
            &=\left\{ a \,_{-\Lambda-\nabla}\, b \right\}-\displaystyle\sum_{i,j \in  I}(-1)^{\tilde{a}\tilde{b}+\tilde{a}i+\tilde{b}j+ij}(-1)^{(\tilde{a}+i+1)(\tilde{b}+i)}(-1)^{(\tilde{b}+j+1)(i+j+1)}\\
            &\hspace{2.2cm}  \Big({}_{\leftarrow}\left\{ a\,_{-\Lambda-\nabla}\, \theta_i \right\}\Big)\left(\Big|_{X=-\Lambda-\nabla}(C^{-1})_{ji}(X)\right)\Big({}_{\leftarrow}\left\{ \theta_j \,_{-\Lambda-\nabla}\, b \right\}\Big)\\
            &=\left\{ a \,_{-\Lambda-\nabla}\, b \right\}-\displaystyle\sum_{i,j \in  I}(-1)^{\tilde{a}i+\tilde{b}j+ij}\left\{ \theta_i\,_{\Lambda+\nabla}\, a \right\}_{\rightarrow}(C^{-1})_{ij}(\Lambda+\nabla)\left\{ b\,_{\Lambda}\, \theta_j \right\}.\\   
        \end{align*}
Here, for the last equality, we used the fact that $(C^{-1})_{ij}(\Lambda)=(-1)^{ij+i+j+1}(C^{-1})^{*}_{ji}(\Lambda)$.
Then, by multiplying $(-1)^{\tilde{a}\tilde{b}}$ on the both sides, we get the skewsymmetry.\\
  
Now, we need to show the following equality to see the Jacobi identity:
 \begin{equation} \label{SUSY Dirac Jacobi}
  \{ a \,{}_{\Lambda} \left\{ b \,{}_{\Gamma}\, c \right\}^{D} \}^{D}+(-1)^{\tilde{a}\tilde{b}+\tilde{a}+\tilde{b}}\{ b \,{}_{\Gamma} \left\{ a \,{}_{\Lambda}\, c \right\}^{D} \}^{D}+(-1)^{\tilde{a}}\{ \left\{ a \,{}_{\Lambda}\, b \right\}^{D} \,{}_{\Lambda+\Gamma}\, c \}^{D}=0.
 \end{equation}
Each of term in the LHS of \eqref{SUSY Dirac Jacobi} can be expanded as follows. The first term in \eqref{SUSY Dirac Jacobi} is 
  \allowdisplaybreaks
  \begin{align}
    \nonumber&\{ a \,{}_{\Lambda} \left\{ b \,{}_{\Gamma}\, c \right\}^{D}\}^{D}\\
    \nonumber&=\left\{ a \,_\Lambda\, \left\{ b \,_{\Gamma}\, c \right\} \right\}^{D}-\displaystyle\sum_{\alpha,\beta \in I}\{ a \,{}_{\Lambda}\, (-1)^{\tilde{b}\tilde{c}+\tilde{b}\alpha+\tilde{c}\beta+\alpha\beta} \left\{ \theta_{\beta} \,{}_{\Gamma+\nabla}\, c \right\}\left(C^{-1}\right)_{\beta\alpha}(\Gamma+\nabla)\left\{ b \,{}_{\Gamma}\, \theta_{\alpha} \right\}\}^{D}\\
    &=\{ a {}_{\Lambda} \left\{ b \,{}_{\Gamma}\, c \right\}\}^{D}\label{Jacobi 1-1} \\
    &\hspace{0.5cm}-\displaystyle\sum_{\alpha,\beta\in  I}(-1)^{\tilde{b}\tilde{c}+\tilde{b}\alpha+\tilde{c}\beta+\alpha\beta}\left\{ a \,{}_{\Lambda}\, \left\{ \theta_{\beta} \,{}_{\Gamma+\nabla}\, c \right\} \right\}^{D}\left(C^{-1}\right)_{\beta\alpha}(\Gamma+\nabla)\left\{ b \,{}_{\Gamma}\, \theta_{\alpha} \right\}\label{Jacobi 1-2}\\
    &\hspace{0.5cm}-\displaystyle\sum_{\alpha,\beta\in  I}(-1)^{\tilde{b}\tilde{c}+\tilde{b}\alpha+\tilde{c}\beta+\alpha\beta+(\tilde{a}+1)(\tilde{c}+\beta+1)}\left\{ \theta_{\beta} \,{}_{\Lambda+\Gamma+\nabla}\, c \right\}\{ a \,{}_{\Lambda}\,\left(C^{-1}\right)_{\beta\alpha}(\Gamma+\nabla)\}^{D}\left\{ b \,{}_{\Gamma}\, \theta_{\alpha} \right\}\label{Jacobi 1-3}\\
    &\hspace{0.5cm}-\displaystyle\sum_{\alpha,\beta\in  I}(-1)^{\tilde{b}\tilde{c}+\tilde{b}\alpha+\tilde{c}\beta+\alpha\beta+(\tilde{a}+1)(\tilde{c}+\alpha)}\left\{ \theta_{\beta} \,{}_{\Lambda+\Gamma+\nabla}\, c \right\}\left(C^{-1}\right)_{\beta\alpha}( \Lambda+\Gamma+\nabla)\left\{ a \,{}_{\Lambda}\,\left\{ b \,{}_{\Gamma}\, \theta_{\alpha} \right\} \right\}^{D}\label{Jacobi 1-4}.
  \end{align}
The second term in \eqref{SUSY Dirac Jacobi} is 
  \allowdisplaybreaks
  \begin{align}
    \nonumber&\{ b \,{}_{\Gamma} \left\{ a \,{}_{\Lambda}\, c \right\}^{D}\}^{D}\\
    \nonumber&=\left\{ b \,_\Gamma\, \left\{ a \,_\Lambda\, c \right\} \right\}^{D}-\displaystyle\sum_{\alpha,\beta\in  I}\{ b \,{}_{\Gamma}\, (-1)^{\tilde{a}\tilde{c}+\tilde{a}\alpha+\tilde{c}\beta+\alpha\beta} \left\{ \theta_{\beta} \,{}_{\Lambda+\nabla}\, c \right\}\left(C^{-1}\right)_{\beta\alpha}(\Lambda+\nabla)\left\{ a \,{}_{\Lambda}\, \theta_{\alpha} \right\}\}^{D}\\
    &=\{ b {}_{\Gamma} \left\{ a \,{}_{\Lambda}\, c \right\}\}^{D}\label{Jacobi 2-1} \\
    &\hspace{0.5cm}-\displaystyle\sum_{\alpha,\beta\in  I}(-1)^{\tilde{a}\tilde{c}+\tilde{a}\alpha+\tilde{c}\beta+\alpha\beta}\left\{ b \,{}_{\Gamma}\, \left\{ \theta_{\beta} \,{}_{\Lambda+\nabla}\, c \right\} \right\}^{D}\left(C^{-1}\right)_{\beta\alpha}(\Lambda+\nabla)\left\{ a \,{}_{\Lambda}\, \theta_{\alpha} \right\}\label{Jacobi 2-2}\\
    &\hspace{0.5cm}-\displaystyle\sum_{\alpha,\beta\in  I}(-1)^{\tilde{a}\tilde{c}+\tilde{a}\alpha+\tilde{c}\beta+\alpha\beta+(\tilde{b}+1)(\tilde{c}+\beta+1)}\left\{ \theta_{\beta} \,{}_{\Lambda+\Gamma+\nabla}\, c \right\}\{ b \,{}_{\Gamma}\,\left(C^{-1}\right)_{\beta\alpha}(\Lambda+\nabla)\}^{D}\left\{ a \,{}_{\Lambda}\, \theta_{\alpha} \right\}\label{Jacobi 2-3}\\
    &\hspace{0.5cm}-\displaystyle\sum_{\alpha,\beta\in  I}(-1)^{\tilde{a}\tilde{c}+\tilde{a}\alpha+\tilde{c}\beta+\alpha\beta+(\tilde{b}+1)(\tilde{c}+\alpha)}\left\{ \theta_{\beta} \,{}_{\Lambda+\Gamma+\nabla}\, c \right\}\left(C^{-1}\right)_{\beta\alpha}( \Lambda+\Gamma+\nabla)\left\{ b \,{}_{\Gamma}\,\left\{ a \,{}_{\Lambda}\, \theta_{\alpha} \right\} \right\}^{D}\label{Jacobi 2-4}
  \end{align}
and the third term in \eqref{SUSY Dirac Jacobi} is 
  \allowdisplaybreaks
  \begin{align}
    \nonumber&\{ \{ a {}_{\Lambda} b \}^{D} {}_{\Lambda+\Gamma} \,c \}^{D}\\
    \nonumber&=\{ \{ a {}_{\Lambda} b \} {}_{\Lambda+\Gamma} c \}^{D}-\displaystyle\sum_{\alpha,\beta\in  I}\{\; (-1)^{\tilde{a}\tilde{b}+\tilde{a}\alpha+\tilde{b}\beta+\alpha\beta}\{ \theta_{\beta} {}_{\Lambda+\nabla} b \}(C^{-1})_{\beta\alpha}(\Lambda+\nabla)\{ a {}_{\Lambda} \theta_{\alpha} \}\;\, {}_{\Lambda+\Gamma}\, c \}^{D}\\
    &=\{ \{ a {}_{\Lambda} b \} {}_{\Lambda+\Gamma}\, c \}^{D}\label{Jacobi 3-1}\\
    &-\displaystyle\sum_{\alpha,\beta\in  I}(-1)^{\tilde{a}\tilde{b}+\tilde{a}\alpha+\tilde{b}\beta+\alpha\beta+\tilde{a}\tilde{c}+\tilde{c}\beta}\{ \{ \theta_{\beta} {}_{\Lambda+\nabla} b \} {}_{\Lambda+\Gamma+\nabla}\; c \}_{\rightarrow}^{D}\left(C^{-1}\right)_{\beta\alpha}(\Lambda+\nabla)\{ a {}_{\Lambda} \theta_{\alpha} \}\label{Jacobi 3-2}\\
    &-\displaystyle\sum_{\alpha,\beta\in  I} A (-1)^{\tilde{a}\tilde{b}+\tilde{a}\alpha+\tilde{b}\beta+\alpha\beta}\{ \left(C^{-1}\right)_{\beta\alpha}(\Lambda+\nabla) {}_{\Lambda+\Gamma+\nabla}\; c \}_{\rightarrow}^{D}\{ a {}_{\Lambda} \theta_{\alpha} \}\{ \theta_{\beta} {}_{-\Gamma-\nabla} b \}\label{Jacobi 3-3}\\
  &-\displaystyle\sum_{\alpha,\beta\in  I} B (-1)^{ \tilde{a}\tilde{b}+\tilde{a}\alpha+\tilde{b}\beta+\alpha\beta} \{ \{ a {}_{\Lambda} \theta_{\alpha} \} {}_{\Lambda+\Gamma+\nabla}\; c \}_{\rightarrow}^{D} \left(C^{-1}\right)^{*}_{\beta\alpha}(\Gamma)
  \{ \theta_{\beta} {}_{-\Gamma-\nabla} b \},\label{Jacobi 3-4}
  \end{align}
where $A=(-1)^{\tilde{c}(\tilde{a}+\alpha+1)+(\tilde{b}+\beta+1)(\beta+\tilde{a}+\tilde{c})}$ and $B=(-1)^{(\tilde{b}+\beta+1)(\alpha+\beta+1)+(\alpha+\tilde{b})(\tilde{a}+\alpha+1+\tilde{c})}$.
  \noin Moreover, each Dirac reduced bracket in \eqref{Jacobi 1-1}$, \cdots,$ \eqref{Jacobi 3-4} can be expanded into two terms via \eqref{3.5}.
  When $(\largestar)$ indicates one of $\eqref{Jacobi 1-1}, \cdots, \eqref{Jacobi 3-4}$, let us denote  $(\largestar)=(\largestar.1)+(\largestar.2)$, where $(\largestar.1)$ is induced from the first term in the RHS of \eqref{3.5} and $(\largestar.2)$ is induced from the second term in the RHS of \eqref{3.5}.
  For example, \eqref{Jacobi 1-1} is 
 \begin{align*}
 &\{ a _\Lambda \{ b \,_\Gamma\, c\}\}^{D}\\
  \tag{\text{\ref{Jacobi 1-1}.1}}&=\{ a \,{}_{\Lambda}\, \{ b \,{}_{\Gamma}\, c \} \}\\
  \tag{\text{\ref{Jacobi 1-1}.2}}&-\displaystyle\sum_{\alpha, \beta\in  I}(-1)^{\tilde{a}(\tilde{b}+\tilde{c}+1)+\tilde{a}\alpha+(\tilde{b}+\tilde{c}+1)\beta+\alpha\beta} \left\{ \theta_{\beta} \,{}_{\Lambda+\nabla}\, \{ b \,{}_{\Gamma}\, c \} \right\}\left(C^{-1}\right)_{\beta\alpha}(\Lambda+\nabla)\left\{ a \,{}_{\Lambda}\, \theta_{\alpha} \right\}.
 \end{align*}   
Now, we aim to show the sum of three terms in each triple below is trivial.
  \begin{align*}
    &\Big((\ref{Jacobi 1-1}.1), (\ref{Jacobi 2-1}.1), (\ref{Jacobi 3-1}.1)\Big), 
    \Big((\ref{Jacobi 1-1}.2), (\ref{Jacobi 2-2}.1), (\ref{Jacobi 3-2}.1)\Big), 
    \Big((\ref{Jacobi 1-2}.1), (\ref{Jacobi 2-1}.2), (\ref{Jacobi 3-4}.1)\Big), \\
    &\Big((\ref{Jacobi 1-4}.1), (\ref{Jacobi 2-4}.1), (\ref{Jacobi 3-1}.2)\Big), 
    \Big((\ref{Jacobi 1-2}.2), (\ref{Jacobi 2-2}.2), (\ref{Jacobi 3-3}.1)\Big), 
    \Big((\ref{Jacobi 1-3}.1), (\ref{Jacobi 2-4}.2), (\ref{Jacobi 3-4}.2)\Big), \\
    &\Big((\ref{Jacobi 1-3}.2), (\ref{Jacobi 2-3}.2), (\ref{Jacobi 3-3}.2)\Big), 
    \Big((\ref{Jacobi 1-4}.2), (\ref{Jacobi 2-3}.1), (\ref{Jacobi 3-2}.2)\Big).
  \end{align*}
  
  First, since
  \begin{align*}
    (\ref{Jacobi 1-1}.1)&=\{ a \,{}_{\Lambda} \left\{ b \,{}_{\Gamma}\, c \right\}\},\\
    (\ref{Jacobi 2-1}.1)&=\{ b \,{}_{\Gamma} \left\{ a \,{}_{\Lambda}\, c \right\}\}, \\
    (\ref{Jacobi 3-1}.1)&=\{ \left\{ a \,{}_{\Lambda}\, b \right\} \,{}_{\Lambda+\Gamma}\, c \},
  \end{align*}
we have
\begin{equation*}
  (\ref{Jacobi 1-1}.1)+(-1)^{\tilde{a}\tilde{b}+\tilde{a}+\tilde{b}}(\ref{Jacobi 2-1}.1)+(-1)^{\tilde{a}}(\ref{Jacobi 3-1}.1)=0
\end{equation*}
which directly follows from the Jacobi identity of $\left\{ \cdot {}_{\Lambda} \cdot \right\}$. 

\vskip 3mm

The terms in the second triple are 
\begin{align*}
  (\ref{Jacobi 1-1}.2)&=-\displaystyle\sum_{\alpha, \beta\in  I}(-1)^{\tilde{a}(\tilde{b}+\tilde{c}+1)+\tilde{a}\alpha+(\tilde{b}+\tilde{c}+1)\beta+\alpha\beta}\left\{ \theta_{\beta} {}_{\Lambda+\nabla} \left\{ b {}_{\Gamma} c \right\} \right\}(C^{-1})_{\beta\alpha}(\Lambda+\nabla)\left\{ a {}_{\Lambda} \theta_{\alpha} \right\},\\
  (\ref{Jacobi 2-2}.1)&=-\displaystyle\sum_{\alpha, \beta\in  I}(-1)^{\tilde{a}\tilde{c}+\tilde{a}\alpha+ 
  \tilde{c}\beta+\alpha\beta}\left\{ b {}_{\Gamma} \left\{ \theta_{\beta} {}_{\Lambda+\nabla} c \right\} \right\}(C^{-1})_{\beta\alpha}(\Lambda+\nabla)\left\{ a {}_{\Lambda} \theta_{\alpha} \right\},\\
  (\ref{Jacobi 3-2}.1)&=-\displaystyle\sum_{\alpha,\beta\in  I}(-1)^{\tilde{a}\tilde{b}+\tilde{a}\alpha+\tilde{b}\beta+\alpha\beta+\tilde{c}(\tilde{a}+\beta)}\{ \{ \theta_{\beta} {}_{\Lambda+\nabla} b \} {}_{\Lambda+\Gamma+\nabla} c \}(C^{-1})_{\beta\alpha}(\Lambda+\nabla)\{ a {}_{\Lambda} \theta_{\alpha} \}.  
\end{align*}
Hence, 
  \begin{align*}
    &(\ref{Jacobi 1-1}.2)+(-1)^{\tilde{a}\tilde{b}+\tilde{a}+\tilde{b}}(\ref{Jacobi 2-2}.1)+(-1)^{\tilde{a}}(\ref{Jacobi 3-2}.1)\\
    &=\displaystyle\sum_{\alpha, \beta\in  I}(-1)^{\tilde{a}(\tilde{b}+\tilde{c}+1)+\tilde{a}\alpha+(\tilde{b}+\tilde{c}+1)\beta+\alpha\beta+1}\\
    &\Big(\{ \theta_{\beta} {}_{\Lambda+\nabla} \{ b \,{}_\Gamma\, c\}\}\!+\!(-1)^{\tilde{b}\beta+\tilde{b}+\beta}\{b \,_\Gamma \{ \theta_{\beta} \,{}_{\Lambda+\nabla}\, c\}\}\!+\!(-1)^{\beta}\{\{ \theta_{\beta} \,{}_{\Lambda+\nabla}\, b\} \,_{\Lambda+\Gamma+\nabla}\, c\}\Big) (C^{-1})_{\beta\alpha}(\Lambda+\nabla)\left\{ a \,{}_{\Lambda}\, \theta_\alpha \right\}=0.
  \end{align*} 
  
  \vskip 3mm
  
Let us observe the third triple. The first two terms are 
  \begin{align*}
   (\ref{Jacobi 1-2}.1)&=\displaystyle\sum_{\alpha, \beta\in  I}(-1)^{\tilde{b}\tilde{c}+\tilde{b}\alpha+\tilde{c}\beta+\alpha\beta+1}\{ a {}_{\Lambda} \{ \theta_{\beta} \,{}_{\Gamma+\nabla}\, c \} \}(C^{-1})_{\beta\alpha}(\Gamma+\nabla)\{ b {}_{\Gamma} \theta_\alpha \},\\
   (\ref{Jacobi 2-1}.2)&=-\displaystyle\sum_{\alpha, \beta\in  I}(-1)^{\tilde{b}(\tilde{a}+\tilde{c}+1)+\tilde{b}\alpha+(\tilde{a}+\tilde{c}+1)\beta+\alpha\beta}\{ \theta_{\beta} \,{}_{\Gamma+\nabla}\, \{ a {}_{\Lambda} c \} \}(C^{-1})_{\beta\alpha}(\Gamma+\nabla)\{ b {}_{\Gamma} \theta_\alpha \}.
  \end{align*}
In (\ref{Jacobi 3-4}.1), which is the third term in the triple, we switch $\alpha$ and $\beta$ and get 
  \begin{align*}
    &(\ref{Jacobi 3-4}.1)\\
    &=-\displaystyle\sum_{\alpha,\beta\in  I}(-1)^{\tilde{b}\tilde{c}+\tilde{c}\beta+\alpha+\beta+1}(-1)^{\tilde{b}\alpha}\{ \{ a \,{}_{\Lambda}\, \theta_{\beta} \} \,{}_{\Lambda+\Gamma+\nabla}\, c \}(C^{-1})^{*}_{\alpha\beta}(\Gamma)\{ b {}_{\Gamma} \theta_\alpha \}\\    
    &=-\displaystyle\sum_{\alpha,\beta\in  I}(-1)^{\tilde{b}\tilde{c}+\tilde{c}\beta+\alpha+\beta+1}(-1)^{\tilde{b}\alpha}(-1)^{\alpha\beta+\alpha+\beta+1}\{ \{ a \,{}_{\Lambda}\, \theta_{\beta} \} \,{}_{\Lambda+\Gamma+\nabla}\, c \}(C^{-1})_{\beta\alpha}(\Gamma+\nabla)\{ b {}_{\Gamma} \theta_\alpha \}.
  \end{align*}
  Then
   \begin{align*}
    &(\ref{Jacobi 1-2}.1)+(-1)^{\tilde{a}\tilde{b}+\tilde{a}+\tilde{b}}(\ref{Jacobi 2-1}.2)+(-1)^{\tilde{a}}(\ref{Jacobi 3-4}.1)\\
    &=\displaystyle\sum_{\alpha, \beta\in  I}(-1)^{\tilde{b}\tilde{c}+\tilde{b}\alpha+\tilde{c}\beta+\alpha\beta+1}\left(\{ a {}_{\Lambda} \{ \theta_\beta \,{}_{\Gamma+\nabla}\, c\}\}+(-1)^{\tilde{a}\beta+\tilde{a}+\beta}\{\theta_\beta \,{}_{\Gamma+\nabla} \{ a \,{}_{\Lambda}\, c\}\}+(-1)^{\tilde{a}}\{\{ a \,{}_\Lambda\, \theta_\beta\} \,_{\Lambda+\Gamma+\nabla}\, c\}\right)\\
    &\hspace{0.5cm} (C^{-1})_{\beta\alpha}(\Gamma+\nabla)\left\{ b \,{}_{\Gamma}\,\theta_\alpha \right\}=0.
  \end{align*} 

\vskip 3mm

The three terms in the fourth triple are 
  \begin{align*}
    &(\ref{Jacobi 1-4}.1)=-\displaystyle\sum_{\alpha, \beta\in  I}(-1)^{\tilde{b}\tilde{c}+\tilde{b}\alpha+\tilde{c}\beta+\alpha\beta+\tilde{a}\alpha+\tilde{a}\tilde{c}+\alpha+\tilde{c}}\left\{ \theta_{\beta} \,{}_{\Lambda+\Gamma+\nabla}\, c \right\}\left(C^{-1}\right)_{\beta\alpha}( \Lambda+\Gamma+\nabla)\left\{ a \,{}_{\Lambda}\,\left\{ b \,{}_{\Gamma}\, \theta_{\alpha} \right\} \right\},\\
    &(\ref{Jacobi 2-4}.1)=-\displaystyle\sum_{\alpha, \beta\in  I}(-1)^{\tilde{a}\tilde{c}+\tilde{a}\alpha+\tilde{c}\beta+\alpha\beta+\tilde{b}\alpha+\tilde{b}\tilde{c}+\alpha+c}\left\{ \theta_{\beta} \,{}_{\Lambda+\Gamma+\nabla}\, c \right\}\left(C^{-1}\right)_{\beta\alpha}( \Lambda+\Gamma+\nabla)\left\{ b \,{}_{\Gamma}\,\left\{ a \,{}_{\Lambda}\, \theta_{\alpha} \right\} \right\},\\
    &(\ref{Jacobi 3-1}.2)=-\displaystyle\sum_{\alpha, \beta\in  I}(-1)^{(\tilde{a}+\tilde{b}+1)\tilde{c}+(\tilde{a}+\tilde{b}+1)\alpha+\tilde{c}\beta+\alpha\beta}\left\{ \theta_\beta {}_{\Lambda+\Gamma+\nabla} c \right\}\left(C^{-1}\right)_{\beta\alpha}( \Lambda+\Gamma+\nabla)\left\{ \left\{ a {}_{\Lambda} b \right\} {}_{\Lambda+\Gamma} \theta_{\alpha} \right\}.
  \end{align*}
Hence, 
  \begin{align*}
    &(\ref{Jacobi 1-4}.1)+(-1)^{\tilde{a}\tilde{b}+\tilde{a}+\tilde{b}}(\ref{Jacobi 2-4}.1)+(-1)^{\tilde{a}}(\ref{Jacobi 3-1}.2)\\
    &=\displaystyle\sum_{\alpha, \beta\in  I}(-1)^{\tilde{b}\tilde{c}+\tilde{b}\alpha+\tilde{c}\beta+\alpha\beta+\tilde{a}\alpha+\tilde{a}\tilde{c}+\alpha+\tilde{c}+1}\left\{ \theta_\beta \,{}_{\Gamma+\nabla}\, c \right\}(C^{-1})_{\beta\alpha}(\Gamma+\nabla)\\
    &\hspace{0.5cm}  \left(\{ a {}_{\Lambda} \{ b \,{}_\Gamma\, \theta_\alpha\}\}+(-1)^{\tilde{a}\tilde{b}+\tilde{a}+\tilde{b}}\{b \,_\Gamma \{ a \,{}_{\Lambda}\, \theta_\alpha\}\}+(-1)^{\tilde{a}}\{\{ a \,{}_\Lambda\, b\} \,_{\Lambda+\Gamma}\, \theta_\alpha\}\right)=0.
  \end{align*}
  
\vskip 3mm

In the fifth triple, the first term is
  \begin{align*}
    &(\ref{Jacobi 1-2}.2)=\displaystyle\sum_{\alpha, \beta, \delta, \epsilon\in  I}(-1)^{\tilde{b}\tilde{c}+\tilde{b}\alpha+\tilde{c}\beta+\alpha\beta}(-1)^{\tilde{a}(\tilde{c}+\beta+1)+\tilde{a}\delta+(\tilde{c}+\beta+1)\epsilon+\delta\epsilon }\left\{ \theta_\epsilon \,{}_{X}\, \left\{ \theta_{\beta} \,{}_{Y}\, c \right\} \right\}\\
    &\hspace{1cm}\left(\Big\vert_{X=\Lambda+\nabla}\left(C^{-1}\right)_{\epsilon\delta}(\Lambda+\nabla)\left\{ a \,{}_{\Lambda}\, \theta_{\delta} \right\}\right)\left(\Big\vert_{Y=\Gamma+\nabla}\left(C^{-1}\right)_{\beta\alpha}(\Gamma+\nabla)\left\{ b \,{}_{\Gamma}\, \theta_{\alpha} \right\}\right)\\
    &=\displaystyle\sum_{\alpha, \beta, \delta, \epsilon\in  I}(-1)^{\tilde{a}\tilde{b}+\tilde{a}\tilde{c}+\tilde{a}\delta+\tilde{a}+\tilde{b}\tilde{c}+\tilde{b}\alpha+\tilde{b}\epsilon+\tilde{c}\beta+\tilde{c}\epsilon+\alpha\beta+\delta\epsilon+\epsilon}\left\{ \theta_\epsilon \,{}_{X}\, \left\{ \theta_{\beta} \,{}_{Y}\, c \right\} \right\}\\
    &\hspace{1cm}\left(\Big\vert_{Y=\Gamma+\nabla}\left(C^{-1}\right)_{\beta\alpha}(\Gamma+\nabla)\left\{ b \,{}_{\Gamma}\, \theta_{\alpha} \right\}\right)\left(\Big\vert_{X=\Lambda+\nabla}\left(C^{-1}\right)_{\epsilon\delta}(\Lambda+\nabla)\left\{ a \,{}_{\Lambda}\, \theta_{\delta} \right\}\right).
  \end{align*}
The second term in the fifth triple is
  \allowdisplaybreaks
  \begin{align*}
    &(\ref{Jacobi 2-2}.2)=\displaystyle\sum_{\alpha,\beta, \delta, \epsilon\in  I}(-1)^{\tilde{a}\tilde{c}+\tilde{a}\alpha+\tilde{c}\beta+\alpha\beta}(-1)^{\tilde{b}(\beta+\tilde{c}+1)+\tilde{b}\delta+(\beta+\tilde{c}+1)\epsilon+\delta\epsilon}\left\{ \theta_{\epsilon} {}_{Y} \left\{ \theta_\beta {}_{X} c \right\} \right\}\\
    &\hspace{1cm}\left(\Big\vert_{Y=\Gamma+\nabla}\left(C^{-1}\right)_{\epsilon\delta}(\Gamma+\nabla)\left\{ b \,{}_{\Lambda}\, \theta_{\delta} \right\}\right)\left(\Big\vert_{X=\Lambda+\nabla}\left(C^{-1}\right)_{\beta\alpha}(\Lambda+\nabla)\left\{ a \,{}_{\Lambda}\, \theta_{\alpha} \right\}\right)\\
    &=\displaystyle\sum_{\alpha,\beta, \delta, \epsilon \in  I}(-1)^{\tilde{a}\tilde{c}+\tilde{a}\delta+\tilde{c}\epsilon+\delta\epsilon+\tilde{b}\epsilon+\tilde{b}\tilde{c}+\tilde{b}+\tilde{b}\alpha+\beta\epsilon+c\beta+\beta+\alpha\beta}\left\{ \theta_{\beta} {}_{Y} \left\{ \theta_{\epsilon} \,{}_{X}\, c \right\} \right\}\\
    &\hspace{1cm}\left(\Big\vert_{Y=\Gamma+\nabla}\left(C^{-1}\right)_{\beta\alpha}(\Gamma+\nabla)\left\{ b \,{}_{\Gamma}\, \theta_{\alpha} \right\}\right)\left(\Big\vert_{X=\Lambda+\nabla}\left(C^{-1}\right)_{\epsilon\delta}(\Lambda+\nabla)\left\{ a \,{}_{\Lambda}\, \theta_{\delta} \right\}\right).
  \end{align*}
  The last equality of the above is obtained by switching $\alpha$ and $\delta$ (resp. $\beta$ and $\epsilon$).
 The third term in the fifth triple is 
  \begin{align*}
    &(\ref{Jacobi 3-3}.1)\\
    &=-\displaystyle\sum_{\alpha,\beta\in  I}(-1)^{\tilde{a}\alpha+\alpha\beta+\tilde{a}\tilde{c}+\tilde{c}\alpha+\tilde{b}\tilde{c}+\tilde{c}\beta+\tilde{a}\beta+\tilde{a}}\{ \left(C^{-1}\right)_{\beta\alpha}(\Lambda+\nabla) {}_{\Lambda+\Gamma+\nabla}\; c \}_{\rightarrow}\{ a {}_{\Lambda} \theta_{\alpha} \}\{ \theta_{\beta} {}_{-\Gamma-\nabla} b \}\\
    &=\displaystyle\sum_{\alpha,\beta,\delta,\epsilon\in  I}(-1)^{\tilde{a}\alpha+\alpha\beta+\tilde{a}\tilde{c}+\tilde{c}\alpha+\tilde{b}\tilde{c}+\tilde{c}\beta+\tilde{a}\beta+\tilde{a}}(-1)^{
      \tilde{c}(\beta+\alpha+\epsilon+\delta)+\alpha(\epsilon+\delta+1)+\epsilon\delta}(-1)^{(\delta+\alpha+1)(\beta+\epsilon+1)}\\
      &\hspace{1cm} \{ \left\{ \theta_\delta {}_{\Lambda+\nabla} \theta_{\epsilon} \right\} {}_{\Lambda+\Gamma+\nabla}\; c \}_{\rightarrow}\left(C^{-1}\right)_{\beta\epsilon}(-\Gamma-\nabla)\left(C^{-1}\right)_{\delta\alpha}(\Lambda+\nabla)\{ a {}_{\Lambda} \theta_{\alpha} \}\{ \theta_{\beta} {}_{-\Gamma-\nabla} b \}\\
    &=\displaystyle\sum_{\alpha,\beta,\delta,\epsilon\in  I}(-1)^{\tilde{a}\alpha+\tilde{a}\tilde{c}+\tilde{b}\tilde{c}+\beta\epsilon+\alpha\delta+\tilde{c}\delta+\tilde{c}\epsilon+\tilde{b}\delta+\tilde{a}\tilde{b}+\tilde{b}\beta}\left\{ \left\{ \theta_{\delta} {}_{X} \theta_\epsilon \right\} {}_{\Lambda+\Gamma+\nabla} c \right\}\\
    &\hspace{1cm}\left(C^{-1}\right)_{\epsilon\beta}\!(\Gamma+\nabla)\left\{ b \,{}_{\Gamma}\, \theta_\beta \right\}\left(\Big\vert_{X=\Lambda+\nabla}\left(C^{-1}\right)_{\delta\alpha}(\Lambda+\nabla)\left\{ a \,{}_{\Lambda}\, \theta_{\alpha} \right\}\right)\\
    &=\displaystyle\sum_{\alpha,\beta,\delta,\epsilon\in  I}(-1)^{\tilde{a}\delta+\tilde{a}\tilde{c}+\tilde{b}\tilde{c}+\alpha\beta+\delta\epsilon+\tilde{c}\epsilon+\tilde{c}\beta+\tilde{b}\epsilon+\tilde{a}\tilde{b}+\tilde{b}\alpha}\left\{ \left\{ \theta_{\epsilon} {}_{X} \theta_\beta \right\} {}_{\Lambda+\Gamma+\nabla} c \right\}\\
    &\hspace{1cm}\left(C^{-1}\right)_{\beta\alpha}\!(\Gamma+\nabla)\left\{ b \,{}_{\Gamma}\, \theta_{\alpha} \right\}\left(\Big\vert_{X=\Lambda+\nabla}\left(C^{-1}\right)_{\epsilon\delta}(\Lambda+\nabla)\left\{ a \,{}_{\Lambda}\, \theta_{\delta} \right\}\right).
  \end{align*}
  Here, the second equality follows from Lemma \ref{Lemma:Inverse operator-SUSY} and the last equality is obtained by substituting $\alpha$ with $\delta$, $\beta$ with $\alpha$, $\delta$ with $\epsilon$ and $\epsilon$ with $\beta$.
Hence,
  \begin{align*}
    &(\ref{Jacobi 1-2}.2)+(-1)^{\tilde{a}\tilde{b}+\tilde{a}+\tilde{b}}(\ref{Jacobi 2-2}.2)+(-1)^{\tilde{a}}(\ref{Jacobi 3-3}.1) \; \\
    &=\displaystyle\sum_{\alpha, \beta, \delta, \epsilon\in  I}(-1)^{\tilde{a}\tilde{b}+\tilde{a}\tilde{c}+\tilde{a}\delta+\tilde{a}+\tilde{b}\tilde{c}+\tilde{b}\alpha+\tilde{b}\epsilon+\tilde{c}\beta+\tilde{c}\epsilon+\alpha\beta+\delta\epsilon+\epsilon}\\
    &\hspace{0.5cm}  \left(\{ \theta_{\epsilon} {}_{\Lambda+\nabla} \{ \theta_\beta \,{}_{\Gamma+\nabla}\, c\}\}+(-1)^{\beta\epsilon+\beta+\epsilon}\{\theta_{\beta} \,{}_{\Gamma+\nabla} \{ \theta_{\epsilon} \,{}_{\Lambda+\nabla}\, c\}\}+(-1)^{\epsilon}\{\{ \theta_{\epsilon} \,{}_{\Lambda+\nabla}\, \theta_{\beta}\} \,_{\Lambda+\Gamma+\nabla}\, c\}\right)\\
    &\hspace{0.5cm}  \left(\left(C^{-1}\right)_{\beta\alpha}(\Gamma+\nabla)\left\{ b \,{}_{\Gamma}\, \theta_{\alpha} \right\}\right)\Big(\left(C^{-1}\right)_{\epsilon\delta}(\Lambda+\nabla)\left\{ a \,{}_{\Lambda}\, \theta_{\delta} \right\}\Big)=0.
  \end{align*} 
 
 \vskip 3mm

 The first term in the sixth triple can be expanded by Lemma \ref{Lemma:Inverse operator-SUSY} as follows:
  \begin{align*}
    &(\ref{Jacobi 1-3}.1)=\displaystyle\sum_{\alpha,\beta,\delta,\epsilon \in  I}(-1)^{\tilde{a}\tilde{c}+\tilde{a}\epsilon+\tilde{b}\tilde{c}+\tilde{b}\alpha+\tilde{c}\beta+\tilde{c}+\beta\epsilon+\epsilon\delta+\alpha\delta+\epsilon}\\
    &\left\{ \theta_{\beta} \,{}_{\Lambda+\Gamma+\nabla}\, c \right\}\left(C^{-1}\right)_{\beta\epsilon}( \Lambda+\Gamma+\nabla)\left\{ a {}_{\Lambda} \left\{ \theta_{\delta} {}_{Y} \theta_{\epsilon} \right\} \right\}\left(\Big\vert_{Y=\Gamma+\nabla}\left(C^{-1}\right)_{\delta\alpha}(\Gamma+\nabla)\left\{ b \,{}_{\Gamma}\, \theta_{\alpha} \right\}\right).
    \end{align*}
For the other two terms, by changing indices properly, we get 
  \begin{align*}  
    &(\ref{Jacobi 2-4}.2)=\displaystyle\sum_{\alpha, \beta, \delta, \epsilon\in  I}(-1)^{\tilde{a}\tilde{b}+\tilde{a}\tilde{c}+\tilde{a}\epsilon+\tilde{a}\delta+\tilde{b}\tilde{c}+\tilde{b}+\tilde{b}\alpha+\tilde{c}\beta+\tilde{c}+\beta\epsilon+\epsilon+\delta\epsilon+\delta+\alpha\delta}\\
    &\left\{ \theta_{\beta} \,{}_{\Lambda+\Gamma+\nabla}\, c \right\}\left(C^{-1}\right)_{\beta\epsilon}( \Lambda+\Gamma+\nabla)\left\{ \theta_{\delta} {}_{\Gamma+\nabla} \left\{ a {}_{\Lambda} \theta_{\epsilon} \right\} \right\}\left(C^{-1}\right)_{\delta\alpha}(\Gamma+\nabla)\left\{ b \,{}_{\Gamma}\, \theta_{\alpha} \right\}\\
    &(\ref{Jacobi 3-4}.2)=\displaystyle\sum_{\alpha, \beta, \delta, \epsilon\in  I}(-1)^{\tilde{a}\tilde{c}+\tilde{a}\epsilon+\tilde{b}\tilde{c}+\tilde{b}\alpha+\tilde{c}\beta+\tilde{c}+\beta\epsilon+\epsilon\delta+\alpha\delta+\epsilon}\\
    &\left\{ \theta_{\beta} \,{}_{\Lambda+\Gamma+\nabla}\, c \right\}\left(C^{-1}\right)_{\beta\epsilon}( \Lambda+\Gamma+\nabla)\left\{ \left\{ a {}_{\Lambda} \theta_{\delta} \right\} {}_{\Lambda+\Gamma+\nabla} \theta_\epsilon \right\}\left(C^{-1}\right)_{\delta\alpha}(\Gamma+\nabla)\left\{ b \,{}_{\Gamma}\, \theta_{\alpha} \right\}.
  \end{align*}
  Then
  \begin{align*}
    &(\ref{Jacobi 1-3}.1)+(-1)^{\tilde{a}\tilde{b}+\tilde{a}+\tilde{b}}(\ref{Jacobi 2-4}.2)+(-1)^{\tilde{a}}(\ref{Jacobi 3-4}.2) \; \\
    &=\displaystyle\sum_{\alpha,\beta,\delta, \epsilon \in  I}(-1)^{\tilde{a}\tilde{c}+\tilde{a}\epsilon+\tilde{b}\tilde{c}+\tilde{b}\alpha+\tilde{c}\beta+\tilde{c}+\beta\epsilon+\epsilon\delta+\alpha\delta+\epsilon}\left\{ \theta_{\beta} \,{}_{\Lambda+\Gamma+\nabla}\, c \right\}\left(C^{-1}\right)_{\beta\epsilon}( \Lambda+\Gamma+\nabla)\\
    &\hspace{0.5cm}\left(\{ a {}_{\Lambda} \{ \theta_\delta \,{}_{\Gamma+\nabla}\, \theta_{\epsilon}\}\}+(-1)^{\tilde{a}\delta+\tilde{a}+\delta}\{\theta_{\delta} \,{}_{\Gamma+\nabla} \{ a \,{}_{\Lambda}\, \theta_{\epsilon}\}\}+(-1)^{\tilde{a}}\{\{ a \,{}_{\Lambda}\, \theta_{\delta}\} \,_{\Lambda+\Gamma+\nabla}\, \theta_\epsilon\}\right)\\
    &\hspace{0.5cm}\left(C^{-1}\right)_{\delta\alpha}(\Gamma+\nabla)\left\{ b \,{}_{\Gamma}\, \theta_{\alpha} \right\}=0.
    \end{align*} 
 
 \vskip 3mm

Let us observe the seventh triple. We have
\allowdisplaybreaks
  \begin{align*}
    &(\ref{Jacobi 1-3}.2)\\
    &=-\displaystyle\sum_{\alpha, \beta, \delta, \epsilon\in  I}(-1)^{\tilde{b}\tilde{c}+\tilde{b}\alpha+\tilde{c}\beta+\alpha\beta+(\tilde{a}+1)(\tilde{c}+\beta+1)}(-1)^{\tilde{a}(\alpha+\beta+1)+\tilde{a}\delta+(\alpha+\beta+1)\epsilon+\delta\epsilon}\\
    &\hspace{2cm}\{ \theta_\beta \,{}_{\Lambda+\Gamma+\nabla}\, c \}\{ \theta_{\epsilon} \,{}_{\Lambda+\nabla}\, (C^{-1})_{\beta\alpha}(\Gamma+\nabla) \}(C^{-1})_{\epsilon\delta}(\Lambda+\nabla)\left\{ a \,{}_{\Lambda}\, \theta_{\delta} \right\}\left\{ b \,{}_{\Gamma}\, \theta_{\alpha} \right\}\\
    &=-\displaystyle\sum_{\alpha, \beta, \delta, \epsilon, \zeta, \eta\in  I}(-1)^{\tilde{b}\tilde{c}+\tilde{b}\alpha+\tilde{c}\beta+\alpha\beta+(\tilde{a}+1)(\tilde{c}+\beta+1)}(-1)^{\tilde{a}(\alpha+\beta+1)+\tilde{a}\delta+(\alpha+\beta+1)\epsilon+\delta\epsilon}\\
    &\hspace{2cm}(-1)^{(\beta+\eta)(\alpha+\zeta)+(\epsilon+1)(\beta+\zeta+1)+1}(-1)^{(\alpha+\eta+1)(\tilde{a}+\epsilon)}\{ \theta_\beta \,{}_{\Lambda+\Gamma+\nabla}\, c \}(C^{-1})_{\beta\zeta}( \Lambda+\Gamma+\nabla)\\
    &\hspace{2cm}\left\{ \theta_\epsilon \,{}_{X}\, \left\{ \theta_{\eta} \,{}_{Y}\, \theta_{\zeta} \right\} \right\}\left(\Big\vert_{X=\Lambda+\nabla}\left(C^{-1}\right)_{\epsilon\delta}(\Lambda+\nabla)\left\{ a \,{}_{\Lambda}\, \theta_{\delta} \right\}\right)\left(\Big\vert_{Y=\Gamma+\nabla}\left(C^{-1}\right)_{\eta\alpha}(\Gamma+\nabla)\left\{ b \,{}_{\Gamma}\, \theta_{\alpha} \right\}\right)\\
    &=\displaystyle\sum_{\alpha, \beta, \delta, \epsilon, \zeta, \eta\in  I}(-1)^{\tilde{a}\tilde{c}+\tilde{a}\eta+\tilde{a}\delta+\tilde{a}+\tilde{b}\tilde{c}+\tilde{b}\alpha+\tilde{c}\beta+\tilde{c}+\beta\zeta+\alpha\eta+\delta\epsilon+\epsilon\eta+\epsilon\zeta+\epsilon+\zeta\eta+\zeta}\{ \theta_\beta \,{}_{\Lambda+\Gamma+\nabla}\, c \}(C^{-1})_{\beta\zeta}( \Lambda+\Gamma+\nabla)\\
    &\hspace{2cm}\left\{ \theta_\epsilon \,{}_{X}\, \left\{ \theta_{\eta} \,{}_{Y}\, \theta_{\zeta} \right\} \right\}\left(\Big\vert_{X=\Lambda+\nabla}\left(C^{-1}\right)_{\epsilon\delta}(\Lambda+\nabla)\left\{ a \,{}_{\Lambda}\, \theta_{\delta} \right\}\right)\left(\Big\vert_{Y=\Gamma+\nabla}\left(C^{-1}\right)_{\eta\alpha}(\Gamma+\nabla)\left\{ b \,{}_{\Gamma}\, \theta_{\alpha} \right\}\right).
    \end{align*}
    
   Again, by  Lemma \ref{Lemma:Inverse operator-SUSY}, the other two terms in the triple can be expressed as follows:
    \begin{align*}          
    &(\ref{Jacobi 2-3}.2)\\
    &=\displaystyle\sum_{\alpha, \beta, \delta, \epsilon, \zeta, \eta\in  I}(-1)^{\tilde{a}(\tilde{c}+\delta+\tilde{b}+\eta)+\tilde{b}(\tilde{c}+\alpha+1)+\tilde{c}(\beta+1)+\beta\zeta+\alpha\eta+\eta(\zeta+1)+\delta\epsilon+\zeta\epsilon+\zeta}\{ \theta_\beta \,{}_{\Lambda+\Gamma+\nabla}\, c \}(C^{-1})_{\beta\zeta}( \Lambda+\Gamma+\nabla)\\
    &\hspace{2cm}\left\{ \theta_\eta \,{}_{Y}\, \left\{ \theta_{\epsilon} \,{}_{X}\, \theta_{\zeta} \right\} \right\}\left(\Big\vert_{X=\Lambda+\nabla}\left(C^{-1}\right)_{\epsilon\delta}(\Lambda+\nabla)\left\{ a \,{}_{\Lambda}\, \theta_{\delta} \right\}\right)\left(\Big\vert_{Y=\Gamma+\nabla}\left(C^{-1}\right)_{\eta\alpha}(\Gamma+\nabla)\left\{ b \,{}_{\Gamma}\, \theta_{\alpha} \right\}\right),
  \end{align*}
  \begin{align*}
    &(\ref{Jacobi 3-3}.2)\\
    &=\displaystyle\sum_{\alpha, \beta, \delta, \epsilon, \zeta, \eta\in  I}(-1)^{\tilde{a}\delta+\tilde{a}\tilde{c}+\tilde{b}\tilde{c}+\tilde{c}+\zeta+\tilde{c}\beta+\beta\zeta+\zeta\eta+\zeta\epsilon+\delta\epsilon+\epsilon\eta+\tilde{a}\eta+\tilde{b}\alpha+\alpha\eta}\{ \theta_\beta \,{}_{\Lambda+\Gamma+\nabla}\, c \}(C^{-1})_{\beta\zeta}( \Lambda+\Gamma+\nabla)\\
    &\hspace{2cm}\left\{ \theta_\eta \,{}_{Y}\, \left\{ \theta_{\epsilon} \,{}_{X}\, \theta_{\zeta} \right\} \right\}\left(\Big\vert_{X=\Lambda+\nabla}\left(C^{-1}\right)_{\epsilon\delta}(\Lambda+\nabla)\left\{ a \,{}_{\Lambda}\, \theta_{\delta} \right\}\right)\left(\Big\vert_{Y=\Gamma+\nabla}\left(C^{-1}\right)_{\eta\alpha}(\Gamma+\nabla)\left\{ b \,{}_{\Gamma}\, \theta_{\alpha} \right\}\right).
  \end{align*}
Hence,
  \begin{align*}
    &(\ref{Jacobi 1-3}.2)+(-1)^{\tilde{a}\tilde{b}+\tilde{a}+\tilde{b}}(\ref{Jacobi 2-3}.2)+(-1)^{\tilde{a}}(\ref{Jacobi 3-3}.2)\\
    &=\displaystyle\sum_{\alpha,\beta,\delta, \epsilon, \zeta, \eta \in  I}(-1)^{\tilde{a}\tilde{c}+\tilde{a}\eta+\tilde{a}\delta+\tilde{a}+\tilde{b}\tilde{c}+\tilde{b}\alpha+\tilde{c}\beta+\tilde{c}+\beta\zeta+\alpha\eta+\delta\epsilon+\epsilon\eta+\epsilon\zeta+\epsilon+\zeta\eta+\zeta}\{ \theta_\beta \,{}_{\Lambda+\Gamma+\nabla}\, c \}(C^{-1})_{\beta\zeta}( \Lambda+\Gamma+\nabla)\\
    &\hspace{0.5cm}  \left(\{ \theta_{\epsilon} {}_{\Lambda+\nabla} \{ \theta_\eta \,{}_{\Gamma+\nabla}\, \theta_{\zeta}\}\}+(-1)^{\epsilon\eta+\epsilon+\eta}\{\theta_\eta \,{}_{\Gamma+\nabla} \{ \theta_{\epsilon} \,{}_{\Lambda+\nabla}\, \theta_{\zeta}\}\}+(-1)^{\epsilon}\{\{ \theta_{\epsilon} \,{}_{\Lambda+\nabla}\, \theta_{\eta}\} \,_{\Lambda+\Gamma+\nabla}\, \theta_{\zeta}\}\right)\\
    &\hspace{0.5cm} \left(\Big\vert_{X=\Lambda+\nabla}\left(C^{-1}\right)_{\epsilon\delta}(\Lambda+\nabla)\left\{ a \,{}_{\Lambda}\, \theta_{\delta} \right\}\right)\left(\Big\vert_{Y=\Gamma+\nabla}\left(C^{-1}\right)_{\eta\alpha}(\Gamma+\nabla)\left\{ b \,{}_{\Gamma}\, \theta_{\alpha} \right\}\right)=0.
  \end{align*} 
  
  \vskip 3mm
  
Finally, the three terms in the last triple are 
  \begin{align*}
    (\ref{Jacobi 1-4}.2)&=\displaystyle\sum_{\alpha, \beta, \delta, \epsilon\in  I}(-1)^{\tilde{b}\tilde{c}+\tilde{b}\alpha+\tilde{c}\beta+\alpha\beta+\tilde{a}\tilde{c}+\tilde{a}\alpha+\tilde{c}+\alpha}(-1)^{(\tilde{b}+\alpha+1)\tilde{a}+(\tilde{b}+\alpha+1)\epsilon+\tilde{a}\delta+\delta\epsilon}\\
    &\left\{ \theta_{\beta} \,{}_{\Lambda+\Gamma+\nabla}\, c \right\}(C^{-1})_{\beta\alpha}( \Lambda+\Gamma+\nabla)\{ \theta_{\epsilon} \,{}_{\Lambda+\nabla}\, \{ b \,{}_{\Gamma}\, \theta_{\alpha} \} \}(C^{-1})_{\epsilon\delta}(\Lambda+\nabla)\{ a \,{}_{\Lambda}\, \theta_{\delta} \},   \\       
    (\ref{Jacobi 2-3}.1)&=\displaystyle\sum_{\alpha, \beta, \delta, \epsilon\in  I}(-1)^{\tilde{a}\tilde{c}+\tilde{a}\delta+\tilde{c}\beta+\beta\delta+\tilde{b}\tilde{c}+\tilde{b}\beta+\tilde{b}+\tilde{c}+\beta}(-1)^{(\beta+\epsilon)(\delta+\alpha)+(\tilde{b}+1)(\beta+\alpha+1)+1}\\
    &\hspace{0.5cm}\left\{ \theta_{\beta} \,{}_{\Lambda+\Gamma+\nabla}\, c \right\}(C^{-1})_{\beta\alpha}( \Lambda+\Gamma+\nabla)\{ b \,{}_{\Gamma}\, \{ \theta_{\epsilon} \,{}_{\Lambda+\nabla}\, \theta_{\alpha} \} \}(C^{-1})_{\epsilon\delta}(\Lambda+\nabla)\{ a \,{}_{\Lambda}\, \theta_{\delta} \},
  \end{align*}
  \begin{align*}
    (\ref{Jacobi 3-2}.2)&=\displaystyle\sum_{\alpha, \beta, \delta, \epsilon\in  I}(-1)^{\tilde{a}\tilde{b}+\tilde{a}\delta+\tilde{b}\epsilon+\epsilon\delta+\tilde{a}\tilde{c}+\tilde{c}\epsilon}(-1)^{\tilde{c}(\epsilon+\tilde{b}+1)+\tilde{c}\beta+(\epsilon+\tilde{b}+1)\alpha+\alpha\beta}\\
    &\hspace{0.5cm}\left\{ \theta_{\beta} \,{}_{\Lambda+\Gamma}\, c \right\}(C^{-1})_{\beta\alpha}( \Lambda+\Gamma+\nabla)\{ \{ \theta_{\epsilon} \,{}_{\Lambda+\nabla}\, b \} \,{}_{\Lambda+\Gamma+\nabla}\, \theta_{\alpha} \}(C^{-1})_{\epsilon\delta}(\Lambda+\nabla)\{ a \,{}_{\Lambda}\, \theta_{\delta} \}.
  \end{align*}
Hence, 
  \begin{align*}
    &(\ref{Jacobi 1-4}.2)+(-1)^{\tilde{a}\tilde{b}+\tilde{a}+\tilde{b}}(\ref{Jacobi 2-3}.1)+(-1)^{\tilde{a}}(\ref{Jacobi 3-2}.2)\\
    &=\sum_{\alpha,\beta,\delta, \epsilon \in  I}(-1)^{bc+b\alpha+c\beta+\alpha\beta+ac+c+\alpha+ab+a+b\epsilon+\alpha\epsilon+\epsilon+a\delta+\delta\epsilon}\left\{ \theta_{\beta} \,{}_{\Lambda+\Gamma}\, c \right\}(C^{-1})_{\beta\alpha}( \Lambda+\Gamma+\nabla)\\
    &\hspace{0.5cm}  \left(\{ \theta_{\epsilon} \,{}_{\Lambda+\nabla} \{ b \,{}_\Gamma\, \theta_{\alpha}\}\}+(-1)^{\tilde{b}\epsilon+\tilde{b}+\epsilon}\{b \,_\Gamma \{ \theta_{\epsilon} \,{}_{\Lambda+\nabla}\, \theta_{\alpha}\}\}+(-1)^{\epsilon}\{\{ \theta_{\epsilon} \,{}_{\Lambda+\nabla}\, b\} \,_{\Lambda+\Gamma+\nabla}\, \theta_{\alpha}\}\right)\\
    &\hspace{0.5cm}  (C^{-1})_{\epsilon\delta}(\Lambda+\nabla)\left\{ a \,{}_{\Lambda}\, \theta_{\delta} \right\}=0.   
  \end{align*} 
Therefore, we showed the sum of each triple is zero so that the Jacobi identity holds.
  \end{proof}

\begin{thm} \label{Theorem:SUSY Dirac, part2}
Let $\mathscr{P}$ be a SUSY PVA and  $C(\Lambda)$ be the matrix in \eqref{3.3}, which is assumed to be invertible.
  \begin{itemize}
   \item[\textrm{(a)}] All the elements in $\theta_{ I}$ are central with respect to $\{\cdot {}_\Lambda \cdot\}^{D}$  in \eqref{3.5}.
   \item[\textrm{(b)}] Let $\left<\theta_{ I} \right>$ be the differential algebra  ideal of $\mathscr{P}$ generated by $\theta_{ I}$. Then the bracket $\left\{ \cdot {}_\Lambda \cdot \right\}^{D}$ induces a well-defined SUSY PVA bracket on $\mathscr{P}/\left<\theta_{ I} \right>$.
   \end{itemize}
\end{thm}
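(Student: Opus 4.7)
The plan is to follow the pattern of the PVSA analogue in Theorem~\ref{Theorem:super Dirac, part2}, proving (a) by a direct computation from the defining formula \eqref{3.5} together with the inversion identity for $C(\Lambda)$, and then deducing (b) as a formal consequence of (a), the SUSY Leibniz rule, and the sesquilinearity.

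For part (a), I would substitute $b=\theta_i$ into Definition~\ref{Defn:SUSY Dirac} to obtain
\[
\{a{}_\Lambda\theta_i\}^{D}=\{a{}_\Lambda\theta_i\}-\sum_{\alpha,\beta\in I}(-1)^{(\tilde{a}+\beta)(i+\alpha)}\{\theta_\beta{}_{\Lambda+\nabla}\theta_i\}_\rightarrow (C^{-1})_{\beta\alpha}(\Lambda+\nabla)\{a{}_\Lambda\theta_\alpha\}.
\]
The defining relation $C(\Lambda)\circ(C^{-1})(\Lambda)=\mathrm{Id}_{(r|s)}\otimes 1$, read off at the $(\alpha,i)$-matrix entry using the multiplication \eqref{3.1}, gives a sign-weighted identity whose substitution into the sum above collapses it to $\{a{}_\Lambda\theta_i\}$. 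Matching the signs produced by the product \eqref{3.1} with the factor $(-1)^{(\tilde{a}+\beta)(i+\alpha)}$ in \eqref{3.5} is the routine but most delicate step, and it proceeds in full analogy with the PVSA case. Hence $\{a{}_\Lambda\theta_i\}^{D}=0$, and the complementary identity $\{\theta_i{}_\Lambda a\}^{D}=0$ then follows from the skewsymmetry of $\{\cdot{}_\Lambda\cdot\}^{D}$ proved in Theorem~\ref{Theorem:SUSY Dirac, part1}.

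For part (b), I observe that the differential ideal $\langle\theta_I\rangle$ is spanned as a $\CC$-vector space by elements of the form $c\cdot D^{n}\theta_i$ with $c\in\mathscr{P}$, $i\in I$, $n\in\ZZ_+$. Iterating the sesquilinearity \eqref{sesqui}, part (a) implies $\{a{}_\Lambda D^{n}\theta_i\}^{D}=0$ for every $n\in\ZZ_+$. The left Leibniz rule in Definition~\ref{Def:non-local SUSYPVA} then gives
\[
\{a{}_\Lambda\,c\cdot D^{n}\theta_i\}^{D}=\{a{}_\Lambda c\}^{D}\cdot D^{n}\theta_i+(-1)^{\tilde{a}\tilde{c}+\tilde{c}}\,c\,\{a{}_\Lambda D^{n}\theta_i\}^{D}=\{a{}_\Lambda c\}^{D}\cdot D^{n}\theta_i,
\]
which lies in $\langle\theta_I\rangle(\!(\Lambda^{-1})\!)$; the analogous statement with $\langle\theta_I\rangle$ entering the first slot follows from the skewsymmetry of $\{\cdot{}_\Lambda\cdot\}^{D}$. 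Consequently $\{\cdot{}_\Lambda\cdot\}^{D}$ descends to a well-defined bilinear map on $\mathscr{P}/\langle\theta_I\rangle$, and the induced bracket inherits the SUSY PVA axioms from Theorem~\ref{Theorem:SUSY Dirac, part1}.

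The main obstacle is sign bookkeeping. Because the $\Lambda$-bracket is parity reversing, and because the matrix product in $\mathcal{M}_{(r|s)}(\Lambda)$ carries the Koszul--Quillen signs prescribed by \eqref{3.1}, together with the non-trivial commutations $\chi^{2}=-\lambda$ and $D\chi=-\chi D+2\lambda$ from \eqref{eq: chi and D}, extracting the $(\alpha,i)$-entry of $C\circ C^{-1}$ requires a careful line-by-line tracking in the spirit of the proof of Theorem~\ref{Theorem:super Dirac, part2}. Once the sign in the inversion identity is verified to coincide with $(-1)^{(\tilde{a}+\beta)(i+\alpha)}$ in \eqref{3.5}, both parts fall into place exactly as above.
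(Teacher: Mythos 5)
Your proposal is correct and follows essentially the same route as the paper: part (a) by substituting $b=\theta_i$ into \eqref{3.5} and collapsing the double sum via the $(\alpha,i)$-entry of $C(\Lambda)\circ C^{-1}(\Lambda)=\text{Id}_{(r|s)}\otimes 1$, with $\{\theta_i{}_\Lambda a\}^D=0$ then following from skewsymmetry. Your part (b) merely spells out, via sesquilinearity, the left Leibniz rule and skewsymmetry, the step the paper compresses into ``(b) is a direct consequence of (a)''.
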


\begin{proof}

   For $i\in I$ and $a\in \mathscr{P}$, we have  
   \[ \{a{}_{\Lambda} \theta_i\}^{D}=\{a{}_{\Lambda}\theta_i\}-\sum_{\alpha, \beta \in I}(-1)^{(\tilde{a}+\beta)(i+\alpha)}\{\theta_{\beta}\,{}_{\Lambda+\nabla}\,\theta_i\}_{\rightarrow}\left(C^{-1}\right)_{\beta\alpha}(\Lambda+\nabla)\{a\,{}_{\Lambda}\,\theta_{\alpha}\}.\]
Since 
  \begin{equation}
    \displaystyle\sum_{\beta\in I}(-1)^{(i+\beta+i)(\beta+\alpha)}\left\{ \theta_\beta \,{}_{\Lambda+\nabla}\, \theta_i \right\}_{\rightarrow}(C^{-1})_{\beta\alpha}(\Lambda)=\delta_{i\alpha},
  \end{equation}
 the equality $\{a {}_\Lambda \theta_{i}\}^{D}=0$ holds, and $\{\theta_i{}_\Lambda a\}^{D}=0$ also holds by the skewsymmetry of $\{\cdot {}_\Lambda \cdot\}^{D}$.  Hence (a) is proved and (b) is a direct consequence of (a).
\end{proof}

\begin{section}{ W-superalgebras and modified Dirac reduction of PVSAs} \label{Sec:Structure of W-superalgebras}
    In this section, inspired by the definition of Dirac reduced bracket, we define a modified Dirac reduced bracket on a quotient space of PVSA. We also describe Poisson structures of  W-superalgebras as modified Dirac reductions of affine PVSAs. For the further properties of W-superalgebras, we refer to \cite{Suh18, Suh20}.
    
      \begin{subsection}{Modified Dirac reduction of PVSAs} \label{subsec: Dirac modification} \hfill
    
        Let $\mathcal{P}$ be a differential algebra freely generated by a finite homogeneous set $\theta_{J}:=\{\theta_j\,|\,j\in J\}$ with an even derivation $\partial$, i.e., $\mathcal{P}\simeq \CC[\partial^n \theta_j\,|\,j\in J, n\in \ZZ_{\geq 0}]$. 
        To simplify notations, we assume that the index set $J$ is a subset of $\ZZ$ and $j\in J$ is even (resp. odd) if $\theta_j$ is even (resp. odd). 
    Consider   the differential algebra ideal $\mathcal{I}$ of $\mathcal{P}$ generated by $ \theta_I: = \{ \theta_j \, | \, j \in I \subset J\}$ and let $\widetilde{\mathcal{P}}:= \mathcal{P} /\mathcal{I}$ be the quotient algebra. Then we can identify 
  \begin{equation} \label{tilde of P}
     \widetilde{\mathcal{P}}  \simeq \CC[\partial^n \theta_j\,|\,j\in J\setminus I , n\in \ZZ_{+}]
  \end{equation}
    as differential algebras.

    From now on we simply write as $\text{Mat}_{I}=\text{Mat}_{(r|s)}$ and $\text{Id}_{I}=\text{Id}_{(r|s)}$,          
    where $r$ (resp. $s$) denotes the number of even (resp. odd) elements in $\theta_I$. Note that $\widetilde{\mathcal{M}}(\lambda):=\text{Mat}_I\otimes\widetilde{\mathcal{P}}(\!(\lambda^{-1})\!)$ is a unital associative algebra with respect to the product $\circ$ in \eqref{eq:matrix product PVA}. Then $\text{Id}_{I}\otimes 1\in \widetilde{\mathcal{M}}(\lambda)$   for the identity matrix $\text{Id}_{I}\in\text{Mat}_I$ is the unity and an element $\widetilde{A}(\lambda)\in \widetilde{\mathcal{M}}(\lambda)$ is called \textit{invertible} if there exists $\widetilde{A}^{-1}(\lambda)\in \widetilde{\mathcal{M}}(\lambda)$ such that
      \begin{equation} \label{eq: inverse in quotient space}
      \widetilde{A}(\lambda) \circ \widetilde{A}^{-1}(\lambda)= \widetilde{A}^{-1}(\lambda) \circ \widetilde{A}(\lambda)= \text{Id}_{I}\otimes 1.
      \end{equation}  
      
      Now we suppose $\mathcal{P}$ is a PVSA endowed with the Poisson bracket $\{ \cdot {}_\lambda \cdot\}$ and consider
      \begin{equation} \label{eq: matrix for quotient PVA}
         \widetilde{C}(\lambda):= \sum_{i,j\in I} e_{ij} \otimes  \pi\{ \theta_{j}\, {}_{\lambda}\, \theta_{i} \}\in \widetilde{\mathcal{M}}(\lambda),
      \end{equation}
    where $\pi : \mathcal{P}(\!(\lambda^{-1})\!)\to \widetilde{\mathcal{P}}(\!(\lambda^{-1})\!)$ is  the canonical quotient map.    
    \begin{defn}
      Assume that $\widetilde{C}(\lambda)$ in \eqref{eq: matrix for quotient PVA} is invertible and $\widetilde{C}^{-1}(\lambda)=\left(\widetilde{C}_{ij}^{-1}(\lambda)\right)_{i,j\in I}$ is its inverse. Then the \textit{modified Dirac reduced bracket  $\pi\{\cdot {}_{\lambda}\cdot\}^D : \widetilde{\mathcal{P}}\times \widetilde{\mathcal{P}} \to \widetilde{\mathcal{P}}(\!(\lambda^{-1})\!)$ on  $\widetilde{\mathcal{P}}$ associated with $\theta_I$} is the map defined by 
      \begin{equation} \label{eq: DR bracket quotient PVA}
        \pi\{a{}_\lambda b\}^D=\pi\{a{}_{\lambda}b\}
        -\sum_{i,j\in I}(-1)^{(\tilde{a}+i)(\tilde{b}+j)+i+j}\pi\{\theta_i{}_{\lambda+\partial}b\}_{\rightarrow}\,(\widetilde{C}^{-1})_{ij}(\lambda+\partial)\,\pi\{a{}_{\lambda}\theta_j\}
        \end{equation}
        for $a,b \in \widetilde{\mathcal{P}}$. By \eqref{tilde of P}, the RHS of \eqref{eq: DR bracket quotient PVA} is computed regarding $a,b\in \CC[\partial^n \theta_j\,|j\in J\setminus I, n\in \ZZ_{+}]$.
    \end{defn}
    
    Note that the bracket $\pi\{\, \cdot \, {}_\lambda \, \cdot \, \}^D$ in  \eqref{eq: DR bracket quotient PVA} is not necessarily a PVSA $\lambda$-bracket. Especially, the Jacobi identity is not guaranteed 
     since $\pi$ is not a LCA homomorphism. In other words,  
    $  \pi\{a{}_{\lambda}\pi\{b{}_{\mu}c\}\}$ may not be same as $\pi\{a{}_{\lambda}\{b{}_{\mu}c\}\}$
    for  $a,b,c \in \widetilde{\mathcal{P}}$ so that the Jacobi identity of $\{\, \cdot \, {}_\lambda \, \cdot \, \}$ does not imply that of $\pi\{\, \cdot \, {}_\lambda \, \cdot \, \}^D$.
    However, other axioms of PVSA hold for $\pi\{\, \cdot \, {}_\lambda \, \cdot \, \}^D$ by the corresponding axioms for  $\{\, \cdot \, {}_\lambda \, \cdot \, \}$.
    
    \begin{prop}
     If the modified Dirac reduced bracket in \eqref{eq: DR bracket quotient PVA} satisfies the Jacobi identity then $\widetilde{\mathcal{P}}$ is a PVSA.
    \end{prop}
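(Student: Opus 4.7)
The plan is to verify that, assuming Jacobi, the modified bracket $\pi\{\cdot{}_\lambda\cdot\}^D$ on $\widetilde{\mathcal{P}}$ satisfies the remaining PVSA axioms: sesquilinearity, skewsymmetry, and left Leibniz rule. The key observation is that since $\mathcal{I}$ is a differential ideal of $\mathcal{P}$, the canonical projection $\pi:\mathcal{P}\to\widetilde{\mathcal{P}}$ is a differential algebra homomorphism, so $\pi$ commutes with $\partial$ (and thus with $\lambda+\partial$ acting on coefficients). All axioms except Jacobi are ``local'' in the sense that they involve a single application of the bracket, and the modified formula \eqref{eq: DR bracket quotient PVA} has exactly the same structural form as the Dirac reduced bracket \eqref{2.8}, with $C^{-1}(\lambda)$ replaced by $\widetilde{C}^{-1}(\lambda)$. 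So it will suffice to mimic the arguments of Theorem \ref{Theorem:super Dirac, part1}, tracking that each step descends through $\pi$.

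First I would verify sesquilinearity. Since $\pi\{\partial a{}_\lambda b\}=\pi(-\lambda\{a{}_\lambda b\})=-\lambda\pi\{a{}_\lambda b\}$ and $\pi\{a{}_\lambda\theta_j\}$, $\pi\{\theta_i{}_{\lambda+\partial}b\}_\rightarrow$ inherit the expected $\lambda$- and $(\lambda+\partial)$-behavior from the PVSA structure on $\mathcal{P}$, the two sesquilinearity identities propagate term by term through \eqref{eq: DR bracket quotient PVA}. Admissibility (i.e.\ the output lies in $\widetilde{\mathcal{P}}(\!(\lambda^{-1})\!)$) follows from the admissibility of $\{\cdot{}_\lambda\cdot\}$ on $\mathcal{P}$, the definition of invertibility \eqref{eq: inverse in quotient space}, and the super-analogue of Lemma \ref{Lemma:Inverse operator} applied to $\widetilde{C}(\lambda)$.

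Next I would check the left Leibniz rule. The original bracket $\{\cdot{}_\lambda\cdot\}$ satisfies it on $\mathcal{P}$, so $\pi\{a{}_\lambda bc\}=\pi\{a{}_\lambda b\}\,\pi c+(-1)^{\tilde{a}\tilde{b}}\pi b\,\pi\{a{}_\lambda c\}$. For the correction sum, expanding $\pi\{a{}_\lambda \theta_i\}$ does not change (there is no $b$-dependence), while $\pi\{\theta_j{}_{\lambda+\partial}bc\}_\rightarrow$ splits via the Leibniz rule of $\{\cdot{}_\lambda\cdot\}$ on $\mathcal{P}$; combined with the sign bookkeeping this produces exactly $\pi\{a{}_\lambda b\}^D\,\pi c+(-1)^{\tilde{a}\tilde{b}}\pi b\,\pi\{a{}_\lambda c\}^D$. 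For skewsymmetry, I would substitute $\lambda\mapsto -\lambda-\partial$ in \eqref{eq: DR bracket quotient PVA} and let $\partial$ act to the left, using the skewsymmetry of $\{\cdot{}_\lambda\cdot\}$ together with the identity $(\widetilde{C}^{-1})_{ji}(\lambda)=(-1)^{ij+i+j+1}\bigl(\vert_{x=-\lambda-\partial}(\widetilde{C}^{-1})_{ij}(x)\bigr)$, which follows from $\widetilde{C}^{-1}(\lambda)\circ\widetilde{C}(\lambda)=\bigl(\widetilde{C}(\lambda)\circ\widetilde{C}^{-1}(\lambda)\bigr)^*$ exactly as in the unmodified case.

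The main obstacle---which the hypothesis of the proposition is precisely designed to bypass---is the Jacobi identity. As the authors note, $\pi$ is not an LCA homomorphism, so one cannot write $\pi\{a{}_\lambda\pi\{b{}_\mu c\}\}^D=\pi\{a{}_\lambda\{b{}_\mu c\}\}^D$; the triples of terms that canceled in Theorem \ref{Theorem:super Dirac, part1} via the Jacobi identity applied to brackets of the form $\{\alpha{}_\lambda\{\beta{}_\mu \gamma\}\}$ no longer assemble cleanly. Since the present proposition assumes Jacobi, there is nothing more to do for that axiom. Thus the proof reduces to the three routine verifications above, each of which is a direct transcription of the corresponding step in Theorem \ref{Theorem:super Dirac, part1} with $C^{-1}$ replaced by $\widetilde{C}^{-1}$ and an application of $\pi$ at the end.
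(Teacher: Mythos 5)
Your proposal is correct and follows essentially the same route as the paper: the authors likewise reduce the claim to sesquilinearity, skewsymmetry and the Leibniz rule, observing that $\pi$ is a differential algebra homomorphism so the arguments of Theorem \ref{Theorem:super Dirac, part1} carry over verbatim with $C^{-1}$ replaced by $\widetilde{C}^{-1}$, while the Jacobi identity is supplied by hypothesis. Your write-up simply spells out the details that the paper leaves implicit.
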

    
    \begin{proof}
    We need to show that \eqref{eq: DR bracket quotient PVA} satisfies the sesquilinearity, skew-symmetry and Leibniz rule. Since $\pi$ is differential algebra homomorphism, these properties can be proved by the similar proof to Theorem \ref{Theorem:super Dirac, part1}.
    \end{proof}

    Now we close this section with the following lemma, which is used to describe 
     the PVSA structures of W-superalgebras using modified Dirac reductions.
    \begin{lem} \label{lem:super modified dirac, basis change}
      Let $\theta'_I:=\{ \theta'_j \, | \, j\in I\}$ be  another basis of a vector superspace $V$ spanned by $\theta_I$. Suppose $(-1)^{\tilde{\theta}'_i}=(-1)^i=(-1)^{\tilde{\theta}_i}$. 
      If  $\widetilde{C}(\lambda)$ in \eqref{eq: matrix for quotient PVA} is invertible, then  $\overline{C}(\lambda)=\big(\pi\{\theta_j{}_{\lambda}\theta'_i\}\big)_{i,j\in I}$ is invertible and 
     the bracket \eqref{eq: DR bracket quotient PVA} can be written as
      \begin{equation} \label{eq: DR bracket basis change}
        \pi\{a{}_\lambda b\}^D=\pi\{a{}_{\lambda}b\}
          -\sum_{i,j\in I}(-1)^{(\tilde{a}+i)(\tilde{b}+j)+i+j}\pi\{\theta_i{}_{\lambda+\partial}b\}_{\rightarrow}\,\overline{C}^{-1}_{ij}(\lambda+\partial)\,\pi\{a{}_{\lambda}\theta'_j\}.
      \end{equation}
    \end{lem}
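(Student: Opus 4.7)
The plan is to reduce the lemma to elementary (super)linear algebra by writing the change of basis $\theta_I \to \theta'_I$ as multiplication by an even scalar matrix and tracking how this matrix moves through the formula \eqref{eq: DR bracket quotient PVA}.

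First, since $\theta_I$ and $\theta'_I$ are two homogeneous bases of the same vector superspace $V$ with the parity assignment $(-1)^{\tilde{\theta}_i}=(-1)^i=(-1)^{\tilde{\theta}'_i}$, there exists an invertible matrix $A=(A_{ik})_{i,k\in I}\in \text{Mat}_I(\CC)$ such that
\[ \theta'_i=\sum_{k\in I}A_{ik}\theta_k, \]
and $A_{ik}\neq 0$ forces $(-1)^i=(-1)^k$. Lifting $A$ to the even element
\[ \mathbf{A}:=\sum_{i,k\in I}e_{ik}\otimes A_{ik}\in \text{Mat}_I\otimes\widetilde{\mathcal{P}}, \]
the product rule \eqref{eq:matrix product PVA} together with $\CC$-linearity of $\pi\{\cdot{}_\lambda\cdot\}$ in the second slot gives $\overline{C}_{ij}(\lambda)=\sum_{k\in I}A_{ik}\widetilde{C}_{kj}(\lambda)$, i.e.\ $\overline{C}(\lambda)=\mathbf{A}\circ \widetilde{C}(\lambda)$.

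Since $A$ is invertible over $\CC$ and even, $\mathbf{A}$ is invertible in the associative algebra $\widetilde{\mathcal{M}}(\lambda)$ with inverse $\mathbf{A}^{-1}=\sum_{j,k}e_{kj}\otimes (A^{-1})_{kj}$. Combining this with the assumed invertibility of $\widetilde{C}(\lambda)$ yields that $\overline{C}(\lambda)$ is invertible with
\[ \overline{C}^{-1}(\lambda)=\widetilde{C}^{-1}(\lambda)\circ\mathbf{A}^{-1},\qquad \overline{C}^{-1}_{ij}(\lambda)=\sum_{k\in I}\widetilde{C}^{-1}_{ik}(\lambda)(A^{-1})_{kj}. \]
This handles the first claim.

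For the equality \eqref{eq: DR bracket basis change}, the plan is to substitute these expressions for $\overline{C}^{-1}_{ij}$ and for $\pi\{a{}_\lambda \theta'_j\}=\sum_l A_{jl}\pi\{a{}_\lambda \theta_l\}$ into the right-hand side of \eqref{eq: DR bracket basis change}. Since $(A^{-1})_{kj}$ and $A_{jl}$ are complex scalars, they commute with all operators, so one may collect $\sum_j (A^{-1})_{kj}A_{jl}=\delta_{kl}$. The resulting double sum over $i,k$ collapses to exactly the right-hand side of \eqref{eq: DR bracket quotient PVA}. The only bookkeeping point is verifying that the sign prefactor matches: since $(A^{-1})_{kj}A_{jl}$ can be nonzero only when $(-1)^j=(-1)^l$, the exponent $(\tilde{a}+i)(\tilde{b}+j)+i+j$ in \eqref{eq: DR bracket basis change} equals $(\tilde{a}+i)(\tilde{b}+l)+i+l$ on the support of the sum, matching the exponent in \eqref{eq: DR bracket quotient PVA}. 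No genuine obstacle is anticipated; the whole argument is a bilinear-algebra verification, and the parity-matching of the bases is precisely what is needed to keep the signs consistent.
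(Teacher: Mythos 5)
Your proposal is correct and follows essentially the same route as the paper: both factor the change of basis as left-multiplication by a constant invertible even matrix ($\mathbf{A}$ in your notation, $K$ in the paper's), deduce $\overline{C}^{-1}(\lambda)=\widetilde{C}^{-1}(\lambda)\circ \mathbf{A}^{-1}$, and cancel $\mathbf{A}^{-1}\mathbf{A}$ inside the double sum. Your explicit check that the parity-preservation of $A$ keeps the sign exponent constant on the support of $(A^{-1})_{kj}A_{jl}$ is a point the paper absorbs silently into its vector notation, but it is the same argument.
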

    \begin{proof}
      Take the invertible matrix $K=\sum_{i,j\in I} e_{ij}\otimes K_{ij} \in \widetilde{M}(\lambda)$ with $K_{ij}\in \CC$ satisfying 
      \begin{equation} \label{eq: basis change_1}
        \sum_{i\in I}e_i\otimes \theta'_i = K\, \Big(\sum_{j\in I} e_j\otimes \theta_j\Big).
      \end{equation}
    Observe that
     $K\, \widetilde{C}(\lambda)=\overline{C}(\lambda)$ and $\widetilde{C}(\lambda)$ is invertible. Hence the matrix $\overline{C}(\lambda)$ is also invertible and 
     \begin{equation}\label{eq: basis change_2}
       \overline{C}^{-1}(\lambda+\partial)=\widetilde{C}^{-1}(\lambda+\partial)K^{-1}.
     \end{equation}
    By \eqref{eq: basis change_1} and \eqref{eq: basis change_2}, we have
      \begin{equation}\label{eq: basis change_3}
      \sum_{j\in I} \big(\widetilde{C}^{-1}(\lambda+\partial) K^{-1} \big) \big( K (e_j \otimes \pi \{a{}_\lambda \theta_j\})\big) =   \sum_{j\in I} \overline{C}^{-1}(\lambda+\partial) (e_j \otimes \otimes \pi \{a{}_\lambda \theta'_j\}).
      \end{equation}
    By \eqref{eq: basis change_3}, we obtain
        \begin{equation}\label{eq: basis change_4}
        \begin{aligned}
        &  \sum_{i,j\in I} (-1)^{\tilde{a}\tilde{b}+\tilde{a}j}\left(e_i^T\otimes \pi\{\theta_i{}_{\lambda+\partial}b\}_{\rightarrow}\right)  \widetilde{C}^{-1}(\lambda+\partial)\left(e_j\otimes \pi\{a{}_{\lambda}\theta_j\}\right)\\
        & = \sum_{i,j\in I} (-1)^{\tilde{a}\tilde{b}+\tilde{a}j}\left(e_i^T\otimes \pi\{\theta_i{}_{\lambda+\partial}b\}_{\rightarrow}\right)  \overline{C}^{-1}(\lambda+\partial)\left(e_j\otimes \pi\{a{}_{\lambda}\theta'_j\}\right).
        \end{aligned}
      \end{equation}
    The last terms of \eqref{eq: DR bracket quotient PVA} and \eqref{eq: DR bracket basis change} equal to the LHS and RHS of \eqref{eq: basis change_4}, respectively. Hence we proved the lemma.
     \end{proof}

    \end{subsection}
    
    \begin{subsection}{Structures of W-superalgebras} \label{subsec: W-superalgebras} \hfill

      Let $\g$ be a finite simple Lie superalgebra with a nondegenerate supersymmetric invariant even bilinear form $(\cdot|\cdot)$. Suppose $F\in \g$ is an nilpotent element in an $\mathfrak{sl}_2$-triple $\{E,H,F\}$. Assume that the bilinear form is normalized by $(E|F)=1$. 
      Consider the eigenspace decomposition
    \begin{equation} \label{eq:LSA grading}
    \g=\oplus_{i\in\frac{\ZZ}{2}}\g(i),
    \end{equation}
    where $\g(i)=\{g\in \g\,|\,[\frac{H}{2},g]=i g\}$. Then $F\in \g(-1)$ and $E \in \g(1)$. Let us write
    \begin{equation} \label{eq: notation for g_?}
      \g_{\geq j}=\oplus_{i\geq j}\g(i),\quad \g_{\leq j}=\oplus_{i\leq j}\g(i),\quad \g_{>j}=\oplus_{i>j} \g(i),\quad \g_{<j}=\oplus_{i<j} \g(i)
    \end{equation}
    for each $j\in \frac{\ZZ}{2}$. In particular, we denote
    \begin{equation} \label{eq: notation2 for g_?}
      \mathfrak{m}:=\g_{\geq 1},\quad \mathfrak{n}:=\g_{>0}, \quad \mathfrak{p}:=\g_{<1}.
    \end{equation}
    On the other hand,  by the $\mathfrak{sl}_2$-representation theory, the Lie algebra $\g$ can be decomposed into 
    \begin{equation}
      \g=\g^F\oplus [E,\g]
      \end{equation}
     for $\g^F:=\{x\in \g\,|\, [F,x]=0\}\subset \g_{\leq 0}$. We denote by $a^{\sharp}$ the projection of $a\in \g$ to $\g^F$.

    Recall that $\mathcal{V}^k(\g)$ is the affine PVSA of level $k$ as defined in  Example \ref{ex: affine PVSA}. For a subspace $\mathfrak{a}$ of $\g$, the differential algebra 
    \begin{equation} \label{eq: differential algebra_P}
     \mathcal{P}(\mathfrak{a}):=S(\CC[\partial]\otimes \mathfrak{a})
    \end{equation}
    can be viewed as a subalgebra of $\mathcal{V}^k(\g)$ as differential algebras.

    \begin{defn} \label{def: def of W-alg}
    Let $\mathcal{J}_F$ be the differential algebra ideal of $\mathcal{V}^k(\g)$ generated by $\{m-(F|m)\,|\,m\in \mathfrak{m}\}$ and let $\rho: \mathcal{V}^k(\g)[\lambda]\rightarrow \mathcal{V}^k(\g)/\mathcal{J}_F[\lambda]$ be the canonical projection map. The {\it W-superalgebra of level $k$ associated with $\g$ and $F\in \g$} is defined by
    \begin{equation} 
      \mathcal{W}^k(\g, F):=\{a\in \mathcal{P}(\mathfrak{p})\,|\,\rho\{n{}_{\lambda} a\}=0 \text{ for any }n\in \mathfrak{n}\},
    \end{equation}
    where $\{\cdot {}_{\lambda} \cdot\}$ is the $\lambda$-bracket of $\mathcal{V}^k(\g)$.
    \end{defn}
    It is well known that $\mathcal{W}^k(\g, F)$ is a PVSA whose $\lambda$-bracket is induced from that of $\mathcal{V}^k(\g)$. More precisely, since there is a canonical differential algebra isomorphism $ \iota: \mathcal{V}^k(\g)/\mathcal{J}_F[\lambda] \to \mathcal{P(\mathfrak{p})}[\lambda]$, the bracket $\iota (\rho \{ \cdot {}_\lambda \cdot \})$ defined on $\mathcal{P}(\mathfrak{p})$ induces a PVSA $\lambda$-bracket on $\mathcal{W}^k(\g, F)$. In the rest of this section, we discuss  $\lambda$-brackets between  generators of the W-superalgebra described in Proposition \ref{prop: W-algebra generator}. \\

    Let us introduce a $\frac{\ZZ}{2}$-grading $\Delta$ on $\mathcal{V}^k(\g)$, called the \textit{conformal weight}. For $a\in  \g(i)$,  the conformal weight $\Delta_a$ of $a$ is defined by 
     \begin{equation}
       \Delta_{a}:=1-i.
     \end{equation}
    Let
    \begin{equation} \label{eq: property of conf wt(nonSUSY)}
      \Delta_{\partial A}=\Delta_{A}+1,\quad \Delta_{AB}=\Delta_A+\Delta_B,
    \end{equation}
     where $\Delta_A$ and $\Delta_B$ are conformal weights of homogeneous elements $A$ and $B$.
     The conformal weight $\Delta$ naturally induces the $\frac{\ZZ}{2}$-grading on the W-algebra $\mathcal{W}^k(\g, F)$, which we also denote by $\Delta$.
     Furthermore, we can find a homogeneous generating set of $\mathcal{W}^k(\g, F)$ which satisfy the properties in the following proposition. 
    
    \begin{prop} \cite{Suh18}\label{prop: W-algebra generator}
    Let $\{q_i\,|\,i\in J^F\}$ be a basis of $\g^F$, homogeneous with respect to both conformal weight and parity. Then there exists a unique subset $\{\omega_i\,|\, i\in J^F\}$ of homogeneous elements in $\mathcal{W}^k(\g, F)$ satisfying the following properties:
    \begin{itemize}
    \item[\rm(i)] $\mathcal{W}^k(\g,F)\simeq\CC[\partial^n \omega_i \,|\, i\in J^F, n\in \ZZ_{+}]$ as differential algebras,
    \item[\rm(ii)]  $\Delta_{\omega_i}=\Delta_{q_i}$ for $i\in J^F$,
    \item[\rm(iii)]  \label{prop: W-algebra generator2}
       $\omega_i$ for $i\in J^F$  is decomposed into 
      \begin{equation}
       \omega_i=q_i+\sum_{n\geq 1}\gamma_i^n 
      \end{equation}
      for $\gamma_i^n\in \mathcal{P}(\g^F)\otimes \left(\CC[\partial]\otimes [E, \g_{<0}]\right)^{\otimes n}$.
    \end{itemize}
    \end{prop}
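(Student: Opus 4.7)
The plan is to establish existence, uniqueness, and the free generation claim by induction on the conformal weight $\Delta$, leveraging the vector space decomposition $\p = \g^F \oplus [E,\g_{<0}]$. This decomposition is a direct consequence of $\mathfrak{sl}_2$-representation theory applied to $\{E,H,F\}$ acting on $\g_{\leq 0}$: every $\mathfrak{sl}_2$-isotypic component is generated over $\ad E$ by its lowest weight vector, which lies in $\g^F$. The decomposition induces an isomorphism of differential superalgebras $\mathcal{P}(\p) \simeq \mathcal{P}(\g^F) \otimes \mathcal{P}([E,\g_{<0}])$, which gives a canonical projection onto the ``pure $\g^F$ part'' and lets us build correction terms $\gamma_i^n$ tensor-power by tensor-power in a well-defined way.

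For existence, I would construct $\omega_i$ by successive approximation. Set $\omega_i^{(0)} := q_i$ and consider the obstruction $\rho\{n\,{}_\lambda\,\omega_i^{(0)}\}$ for homogeneous $n \in \n$. This obstruction lies in $\mathcal{P}(\p)[\lambda]$, is homogeneous in $\Delta$, and has strictly lower conformal weight than $\omega_i^{(0)}$ by grading considerations. The key algebraic input is that $\ad E : \g_{<0} \to \g_{\leq 1}$ is injective (a lowest-weight vector cannot appear in the image of $\ad E$), so the coefficients of the obstruction that lie in $[E,\g_{<0}]$ can be compensated: given such an obstruction, one can add a correction $\gamma_i^1 \in \mathcal{P}(\g^F)\otimes\CC[\partial]\otimes[E,\g_{<0}]$ whose bracket with $n$ cancels the leading obstruction modulo strictly lower conformal weight. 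Iterating produces the $\gamma_i^n$ for $n \geq 2$; since the conformal weight of each correction is pinned by (ii) and each iterative step strictly reduces the weight of the leading remaining obstruction, the process terminates after finitely many steps and assembles into an element $\omega_i \in \mathcal{W}^k(\g,F)$ of the required form.

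For uniqueness, suppose $\omega_i$ and $\omega'_i$ both satisfy (ii) and (iii). Their difference belongs to $\mathcal{W}^k(\g,F) \cap \bigoplus_{n\geq 1}\mathcal{P}(\g^F)\otimes(\CC[\partial]\otimes [E,\g_{<0}])^{\otimes n}$ and in particular has no summand in the pure $\mathcal{P}(\g^F)$-component of $\mathcal{P}(\p)$. Were it nonzero, extract its lowest-conformal-weight nonzero term; this term must contain at least one factor in $[E,\g_{<0}]$, and pairing against an appropriately chosen $n \in \n$ (selected, again via injectivity of $\ad E$, so that $\rho\{n\,{}_\lambda\,\cdot\}$ detects this factor at the leading order) yields a nonzero leading term in $\rho\{n\,{}_\lambda\,(\omega_i-\omega'_i)\}$, contradicting membership in $\mathcal{W}^k(\g,F)$. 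Hence $\omega_i = \omega'_i$. For (i), algebraic independence of the $\omega_i$ and their derivatives follows because their leading terms $\partial^m q_i$ are already algebraically independent in $\mathcal{P}(\p)$; conversely, given any $w \in \mathcal{W}^k(\g,F)$, subtract from $w$ a differential polynomial in the $\omega_i$'s matching its pure $\mathcal{P}(\g^F)$-part under the decomposition $\mathcal{P}(\p) \simeq \mathcal{P}(\g^F)\otimes\mathcal{P}([E,\g_{<0}])$. The difference lies in $\mathcal{W}^k(\g,F)$ with no $\mathcal{P}(\g^F)$-summand, so by the uniqueness argument applied to $0$ it vanishes, showing the $\omega_i$ generate $\mathcal{W}^k(\g,F)$ freely.

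The main obstacle is controlling the successive approximation in the super setting: at each inductive step one must verify that the leading obstruction really lands in the image of the operator induced by $\ad E$ on the appropriate tensor-power space, and that parity and conformal weight bookkeeping are consistent with the target subspace $\mathcal{P}(\g^F)\otimes(\CC[\partial]\otimes[E,\g_{<0}])^{\otimes n}$. This reduces to a surjectivity statement for a linear map built from $\ad E$, with signs governed by the Koszul-Quillen rule; it is the core technical input inherited from \cite{Suh18,Suh20}.
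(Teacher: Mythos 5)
Your proposal is correct in outline but takes a genuinely different route from the paper. The paper does not prove Proposition \ref{prop: W-algebra generator} ab initio: it cites \cite{Suh18}, where a free generating set $\{\omega'_i\}$ satisfying (i) and (ii) is produced by Drinfeld--Sokolov reduction, and then Remark \ref{rmk: reducing omega nonSUSY} upgrades it to satisfy (iii) by induction on conformal weight --- the component ${\gamma'}_i^0\in\mathcal{P}(\g^F)$ of $\omega'_i-q_i$ is a differential polynomial in the $q_\alpha$ with $\Delta_{q_\alpha}<\Delta_{q_i}$, and replacing those $q_\alpha$ by the already-corrected $\omega_\alpha$ gives $\omega_i$; uniqueness is then read off from $\mathcal{W}^k(\g,F)\cap\sum_{n\geq1}\mathcal{P}(\g^F)\otimes(\CC[\partial]\otimes[E,\g_{<0}])^{\otimes n}=0$, which the paper deduces from (i). You instead construct $\omega_i$ from scratch by killing the obstructions $\rho\{n\,{}_\lambda\,\cdot\}$ order by order and prove the intersection statement directly by a leading-term argument, from which both uniqueness and the generation half of (i) follow. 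What the paper's route buys is that (i) comes for free from the cited reduction and only the elementary weight induction remains; what your route buys is self-containedness in principle, but it must then independently establish that every obstruction is exactly cancellable at each step --- the cohomology-vanishing/nondegeneracy statement that is precisely the content of \cite{Suh18} and that you acknowledge but do not supply. Three imprecisions worth fixing: the injectivity of $\ad E$ on $\g_{<0}$ holds because $\ker(\ad E)=\g^E\subset\g_{\geq 0}$ (your parenthetical about lowest-weight vectors instead justifies the complement $\g=\g^F\oplus[E,\g]$); the nondegeneracy actually used to cancel obstructions and to detect a nonzero $[E,\g_{<0}]$-factor is that of the pairing $(F|[\cdot\,,\cdot])$ between $\n$ and $[E,\g_{<0}]$, not a map built from $\ad E$ alone; and since (ii) fixes the total conformal weight of every correction at $\Delta_{q_i}$, the termination of your iteration should be measured by the number of tensor factors in $[E,\g_{<0}]$ (equivalently the $\ad\,\frac{H}{2}$-weight carried by the non-$\g^F$ content), not by a ``strictly lower conformal weight of the leading remaining obstruction.''
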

    
    \begin{remark} \label{rmk: reducing omega nonSUSY}
     A generating set $\{\, \omega'_i \,  |\,  i\in J^F\}$ satisfying (i) and (ii) can be found by Drinfeld-Sokolov reduction (see \cite{Suh18}). Starting from  $\omega'_i$, we can find $\omega_i$ by induction on the conformal weights as follows. For $i\in J^F$, denote by $\mathfrak{p}[\Delta_i]$ the subspace of $\mathfrak{p}$ spanned by homogeneous elements whose conformal weights are strictly less than  $\Delta_{q_i}$. Then 
    \[ \omega'_i - q_i  =  \sum_{n \in \ZZ_+}{\gamma'}_i^{n}\in \mathcal{P}(\mathfrak{p}[\Delta_i]) \] 
    for some ${\gamma'}_i^{n}\in \mathcal{P}(\g^F)\otimes \left(\CC[\partial]\otimes [E, \g_{<0}]\right)^{\otimes n}$. The element ${\gamma'}_i^{0} \in \mathcal{P}(\g^F)$ can be written as a differential polynomial in $q_\alpha$'s where $\alpha \in J^F$ and $\Delta_{q_\alpha}< \Delta_{q_i}$. Consider the element ${\omega'}_i^0$ which is obtained from ${\gamma'}_i^{0}$ by  substituting $q_\alpha$'s with $\omega_\alpha$'s. Then 
    \[ \omega_i :=  \omega'_i - {\omega'}_i^0\]
    is the generator satisfying (i), (ii), (iii) in Proposition \ref{prop: W-algebra generator}.
    \end{remark}

    In addition, the uniqueness of  $\{\omega_i\, | \, i\in J^F\}$ in Proposition \ref{prop: W-algebra generator} follows from the fact that
    \begin{equation}
      \mathcal{W}^k(\g, F)\cap \sum_{n\geq 1} \mathcal{P}(\g^F)\otimes \left(\CC[\partial]\otimes [E, \g_{<0}]\right)^{\otimes n}=0.
    \end{equation}
    It is a direct consequence of the property  (i) in Proposition \ref{prop: W-algebra generator}. By the uniqueness, the map $\omega$ in Definition \ref{defn:omega} is completely determined.
    
    \begin{defn} \label{defn:omega}
    Recall the basis $\{q_i \, | \, i\in J^F\}$ of $\g^F$ and the generating set $\{\omega_i \, | \in i\in J^F\}$ of $\mathcal{W}^k(\g, F)$ in Proposition \ref{prop: W-algebra generator}.
     The injective linear map $\omega$ is defined by 
    \begin{equation} \label{eq: omega classical}
      \omega : \g^F \rightarrow \mathcal{W}^k(\g, F), \quad q_i \mapsto \omega_i.
    \end{equation}
    \end{defn}
    
    
    In order to describe the $\lambda$-bracket of $\mathcal{W}^k(\g, F)$, let us introduce some notations. 
    Let $\{q^i\,|\,i\in J^F\}$ be the basis of $\g^E:= \{ a \in \g| [E,a]=0\}$ which is dual to $\{q_i\,|\,i\in J^F\}$ in Proposition \ref{prop: W-algebra generator}, i.e., $(q^i|q_j)=\delta_{ij}$. For $\alpha_i \in \ZZ/2$ such that $q^i\in \g(\alpha_i)$, we denote 
     \begin{equation} \label{dual basis of g}
      q^i_m :=(\text{ad}F)^m q^i,   \quad   q_i^m :=k_{i,m}(\text{ad}E)^m q_i.
      \end{equation}
    Here, we have $0\leq m \leq 2\alpha_i$ and $k_{i,m}$'s are nonzero constant chosen to satisfy $(q^i_m|q_j^n)=\delta_{mn}\delta_{ij}$. For the explicit choices of $k_{i,m}$, see Section 3 of \cite{Suh20}.

    
      \begin{thm} {\cite{Suh20}} \label{Thm: suh nonSUSY}
    Recall the map $\omega$ in Definition \ref{defn:omega} and take $a\in \g(-t_1)\cap\g^F$ and $b\in \g(-t_2)\cap \g^F$. Identify the index set $J^F$ with a subset of $\ZZ$ so that $(-1)^{\tilde{q}_i} = (-1)^i$ for any $i\in J^F$.
    Then
      \begin{equation*}
        \begin{aligned}
        &\{\omega(a){}_{\lambda}\omega(b)\}=\omega([a,b])+k\lambda(a|b)\\
        &-\sum_{p\in \ZZ_{+}}\sum_{\substack{-t_2-1\prec (j_0,n_0)\prec \cdots\\ \cdots \prec(j_p,n_p)\prec t_1}}(-1)^{p+\tilde{a}\tilde{b}+\tilde{a}j_p+j_p+\tilde{b}j_0}\left(\omega([q^{j_0}_{n_0},b]^{\sharp})+(q^{j_0}_{n_0}|b)k(\lambda+\partial)\right)\\
        &\left[\prod_{t=1}^p (-1)^{j_{t-1}+j_{t-1}j_t}\left(\omega([q^{j_t}_{n_t},q^{n_{t-1}+1}_{j_{t-1}}]^{\sharp})+(q^{j_t}_{n_t}|q_{j_{t-1}}^{n_{t-1}+1})k(\lambda+\partial)\right)\right]\left(\omega([a,q_{j_p}^{n_p+1}]^{\sharp})+(a|q_{j_p}^{n_p+1})k\lambda\right)
        \end{aligned}
      \end{equation*}
      where
      \begin{itemize}
    \item   the pairs $(j_t,n_t)$ for $t=0,1,\cdots, p$ are elements of $I=\{(j,n)\,|\,j\in J^F, n=0,1, \cdots, 2\alpha_j-1\}$,
    \item   the partial order $\prec$ on $\frac{\ZZ}{2}\cup I$ is given by
    \begin{gather*}
      (j,n)\prec t \,  \Longleftrightarrow\, n-\alpha_j+1 \leq t, \quad  t\prec (j,n) \,\Longleftrightarrow\, t+1 \leq n-\alpha_j,\\
      (j,n)\prec (i,m)\, \Longleftrightarrow\, n-\alpha_j+1\leq m-\alpha_i.
    \end{gather*}
    \end{itemize}
      \end{thm}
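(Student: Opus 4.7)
The plan is to deduce this bracket formula as a direct application of the modified Dirac reduction introduced in Section \ref{subsec: Dirac modification}, specialized to the constraint set $\theta_{I}=\{q^{i}_{m}-(F|q^{i}_{m})\mid (i,m)\in I\}$ indexing a basis of $[E,\g]$. By the forthcoming Theorem~\ref{Thm: nonSUSY main} (stated in the introduction), the canonical map $\mathcal{V}^{k}(\g)/\langle\theta_{I}\rangle\to\mathcal{W}^{k}(\g,F)$ sending the class of $q_{i}$ to $\omega_{i}=\omega(q_{i})$ is a PVSA isomorphism when the left-hand side carries the modified Dirac bracket $\pi\{\cdot\,{}_{\lambda}\,\cdot\}^{D}$. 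Hence computing $\{\omega(a)\,{}_{\lambda}\,\omega(b)\}$ reduces to evaluating $\pi\{a\,{}_{\lambda}\,b\}^{D}$ for $a,b\in\g^{F}$, and the expansion on the right-hand side of the theorem should arise from a geometric series inversion of the matrix $\widetilde{C}(\lambda)$.

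First I would compute $\widetilde{C}(\lambda)$. Since the affine $\lambda$-bracket gives $\{q^{j}_{n}\,{}_{\lambda}\,q^{i}_{m}\}=[q^{j}_{n},q^{i}_{m}]+k\lambda(q^{j}_{n}|q^{i}_{m})$, and the dual normalization yields $(q^{j}_{n}|q^{i}_{m})=\delta_{ij}\delta_{mn}$ on appropriately paired elements (with the shifted indexing of Lemma \ref{lem:super modified dirac, basis change} absorbing $(F|\cdot)$), the matrix $\widetilde{C}(\lambda)$ decomposes as
\begin{equation*}
\widetilde{C}(\lambda)=k\lambda\,\mathrm{Id}_{I}\otimes 1 + N(\lambda),
\end{equation*}
where $N(\lambda)$ encodes $\pi\bigl([q^{j}_{n},q^{i}_{m}]\bigr)$ together with the constants $k(F|[q^{j}_{n},q^{i}_{m}])$. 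By the conformal-weight grading on $[E,\g]$ induced by \eqref{eq:LSA grading}, $N(\lambda)$ is strictly triangular with respect to the partial order $\prec$ of Theorem \ref{Thm: suh nonSUSY}, so its powers truncate and
\begin{equation*}
\widetilde{C}^{-1}(\lambda)=\sum_{p\geq 0}(-1)^{p}\bigl(k(\lambda+\partial)\bigr)^{-(p+1)}N(\lambda+\partial)^{\circ p}\circ\mathrm{Id}_{I}
\end{equation*}
as a finite sum when applied between homogeneous vectors.

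Next I would substitute this expansion into the modified Dirac formula \eqref{eq: DR bracket quotient PVA}. The factor $\pi\{a\,{}_{\lambda}\,\theta_{(j_{p},n_{p})}\}$ becomes $\omega([a,q^{n_{p}+1}_{j_{p}}]^{\sharp})+(a|q^{n_{p}+1}_{j_{p}})k\lambda$ after applying $\omega$ to the $\g^{F}$-component and using $[a,q^{j_{p}}_{n_{p}}]=(\cdots)^{\sharp}+(\text{terms in }[E,\g])$; the terms in $[E,\g]$ are killed in the quotient up to the constant contribution $(F|\cdot)$ that reinstates $k\lambda$. The same identification turns each internal factor $\{\theta_{(j_{t},n_{t})}\,{}_{\lambda+\partial}\,\theta_{(j_{t-1},n_{t-1})}\}$ into $\omega([q^{j_{t}}_{n_{t}},q^{n_{t-1}+1}_{j_{t-1}}]^{\sharp})+(q^{j_{t}}_{n_{t}}|q^{n_{t-1}+1}_{j_{t-1}})k(\lambda+\partial)$. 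Non-vanishing of these structure constants forces the chain of indices to satisfy $(j_{t-1},n_{t-1})\prec(j_{t},n_{t})$ by the grading constraint on $[\cdot,\cdot]$, which together with the boundary inequalities $-t_{2}-1\prec(j_{0},n_{0})$ and $(j_{p},n_{p})\prec t_{1}$ (forced by $a\in\g(-t_{1})\cap\g^{F}$, $b\in\g(-t_{2})\cap\g^{F}$) recovers exactly the index set of the stated sum. Finally, the sign $(-1)^{p+\tilde{a}\tilde{b}+\tilde{a}j_{p}+j_{p}+\tilde{b}j_{0}+\sum(j_{t-1}+j_{t-1}j_{t})}$ is obtained by bookkeeping the parity prefactor $(-1)^{(\tilde{a}+i)(\tilde{b}+j)+i+j}$ of \eqref{eq: DR bracket quotient PVA} and the Koszul signs picked up when commuting the intermediate $\theta$'s through the $\circ$-product.

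The main obstacle will be verifying this sign-combinatorics cleanly: the Koszul rule, the $(-1)^{i+j}$ factor of \eqref{eq: DR bracket quotient PVA}, the implicit sign from $(C^{-1})_{ji}$ versus $C_{ij}$, and the parities carried by $(\mathrm{ad}\,E)^{m}$ must all combine to yield the formula above. I would handle this by first proving the $p=0$ and $p=1$ cases by hand to pin down the global sign, and then proceeding by induction on $p$, using that the truncation of the geometric series is controlled by the partial order $\prec$. Once the sign bookkeeping is carried out, the identification of each matrix product in $N^{\circ p}$ with a chain in the partial order is purely combinatorial and direct.
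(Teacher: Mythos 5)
There is a genuine logical gap: your argument is circular within this paper. Theorem \ref{Thm: suh nonSUSY} is not proved here at all --- it is imported from \cite{Suh20}, where it is obtained by explicitly constructing the generators $\omega_i$ (Proposition \ref{prop: W-algebra generator} and the computation of the corrections $\gamma_i^n$) via Drinfeld--Sokolov reduction, with no reference to Dirac reduction. The paper then proves Theorem \ref{Thm: nonSUSY main} precisely by \emph{comparing} the formula of Theorem \ref{Thm: suh nonSUSY} with the independently computed modified Dirac reduced bracket of Theorem \ref{Thm: nonSUSY DR calculated}. You invert this: you invoke the ``forthcoming'' Theorem \ref{Thm: nonSUSY main} to reduce $\{\omega(a)\,{}_\lambda\,\omega(b)\}$ to $\pi\{a\,{}_\lambda\,b\}^D$. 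Since the only available proof of Theorem \ref{Thm: nonSUSY main} rests on the very statement you are trying to establish, your derivation proves nothing unless you first supply an independent proof that $\omega$ is a PVSA homomorphism --- which is exactly the content you are assuming.

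There is also a concrete computational error in your expansion of $\widetilde{C}(\lambda)$. By Lemma \ref{Lem: C(lambda) entry} one has
$\pi\{q^i_m\,{}_\lambda\,q_j^{n+1}\}=[q^i_m,q_j^{n+1}]^{\sharp}+\delta_{ij}\delta_{mn}+\delta_{ij}\delta_{m,n+1}k\lambda$,
so the invertible leading part of $\widetilde{C}(\lambda)$ is $\mathrm{Id}_I\otimes 1$, coming from the constant shift $(F|[q^i_m,q_j^{n+1}])=(q^i_{m+1}|q_j^{n+1})$ produced by the projection $\pi(a)=a^{\sharp}+(F|a)$; the term $k\lambda$ occupies only strictly lower-triangular slots with respect to $\prec$ (it appears when $m=n+1$, $i=j$). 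Your proposed splitting $\widetilde{C}(\lambda)=k\lambda\,\mathrm{Id}_I\otimes 1+N(\lambda)$ is therefore wrong, and the resulting inverse $\sum_p(-1)^p\bigl(k(\lambda+\partial)\bigr)^{-(p+1)}N(\lambda+\partial)^{\circ p}$ would introduce genuinely negative powers of $\lambda$, contradicting the fact that the right-hand side of Theorem \ref{Thm: suh nonSUSY} is polynomial in $\lambda$. The correct expansion is the finite geometric series $\widetilde{C}^{-1}=\sum_{p\ge 0}(-1)^p T(\lambda+\partial)^{\circ p}$ with $T$ nilpotent for the order $\prec$, as in Proposition \ref{Prop: inverse of C}. (The Remark following Theorem \ref{Thm: nonSUSY main} makes the same point: without the $(F|\cdot)$ shift the unprojected matrix $C(\lambda)$ is not even invertible, e.g.\ for $\g=\mathfrak{sl}_2$.) Even setting the circularity aside, these two corrections are needed before the sign bookkeeping you defer to the $p=0,1$ cases could be carried out.
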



    \end{subsection}


    \begin{subsection}{W-superalgebras via modified Dirac reduction}\ \hfill
    
    In this section, we continue using the notations defined in Section \ref{subsec: W-superalgebras}.
    Consider the subset 
    \begin{equation} \label{eq: constraints super}
      \theta_I=\{q^i_m-(F|q^i_m)\,|\,(i,m)\in I\} \subset \mathcal{V}^k(\g)
    \end{equation}
     for $I=\{(i,m)\,|\, i\in J^F, 0\leq m < 2\alpha_i\}$. The set \eqref{eq: constraints super} would play the role of $\theta_I$ in Section \ref{subsec: Dirac modification}. Recall in Theorem \ref{Thm: suh nonSUSY}  that the index set $J^F$ was considered as a subset of $\ZZ$ and assume $(-1)^{\tilde{q}_i}= (-1)^i$ for $i\in J^F$.  Hence, $(-1)^{\tilde{q}^i_m}= (-1)^{\tilde{q}_i^{m+1}}= (-1)^i$ for any $(i,m)\in I$.
     
     Let $\mathcal{I}$ be the differential algebra ideal of $\mathcal{V}^k(\g)$ generated by $\theta_I$ and denote the canonical projection map by 
     \[ \pi : \mathcal{V}^k(\g)\rightarrow  \mathcal{V}^k(\g)/{\mathcal{I}}=: \widetilde{ \mathcal{V}}^k(\g).\] Note that both $\{q^i_m\,|\,(i,m)\in I\}$ and $\{q_j^{n+1}\,|\,(j,n)\in I\}$ are the bases of $[E,\g]$. Therefore, by Lemma  \ref{lem:super modified dirac, basis change}, one can express the modified Dirac reduced bracket on $\widetilde{\mathcal{V}}^k(\g)$ associated with $\theta_I$ using 
     \begin{equation} \label{eq: C(lambda)}
      C(\lambda)=\sum_{(i,m),(j,n)\in I}e_{(j,n),(i,m)}\otimes \pi\{q^i_m{}_{\lambda}q_j^{n+1}\}\in \text{Mat}_{I}\otimes \widetilde{\mathcal{V}}^k(\g)(\!(\lambda^{-1})\!).
      \end{equation}

    
    Since the quotient space $\widetilde{\mathcal{V}}^k(\g)$ is isomorphic to $S(\CC[\partial]\otimes\g^F)$ as superalgebras, we would identify the elements of $\widetilde{\mathcal{V}}^k(\g)$ with the corresponding elements of $S(\CC[\partial]\otimes\g^F)$. Under this identification, the projection map $\pi: \mathcal{V}^k(\g) \rightarrow \widetilde{\mathcal{V}}^k(\g)\cong S(\CC[\partial]\otimes\g^F)$ is written as
    \begin{equation} \label{eq: pi}
    \pi(a)=a^{\sharp}+(F|a)
    \end{equation}
    for all $a\in \g$.
    
    \begin{lem} \label{Lem: C(lambda) entry}
    Recall the elements in \eqref{dual basis of g} and let $a\in \g(-t)\cap \g^F.$ For the map $\pi$ in  \eqref{eq: pi},  we have 
    \begin{itemize}
     \item[\rm{(1)}] $\pi\{q^i_m{}_{\lambda}q_i^{m+1}\}=1$ for any $(i,m)\in I$,
   \item[\rm{(2)}]  $\pi\{q^i_m{}_{\lambda}q_j^{n+1}\}=0$ if $n-\alpha_j+1>m-\alpha_i$,
   \item[\rm{(3)}]  $\pi\{a{}_{\lambda}q_j^{n+1}\}=0$ if $n-\alpha_j+1>t$,
   \item[\rm{(4)}]  $\pi\{q^j_n{}_{\lambda}a\}=0$ if $n-\alpha_j<-t$.
    \end{itemize}
    \end{lem}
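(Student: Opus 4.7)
The plan is to handle all four claims by direct computation from the affine bracket $\{a{}_\lambda b\}=[a,b]+k\lambda(a|b)$ on $\mathcal V^k(\g)$, followed by the explicit formula $\pi(x)=x^\sharp+(F|x)$ of~\eqref{eq: pi}. The structural ingredients I shall invoke throughout are: the degrees $q^i_m\in\g(\alpha_i-m)$ and $q_j^{n+1}\in\g(n+1-\alpha_j)$ (since $\ad F$ and $\ad E$ shift the Dynkin grading by $\mp 1$); the inclusion $\g^F\subset\g_{\leq 0}$, which forces $y^\sharp=0$ for any $y\in\g_{>0}$; the invariance $([x,y]|z)=(x|[y,z])$; and the duality $(q^i_m|q_j^n)=\delta_{ij}\delta_{mn}$. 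The orthogonality $[F,\g]\perp\g^F$ is an immediate consequence of the invariance applied to $y\in\g^F$, and will be used in (3) and (4).

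For (1), the pairing $(q^i_m|q_i^{m+1})=\delta_{m,m+1}=0$ removes the $k\lambda$-term and leaves $[q^i_m,q_i^{m+1}]\in\g(1)$. Since $\g^F\cap\g(1)=0$ the $\sharp$-part dies, and invariance gives $(F|[q^i_m,q_i^{m+1}])=(q^i_{m+1}|q_i^{m+1})=1$. For (2), the hypothesis $n-\alpha_j+1>m-\alpha_i$ places $[q^i_m,q_j^{n+1}]$ in $\g_{\geq 1}$ and also rules out the simultaneous $i=j$, $m=n+1$, so the $\sharp$- and $k\lambda$-terms both vanish; the remaining $(F|\cdot)$-contribution equals $(q^i_{m+1}|q_j^{n+1})=\delta_{ij}\delta_{mn}$, which is $0$ off the diagonal $(i,m)=(j,n)$ — the single diagonal boundary being exactly the case recorded in (1).

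For (3) and (4) the same template applies. Under the hypothesis of (3) the bracket $[a,q_j^{n+1}]\in\g_{>0}$ kills the $\sharp$-part, a degree count forces $(a|q_j^{n+1})=0$, and $(F|[a,q_j^{n+1}])=([F,a]|q_j^{n+1})=0$ because $a\in\g^F$. Under the hypothesis of (4), symmetrically, $[q^j_n,a]\in\g_{>0}$, the pairing $(q^j_n|a)=0$ by degree, and $(F|[q^j_n,a])=([F,q^j_n]|a)=(q^j_{n+1}|a)=0$ by the orthogonality $[F,\g]\perp\g^F$. The only genuinely delicate step is isolating the boundary degree-$1$ case in (2), where the bracket lands in $\g(1)$ and the $F$-pairing can a priori survive; everywhere else the conclusion is immediate from the structural facts listed above, and no calculations beyond that bookkeeping are required.
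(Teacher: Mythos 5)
Your proof is correct and follows the same route as the paper: expand the affine bracket $\{x\,{}_\lambda\, y\}=[x,y]+k\lambda(x|y)$, apply $\pi(x)=x^\sharp+(F|x)$, and use invariance of the form together with the Dynkin grading and $\g^F\subset\g_{\le 0}$; the paper only writes out case (2) and declares the other three "similar," so your explicit treatment of (1), (3), (4) is exactly the intended one. One remark in your favour: the hypothesis $n-\alpha_j+1>m-\alpha_i$ of (2) does \emph{not} exclude $(j,n)=(i,m)$ (there it reduces to $1>0$), where the bracket equals $1$ by (1); you correctly isolate this diagonal boundary via $(q^i_{m+1}|q_j^{n+1})=\delta_{ij}\delta_{mn}$, whereas the paper's proof asserts that the hypothesis rules out $(i,m)=(j,n)$, which it does not --- so (2) must be read off the diagonal, exactly as you do. A trivial slip: under the hypothesis of (2) the bracket $[q^i_m,q_j^{n+1}]$ is only guaranteed to lie in $\g_{>0}$ (the degree gap can be $\tfrac{1}{2}$), not in $\g_{\ge 1}$, but $\g_{>0}$ is all that is needed since $\g^F\subset\g_{\le 0}$.
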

    
    \begin{proof}
    Let us show (2). By \eqref{eq: pi}, we have 
    \begin{equation}
      \begin{aligned}
      \pi\{q^i_m{}_{\lambda}q_j^{n+1}\}&=[q^i_m, q_j^{n+1}]^{\sharp}+(q^i_{m+1}|q_j^{n+1})+(q^i_m|q_j^{n+1})k\lambda \\
      &=[q^i_m, q_j^{n+1}]^{\sharp}+\delta_{i,j}\delta_{m,n}+\delta_{i,j}\delta_{m,n+1}k\lambda.
      \end{aligned}
    \end{equation}
    Note that if the condition of (2) holds, then neither $(i,m)=(j,n)$ nor $(i,m)=(j,n+1)$. Also, $[q^i_m, q_j^{n+1}]\in \g((n-\alpha_j+1)-(m-\alpha_i))\subset \g_{>0}$ under the condition. Therefore, its projection to $\g^F$ vanishes. Hence we proved (2). Similarly, the others can be proved by direct computations.
    \end{proof}
    Then, by Lemma \ref{Lem: C(lambda) entry}, we have 
    \begin{equation} \label{eq: C(lambda) decompose}
    C(\lambda)=\sum_{(i,m)\in I}e_{(i,m),(i,m)}\otimes 1+\sum_{(j,n)\prec (i.m)} e_{(j,n),(i,m)} \otimes \pi\{q^i_m{}_\lambda q_j^{n+1}\},
    \end{equation}
    where $\prec$ is the partial order defined in Theorem \ref{Thm: suh nonSUSY}.

    \begin{prop} \label{Prop: inverse of C}
      The element $C(\lambda)$ in \eqref{eq: C(lambda)} is invertible and its inverse is
      \begin{equation}\label{eq:inverse of C}
        \begin{aligned}
      & (C^{-1})(\lambda)=\textup{Id}_I+\sum_{p\in \ZZ_{\geq 1}}\ \  \sum_{ \substack{(j_0,n_0)\prec(j_1,n_1) \prec \\ \prec   \cdots \prec(j_p,n_p)} }(-1)^{p+j_0+j_0j_p}(-1)^{j_{p-1}+j_{p-1}j_p}\\
      & \qquad e_{(j_0,n_0),(j_p,n_p)}\otimes
      \Big( \prod_{t=1}^{p-1} (-1)^{j_{t-1}+j_{t-1}j_t}\pi\{q^{j_t}_{n_t}\,{}_{\lambda+\partial}\,q_{j_{t-1}}^{n_{t-1}+1}\}_{\rightarrow}\Big)\pi\{q^{j_p}_{n_p}\,{}_{\lambda}\,q_{j_{p-1}}^{n_{p-1}+1}\},
        \end{aligned}
      \end{equation}
      where $(j_0, n_0), \cdots, (j_p, n_p) \in I$. 
      \end{prop}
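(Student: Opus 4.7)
The strategy is to exploit the decomposition \eqref{eq: C(lambda) decompose}, which writes $C(\lambda) = \text{Id}_I \otimes 1 + N(\lambda)$ with
\[
  N(\lambda) := \sum_{(j,n)\prec(i,m)} e_{(j,n),(i,m)} \otimes \pi\{q^i_m\,{}_{\lambda}\,q_j^{n+1}\},
\]
to prove that $N(\lambda)$ is nilpotent under the product $\circ$, invert $C(\lambda)$ through a finite geometric series, and then compute $(-N(\lambda))^{p}$ (powers taken with respect to $\circ$) entry by entry.

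For nilpotency I assign to $(i,m) \in I$ the weight $w(i,m) := m - \alpha_i \in \frac{1}{2}\ZZ$. Since $0 \leq m < 2\alpha_i$, the weights lie in the bounded interval $[-\alpha_i,\alpha_i - 1]$, so in a fixed finite subset depending only on the $\mathfrak{sl}_2$-grading of $\g$. By definition of $\prec$, a relation $(j,n)\prec(i,m)$ forces $w(i,m) - w(j,n) \geq 1$, and hence any chain $(j_0,n_0)\prec(j_1,n_1)\prec\cdots\prec(j_p,n_p)$ satisfies $w(j_p,n_p) - w(j_0,n_0) \geq p$, which bounds $p$ uniformly. Since by associativity of $\circ$ the $((j_0,n_0),(j_p,n_p))$-entry of $N(\lambda)^{p}$ is a sum over exactly such chains, $N(\lambda)^{p} = 0$ for all large $p$, so $C(\lambda)$ is invertible with
\[
  C^{-1}(\lambda) = \sum_{p \geq 0} (-1)^p N(\lambda)^{p},
\]
a finite sum.

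It remains to compute $N(\lambda)^{p}$ explicitly and match \eqref{eq:inverse of C}. I will proceed by induction on $p$ using \eqref{eq:matrix product PVA}. At each step, right-multiplication by a new factor $e_{(j_t,n_t),(j_{t+1},n_{t+1})} \otimes \pi\{q^{j_{t+1}}_{n_{t+1}}\,{}_{\lambda}\,q_{j_t}^{n_t+1}\}$ replaces every $\lambda$ in the previously accumulated product by $\lambda + \partial$ (creating the $_{\rightarrow}$-ordered nested structure of \eqref{eq:inverse of C}, with only the rightmost bracket retaining $\lambda$), and introduces a Koszul sign $(-1)^{(j_t+j_{t+1})(j_0+j_t)}$ from commuting the incoming matrix unit of parity $j_t + j_{t+1}$ past the running PVSA element of parity $j_0 + j_t$.

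The main obstacle is the sign bookkeeping at the end. After $p-1$ multiplications the total Koszul contribution $\prod_{t=1}^{p-1}(-1)^{(j_t+j_{t+1})(j_0+j_t)}$ must be reconciled modulo $2$ with the factor $(-1)^{j_0 + j_0 j_p + j_{p-1} + j_{p-1} j_p}\prod_{t=1}^{p-1}(-1)^{j_{t-1} + j_{t-1} j_t}$ in \eqref{eq:inverse of C}. Expanding each $(j_t + j_{t+1})(j_0 + j_t)$ with $j_t^2 \equiv j_t \pmod 2$, the pure $j_0 j_t$-terms telescope to $j_0 j_1 + j_0 j_p$, the square terms collect into $\sum_{t=1}^{p-1} j_t$, and the cross terms $j_t j_{t+1}$ reindex to $\sum_{t=1}^{p-1} j_{t-1}j_t + j_{p-1}j_p$. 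A direct comparison with the expansion of the target sign leaves only a surplus of $2 j_0 \equiv 0 \pmod 2$. Together with the overall $(-1)^p$ coming from $(-N)^p$, this produces \eqref{eq:inverse of C}.
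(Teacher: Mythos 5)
Your proposal is correct and follows essentially the same route as the paper: the decomposition $C(\lambda)=\mathrm{Id}_I+T(\lambda)$ from \eqref{eq: C(lambda) decompose}, inversion by the geometric series, and an inductive entrywise computation of the powers with the Koszul sign $(-1)^{(j_t+j_{t+1})(j_0+j_t)}$ matching the signs in \eqref{eq:inverse of C}. The only addition is your explicit weight argument $w(i,m)=m-\alpha_i$ showing the series terminates, a point the paper leaves implicit; your sign bookkeeping (surplus $2j_0\equiv 0$) checks out.
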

      
      \begin{proof}
      By the equation \eqref{eq: C(lambda) decompose}, we can write
      \begin{equation} \label{eq: C=I+T}
      C(\lambda)= \text{Id}_I+T(\lambda)
      \end{equation}
      for the identity $\text{Id}_I$ in $\widetilde{\mathcal{M}}(\lambda)$ and $T(\lambda)=\sum_{(j,n)\prec (i.m)} e_{(j,n),(i,m)} \otimes \pi\{q^i_m{}_\lambda q_j^{n+1}\}.$ Then $C(\lambda)$ is invertible and its inverse is   \begin{equation}\label{eq:inverse of C, proof}
      \begin{aligned}
      (C^{-1})(\lambda) & = \text{Id}-T(\lambda)+T(\lambda) \circ T(\lambda)-T(\lambda) \circ T(\lambda) \circ T(\lambda)+ \cdots \\
      &    =\text{Id}-\sum_{p\in \ZZ_+} (-1)^p \,(T(\lambda+\partial))^pT(\lambda).
      \end{aligned}
      \end{equation}
      Equivalently, \[(C^{-1})(\lambda+\partial) =\sum_{p\in \ZZ_+} (-1)^p \,(T(\lambda+\partial))^p.\]
    Let us use the induction on $p$ to compute each term of $(C^{-1})(\lambda+\partial)$. By induction hypothesis,
      \allowdisplaybreaks
       \begin{equation*}
        \begin{aligned}
        &T(\lambda+\partial)^{p+1}=T(\lambda+\partial)^p\left(T(\lambda+\partial)\right)\\
        &= T(\lambda+\partial)^p \ \ \bigg(\sum_{(j_p,n_p)\prec(j_{p+1},n_{p+1})}e_{(j_p,n_p),(j_{p+1},n_{p+1})}\otimes \pi\{q^{j_{p+1}}_{n_{p+1}}\,{}_{\lambda+\partial}\,q_{j_p}^{n_p+1}\}_{\rightarrow}\bigg)\\
        &=\sum_{(j_0,n_0)\prec \cdots \prec(j_{p+1},n_{p+1})}(-1)^{(j_p+j_{p+1})(j_0+j_p)}(-1)^{j_0+j_0j_p}\\
        &\hskip 2.5cm e_{(j_0,n_0),(j_{p+1},n_{p+1})}\otimes (-1)^{j_p+j_pj_{p+1}}\prod_{t=1}^{p+1}(-1)^{j_{t-1}+j_{t-1}j_t}\pi\{q^{j_t}_{n_t}\,{}_{\lambda+\partial}\,q_{j_{t-1}}^{n_{t-1}+1}\}_{\rightarrow} \\
        &=\sum_{(j_0,n_0)\prec \cdots \prec(j_{p+1},n_{p+1})}(-1)^{j_0+j_0j_{p+1}}e_{(j_0,n_0),(j_{p+1},n_{p+1})}\otimes
        \prod_{t=1}^{p+1}(-1)^{j_{t-1}+j_{t-1}j_t}\pi\{q^{j_t}_{n_t}\,{}_{\lambda+\partial}\,q_{j_{t-1}}^{n_{t-1}+1}\}_{\rightarrow}.
        \end{aligned}
      \end{equation*}
      Thus, we have
      \begin{equation}
        \begin{aligned}
      \left(T(\lambda+\partial)\right)^p   = & \sum_{(j_0,n_0)\prec \cdots \prec(j_p,n_p)} (-1)^{j_0+j_0j_p}\\&e_{(j_0,n_0),(j_p,n_p)}\otimes
       \prod_{t=1}^p(-1)^{j_{t-1}+j_{t-1}j_t}\pi\{q^{j_t}_{n_t}\,{}_{\lambda+\partial}\,q_{j_{t-1}}^{n_{t-1}+1}\}_{\rightarrow}
        \end{aligned}
      \end{equation}
        which implies \eqref{eq:inverse of C}.
      \end{proof}
    
      The modified Dirac reduced bracket on $\widetilde{\mathcal{V}}^k(\g)$ immediately follows from Lemma \ref{Lem: C(lambda) entry} and Proposition \ref{Prop: inverse of C}.
    
      \begin{thm} \label{Thm: nonSUSY DR calculated}
      Recall the subset $\theta_I$ of $\mathcal{V}^k(\g)$ in \eqref{eq: constraints super}. If we denote the differential algebra ideal of $\mathcal{V}^k(\g)$ generated by $\theta_I$ by $\mathcal{I}$ and the canonical projection map by $\pi: \mathcal{V}^k(\g)\rightarrow \widetilde{\mathcal{V}}^k(\g) :=  \mathcal{V}^k(\g)/\mathcal{I}$, then the modified Dirac reduced bracket $\pi\{\cdot \,{}_{\lambda}\,\cdot\}^D$ on $\widetilde{\mathcal{V}}^k(\g)$ associated with $\theta_I$ can be written as 
      \begin{equation}
        \begin{aligned}
        \pi\{a{}_{\lambda}b\}^D=\pi\{a{}_{\lambda}b\} -\sum_{p\in \ZZ_{\geq 0}} & \sum_{\substack{-t_2-1\prec (j_0,n_0)\prec \cdots\\ \cdots \prec(j_p,n_p)\prec t_1}}(-1)^{p+\tilde{a}\tilde{b}+\tilde{a}j_p+j_p+\tilde{b}j_0}\pi\{q^{j_0}_{n_0}{}_{\lambda+\partial}b\}_{\rightarrow}\\
        &\left[\prod_{t=1}^p (-1)^{j_{t-1}+j_{t-1}j_t}\pi\{q^{j_t}_{n_t}{}_{\lambda+\partial}q_{j_{t-1}}^{n_{t-1}+1}\}_{\rightarrow}\right]\pi\{a{}_{\lambda}q_{j_p}^{n_p+1}\}
        \end{aligned}
      \end{equation}
      for any $a,b\in \g^F$.
      \end{thm}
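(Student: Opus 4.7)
The plan is to unwind the definition of the modified Dirac reduced bracket, change the right-hand constraints via Lemma \ref{lem:super modified dirac, basis change}, substitute the explicit formula for $C^{-1}(\lambda+\partial)$ from Proposition \ref{Prop: inverse of C}, and finally use Lemma \ref{Lem: C(lambda) entry} to cut down the resulting sum to those indices compatible with the partial order~$\prec$.

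First, observe that both $\{q^i_m \mid (i,m)\in I\}$ and $\{q_j^{n+1} \mid (j,n)\in I\}$ are bases of $[E,\g]$, and $(-1)^{\tilde{q}^i_m}=(-1)^i=(-1)^{\tilde{q}_i^{m+1}}$. Hence, by Lemma \ref{lem:super modified dirac, basis change}, the modified Dirac reduced bracket admits the representation
\begin{equation*}
\pi\{a{}_{\lambda}b\}^D=\pi\{a{}_{\lambda}b\}
 - \!\!\!\sum_{(j_0,n_0),(j_p,n_p)\in I}\!\!\! (-1)^{(\tilde a+j_0)(\tilde b+j_p)+j_0+j_p}\,
 \pi\{q^{j_0}_{n_0}{}_{\lambda+\partial}b\}_{\rightarrow}\,
 (C^{-1})_{(j_0,n_0),(j_p,n_p)}(\lambda+\partial)\,
 \pi\{a{}_{\lambda}q_{j_p}^{n_p+1}\},
\end{equation*}
where $C(\lambda)$ is the matrix defined in \eqref{eq: C(lambda)}. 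Next I would plug in the explicit expression for $(C^{-1})(\lambda+\partial)$ given by Proposition \ref{Prop: inverse of C}; the $p=0$ contribution comes from the identity summand $\operatorname{Id}_I$, while for each $p\geq 1$ one obtains a chain $(j_0,n_0)\prec(j_1,n_1)\prec\cdots\prec(j_p,n_p)$ together with the product of $\pi$-brackets appearing in \eqref{eq:inverse of C}. Observe that the last factor $\pi\{q^{j_p}_{n_p}{}_{\lambda+\partial}q_{j_{p-1}}^{n_{p-1}+1}\}$ now carries a $_{\rightarrow}$, because in the overall composition the shift $\lambda+\partial$ acts through to $\pi\{a{}_{\lambda}q_{j_p}^{n_p+1}\}$; this produces the uniform product $\prod_{t=1}^{p}$ written in the theorem.

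To recover the stated range of summation, I would invoke Lemma \ref{Lem: C(lambda) entry}: parts (3) and (4) imply that $\pi\{a{}_{\lambda}q_{j_p}^{n_p+1}\}=0$ unless $(j_p,n_p)\prec t_1$, and $\pi\{q^{j_0}_{n_0}{}_{\lambda+\partial}b\}=0$ unless $-t_2-1\prec(j_0,n_0)$, so only those pairs compatible with the full chain $-t_2-1\prec(j_0,n_0)\prec\cdots\prec(j_p,n_p)\prec t_1$ survive. Finally, the sign in front is obtained by combining the sign $(-1)^{(\tilde a+j_0)(\tilde b+j_p)+j_0+j_p}$ from Lemma \ref{lem:super modified dirac, basis change} with the sign $(-1)^{p+j_0+j_0j_p}(-1)^{j_{p-1}+j_{p-1}j_p}$ extracted from \eqref{eq:inverse of C}; a short computation modulo~$2$, using $j_p^2\equiv j_p$ and absorbing the $(-1)^{j_{p-1}+j_{p-1}j_p}$ factor into the $t=p$ term of the product, yields exactly the sign $(-1)^{p+\tilde{a}\tilde{b}+\tilde{a}j_p+j_p+\tilde{b}j_0}$ appearing in the statement.

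The only delicate point is purely bookkeeping: one has to be careful that the $_{\rightarrow}$ operator in the expansion of $(C^{-1})(\lambda+\partial)$ is propagated correctly through the product of brackets and finally acts on $\pi\{a{}_{\lambda}q_{j_p}^{n_p+1}\}$, so that the factor previously labelled by ``$t=p-1$'' in Proposition \ref{Prop: inverse of C} (which there had no $_{\rightarrow}$) acquires a $_{\rightarrow}$ and fits the uniform product $\prod_{t=1}^{p}$ in the theorem. Once this is verified, and the signs are matched as above, the theorem follows.
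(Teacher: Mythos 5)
Your proposal is correct and follows essentially the same route as the paper, which derives the theorem by combining Lemma \ref{lem:super modified dirac, basis change}, the explicit inverse in Proposition \ref{Prop: inverse of C}, and the vanishing conditions of Lemma \ref{Lem: C(lambda) entry}; your sign bookkeeping ($(\tilde a+j_0)(\tilde b+j_p)+j_0+j_p$ combined with $p+j_0+j_0j_p$ reducing to $p+\tilde a\tilde b+\tilde a j_p+j_p+\tilde b j_0$) and your remark about the $_{\rightarrow}$ propagating to the last factor of the product are exactly the details the paper leaves implicit.
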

    
      Note that the bracket $\{\cdot {}_{\lambda} \cdot\}$ in Theorem \ref{Thm: nonSUSY DR calculated} is the $\lambda$-bracket for $\mathcal{V}^k(\g)$ given in Example \ref{ex: affine PVSA}.
    Now, one can deduce the following theorem.
    
      \begin{thm} \label{Thm: nonSUSY main}
        The map $\omega: \g^F \rightarrow \mathcal{W}^k(\g,F)$ in \eqref{eq: omega classical} can be uniquely extended to the differential algebra isomorphism
        \[\omega: \widetilde{\mathcal{V}}^k(\g) \rightarrow \mathcal{W}^k(\g, F).\] 
         Moreover, the map $\omega$ naturally induces the PVSA $\lambda$-bracket of $\widetilde{\mathcal{V}}^k(\g) $, which is identical to the modified Dirac reduced bracket $\pi\{\cdot\,{}_{\lambda}\, \cdot\}^D$ in Theorem \ref{Thm: nonSUSY DR calculated}. 
     In other words, $\omega$ is a PVSA isomorphism between  $\left(\widetilde{\mathcal{V}}^k(\g), \pi\{\cdot{}_{\lambda}\cdot\}^D\right)$ and $\left(\mathcal{W}^k(\g, F), \{\cdot{}_{\lambda}\cdot\}\right)$.
      \end{thm}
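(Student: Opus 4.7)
The plan is to establish $\omega$ as a differential superalgebra isomorphism first and then verify the bracket identity on the generating subspace $\g^F$, whereupon sesquilinearity and the Leibniz rule will force agreement on all of $\widetilde{\mathcal{V}}^k(\g)$. Proposition~\ref{prop: W-algebra generator}(i) presents $\mathcal{W}^k(\g,F)$ as the free differential superalgebra on $\{\omega_i\,|\,i\in J^F\}$, while $\widetilde{\mathcal{V}}^k(\g)\cong S(\CC[\partial]\otimes\g^F)$ is freely generated by $\{q_i\,|\,i\in J^F\}$; hence the linear map $q_i\mapsto\omega_i$ extends uniquely to a differential superalgebra isomorphism, which I will continue to denote by $\omega$.

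For the bracket identification, observe that both $\pi\{\cdot\,{}_{\lambda}\,\cdot\}^{D}$ and the pullback $\omega^{-1}\{\omega(\cdot)\,{}_{\lambda}\,\omega(\cdot)\}$ satisfy sesquilinearity, skew-symmetry and the Leibniz rule: the former by direct inspection of \eqref{eq: DR bracket quotient PVA} (as already noted in the paper following its definition), and the latter because $\mathcal{W}^k(\g,F)$ is a PVSA and $\omega$ is a $\CC[\partial]$-module and superalgebra isomorphism. Consequently, each bracket is completely determined by its restriction to $\g^F\times\g^F$, so it suffices to prove
\[
\omega\bigl(\pi\{a\,{}_{\lambda}\,b\}^{D}\bigr)=\{\omega(a)\,{}_{\lambda}\,\omega(b)\}\qquad \text{for all } a,b\in\g^F.
\]
Comparing Theorem~\ref{Thm: nonSUSY DR calculated} with Theorem~\ref{Thm: suh nonSUSY}, the two expansions are indexed by identical $\prec$-chains $(j_0,n_0)\prec\cdots\prec(j_p,n_p)$ lying between $-t_2-1$ and $t_1$, with matching sign prefactors, so the identity reduces to showing that each factor $\omega\bigl(\pi\{x\,{}_{\lambda}\,y\}\bigr)$ agrees with the corresponding W-superalgebra factor $\omega([x,y]^\sharp)+(x|y)k\lambda$.

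Using $\pi(z)=z^\sharp+(F|z)$ and the affine bracket $\{x\,{}_{\lambda}\,y\}=[x,y]+k\lambda(x|y)$, this slotwise match reduces to showing that the scalar discrepancy $(F|[x,y])=([F,x]|y)$ vanishes at every position of the expansion. This is a direct consequence of the partial order conditions: for the leading term $\pi\{a\,{}_{\lambda}\,b\}$ and the innermost factor $\pi\{a\,{}_{\lambda}\,q_{j_p}^{n_p+1}\}$, $[F,a]=0$ because $a\in\g^F$; for the outermost factor $\pi\{q^{j_0}_{n_0}\,{}_{\lambda+\partial}\,b\}$, the pairing $([F,q^{j_0}_{n_0}]|b)=(q^{j_0}_{n_0+1}|b)$ is nonzero only when $n_0-\alpha_{j_0}=-t_2-1$, which is excluded by $-t_2-1\prec(j_0,n_0)$; and for each intermediate factor $\pi\{q^{j_t}_{n_t}\,{}_{\lambda+\partial}\,q_{j_{t-1}}^{n_{t-1}+1}\}$, the pairing $(q^{j_t}_{n_t+1}|q_{j_{t-1}}^{n_{t-1}+1})=\delta_{j_t,j_{t-1}}\delta_{n_t,n_{t-1}}$ would force $(j_t,n_t)=(j_{t-1},n_{t-1})$, contradicting the strict relation $(j_{t-1},n_{t-1})\prec(j_t,n_t)$. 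I expect the main obstacle to be purely combinatorial: matching the sign prefactors produced by the expansion of $(C^{-1})(\lambda)$ in Proposition~\ref{Prop: inverse of C} substituted into \eqref{eq: DR bracket quotient PVA} against the signs appearing in Theorem~\ref{Thm: suh nonSUSY}. Once this bookkeeping is settled, the bracket identity on $\g^F$ holds, yielding the desired PVSA isomorphism; as a byproduct, the modified Dirac reduced bracket on $\widetilde{\mathcal{V}}^k(\g)$ inherits the Jacobi identity from $\mathcal{W}^k(\g,F)$.
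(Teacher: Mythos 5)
Your proposal is correct and follows essentially the same route as the paper: uniqueness and bijectivity of the extension come from Proposition \ref{prop: W-algebra generator}, and the bracket identity is obtained by matching the expansions of Theorem \ref{Thm: suh nonSUSY} and Theorem \ref{Thm: nonSUSY DR calculated} factor by factor, the key point being that $\pi(z)=z^\sharp+(F|z)$ contributes no constant term in any slot because of the strictness of $\prec$ and $a,b\in\g^F$. In fact you spell out the vanishing of $(F|[x,y])$ at each position more explicitly than the paper, which simply records the identity for the intermediate factors and asserts the rest.
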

      
      \begin{proof}
       The unique extension of $\omega$ and its bijectivity are ensured by Proposition \ref{prop: W-algebra generator}. Therefore, it is enough to check that $\omega$ is a PVSA homomorphism.
    Observe that
      \begin{equation}
        \pi\{q^{j_t}_{n_t}{}_{\lambda+\partial}q_{j_{t-1}}^{n_{t-1}+1}\}_{\rightarrow}=[q^{j_t}_{n_t},q_{j_{t-1}}^{n_{t-1}+1}]^{\sharp}+(q^{j_t}_{n_t}|q_{j_{t-1}}^{n_{t-1}+1})k(\lambda+\partial)
      \end{equation}
      for $(j_{t-1},n_{t-1})\prec (j_t, n_t)$.
      Thus, the desired statement immediately follows from Theorem \ref{Thm: suh nonSUSY} and Theorem \ref{Thm: nonSUSY DR calculated}.
      \end{proof}
    
      \begin{remark}
        For the subset  $\theta_I\subset \mathcal{V}^k(\g)$ in \eqref{eq: constraints super}, we cannot get invertible matrix if we consider
        \begin{equation} \label{eq: C(lambda) no projection}
          C(\lambda)=\sum_{(i,m),(j,n)\in I}e_{(j,n),(i,m)}\otimes \{q^i_m{}_{\lambda}q_j^{n+1}\}\in \text{Mat}_{I}\otimes\mathcal{V}^k(\g)\,(\!(\lambda^{-1})\!)
      \end{equation}
      instead of $\eqref{eq: C(lambda)}$.
       For example, let $\g=\mathfrak{sl}_2(\CC)$ and denote $F=e_{21}, E=e_{12}, H=e_{11}-e_{22}$. Then $\{F\}$ and $\{E\}$ are dual bases of $\g^F$ and $\g^E$, respectively, and  $C(\lambda)$ in \eqref{eq: C(lambda) no projection} is
      \begin{equation*}
        \begin{aligned}
        C(\lambda)&=
        \begin{pmatrix}
          \left\{E\ {}_{\lambda} -\frac{1}{2}H\right\} & \left\{E\ {}_{\lambda}-\frac{1}{2}E\right\} \\
          \left\{-H\ {}_{\lambda}-\frac{1}{2}H\right\} & \left\{-H\ {}_{\lambda}-\frac{1}{2}E\right\}
        \end{pmatrix}=
        \begin{pmatrix}
              E    & 0 \\
          k\lambda & E
        \end{pmatrix},
      \end{aligned}
      \end{equation*}
      which is not invertible.
      Therefore, we cannot use the Dirac reduction in Section \ref{Subsec:Dirac PVA}.
      \end{remark}
    
      We close this section by computing the modified Dirac reduced bracket in Theorem \ref{Thm: nonSUSY DR calculated} when $\g=\mathfrak{osp}(1|2)$. 
      
        \begin{ex} \label{ex: osp nonSUSY}
        Consider the Lie superalgebra $\g:=\mathfrak{osp}(1|2)\subset \mathfrak{gl}(2|1)$ equipped with the supertrace form $(\cdot\,|\,\cdot)$ given by 
        \begin{equation*}
          (A|B)=\text{str}(AB).
        \end{equation*}
        Denote the index set by $J=\{1, 2 , \bar{1}\}$, where $1, 2$ are even indices and $\bar{1}$ is an odd index. Let
        \begin{equation*}
          E=e_{12},\quad e=e_{1\bar{1}}+e_{\bar{1}2},\quad H=e_{11}-e_{22},\quad f=e_{\bar{1}1}-e_{2\bar{1}},\quad F=e_{21}.
        \end{equation*}
    Take the basis $\{q_a:=F, \, q_b:=f\}$ of $\g^F$ and its dual basis $\{q^a:=E,\, q^b:=\frac{1}{2}e\}$ of $\g^E$. 
        Order the index set $I$ for the matrix $C(\lambda)$ in \eqref{eq: C(lambda)} as
        \[I:=\{(a,0), (a,1), (b,0)\}.\]
       Then $C(\lambda)$ is 
        \begin{equation*}
          C(\lambda)=
          \begin{pmatrix}
            1 & \pi\{q^a_1{}_{\lambda}q_a^1\} & 0\\
            0 &               1               & 0\\
            0 &               0               & 1
          \end{pmatrix}
          =
          \begin{pmatrix}
            1 & k\lambda & 0\\
            0 &     1    & 0\\
            0 &     0    & 1
          \end{pmatrix}
        \end{equation*}
        and its inverse is 
        \begin{equation*}
          C^{-1}(\lambda)=
          \begin{pmatrix}
            1 & -k\lambda & 0\\
            0 &     1    & 0\\
            0 &     0    & 1
          \end{pmatrix}.
        \end{equation*}
        Therefore, for any two elements of $\CC[F, f]$, we can compute the bracket $\pi\{\cdot {}_{\lambda} \cdot\}^D$ between them. In particular,
        \begin{equation*}
          \pi\{F{}_{\lambda}f\}^D=-\left(f\left(-\frac{1}{2}k\lambda\right)+k(\lambda+\partial)(-f)\right)=\frac{3}{2}k\lambda f+k\partial f.
        \end{equation*}
      \end{ex}
    \end{subsection}
    \end{section}

\begin{section}{SUSY W-algebras and modified Dirac reduction of SUSY PVAs } \label{Sec:Structure of SUSY W-algebra}
        In this section,  we define a modified Dirac reduced bracket on a quotient space of SUSY PVA. We also show that SUSY W-algebras can be described by modified Dirac reductions of SUSY affine PVAs. For further properties of SUSY W-algebras, we refer to \cite{MRS21,Suh20}.
        
        \begin{subsection} {Modified Dirac reduction of SUSY PVAs} \label{subsec: SUSY DR for quotient space} \hfill

        Let $\mathscr{P}$ be a differential algebra freely generated by a finite homogeneous set $\theta_{J}:=\{\theta_j\,|\,j\in J\}$ with an odd derivation $D$, i.e., $\mathscr{P}\simeq \CC[D^n \theta_j\,|\,j\in J, n\in \ZZ_{+}]$. Assume that the index set $J$ is a subset of $\ZZ$ and $j\in J$ is even (resp. odd) if $\theta_j$ is even (resp. odd). 
        Consider   the differential algebra ideal $\mathscr{I}$ of $\mathscr{P}$ generated by $ \theta_I: = \{ \theta_j \, | \, j \in I \subset J\}$ and let $\widetilde{\mathscr{P}}:= \mathscr{P} /\mathscr{I}$ be the quotient algebra. Then we can identify 
        \begin{equation} \label{eq:tilde of SUSY P}
         \widetilde{\mathscr{P}}  \simeq \CC[D^n \theta_j\,|\,j\in J\setminus I , n\in \ZZ_{+}]
        \end{equation}
        as differential algebras.
        The space $\widetilde{\mathscr{M}}(\Lambda):=\text{Mat}_{I}\otimes \widetilde{\mathscr{P}}(\!(\Lambda^{-1})\!)$ is an associative algebra with unity $\text{Id}_I \otimes 1$ for the product defined in \eqref{3.1}. An invertible element in $\widetilde{\mathscr{M}}(\Lambda)$ is also defined as in \eqref{eq: SUSY invertible element}.  Now we suppose that $\mathscr{P}$ is a SUSY PVA with the $\Lambda$-bracket $\{ \cdot {}_\Lambda \cdot \}$ and consider the following odd element
         \begin{equation} \label{eq: C(Lambda) SUSY}
           \widetilde{C}(\Lambda):=\sum_{i,j\in I}e_{ij}\otimes \pi\{\theta_j{}_{\Lambda}\theta_i\}\in \widetilde{\mathscr{M}}(\Lambda)
         \end{equation}
        where $\pi: \mathscr{P}(\!( \Lambda^{-1})\!) \to \widetilde{\mathscr{P}}(\!(\Lambda^{-1})\!)$ is the canonical quotient map of differential algebras.
        
        \begin{defn} 
          Assume that $\widetilde{C}(\Lambda)$ in \eqref{eq: C(Lambda) SUSY} is invertible and let $\widetilde{C}^{-1}(\Lambda)\in \widetilde{\mathscr{M}}(\Lambda)$ be its inverse. Then the \textit{modified Dirac reduced bracket on $ \widetilde{\mathscr{P}}$ associated with $\theta_I$} is a bilinear map
          \begin{equation}
            \pi\{\cdot{}_{\Lambda}\cdot\}^D:  \widetilde{\mathscr{P}}\times  \widetilde{\mathscr{P}}\, \longrightarrow \,  \widetilde{\mathscr{P}}(\!(\Lambda^{-1})\!)
          \end{equation}
          given by
          \begin{equation} \label{eq: SUSY DR bracket for quotient}
            \pi\{a{}_{\Lambda} b\}^{D}=\pi\{a{}_{\Lambda}b\}-\sum_{i,j \in I}(-1)^{(\tilde{a}+j)(\tilde{b}+i)}\pi\{\theta_{j}\,{}_{\Lambda+\nabla}\,b\}_{\rightarrow} \widetilde{C}^{-1}_{ji}(\Lambda+\nabla)\pi\{a\,{}_{\Lambda}\,\theta_{i}\},
          \end{equation}
          where $\widetilde{C}^{-1}_{ij}(\Lambda)$ denotes the $ij$-entry of $\widetilde{C}^{-1}(\Lambda)$ and $a, b$ are elements in  $\widetilde{\mathscr{P}}.$ Regard $a,b\in \CC[D^n\theta_j\,|j\in J\setminus I, n\in \ZZ_+]$, when we compute the RHS.
        \end{defn}
        
        In general, the modified bracket \eqref{eq: SUSY DR bracket for quotient} does not give a SUSY PVA structure on $\widetilde{\mathscr{P}}$.  More precisely, the modified $\Lambda$-bracket satisfies the sesquilinearity, skew-symmetry and Leibniz rule, but Jacobi identity for SUSY PVAs. That is because the quotient map $\pi$ is only a differential algebra homomorphism, not a SUSY LCA algebra homomorphism.
        
        \vskip 2mm
        
        Now we introduce the SUSY analogue of Lemma \ref{lem:super modified dirac, basis change}.
        \begin{lem} \label{lem:SUSY modified dirac, basis change}
          Let $\theta'_I:=\{ \theta'_j \, | \, j\in I\}$ be  another basis of a vector superspace $V$ spanned by $\theta_I$ and suppose $(-1)^{\tilde{\theta}'_i}=(-1)^i=(-1)^{\tilde{\theta}_i}$. 
          If  $\widetilde{C}(\Lambda)$ in \eqref{eq: C(Lambda) SUSY} is invertible, then  $\overline{C}(\Lambda)=\big(\pi\{\theta_j{}_{\Lambda}\theta'_i\}\big)_{i,j\in I}$ is invertible and 
         the bracket \eqref{eq: SUSY DR bracket for quotient} can be written as
          \begin{equation} \label{eq: SUSY DR bracket basis change}
            \pi\{a{}_\Lambda b\}^D=\pi\{a{}_{\Lambda}b\}
              -\sum_{i,j\in I}(-1)^{(\tilde{a}+j)(\tilde{b}+i)}\pi\{\theta_j{}_{\Lambda+\nabla}b\}_{\rightarrow}\,\overline{C}^{-1}_{ji}(\Lambda+\nabla)\,\pi\{a{}_{\Lambda}\theta'_i\}.
          \end{equation}
        \end{lem}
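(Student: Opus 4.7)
The plan is to follow essentially the same strategy as in Lemma \ref{lem:super modified dirac, basis change}, since the change-of-basis matrix is a constant even matrix and therefore commutes trivially with the $\Lambda$-shift operator $\nabla$. Specifically, I would first exhibit an invertible matrix $K=\sum_{i,j\in I} e_{ij}\otimes K_{ij}\in\text{Mat}_I\otimes\CC\hookrightarrow \widetilde{\mathscr{M}}(\Lambda)$ with $K_{ij}\in\CC$ encoding the base change, so that
\begin{equation*}
    \sum_{i\in I} e_i\otimes\theta'_i \;=\; K\,\Big(\sum_{j\in I} e_j\otimes \theta_j\Big).
\end{equation*}
Since each $\theta'_i$ has the same parity as $\theta_i$, the scalar $K_{ij}$ vanishes unless $i\equiv j\pmod 2$, so $K$ is an even element of $\text{Mat}_I$. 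Because its entries lie in $\CC$, the product $K\circ A(\Lambda)$ in $\widetilde{\mathscr{M}}(\Lambda)$ reduces to the ordinary matrix-times-matrix multiplication on the coefficient side, with no $\nabla$-shifts produced.

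Next I would verify $K\circ\widetilde{C}(\Lambda)=\overline{C}(\Lambda)$ by a direct computation: expanding the $(a,b)$-entry of the left-hand side gives $\sum_c K_{ac}\pi\{\theta_b\,{}_{\Lambda}\,\theta_c\}=\pi\{\theta_b\,{}_{\Lambda}\,\sum_c K_{ac}\theta_c\}=\pi\{\theta_b\,{}_{\Lambda}\,\theta'_a\}$, which matches the corresponding entry of $\overline{C}(\Lambda)$. Since both $K$ and $\widetilde{C}(\Lambda)$ are invertible in $\widetilde{\mathscr{M}}(\Lambda)$, this identity shows $\overline{C}(\Lambda)$ is also invertible with
\begin{equation*}
    \overline{C}^{-1}(\Lambda)\;=\;\widetilde{C}^{-1}(\Lambda)\circ K^{-1}.
\end{equation*}

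For the final step I would rewrite the last term of \eqref{eq: SUSY DR bracket for quotient} as a triple composition in $\widetilde{\mathscr{M}}(\Lambda)$,
\begin{equation*}
    \Bigl(\sum_{j\in I}e_j^T\otimes\pi\{\theta_j\,{}_{\Lambda+\nabla}\,b\}_{\rightarrow}\Bigr)\circ\widetilde{C}^{-1}(\Lambda+\nabla)\circ\Bigl(\sum_{i\in I}e_i\otimes\pi\{a\,{}_{\Lambda}\,\theta_i\}\Bigr),
\end{equation*}
where the SUSY sign $(-1)^{(\tilde{a}+j)(\tilde{b}+i)}$ is produced (as in the proof of Proposition \ref{prop:Trivial example, SUSY PVA}) by reordering the parities across $\circ$. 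Inserting the identity $K^{-1}\circ K$ between the middle factor and the rightmost column vector, and using the two identities $\widetilde{C}^{-1}(\Lambda+\nabla)\circ K^{-1}=\overline{C}^{-1}(\Lambda+\nabla)$ together with $K\,(\sum_i e_i\otimes\pi\{a\,{}_\Lambda\,\theta_i\})=\sum_i e_i\otimes\pi\{a\,{}_\Lambda\,\theta'_i\}$, converts the expression into the right-hand side of \eqref{eq: SUSY DR bracket basis change}. The main point to check will be that the SUSY prefactor $(-1)^{(\tilde{a}+j)(\tilde{b}+i)}$ remains unchanged after replacing $\theta_i$ by $\theta'_i$; this holds precisely because $K$ is even and block-diagonal with respect to parity, so the summation index $i$ carries the same parity on both sides. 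No further subtlety arises from the odd derivation $D$ or the odd indeterminate $\chi$, because $K$ does not interact with $\nabla$.
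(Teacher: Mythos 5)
Your proposal is correct and follows essentially the same route as the paper, which proves the SUSY case by transporting the argument of Lemma \ref{lem:super modified dirac, basis change}: introduce the constant even change-of-basis matrix $K$, check $K\circ\widetilde{C}(\Lambda)=\overline{C}(\Lambda)$ so that $\overline{C}^{-1}(\Lambda+\nabla)=\widetilde{C}^{-1}(\Lambda+\nabla)\circ K^{-1}$, and insert $K^{-1}\circ K$ into the matrix form of the reduction term. Your added remarks that $K$ is parity-block-diagonal (hence even) and does not interact with $\nabla$ correctly identify why the SUSY sign $(-1)^{(\tilde{a}+j)(\tilde{b}+i)}$ is preserved.
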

        \begin{proof}
        A similar proof to  Lemma \ref{lem:super modified dirac, basis change} works.
        \end{proof}

        \end{subsection}
        
        \begin{subsection}{Structures of SUSY W-algebras} \label{subsec: SUSY W-algebras} \hfill
        
          Let $\g$ be a finite simple Lie superalgebra and $\mathfrak{s}=\text{Span}_{\CC}\{E,e,H=2x,f,F\}$ be a subalgebra isomorphic to $\mathfrak{osp}(1|2)$.  Here, $\{E, H, F\}$ is an $\mathfrak{sl}_2$-triple and $e,f$ are odd elements of $\g$ satisfying $[e,e]=2E$,  $[f,f]=-2F$, $[H,f]=-f$ and $[H,e]=e$. Assume that $\g$ is equipped with a nondegenerate supersymmetric invariant even bilinear form $(\cdot\,|\,\cdot)$ such that 
        $(E|F)=2(x|x)=1.$
        We can consider the eigenspace decomposition with respect to $\ad x$:
        \begin{equation} \label{eq: LSA grading SUSY}
          \g=\oplus_{i\in \frac{\ZZ}{2}}\g(i).
        \end{equation}
        In this section, we use the notation \eqref{eq: notation for g_?} and denote
        \begin{equation} \label{eq:SUSY n and p}
          \mathfrak{n}:= \g_{>0}, \quad \mathfrak{p}:=\g_{\leq 0}.
        \end{equation}
        Beware of that $\mathfrak{p}$ is different from the $\mathfrak{p}$ in Section \ref{subsec: W-superalgebras}. Here, $\mathfrak{p}$ is the complement of $\mathfrak{n}$ with respect to \eqref{eq: LSA grading SUSY}.
        On the other hand, by the $\mathfrak{osp}(1|2)$-representation theory, $\g$ can be decomposed into
        \begin{equation} \label{eq:decompose via e,f}
          \g=\g^f\oplus[e,\g],
        \end{equation}
        where $\g^f= \{ a\in \g\, | \, [f,a]=0\}\subset \g_{\leq 0}$. For each $a\in \g$, denote by $a^{\sharp}$ its projection to $\g^f$ with respect to \eqref{eq:decompose via e,f}.
        
        \vskip 2mm
          Recall that $\mathscr{V}^k(\bar{\g})$ is the affine SUSY PVA (see Example \ref{Ex:affine SUSY PVA}). For a subspace $\mathfrak{a}$ of $\g$, the 
        differential algebra $\mathscr{P}(\bar{\mathfrak{a}}):=S\left(\CC[D]\otimes \bar{\mathfrak{a}}\right)$ can be viewed as a subalgebra of $\mathscr{V}^k(\bar{\g})$ as differential algebras.
        
        \begin{defn}
         Let $\mathscr{J}_f$ be the differential algebra ideal of $\mathscr{V}^k(\bar{\g})$ generated by $\{\bar{n}-(f|n)\,|\,n\in \mathfrak{n}\}$ and let $\rho: \mathscr{V}^k(\bar{\g})[\Lambda]\rightarrow \mathscr{V}^k(\bar{\g})/\mathscr{J}_f[\Lambda]$ be the canonical projection. Then for $k\in \CC$, the {\it SUSY W-algebra of level $k$ associated with $\g$ and $f$} is defined by
          \begin{equation}
            \mathcal{W}^k(\bar{\g},f):=\{a\in \mathscr{P}(\bar{\mathfrak{p}})\,|\,\rho\{\bar{n}{}_{\Lambda}a\}=0 \text{ for any } n\in \mathfrak{n} \}
          \end{equation}
        where $\{\cdot {}_{\Lambda} \cdot\}$ is the $\Lambda$-bracket of $\mathscr{V}^k(\bar{\g})$.
        \end{defn}

        It is well-known that the SUSY W-algebra $\mathcal{W}^k(\bar{\g}, f)$ is a SUSY PVA with $\Lambda$-bracket naturally induced from that of $\mathscr{V}^k(\bar{\g})$. More precisely, for the canonical differential algebra isomorphism $\iota: \mathscr{V}^k(\bar{\g})/\mathscr{J}_f[\Lambda] \to \mathscr{P}(\bar{\mathfrak{p}})[\Lambda]$, the bracket $\iota (\rho\{\cdot {}_\Lambda \cdot \})$ defined on $\mathscr{P}(\bar{\mathfrak{p}})$ induces a SUSY PVA $\Lambda$-bracket on $\mathcal{W}^k(\bar{\g}, f)$.
        
        Let us introduce a $\frac{\ZZ}{2}$-grading $\Delta$ on $\mathscr{V}^k(\bar{\g})$ called \textit{conformal weight}. The conformal weight $\Delta_{\bar{a}}$ of $\bar{a}$ for $a\in \g(i)$  is defined by 
        \begin{equation}
          \Delta_{\bar{a}} := \frac{1}{2}-i,
        \end{equation} 
        and for homogeneous elements $A,B\in \mathscr{V}^k(\bar{\g})$, the conformal weights of $DA$ and $AB$ are 
        \begin{equation}
          \Delta_{D A}=\Delta_A+\frac{1}{2}, \quad \Delta_{AB}=\Delta_{A}+\Delta_{B}.
        \end{equation}
        The conformal weight $\Delta$ naturally induces the $\frac{\ZZ}{2}$-grading on the SUSY W-algebra $\mathcal{W}^k(\bar{\g}, f)$. Furthermore, there is a set of free generators of $\mathcal{W}^k(\bar{\g},f)$, which satisfies the properties in the following proposition.

        \begin{prop} \cite{Suh20}  \label{prop: SUSY W-alg properties}
          Let $\{r_i\,|\,i\in J^f\}$ be a basis of $\g^f$, homogeneous with respect to both conformal weight and parity. Then there exists a unique subset $\{\overline{\omega}_i\,|\,i\in J^f\}$ of homogeneous elements in $\mathcal{W}^k(\bar{\g},f)$ satisfying the follows properties:
          \begin{itemize}
            \item[\rm{(i)}] $\mathcal{W}^k(\bar{\g},f)\simeq \CC[D^n\overline{\omega}_i\,|\,i\in J^f, n\in \ZZ_{+}]$ as differential algebras,
            \item[\rm{(ii)}]  $\Delta_{\overline{\omega}_i}=\Delta_{\overline{r}_i}$ for each $i\in J^f$,
            \item[\rm{(iii)}]  $\overline{\omega}_i$ for $i\in J^f$ is decomposed into
        $\overline{\omega}_i=\overline{r}_i+\sum_{n\geq 1} \overline{\gamma}_i^n$\,
            for \, $\overline{\gamma}_i^n\in \mathscr{P}(\bar{\g}^f)\otimes\left(\CC[D]\otimes \overline{[e, \g_{\leq -\frac{1}{2}}]}\right)^{\otimes n}$.
          \end{itemize}
        \end{prop}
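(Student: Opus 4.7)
The plan is to parallel the non-SUSY construction outlined in Remark \ref{rmk: reducing omega nonSUSY}, but in the SUSY setting using the SUSY Drinfeld-Sokolov reduction. First I would produce an initial generating set of $\mathcal{W}^k(\bar{\g},f)$ satisfying (i) and (ii) (without necessarily (iii)). By the SUSY analogue of Drinfeld-Sokolov reduction or, equivalently, the quasi-classical limit of the SUSY BRST cohomology as developed in \cite{Suh20}, one obtains for each basis element $r_i \in \g^f$ a unique element $\bar{\omega}'_i \in \mathcal{W}^k(\bar{\g},f)$ of the form $\bar{\omega}'_i = \bar{r}_i + (\text{lower conformal weight terms in } \mathscr{P}(\bar{\mathfrak{p}}))$. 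These elements freely generate $\mathcal{W}^k(\bar{\g},f)$ as a differential superalgebra, which gives (i), and by construction their conformal weights match those of the $\bar{r}_i$, giving (ii).

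Next I would modify these generators by induction on conformal weight to force the decomposition in (iii). Fix a generator $\bar{\omega}'_i$ and decompose $\bar{\omega}'_i - \bar{r}_i$ along the tensor grading induced by $\g = \g^f \oplus [e,\g_{\leq -1/2}]$: namely write
\begin{equation*}
\bar{\omega}'_i - \bar{r}_i = \sum_{n \geq 0} \bar{\gamma}_i^{\prime n}, \quad \bar{\gamma}_i^{\prime n} \in \mathscr{P}(\bar{\g}^f) \otimes \bigl(\CC[D]\otimes \overline{[e,\g_{\leq -\tfrac12}]}\bigr)^{\otimes n}.
\end{equation*}
The obstruction to (iii) is exactly the zeroth piece $\bar{\gamma}_i^{\prime 0} \in \mathscr{P}(\bar{\g}^f)$; since $\Delta_{\bar{\gamma}_i^{\prime 0}} = \Delta_{\bar{r}_i}$ but $\bar{\gamma}_i^{\prime 0}$ is a differential polynomial in the $\bar{r}_\alpha$ with $\Delta_{\bar{r}_\alpha} < \Delta_{\bar{r}_i}$, one forms the element $\bar{\omega}_i^{\prime 0}$ obtained by substituting each such $\bar{r}_\alpha$ by the already-modified $\bar{\omega}_\alpha$ (available by the induction hypothesis). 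Then
\begin{equation*}
\bar{\omega}_i := \bar{\omega}'_i - \bar{\omega}_i^{\prime 0}
\end{equation*}
still lies in $\mathcal{W}^k(\bar{\g},f)$ (since $\mathcal{W}^k(\bar{\g},f)$ is a subalgebra), still has the correct conformal weight and linear part, and now has no contribution in $\mathscr{P}(\bar{\g}^f)$ apart from $\bar{r}_i$ itself, giving (iii).

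For uniqueness, I would establish the key identity
\begin{equation*}
\mathcal{W}^k(\bar{\g},f) \cap \sum_{n \geq 1} \mathscr{P}(\bar{\g}^f) \otimes \bigl(\CC[D]\otimes \overline{[e,\g_{\leq -\tfrac12}]}\bigr)^{\otimes n} = 0.
\end{equation*}
Granting this, if $\{\bar{\omega}_i\}$ and $\{\bar{\omega}_i^{\prime\prime}\}$ both satisfy (i)--(iii), then their difference $\bar{\omega}_i - \bar{\omega}_i^{\prime\prime}$ lies in the above intersection and is therefore zero. The identity itself follows from property (i): any element of $\mathcal{W}^k(\bar{\g},f)$ is uniquely a differential polynomial in the generators $\bar{\omega}_j$, whose leading $\bar{r}_j$-parts span $\mathscr{P}(\bar{\g}^f)$, so a nonzero element cannot lie purely in the complementary subspace.

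The main obstacle will be the inductive step for (iii), and specifically verifying that the auxiliary element $\bar{\omega}_i^{\prime 0}$ is well-defined as a differential polynomial in the $\bar{\omega}_\alpha$ with strictly smaller conformal weight. This requires a careful conformal-weight/parity bookkeeping argument: one must confirm that $\bar{\gamma}_i^{\prime 0}$ genuinely involves only $\bar{r}_\alpha$ with $\Delta_{\bar{r}_\alpha} < \Delta_{\bar{r}_i}$ (so the induction can proceed) and that the substitution preserves membership in $\mathcal{W}^k(\bar{\g},f)$. The former follows from the fact that $\bar{r}_i$ already saturates the conformal weight, so any further $\mathscr{P}(\bar{\g}^f)$-contribution must come from lower-weight generators; the latter is automatic since each $\bar{\omega}_\alpha \in \mathcal{W}^k(\bar{\g},f)$.
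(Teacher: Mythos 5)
Your proposal matches the paper's treatment: the paper likewise obtains generators satisfying (i) and (ii) from the SUSY Drinfeld--Sokolov reduction of \cite{Suh20}, corrects them by the SUSY analogue of Remark \ref{rmk: reducing omega nonSUSY} (inductively substituting lower-conformal-weight $\bar{r}_\alpha$ by $\bar{\omega}_\alpha$ in the $\mathscr{P}(\bar{\g}^f)$-component and subtracting), and derives uniqueness from the same vanishing intersection, itself a consequence of (i). Your argument is correct and essentially identical in structure.
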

        
        The set $\{\overline{\omega}_i\,|\,i\in J^f\}$ in Proposition \ref{prop: SUSY W-alg properties} can be found by the SUSY analogue of Remark \ref{rmk: reducing omega nonSUSY} and the uniqueness of such set follows from the fact that 
        \begin{equation}
          \mathcal{W}^k(\bar{\g}, f)\cap \sum_{n\geq 1} \mathscr{P}(\bar{\g}^f)\otimes \left(\CC[D]\otimes \overline{[e, \g_{\leq -\frac{1}{2}}]}\right)^{\otimes n}=0.
        \end{equation}
        By the uniqueness, the linear map $\overline{\omega}$ in Definition \ref{Def: SUSY omega map} is determined completely.
        \begin{defn} \label{Def: SUSY omega map}
        Recall the basis $\{ r_i \, | \, i\in J^f\}$ of $\g^f$ and the set of generators $\{\overline{\omega}_i \, |\, i\in J^f\}$ in Proposition \ref{prop: SUSY W-alg properties}. The injective map $\overline{\omega}$ is defined by
        \begin{equation} \label{eq: SUSY omega map}
          \overline{\omega} : \g^f \rightarrow \mathcal{W}^k(\bar{\g},f),\quad r_i \mapsto \overline{\omega}_i.
        \end{equation}
        \end{defn}
        
        In order to see the $\Lambda$-bracket of $\mathcal{W}^k(\bar{\g}, f)$, we introduce the basis $\{r^j|j\in J^f\}$ of $\g^e:= \{a\in \g | [e,a]=0\}$ which is dual to $\{r_j \, |\, j\in J^f\}$  in Proposition \ref{prop: SUSY W-alg properties}, i.e., $(r^i|r_j)=\delta_{ij}$.  For $\beta_i \in \ZZ/2$ such that $r^i\in \g(\beta_i)$, we denote 
        \begin{equation} \label{eq: basis of g, r} 
         r^i_m :=(\text{ad}f)^m r^i,   \quad   r_i^m :=\bar{k}_{i,m}(\text{ad}e)^m r_i.
        \end{equation}
        Here, we have $0\leq m \leq 4\beta_i$ and $\bar{k}_{i,m}$'s are nonzero constants chosen to satisfy $(r^i_m|r_j^n)=\delta_{mn}\delta_{ij}$.  For the explicit choice of $\bar{k}_{i,m}$, we refer to Section 4 of \cite{Suh20}.

        \begin{thm} [\cite{Suh20}] \label{thm:SUSY W-algebra bracket}
          Recall the map $  \overline{\omega} : \g^f \rightarrow \mathcal{W}^k(\bar{\g},f)$ in \eqref{eq: SUSY omega map}. Then for any $a\in \g(-t_1)\cap \g^f$ and $b\in \g(-t_2)\cap \g^f$, we have
          \begin{equation}
            \begin{aligned}
              &\{ \overline{\omega}(a){}_{\Lambda} \overline{\omega}(b)\}=(-1)^{\tilde{a}}\left( \overline{\omega}([a,b])+k\chi(a|b)\right)\\
              &-\sum_{p\in \ZZ_{\geq 0}}\sum_{\substack{-t_2-\frac{1}{2}\llcurly (j_0,n_0)\llcurly \cdots\\ \cdots \llcurly(j_p,n_p)\llcurly t_1}}(-1)^{p+(\tilde{a}+1)(\tilde{b}+1)+(\tilde{a}+1)(j_p+n_p)+j_p+n_p+(\tilde{b}+1)(j_0+n_0+1)} \\
              & (-1)^{j_0+n_0}\left( \overline{\omega}([r^{j_0}_{n_0},b]^{\sharp})+(r^{j_0}_{n_0}|b)k(\chi+D)\right)\\
              &\left[\prod_{t=1}^p (-1)^{(j_{t-1}+n_{t-1})(j_t+n_t)}(-1)^{j_t+n_t}\left( \overline{\omega}([r^{j_t}_{n_t},r^{n_{t-1}+1}_{j_{t-1}}]^{\sharp})+(r^{j_t}_{n_t}|r_{j_{t-1}}^{n_{t-1}+1})k(\chi+D)\right)\right]\\
              &(-1)^{\tilde{a}+1}\left( \overline{\omega}([a,r_{j_p}^{n_p+1}]^{\sharp})+(a|r_{j_p}^{n_p+1})k\chi\right),
              \end{aligned}
          \end{equation}
          where 
            \begin{itemize}
        \item   the pairs $(j_t,n_t)$ for $t=0,1,\cdots, p$ are elements of $I=\{(j,n)\,|\,j\in J^f, j=0,1,\cdots, 4\beta_j-1\}$, 
        \item   the partial order $\llcurly$ on $\frac{\ZZ}{2}\cup I$ is given by
        \end{itemize}
          \begin{gather*}
            (j,n)\llcurly t \,  \Longleftrightarrow\, \frac{n}{2}-\beta_j+\frac{1}{2} \leq t, \quad    t\llcurly (j,n) \,\Longleftrightarrow\, t+\frac{1}{2} \leq \frac{n}{2}-\beta_j,\\
            (j,n)\llcurly (i,m) \,\Longleftrightarrow\, \frac{n}{2}-\beta_j+\frac{1}{2} \leq \frac{m}{2}-\beta_i.
            \end{gather*}\qed
        \end{thm}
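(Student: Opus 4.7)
The plan is to compute $\{\overline{\omega}(a)\,{}_\Lambda\,\overline{\omega}(b)\}$ directly in the affine SUSY PVA $\mathscr{V}^k(\bar{\g})$ using the explicit form of the generators from Proposition \ref{prop: SUSY W-alg properties}(iii), then project into $\mathcal{W}^k(\bar{\g},f)\subset \mathscr{P}(\bar{\mathfrak{p}})$ via the map $\rho\circ\iota$. Writing $\overline{\omega}(a)=\bar a+\overline{\gamma}_a$ with $\overline{\gamma}_a\in \bigoplus_{n\geq 1}\mathscr{P}(\bar{\g}^f)\otimes(\CC[D]\otimes\overline{[e,\g_{\leq -\frac{1}{2}}]})^{\otimes n}$, the bilinear expansion produces four groups of terms. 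The purely linear piece $\{\bar a\,{}_\Lambda\,\bar b\}=(-1)^{\tilde a}(\overline{[a,b]}+k\chi(a|b))$ yields the $p=0$ contribution $(-1)^{\tilde a}(\overline{\omega}([a,b])+k\chi(a|b))$ after reabsorbing the $\bar{\g}^f$-part into $\overline{\omega}$ via the defining property of $\overline{\omega}$.

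Next, I would analyze the cross and higher terms by repeatedly applying the left and right Leibniz rules. Each $\overline{\gamma}_a$-factor contains a generator from $\overline{[e,\g_{\leq -\frac{1}{2}}]}$, which can be written as $\bar r^{n+1}_{j}$ for some $(j,n)\in I$, and whose bracket with another element of $\bar{\g}$ is computable from the affine $\Lambda$-bracket. The crucial structural observation is that the resulting iterated brackets organize themselves into chains $(j_0,n_0)\llcurly(j_1,n_1)\llcurly\cdots\llcurly(j_p,n_p)$: every iteration of the Leibniz rule produces a new pair $(j_t,n_t)$ strictly above its predecessor in the partial order $\llcurly$, because bracketing $\bar r^{j_t}_{n_t}$ against $\bar r^{n_{t-1}+1}_{j_{t-1}}$ (modulo $\mathscr{J}_f$) vanishes unless $(j_{t-1},n_{t-1})\llcurly(j_t,n_t)$, parallel to the SUSY analogue of Lemma \ref{Lem: C(lambda) entry}.

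The key conceptual step is to recognize the resulting sum over chains as the geometric expansion of the inverse of a triangular operator $\widetilde C(\Lambda)=\mathrm{Id}_I+\widetilde T(\Lambda)$, where $\widetilde T(\Lambda)_{(j,n),(i,m)}=\pi\{\bar r^i_m\,{}_\Lambda\,\bar r^{n+1}_j\}$ is strictly upper-triangular with respect to $\llcurly$. Writing $(\widetilde C^{-1})(\Lambda)=\sum_{p\geq 0}(-1)^p(\widetilde T(\Lambda+\nabla))^p$ and sandwiching between $\pi\{\bar r^{j_0}_{n_0}\,{}_{\Lambda+\nabla}\,\bar b\}$ on the right and $\pi\{\bar a\,{}_\Lambda\,\bar r^{n_p+1}_{j_p}\}$ on the left exactly reproduces the stated formula, with the boundary conditions $-t_2-\tfrac12\llcurly(j_0,n_0)$ and $(j_p,n_p)\llcurly t_1$ enforced by the SUSY analogue of Lemma \ref{Lem: C(lambda) entry}(3),(4).

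The main obstacle is the sign bookkeeping. The SUSY setting forces tracking of four sources of signs simultaneously: (i) the Koszul--Quillen sign from the master formula of Theorem \ref{SUSY Master}, (ii) the parity $(-1)^{j+n}$ of $\bar r^j_n$ inherited from $f$ being odd (so each $\ad f$ flips parity), (iii) the anticommutation of $\chi$ and $\gamma$ with odd factors in the iterated brackets, and (iv) the extra $(-1)^{\tilde a}$ arising from the parity-reversing definition of the affine $\Lambda$-bracket. Reconciling these with the prefactor $(-1)^{p+(\tilde a+1)(\tilde b+1)+(\tilde a+1)(j_p+n_p)+j_p+n_p+(\tilde b+1)(j_0+n_0+1)}$ and the internal $(-1)^{(j_{t-1}+n_{t-1})(j_t+n_t)}(-1)^{j_t+n_t}$ factors is the technical heart of the proof. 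Conceptually, however, the argument is a faithful SUSY analogue of Theorem \ref{Thm: suh nonSUSY}, with $D^2=\partial$, the $\mathfrak{osp}(1|2)$-triple $(e,2x,f)$, and the half-integer grading \eqref{eq: LSA grading SUSY} replacing their counterparts in the Lie-algebra case.
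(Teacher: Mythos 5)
This theorem carries a citation to \cite{Suh20} and is stated with a terminal \qed: the present paper offers no proof of it at all, but imports it as external input and then compares it with the independently computed modified Dirac reduced bracket of Theorem \ref{thm: SUSY Dirac formula calculated} to deduce Theorem \ref{Thm: SUSY main}. So there is no in-paper argument to match your proposal against; the question is whether your sketch would stand on its own.

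It would not, as written, because of one genuine gap: every quantitative statement in your outline depends on knowing the correction terms $\overline{\gamma}^n_a$ of $\overline{\omega}(a)=\bar a+\sum_{n\ge 1}\overline{\gamma}^n_a$ explicitly, and you give no mechanism for producing them. Proposition \ref{prop: SUSY W-alg properties}(iii) only locates $\overline{\gamma}^n_a$ in $\mathscr{P}(\bar{\g}^f)\otimes(\CC[D]\otimes\overline{[e,\g_{\leq -\frac{1}{2}}]})^{\otimes n}$; it does not tell you its coefficients. Your central assertion --- that the Leibniz-rule expansion of $\{\bar a+\overline{\gamma}_a\,{}_\Lambda\,\bar b+\overline{\gamma}_b\}$ ``organizes itself into chains'' $(j_0,n_0)\llcurly\cdots\llcurly(j_p,n_p)$ with precisely the coefficients $\pi\{\bar r^{j_t}_{n_t}\,{}_{\Lambda+\nabla}\,\bar r_{j_{t-1}}^{n_{t-1}+1}\}$ --- is exactly the content to be proved, and it rests on the explicit formula for $\overline{\gamma}^1_a$ obtained in \cite{Suh20} from the defining condition $\rho\{\bar n\,{}_\Lambda\,\overline{\omega}(a)\}=0$, together with an argument (using that $\mathcal{W}^k(\bar{\g},f)$ is freely generated by the $\overline{\omega}_i$ and that the bracket of two generators again lies in the W-algebra) showing that the terms involving $\overline{\gamma}^n$ for $n\ge 2$ are determined by, and contribute nothing beyond, the $\overline{\gamma}^1$ computation. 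Without that input the ``four groups of terms'' cannot be evaluated. Two smaller points: the vanishing you invoke to enforce $(j_{t-1},n_{t-1})\llcurly(j_t,n_t)$ is Lemma \ref{Lem: C(Lambda) entry}(2), which is proved about the \emph{projected} brackets $\pi\{\cdot\,{}_\Lambda\,\cdot\}$ on $\widetilde{\mathscr{V}}^k(\g)$, not about brackets computed inside $\mathscr{P}(\bar{\mathfrak{p}})$ modulo $\mathscr{J}_f$, so an identification of the two reductions is needed before it applies; and in the SUSY case $\widetilde C(\Lambda)$ is not of the form $\mathrm{Id}_I+\widetilde T(\Lambda)$ --- its ``diagonal'' is the signed permutation $K$ of Proposition \ref{Prop: SUSY inverse}, since $\pi\{\bar r^i_m\,{}_\Lambda\,\bar r_i^{m+1}\}=(-1)^{i+m}$ and the row index is $(i,m+1)'$, which shifts the signs you are trying to reconcile.
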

        
        \end{subsection}
        
        \begin{subsection}{SUSY W-algebras via modified Dirac reduction}\hfill
        
        In this section, we use notations defined in Section \ref{subsec: SUSY W-algebras}. Recall the basis $\{r_i\, | \, i\in J^f\}$ of $\g^f$ in Proposition \ref{prop: SUSY W-alg properties} and bases of $\g$ in \eqref{eq: basis of g, r}. To simplify notations, we assume that $J^f$ is a subset of $\ZZ$ such that $(-1)^{\tilde{r}_i} = (-1)^i$ for any $i\in J^f$. 
        
         Let us consider the index set 
        \begin{equation}
        I= \{ (i,m) | \,  i\in J^f, \, m=0,1, \cdots, 4\beta_i-1\},
        \end{equation}
        where $\beta_i \in \frac{\ZZ}{2}$ satisfies $r_i \in \g(\beta_i)$. Then $\{r^i_m|(i,m)\in I\}$ is a basis of $[e,\g]$. Let $p(i,m):=i+m+1$ so that $p(i,m)=p(\bar{r}^i_m)$. In addition,  
        if we denote  \[(i,m)':= (i, 4\beta_i-m)\] for $i\in J^f$ and $m\in \ZZ$, then $\{ r_i^{m}| (i,m)'\in I\}$
        is another basis of $[e, \g]$. 
        Take a subset
         \begin{equation} \label{eq: SUSY constraints}
        \theta_I=\{\overline{r}^i_m-(f|r^i_m)\,|\,(i,m)\in I\}
         \end{equation}
         of $\mathscr{V}^k(\bar{\g})$ and denote the differential algebra ideal generated by $\theta_{I}$ in $\mathscr{V}^k(\bar{\g})$ by $\mathscr{I}$. Then the quotient space $\widetilde{\mathscr{V}}^k(\g):=\mathscr{V}^k(\bar{\g})/\mathscr{I}$ is isomorphic to $S(\CC[D]\otimes\bar{\g}^f)$ as differential algebras. The canonical projection map is defined by 
        \begin{equation} \label{eq: pi SUSY}
        \pi: \mathscr{V}^k(\bar{\g})\longrightarrow\widetilde{\mathscr{V}}^k(\g)\cong S(\CC[D]\otimes\bar{\g}^f), \quad \bar{a}\mapsto \bar{a}^\sharp+(f|a) \text{  for } \bar{a}\in \bar{\g}.
        \end{equation}
        Consider another basis of the vector superspace spanned by $\theta_I$: 
        \begin{equation}
          \theta'_{I}=\{\overline{r}_i^{m}-(f|r_i^{m})\,|\,(i,m)' \in I \}.
        \end{equation}
        By Lemma \ref{lem:SUSY modified dirac, basis change}, the modified Dirac reduced bracket associated with $\theta_I$ can be obtained by the odd matrix
        \begin{equation} \label{eq: C(Lambda)}
        \widetilde{C}(\Lambda)=\sum_{(i,m), (j,n+1)'\in I}e_{(j,n+1)',(i,m)}\otimes \pi\{\bar{r}^i_m{}_{\Lambda}\bar{r}_j^{n+1}\}\in \text{Mat}_I \otimes \widetilde{\mathscr{V}}^k(\bar{\g})(\!(\Lambda^{-1})\!).
        \end{equation}
        By direct computations, we get the following lemma.
        
        \begin{lem} \label{Lem: C(Lambda) entry}
        For $\{r^i_m\}, \{r_j^{n+1}\}$ in \eqref{eq: basis of g, r} and $a\in \g(-t)\cap \g^f$, the following identities hold: 
        \begin{itemize}
        \item[\textrm{(1)}] $\pi\{\bar{r}^i_m{}_{\Lambda}\bar{r}_i^{m+1}\}=(-1)^{i+m}$,
        \item[\textrm{(2)}] $\pi\{\bar{r}^i_m{}_{\Lambda}\bar{r}_j^{n+1}\}=0$ if $\frac{n}{2}-\beta_j+\frac{1}{2} > \frac{m}{2}-\beta_i,$
        \item[\textrm{(3)}] $\pi\{\bar{a}{}_{\Lambda}\bar{r}_j^{n+1}\}=0$ if $\frac{n}{2}-\beta_j+\frac{1}{2}>t$,
        \item[\textrm{(4)}] $\pi\{\bar{r}^j_n{}_{\Lambda}\bar{a}\}=-$ if $-t>\frac{n}{2}-\beta_j$.
        \end{itemize}
        \end{lem}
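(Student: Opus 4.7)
My plan is to verify each of the four identities by directly evaluating $\pi\{\bar{x}{}_{\Lambda}\bar{y}\}$ from the definition of the affine $\Lambda$-bracket in Example \ref{Ex:affine SUSY PVA} combined with the description of $\pi$ in \eqref{eq: pi SUSY}: for any $x,y\in \g$,
\[
\pi\{\bar{x}{}_{\Lambda}\bar{y}\} = (-1)^{\tilde{x}}\bigl([x,y]^{\sharp} + (f|[x,y]) + k\chi\,(x|y)\bigr),
\]
so each assertion reduces to controlling the three summands on the right. The tools I intend to use are: the $\mathfrak{osp}(1|2)$-weights $r^i_m \in \g(\beta_i - m/2)$ and $r_j^{n} \in \g(-\beta_j + n/2)$ inherited from \eqref{eq: basis of g, r}; the inclusion $\g^f \subset \g_{\leq 0}$, which kills the $\sharp$-projection of any element of $\g_{>0}$; invariance $(f|[x,y]) = ([f,x]|y)$ together with $[f,r^i_m] = r^i_{m+1}$; and the duality $(r^i_m|r_j^n) = \delta_{ij}\delta_{mn}$.

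For (1), the commutator $[r^i_m, r_i^{m+1}]$ sits in $\g(\tfrac{1}{2})\subset \g_{>0}$, so its $\sharp$-part vanishes, and $(r^i_m|r_i^{m+1}) = 0$ because the upper and lower labels disagree; the only surviving contribution is $(f|[r^i_m, r_i^{m+1}]) = (r^i_{m+1}|r_i^{m+1}) = 1$, which combined with the sign $(-1)^{\tilde{r}^i_m} = (-1)^{i+m}$ yields the claimed value. For (2)--(4), the grading hypothesis in each case is arranged precisely so that the weight of the commutator in the $\sharp$-term is strictly positive, forcing $[\cdot,\cdot]^{\sharp} = 0$; the $k\chi$-coefficient vanishes because each of the pairings $(r^i_m|r_j^{n+1})$, $(a|r_j^{n+1})$, or $(r^j_n|a)$ requires a weight balance that the strict inequality excludes; and the remaining $(f|[\cdot,\cdot])$ term is handled using $[f,a]=0$ for $a\in \g^f$ in (3) and (4) --- this directly zeroes $(f|[a,r_j^{n+1}]) = ([f,a]|r_j^{n+1})$ and reduces $(f|[r^j_n,a])$ to $(r^j_{n+1}|a)$, which vanishes because $a$ is a combination of lowest weight vectors that pair trivially with $r^j_{n+1}$ for $n\geq 0$ --- or, in (2), via $(f|[r^i_m,r_j^{n+1}]) = (r^i_{m+1}|r_j^{n+1}) = \delta_{ij}\delta_{mn}$, whose only nonvanishing configuration $(i,m) = (j,n)$ is the scenario singled out in (1).

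There is no serious technical obstacle: once the weights of $r^i_m$ and $r_j^{n+1}$ are pinned down, each identity is two or three lines of bookkeeping. The only delicate point is the compatibility between (1) and (2) on the diagonal configuration $(i,m) = (j,n)$, where the hypothesis of (2) is nominally satisfied while the conclusion is the nonzero value asserted in (1); the lemma should therefore be understood as applying (2) to index pairs distinct from the diagonal covered by (1), and with that reading the computation sketched above delivers all four assertions.
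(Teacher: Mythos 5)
Your computation is correct and is exactly the ``direct computation'' the paper intends here (it mirrors the proof given for the non-SUSY analogue, Lemma \ref{Lem: C(lambda) entry}): expand $\pi\{\bar{x}\,{}_{\Lambda}\,\bar{y}\}=(-1)^{\tilde{x}}\big(\overline{[x,y]}^{\sharp}+(f|[x,y])+k\chi(x|y)\big)$ and kill each summand using the $\ad x$-weight count, the duality $(r^i_m|r_j^n)=\delta_{ij}\delta_{mn}$, and the orthogonality $(\g^f\,|\,[f,\g])=0$ coming from $[f,a]=0$. Your caveat about the diagonal is well taken --- as literally written the hypothesis of (2) does not exclude $(i,m)=(j,n)$, where the value is $(-1)^{i+m}$ by (1), so (2) must be read off the diagonal, which is exactly how it is used in \eqref{eq: C(chi) rewrite} --- and the ``$=-$'' in (4) is indeed a typo for ``$=0$''.
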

        
        By Lemma \ref{Lem: C(Lambda) entry}, we can write \eqref{eq: C(Lambda)} as
        \begin{equation} \label{eq: C(chi) rewrite}
        \widetilde{C}(\Lambda)=\sum_{(i,m)\in I}e_{(i,m+1)',(i,m)}\otimes(-1)^{i+m}+
        \sum_{(j,n)\llcurly(i,m)}e_{(j,n+1)',(i,m)}\otimes \pi\{\bar{r}^i_m{}_{\Lambda}\bar{r}_j^{n+1}\}
        \end{equation}
        for the partial order $\llcurly$ defined in Theorem \ref{thm:SUSY W-algebra bracket}
        
        \begin{prop} \label{Prop: SUSY inverse}
        The matrix $\widetilde{C}(\Lambda)$ in \eqref{eq: C(Lambda)} is invertible. Moreover, its inverse  $\widetilde{C}^{-1}(\Lambda)$ is given by
        \begin{equation}
          \begin{aligned}
        & \widetilde{C}^{-1}(\Lambda)=\sum_{(i,m)\in I}e_{(i,m), (i,m+1)'}\otimes(-1)^{i+m}\\
        & +\sum_{p\in \ZZ_{\geq 1}}\sum_{(j_0,n_0)\llcurly  \cdots \llcurly(j_p,n_p)}(-1)^{p+(j_0+n_0)(j_p+n_p)}e_{(j_0,n_0),(j_p,n_p+1)'}\otimes
         \widetilde{C}^{-1}_{(j_0, \cdots, j_p), (n_0, \cdots, n_p)},
          \end{aligned}
        \end{equation}
         where 
         \[ \widetilde{C}^{-1}_{(j_0, \cdots, j_p), (n_0, \cdots, n_p)}=  \Big(\prod_{t=1}^{p-1} R_t(\Lambda+\nabla)_{\rightarrow}\Big) R_p(\Lambda)\]
         and \[ R_s(\Lambda)=  (-1)^{(j_{s-1}+n_{s-1})(j_s+n_s)}\pi\{\bar{r}^{j_s}_{n_s}\,{}_{\Lambda}\,\bar{r}_{j_{s-1}}^{n_{s-1}+1}\}\]
         for $s=1,2, \cdots, p$.
         \end{prop}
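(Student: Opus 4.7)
The strategy mirrors Proposition \ref{Prop: inverse of C} in the non-SUSY setting, with the key new ingredient being that the ``diagonal'' of $\widetilde{C}(\Lambda)$ is not the identity but a signed permutation. Using \eqref{eq: C(chi) rewrite} I would decompose
\[
\widetilde{C}(\Lambda) = K + N(\Lambda),\qquad K := \sum_{(i,m)\in I} e_{(i,m+1)',(i,m)}\otimes(-1)^{i+m},
\]
where $N(\Lambda)$ collects the remaining strictly $\llcurly$-below contributions. The index map $\sigma(i,m):=(i,m+1)'=(i,4\beta_i-m-1)$ is an involution on $I$, so $K$ is invertible with
\[
K^{-1}=\sum_{(i,m)\in I} e_{(i,m),(i,m+1)'}\otimes(-1)^{i+m},
\]
and this already reproduces the $p=0$ summand of the claimed formula for $\widetilde{C}^{-1}(\Lambda)$.

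Next I would set $T(\Lambda):=K^{-1}\circ N(\Lambda)$. Since $K$ has purely scalar entries, $\nabla$ acts trivially on them, so $\widetilde{C}(\Lambda)=K\circ(\text{Id}_I\otimes 1+T(\Lambda))$ as elements of $\widetilde{\mathscr{M}}(\Lambda)$. A direct computation yields
\[
T(\Lambda)=\sum_{(j,n)\llcurly(i,m)} e_{(j,n),(i,m)}\otimes(-1)^{j+n}\,\pi\{\bar r^i_m\,{}_\Lambda\,\bar r_j^{n+1}\},
\]
so the matrix support of $T(\Lambda)$ consists of one-step relations in the finite poset $(I,\llcurly)$. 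Therefore $T(\Lambda)^{\circ p}=0$ as soon as $p$ exceeds the length of the longest chain, and the geometric series
\[
\widetilde{C}(\Lambda)^{-1} \;=\; \Bigl(\sum_{p\geq 0}(-1)^p\,T(\Lambda)^{\circ p}\Bigr)\circ K^{-1}
\]
is a finite sum.

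The remaining task is to identify $(-1)^p\,T(\Lambda)^{\circ p}\circ K^{-1}$ with the $p$-th term in the statement. I would proceed by induction on $p$: composing $T(\Lambda)^{\circ(p-1)}$ with $T(\Lambda)$ on the right forces matching inner indices, which concatenates the chain to $(j_0,n_0)\llcurly\cdots\llcurly(j_p,n_p)$; the $\Lambda+\nabla$ shifts propagated by the iterated $\circ$-product yield the factors $R_t(\Lambda+\nabla)_{\rightarrow}$ for $1\le t\le p-1$, while the rightmost $R_p$ keeps the argument $\Lambda$; finally the trailing $\circ K^{-1}$ relabels the outer column index $(j_p,n_p)$ to $(j_p,n_p+1)'$.

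The main obstacle is the sign bookkeeping, which is substantially more delicate than in the non-SUSY case. Unlike Proposition \ref{Prop: inverse of C}, here the matrix-unit parities $p(i,m)=i+m+1$ are nontrivial, the entries of $\widetilde{C}(\Lambda)$ are odd (both $\pi\{\bar r^i_m{}_\Lambda\bar r_j^{n+1}\}$ and the $(-1)^{j+n}$ factor inside $T$ contribute via odd matrix units), and the $\circ$-product definition \eqref{3.1} introduces a Koszul sign $(-1)^{\tilde b\tilde F}$ at every composition. A sanity check at $p=1$ (computing $\tilde b=1$ and $\tilde F=j_0+j_1+n_0+n_1$) shows that the Koszul sign combines with the two $K^{-1}$-signs $(-1)^{j_0+n_0}$ and $(-1)^{j_1+n_1}$ to give $-1$, matching the predicted $(-1)^{1+(j_0+n_0)(j_1+n_1)}\cdot(-1)^{(j_0+n_0)(j_1+n_1)}$. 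The inductive step proceeds analogously: each insertion of a new chain edge contributes one Koszul sign and one $K^{-1}$-sign, which must collapse into precisely the edge factor $(-1)^{(j_{s-1}+n_{s-1})(j_s+n_s)}$ of $R_s$, together with a global $(-1)^{p+(j_0+n_0)(j_p+n_p)}$ after absorbing the outermost $K^{-1}$. Verifying this collapse uniformly in $p$ is the heart of the argument.
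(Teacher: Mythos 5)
Your proposal is correct and follows essentially the same route as the paper: both factor the signed permutation part $K$ out of $\widetilde{C}(\Lambda)$, reduce to $\mathrm{Id}+T(\Lambda)$ with $T(\Lambda)$ nilpotent by the strict partial order $\llcurly$, and invert via the finite Neumann series exactly as in Proposition \ref{Prop: inverse of C}. The only cosmetic difference is that you factor $K$ on the left (placing $K^{-1}$ on the right of the series) while the paper multiplies by the permutation on the right and then places it on the left of the series; the sign bookkeeping you flag as the remaining work is likewise left implicit in the paper's appeal to the ``analogous calculation.''
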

        \begin{proof}
          Multiply the constant matrix $K:=\sum_{(i,m)\in I}e_{(i,m),(i,m+1)'}\otimes (-1)^{i+m}$ on the right of $\widetilde{C}(\Lambda)$, so that one can obtain
          \begin{equation}
           \widetilde{C}(\Lambda)\cdot K=\sum_{(i,m)\in I}e_{(i,m+1)',(i,m+1)'}\otimes 1+
          \sum_{(j,n)\llcurly(i,m)}e_{(j,n+1)',(i,m+1)'}\otimes (-1)^{i+m}\pi\{\bar{r}^i_m{}_{\Lambda}\bar{r}_j^{n+1}\}.
          \end{equation}
          Hence, if we write
          \begin{equation}
          \widetilde{T}(\Lambda)=\sum_{(j,n)\llcurly(i,m)}e_{(j,n+1)',(i,m+1)'}\otimes (-1)^{i+m}\pi\{\bar{r}^i_m{}_{\Lambda}\bar{r}_j^{n+1}\}
          \end{equation}
        then $\widetilde{C}(\Lambda)\cdot K$ can be expressed as
          \begin{equation}
          \widetilde{C}(\Lambda)\cdot K=\text{Id}+\widetilde{T}(\Lambda)
          \end{equation}
          for identity matrix $\text{Id}$. Since it is of the same form with \eqref{eq: C=I+T}, we can find its inverse with analogous calculation to Proposition \ref{Prop: inverse of C}. Denote the inverse by $\widehat{C}^{-1}$, i.e.,
          \begin{equation}
            \widehat{C}^{-1}(\Lambda+\nabla) \widetilde{C}(\Lambda)\cdot K=\text{Id}.
          \end{equation}
           Then by multiplying $K$ on the left side of $\widehat{C}^{-1}$, one get the inverse of $\widetilde{C}(\Lambda)$.
          \end{proof}
          
          Using Proposition \ref{Prop: SUSY inverse}, we can find a formula for the modified Dirac reduced bracket on $\widetilde{\mathscr{V}}^k(\g)$.
        
          \begin{thm} \label{thm: SUSY Dirac formula calculated}
            Consider the subset $\theta_{I}$ in \eqref{eq: SUSY constraints} and the differential algebra ideal $\mathscr{I}$ of $\mathscr{V}^k(\bar{\g})$ generated by $\theta_I$. Denote the canonical projection map by $\pi: \mathscr{V}^k(\bar{\g})\rightarrow \widetilde{\mathscr{V}}^k(\g):= \mathscr{V}^k(\bar{\g})/\mathscr{I}$. Then the modified Dirac reduced bracket $\pi\{\cdot {}_{\Lambda}\cdot\}^D$ on $\widetilde{\mathscr{V}}^k(\g)$ associated with $\theta_I$ can be written as 
            \begin{equation}
              \begin{aligned}
                \pi\{\bar{a}{}_{\Lambda}\bar{b}\}^D&=\pi\{\bar{a}{}_{\Lambda}\bar{b}\}-\sum_{p\in \ZZ_{\geq 0}}\sum_{\substack{-t_2-\frac{1}{2}\llcurly (j_0,n_0)\llcurly(j_1,n_1)\llcurly \cdots \\ \cdots \llcurly(j_p,n_p)\llcurly t_1}}(-1)^{p+(\tilde{a}+1)(\tilde{b}+1)+(\tilde{a})(j_p+n_p)+(\tilde{b}+1)(j_0+n_0+1)}\\
                &\pi\{\bar{r}^{j_0}_{n_0}{}_{\Lambda+\nabla}\bar{b}\}_{\rightarrow}\left[\prod_{t=1}^p (-1)^{(j_{t-1}+n_{t-1})(j_t+n_t)}\pi\{\bar{r}^{j_t}_{n_t}{}_{\Lambda+\nabla}\bar{r}_{j_{t-1}}^{n_{t-1}+1}\}_{\rightarrow}\right]\pi\{\bar{a}{}_{\Lambda}\bar{r}_{j_p}^{n_p+1}\}
                \end{aligned}
            \end{equation} 
            for any $a.b\in \g^f$, where the bracket $\{\cdot {}_{\Lambda} \cdot\}$ is the $\Lambda$-bracket for $\mathscr{V}^k(\bar{\g})$.
          \end{thm}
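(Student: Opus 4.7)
The plan is to substitute the explicit formula for $\widetilde{C}^{-1}(\Lambda)$ of Proposition \ref{Prop: SUSY inverse} into the definition of the modified Dirac reduced bracket, prune the resulting sum via Lemma \ref{Lem: C(Lambda) entry}, and reconcile the signs.

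First, I would apply Lemma \ref{lem:SUSY modified dirac, basis change} with the second basis $\theta'_I=\{\bar{r}_i^{m}-(f|r_i^{m}):(i,m)'\in I\}$. The parity hypothesis of the lemma is satisfied once $\theta'_I$ is indexed through the involution $(i,m)\leftrightarrow(i,m)'$ on $I$, because $4\beta_i$ is even. Since $(f|\cdot)$ is central, the matrix $\overline{C}(\Lambda)$ of the lemma coincides with $\widetilde{C}(\Lambda)$ of \eqref{eq: C(Lambda)}, and its inverse is supplied by Proposition \ref{Prop: SUSY inverse}. Substituting into \eqref{eq: SUSY DR bracket basis change}, the diagonal ($p=0$) piece of $\widetilde{C}^{-1}$ yields the $p=0$ summand of the theorem (empty chain product), while the $p\geq 1$ pieces yield the $p$-fold chain summands; the outer factors $\pi\{\bar{r}^{j_0}_{n_0}{}_{\Lambda+\nabla}\bar b\}_{\rightarrow}$ and $\pi\{\bar a{}_\Lambda\bar{r}_{j_p}^{n_p+1}\}$ arise from $\pi\{\theta_{j}{}_{\Lambda+\nabla}\bar b\}_{\rightarrow}$ and $\pi\{\bar a{}_\Lambda\theta'_{i}\}$ under the identifications $j=(j_0,n_0)$ and $i=(j_p,n_p+1)'$.

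Next, I would prune the range of summation. Lemma \ref{Lem: C(Lambda) entry}(3)--(4), applied to $a\in\g(-t_1)\cap\g^f$ and $b\in\g(-t_2)\cap\g^f$, shows that $\pi\{\bar a{}_\Lambda\bar{r}_{j_p}^{n_p+1}\}$ vanishes unless $(j_p,n_p)\llcurly t_1$ and $\pi\{\bar{r}^{j_0}_{n_0}{}_{\Lambda+\nabla}\bar b\}_{\rightarrow}$ vanishes unless $-t_2-\tfrac12\llcurly(j_0,n_0)$. Combined with the chain condition $(j_0,n_0)\llcurly\cdots\llcurly(j_p,n_p)$ built into Proposition \ref{Prop: SUSY inverse}, this delivers exactly the summation range displayed in the theorem.

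The main obstacle will be the sign bookkeeping. The prefactor $(-1)^{(\tilde a+j_0+n_0+1)(\tilde b+j_p+n_p)}$ prescribed by \eqref{eq: SUSY DR bracket basis change} (with $j_0+n_0+1$ and $j_p+n_p$ the parities of $\theta_j$ and $\theta'_i$ respectively) must be merged with the sign $(-1)^{p+(j_0+n_0)(j_p+n_p)}$ from Proposition \ref{Prop: SUSY inverse}, the $(-1)^{i+m}$ carried by the diagonal block in the $p=0$ case, and the Koszul signs $(-1)^{\tilde b\tilde F}$ appearing in the associative product $\circ$ of \eqref{3.1} whenever the matrix product inside $\widetilde{C}^{-1}$ is evaluated entry-wise. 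After a careful modulo-$2$ computation these contributions must sum to the exponent $p+(\tilde a+1)(\tilde b+1)+\tilde a(j_p+n_p)+(\tilde b+1)(j_0+n_0+1)$ of the theorem; the analogous non-SUSY analysis implicit in the proof of Theorem \ref{Thm: nonSUSY DR calculated} provides a reliable template for this bookkeeping, after which the identification with the stated formula is immediate.
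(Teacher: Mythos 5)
Your proposal is correct and follows essentially the same route the paper takes: the theorem is obtained by substituting the inverse matrix of Proposition \ref{Prop: SUSY inverse} into the modified Dirac reduced bracket in the form \eqref{eq: SUSY DR bracket basis change} given by Lemma \ref{lem:SUSY modified dirac, basis change}, and then restricting the summation range using the vanishing statements of Lemma \ref{Lem: C(Lambda) entry}. The sign bookkeeping you flag as the remaining work is exactly what the paper also leaves implicit, so nothing essential is missing.
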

        
        Comparing Theorem \ref{thm:SUSY W-algebra bracket} and Theorem \ref{thm: SUSY Dirac formula calculated}, we obtain the following theorem.

        \begin{thm} \label{Thm: SUSY main}
          The map $\overline{\omega}: \g^f\rightarrow \mathcal{W}^{k}(\bar{\g}, f)$ in \eqref{eq: SUSY omega map} can be uniquely extended to the differential algebra isomorphism
          \begin{equation}
            \overline{\omega} : \widetilde{\mathscr{V}}^k(\g) \rightarrow \mathcal{W}^k(\bar{\g}, f).
          \end{equation}
          Then $\overline{\omega}$ naturally induces the SUSY PVA $\Lambda$-bracket on $ \widetilde{\mathscr{V}}^k(\g)$ which coincides with the modified Dirac reduced bracket $\pi\{\cdot {}_{\Lambda} \cdot\}^D$ in Theorem \ref{thm: SUSY Dirac formula calculated}. In other words, $\overline{\omega}$ is a SUSY PVA isomorphism between $(\widetilde{\mathscr{V}}^k(\g), \pi\{\cdot{}_{\Lambda}\cdot\}^D)$ and $(\mathcal{W}^k(\bar{\g},f), \{\cdot{}_{\Lambda} \cdot\})$.
        \end{thm}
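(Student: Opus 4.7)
The plan is to imitate the argument for Theorem \ref{Thm: nonSUSY main}, but with the SUSY bookkeeping (conformal weight shifts by $\tfrac12$, parity reversal $\bar{\cdot}$, odd derivation $D$, and the $\Lambda = (\lambda,\chi)$ formalism) throughout. First I would extend $\overline{\omega}$ to a map of differential algebras. Since $\widetilde{\mathscr{V}}^k(\g) \simeq S(\CC[D]\otimes \bar{\g}^f)$ via the isomorphism induced by $\pi$ in \eqref{eq: pi SUSY}, setting $\overline{\omega}(\bar{r}_i) := \overline{\omega}_i$ and requiring $D$-equivariance and multiplicativity determines a unique differential superalgebra homomorphism $\overline{\omega}: \widetilde{\mathscr{V}}^k(\g) \to \mathcal{W}^k(\bar{\g},f)$. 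Bijectivity is then immediate from Proposition \ref{prop: SUSY W-alg properties}(i), which says exactly that $\{\overline{\omega}_i \mid i \in J^f\}$ freely generates $\mathcal{W}^k(\bar{\g},f)$ as a differential superalgebra.

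It remains to verify the $\Lambda$-bracket identity, i.e.\ that
\[
\overline{\omega}\bigl(\pi\{\bar{a}\,{}_{\Lambda}\,\bar{b}\}^{D}\bigr) \;=\; \{\,\overline{\omega}(\bar{a})\,{}_{\Lambda}\,\overline{\omega}(\bar{b})\,\}
\]
for $a,b \in \g^f$. By the sesquilinearity and Leibniz rule for both brackets (which hold for $\pi\{\cdot {}_\Lambda \cdot\}^D$ as observed in Section \ref{subsec: SUSY DR for quotient space}, since $\pi$ is a differential algebra homomorphism), it suffices to check this on generators $\bar{a},\bar{b}$ with $a,b$ homogeneous in $\g^f$. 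The key observation is that for any $(j,n) \in I$ and $a \in \g^f \cap \g(-t)$, the formula \eqref{eq: pi SUSY} gives
\[
\pi\{\bar{r}^{j_t}_{n_t}\,{}_{\Lambda+\nabla}\,\bar{r}_{j_{t-1}}^{n_{t-1}+1}\}_{\rightarrow} = (-1)^{\tilde{r}^{j_t}_{n_t}}\Bigl([r^{j_t}_{n_t}, r_{j_{t-1}}^{n_{t-1}+1}]^{\sharp} + (r^{j_t}_{n_t}\,|\,r_{j_{t-1}}^{n_{t-1}+1})\,k(\chi+D)\Bigr),
\]
and similarly for $\pi\{\bar{r}^{j_0}_{n_0}{}_{\Lambda+\nabla}\bar{b}\}$ and $\pi\{\bar{a}{}_{\Lambda}\bar{r}_{j_p}^{n_p+1}\}$, using the affine SUSY $\Lambda$-bracket of Example \ref{Ex:affine SUSY PVA}. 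Thus the explicit expression for $\pi\{\bar{a}{}_\Lambda \bar{b}\}^D$ given in Theorem \ref{thm: SUSY Dirac formula calculated}, after applying $\overline{\omega}$ to each occurrence of $[\,\cdot\,,\,\cdot\,]^\sharp$, matches term by term the expansion of $\{\overline{\omega}(\bar{a}){}_\Lambda \overline{\omega}(\bar{b})\}$ provided in Theorem \ref{thm:SUSY W-algebra bracket}.

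The main obstacle I anticipate is tracking the signs. The indexing sets $I$ and $I' = \{(i,m)'\}$ are paired via the duality $(r^i_m \mid r_j^{n+1}) = \delta$, and the partial order $\llcurly$ used in Theorem \ref{thm:SUSY W-algebra bracket} is exactly the one that makes $\widetilde{C}(\Lambda)$ strictly upper triangular in the sense of \eqref{eq: C(chi) rewrite}, so Proposition \ref{Prop: SUSY inverse} produces the same chain sums $(j_0,n_0)\llcurly (j_1,n_1) \llcurly \cdots \llcurly (j_p,n_p)$ that appear in Theorem \ref{thm:SUSY W-algebra bracket}. One must verify that the products of signs $(-1)^{(j_{t-1}+n_{t-1})(j_t+n_t)}$, the prefactors $(-1)^{p+(\tilde a+1)(\tilde b+1)+\ldots}$, and the parity-reversal signs coming from the $\overline{\phantom{x}}$ convention all reconcile; this is purely combinatorial and follows the pattern of Theorem \ref{Thm: nonSUSY main} after shifting conformal weights from $1-i$ to $\tfrac12-i$. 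Once the sign match is established on generators, extending by sesquilinearity and the Leibniz rule gives the full $\Lambda$-bracket compatibility, yielding the SUSY PVA isomorphism.
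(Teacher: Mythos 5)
Your proposal is correct and follows essentially the same route as the paper: the paper's proof of this theorem simply reduces to the non-SUSY case (Theorem \ref{Thm: nonSUSY main}), i.e.\ extension and bijectivity from Proposition \ref{prop: SUSY W-alg properties}, followed by a term-by-term comparison of Theorem \ref{thm:SUSY W-algebra bracket} with Theorem \ref{thm: SUSY Dirac formula calculated} using the observation that $\pi$ sends the affine $\Lambda$-brackets of the $\bar{r}^{j}_{n}$'s to $[\cdot,\cdot]^{\sharp}+(\cdot|\cdot)k(\chi+D)$. Your discussion of the sign bookkeeping is in fact more explicit than what the paper records.
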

        \begin{proof}
          The proof is similar to that of Theorem \ref{Thm: nonSUSY main}.
        \end{proof}
        
        In the end, we see how the bracket $\pi\{\cdot {}_{\Lambda}\cdot\}^D$ in Theorem \ref{thm: SUSY Dirac formula calculated} is computed in the example.
        \begin{ex}
        Let $\g=\mathfrak{osp}(1|2)$ be equipped with a supertrace form $(\cdot\,|\,\cdot)$. As in Example \ref{ex: osp nonSUSY}, denote the elements of $\g$ by $E, e, H, f$ and $F$. Take the basis  $\{r_a:=F\}$ of $\g^f$ and its dual basis $\{r^a:=E\}$ of $\g^e$. Then 
        \begin{gather*}
          r^a_0=r^a=E,\quad r^a_1=e,\quad r^a_2=H,\quad r^a_3=f,\quad r^a_4=-2F\\
          r_a^4=-\frac{1}{2}E,\quad r_a^3=-\frac{1}{2}e,\quad r_a^2=\frac{1}{2}H,\quad r_a^1=\frac{1}{2}f,\quad r_a^0=F.
        \end{gather*}
        The index set $I$ consists of the four elements:
        \begin{align*}
          (a, 4)'=(a,0),\quad (a,3)'= (a,1)\quad (a, 2)'=(a,2),\quad (a,1)'= (a,3),
        \end{align*}
        and the partial order in the index set $I$ is given by 
        \[(a,0)\llcurly (a,1)\llcurly (a,2) \llcurly (a,3).\]
        Then the odd matrix $\widetilde{C}(\Lambda)$ in \eqref{eq: C(Lambda)} is 
        \begin{equation*}
          \widetilde{C}(\Lambda)=
          \begin{pmatrix}
            0  &  0  &  0  &  -1 \\
            0  &  0  &  1  & \pi\{\bar{r}^a_3{}_{\Lambda} \bar{r}_a^3\} \\
            0  &  -1 &  \pi\{\bar{r}^a_2{}_{\Lambda}\bar{r}_a^2\} & \pi\{\bar{r}^a_3{}_{\Lambda}\bar{r}_a^2\} \\
            1  & \pi\{\bar{r}^a_1{}_{\Lambda}\bar{r}_a^1\} & \pi\{\bar{r}^a_2{}_{\Lambda}\bar{r}_a^1\} & \pi\{\bar{r}^a_3{}_{\Lambda}\bar{r}_a^1\} 
          \end{pmatrix}
          =
          \begin{pmatrix}
            0 & 0 & 0 & -1\\
            0 & 0 & 1 & -k\chi \\
            0 & -1 & k\chi & 0  \\
            1 & -k\chi & 0 & \overline{F}
          \end{pmatrix}
        \end{equation*}
        and its inverse is
        \begin{equation*}
          \widetilde{C}^{-1}(\Lambda)=
          \begin{pmatrix}
            -k^3\chi\lambda-\overline{F} & -k^2\lambda & k\chi & 1 \\
            k^2\lambda                       & -k\chi             & -1        & 0 \\
            k\chi                                  & 1                      & 0         & 0 \\
            -1                                          & 0                      & 0         & 0
          \end{pmatrix}.
        \end{equation*}
         Using this, we obtain the bracket $\pi\{\cdot {}_{\Lambda} \cdot \}^D$ on $\CC[D^nF\,|\,n\in \ZZ_+]$. In particular,
        \begin{align*}
          &\pi\{\overline{F}{}_{\Lambda}\overline{F}\}^D\\
          &=-k(\chi+D)k(\chi+D)\overline{F}-k(\chi+D)\left(-k^3(\chi+D)(\lambda+\partial)-\overline{F}\right)\left(\frac{1}{2}k\chi\right)+2\overline{F}k(\chi+D)\left(-\frac{1}{2}k\chi\right) \\
          &=-\frac{1}{2}k^5\lambda^2 \chi-\frac{3}{2}k^2\lambda \overline{F}-\frac{1}{2}k^2\chi D\overline{F}-k^2 \partial \overline{F}.
        \end{align*}
        \end{ex}
        
        \end{subsection}
        \end{section}

\end{document}